\documentclass[11pt, a4paper, oneside, reqno]{amsart}

\usepackage{dsfont, amssymb, amsmath, graphicx, enumitem, multicol, bm, bbm, makecell, multirow}
\usepackage{amsfonts, mathtools, mathrsfs, amsthm} 
\usepackage{float, graphicx, subcaption}
\usepackage[table,dvipsnames]{xcolor}

\RequirePackage[colorlinks,citecolor=blue,linkcolor=blue,urlcolor=ForestGreen,pagebackref]{hyperref}

\addtolength{\voffset}{0cm} 
\addtolength{\textheight}{1cm} 
\addtolength{\hoffset}{-2cm}
\addtolength{\textwidth}{4cm}

\setlength{\parskip}{1mm}
\linespread{1.5}

\usepackage[numbers]{natbib}


\newtheorem{definition}{Definition}
\newtheorem{theorem}{Theorem}
\newtheorem{lemma}{Lemma}
\newtheorem{remark}{Remark}
\newtheorem{corollary}{Corollary}
\newtheorem{proposition}{Proposition}

\newcommand{\be}{\begin{equation}}
\newcommand{\ee}{\end{equation}}
\newcommand{\bea}{\begin{equation*}\begin{aligned}}
\newcommand{\eea}{\end{aligned}\end{equation*}}

\newcommand{\R}{\mathbb{R}}

\newcommand{\Max}{\max\limits_}
\newcommand{\Min}{\min\limits_}
\newcommand{\Sup}{\sup\limits_}
\newcommand{\Inf}{\inf\limits_}
\newcommand{\Tr}[1]{\Trace \big[ #1 \big]}

\newcommand{\wh}{\widehat}
\newcommand{\mc}{\mathcal}
\newcommand{\mbb}{\mathbb}

\newcommand{\cov}{\Sigma} 
\newcommand{\covsa}{\wh{\cov}}

\newcommand{\M}{\mc M}

\newcommand{\p}{\mbb P}
\newcommand{\PP}{\mbb P}

\newcommand{\QQ}{\mbb Q}

\newcommand{\Ambi}{\mc P}

\DeclareMathOperator{\Trace}{Tr}

\DeclareMathOperator{\st}{s.t.}

\newcommand{\Sym}{\mathbb{S}}
\newcommand{\PSD}{\mathbb{S}_{+}} 
\newcommand{\PD}{\mathbb{S}_{++}} 

\newcommand{\opt}{^\star}
\newcommand{\eps}{\varepsilon}

\newcommand{\Wass}{\mathds{W}}
\newcommand{\V}{\mc{V}}
\newcommand{\Q}{\mbb{Q}}
\newcommand{\EE}{\mathds{E}}

\newcommand{\m}{\mu}
\newcommand{\msa}{\wh \mu}

\newcommand{\U}{\mc U}

\newcommand{\Var}{\mathbb{V}\text{ar}} 

\newcommand{\half}{\frac{1}{2}}

\DeclareMathOperator{\CVaR}{CVaR}
\DeclareMathOperator{\VaR}{VaR}

\newcommand{\Gelbrich}{\mathds{G}}

\newcommand{\w}{w}

\newcommand{\Pnom}{\wh \p}
\newcommand{\Lc}{\mc L}

\newcommand{\risk}{\mc R}
\newcommand{\diff}{\mathrm{d}}

\newcommand{\DS}{\displaystyle}

\newcommand{\RR}{\mathbb R}

\title[Mean-Covariance Robust Risk Measurement]{Mean-Covariance Robust Risk Measurement}

\title{Mean-Covariance Robust Risk Measurement}
\author{Viet~Anh~Nguyen, Soroosh~Shafiee, Damir Filipovi\'{c}, and Daniel~Kuhn}
\thanks{The authors are with the Department of Systems Engineering and Engineering Management, Chinese University of Hong Kong (\texttt{nguyen@se.cuhk.edu.hk}), Department of Operations Research and Information Engineering, Cornell University (\texttt{shafiee@cornell.edu}), EPFL and Swiss Finance Institute ({\tt damir.filipovic@epfl.ch}), the Risk Analytics and Optimization Chair, EPFL (\texttt{daniel.kuhn@epfl.ch}).}











\begin{document}

\begin{abstract}
We introduce a {principled} framework for mean-covariance robust risk measurement and portfolio optimization. We model uncertainty in terms of the Gelbrich distance on the mean-covariance space, along with prior structural information about the population distribution. Our approach is related to the theory of optimal transport and exhibits statistical and computational properties superior to those of existing models. We find that for a large class of risk measures, mean-covariance robust portfolio optimization boils down to the Markowitz model, subject to a regularization term given in closed form. This includes the finance standards, value-at-risk, and conditional value-at-risk, and can be solved highly efficiently.

\noindent
\textbf{Keywords:} Robust optimization, risk measurement, optimal transport.
\end{abstract}

\maketitle

%

\section{Introduction}
Portfolio managers distribute their funds over multiple assets with the aim of optimizing future returns. The workhorse in the finance industry is the Markowitz model~\cite{ref:markowitz1952portfolio}, which maximizes the expected portfolio return adjusted for its standard deviation. Its optimal portfolio is given in closed form in terms of the mean and covariance matrix of the asset returns. The classical Markowitz model assumes symmetrically distributed asset returns, so that minimizing the standard deviation is equivalent to minimizing risk. However, asset returns are known to be skewed, and the standard deviation is unable to distinguish undesirable deviations below the mean from desirable deviations above the mean. Numerous propositions for a more appropriate risk assessment have made research on downside risk measures a vibrant field spanning economics and finance. These notably include the industry standards, value-at-risk (VaR) \cite{ref:jorion1996var, JPM_96, duf_pan_97} and conditional value-at-risk (CVaR) \cite{ref:artzner1999coherent, ref:rockafellar2000optimization}, as well as a plethora of more general distribution-based risk measures that have emerged over the last two decades, see, {\em e.g.}, \cite{ref:kusuoka2001coherent, ref:acerbi2002spectral}. 

A distribution-based risk measurement requires precise knowledge of the joint distribution of the underlying asset returns, which is unobservable in practice. For assets traded on public markets, one may attempt to estimate this distribution from historical data. However, \cite{Roy52} pointed out that the mean and covariance matrix are the only quantities that can reasonably be distilled from financial time series. This observation has spurred interest in robust risk measurement by maximizing a given distribution-based risk measure across all asset return distributions in a Chebyshev ambiguity set, which consists of all distributions with a fixed mean and covariance matrix \cite{ref:ghaoui2003worst, ref:yu2009projection, ref:zymler2013wcvar,ref:rujeerapaiboon2016robust, ref:li2018risk, ref:cai2020distributionally}.

While taking the worst case over a Chebyshev ambiguity set can immunize the risk measurement against uncertainty in the {\em shape} of the asset return distribution, it does not take into account the uncertainty in its {\em mean} and {\em covariance matrix}. This is worrying because estimators for the mean display a notoriously high variance irrespective of the sampling frequency of historical return data~\cite[\S~8.5]{ref:luenberger1997investment}. In addition, estimating high-dimensional covariance matrices is a formidable challenge in statistics that requires structural information~\cite{ref:ledoit2003improved}. Unfortunately, estimation errors in the mean and covariance matrix have a detrimental impact on the solution of a portfolio optimization problem because the optimal portfolio's out-of-sample performance falls severely short of its in-sample performance \cite{ref:michaud1989markowitz, ref:best1991sensitivity, ref:chopra2011errors}.

In this paper, we introduce a {principled} framework for robustifying any distribution-based risk measurement against mean-covariance uncertainty. To this end, we propose the Gelbrich distance~\cite{ref:gelbrich1990formula} as a metric in the space of mean-covariance pairs. We define the Gelbrich ambiguity set as the family of all asset return distributions with a given structure whose mean-covariance pairs reside in a Gelbrich ball around an empirical mean-covariance pair estimated from data. Here, the structure of a distribution refers to any of its properties that are complementary to location and scale. Examples include symmetry, unimodality, log-concavity or Gaussianity. We think of the structure as reflecting domain knowledge that is uninformed by data. Unlike the Chebyshev ambiguity set, the Gelbrich ambiguity set takes account of the uncertainty in the shape {\em as well as} the mean and covariance matrix of the asset return~distribution. For any given distribution-based risk measure, we then define the Gelbrich risk as the worst-case risk over the Gelbrich ambiguity set. 

We find that, if the underlying risk measure is law invariant, translation invariant and positive homogeneous, then the Gelbrich risk reduces to a regularized mean-standard deviation risk measure. The Gelbrich risk minimization problem is therefore equivalent to a regularized Markowitz portfolio selection problem, which can be solved highly efficiently. We thus revive and legitimate the Markowitz model, subject to a fully explicit and tractable regularization, irrespective its aforementioned shortcomings. The underlying risk measure and the structure of the asset return distribution impact the Gelbrich risk only through a scalar coefficient, which can be computed offline and is available in closed form for all coherent, spectral and distortion risk measures. In addition, the weight of the regularization term scales with the radius of the Gelbrich ambiguity set. As a corollary, we obtain that the equally weighted portfolio minimizes the Gelbrich risk under extreme mean-covariance uncertainty, that is, in the limit of an infinitely large Gelbrich ambiguity set.

For any fixed portfolio, we then analytically characterize the worst-case asset return distributions that maximize the underlying risk measure over the Gelbrich ambiguity set. These distributions reveal the portfolio's vulnerabilities and are straightforwardly applicable for stress tests.

The Gelbrich risk is intimately related to the theory of optimal transport \cite{ref:villani2008optimal}. Indeed, the Gelbrich distance between two mean-covariance pairs coincides with the 2-Wasserstein distance between the corresponding Gaussian distributions~\cite{ref:gelbrich1990formula}. Defining the Wasserstein risk as the worst-case risk over all distributions with a given structure that reside in a 2-Wasserstein ball around a nominal distribution estimated from data, we show that the Gelbrich risk upper bounds the Wasserstein risk if the underlying ambiguity sets have the same radius. This bound is sharp and collapses to an equality if the underlying ambiguity sets contain only Gaussian distributions.

We also show that the Gelbrich risk provides a finite-sample upper confidence bound on the true risk under the population distribution if the radius of the Gelbrich ambiguity set scales with the inverse square root of the sample size. This finite-sample bound is dimension-free in the sense that the rate does not depend the number of assets. This result contrasts sharply with existing out-of-sample guarantees for the Wasserstein risk \cite{ref:esfahani2018data}, which rely on a measure concentration result by~\cite{ref:fournier2015rate} and suffer from a curse of dimensionality. Our result is also orthogonal to the dimension-free finite-sample guarantees for the Wasserstein risk by~\cite{ref:gao2020finite}, which rely on concepts of hypothesis complexity such as covering numbers or Rademacher complexities and which apply only to worst-case expectations but may not easily generalize to other risk measures.

The study of decision problems under distributional uncertainty has a long and distinguished history in economics dating back at least to~\cite{K21:treatise} and~\cite{K21:risk}. \cite{Ellsberg_1961} was the first to document that most individuals have a low tolerance for distributional uncertainty. This phenomenon is often used to motivate the maxim that different decision alternatives should be ranked in view of their worst-case performance with respect to all distributions in some ambiguity set. A rigorous axiomatic justification for this decision rule is due to \cite{ref:gilboa1989maxmin}. In operations research, decision problems under uncertainty are typically framed as distributionally robust optimization problems, and research focuses primarily on deriving tractable reformulations and efficient solution algorithms \cite{ref:delage2010distributionally,ref:goh2010distributionally,  ref:wiesemann2014distributionally}.


Considerable efforts were directed to investigating various distributionally robust portfolio optimization models. When modeling distributional uncertainty via Chebyshev ambiguity sets, the worst-case risk admits tractable reformulations for the VaR \cite{ref:ghaoui2003worst, ref:zymler2013wcvar, ref:rujeerapaiboon2016robust, ref:rujeerapaiboon2018chebyshev}, the CVaR \cite{ref:natarajan2010utility, ref:chen2011tight, ref:zymler2013wcvar} as well as for all spectral risk measures \cite{ref:li2018risk} and all distortion risk measures~\cite{ref:cai2020distributionally, ref:pesenti202optimizing}. These reformulations will emerge as special cases of our results because any Chebyshev ambiguity set constitutes a Gelbrich ambiguity set with a vanishing radius. We also stress that if the mean and covariance matrix of the asset returns are estimated from data, the corresponding Chebyshev ambiguity set will {\em not} contain the data-generating distribution with probability one. Thus, the Chebyshev risk fails to provide a safe estimate for the true risk. To obtain a safe estimate, one could inflate the Chebyshev ambiguity set by allowing the mean and covariance matrix of the asset returns to range over simple box-type or semidefinite-representable confidence sets \cite{ref:ghaoui2003worst, ref:delage2010distributionally, ref:zymler2013wcvar, ref:rujeerapaiboon2016robust}. However, the design of these confidence sets is driven by computational rather than economic or statistical considerations. Moreover, tractability results are limited to special risk measures such as VaR or CVaR. Other ambiguity sets used in distributionally robust portfolio optimization impose asymmetric moment bounds \cite{ref:chen2010cvar, ref:natarajan2008incorporating, ref:natarajan2018asymmetry}, marginal moment bounds \cite{ref:doan2015robustness} or structural properties such as symmetry, unimodality or tail convexity etc.\ \cite{ref:popescu2005semidefinite, ref:yu2009projection, ref:van2016generalized, ref:lam2017tail}. Ambiguity sets based on factor models for the asset returns~\cite{ ref:nemirovski2007convex}, information divergences \cite{ref:ghaoui2003worst,ref:ben2013robust} or multiple priors~\cite{ref:garlappi2007portfolio} have also been studied.

Distributionally robust portfolio optimization problems with 1-Wasserstein ball ambiguity sets for discrete asset return distributions are addressed by~\cite{ref:pflug2007ambiguity} via exhaustive search methods from global optimization. Using robust optimization techniques, \cite{ref:postek2016risk} show that these problems are actually equivalent to tractable convex programs and thus amenable to efficient iterative algorithms. When the ambiguity set may contain generic non-discrete distributions, the worst case of any convex distortion risk measure over a $p$-Wasserstein ball for $p\ge 1$ coincides with the nominal risk adjusted by a regualization term penalizing some dual norm of the portfolio weight vector~\cite{ref:wozabal2014robustifying}; see also~\cite[\S~4]{ref:pichler2013evaluations} for a related discussion.
However, all of these results apply only to {\em convex} risk measures (thus excluding VaR) and fail to account for structural distributional information (meaning that the ambiguity set may contain unrealistic pathological distributions). Also, the only universal statistical guarantees on the true risk known to date suffer from a curse of dimensionality \cite{ref:esfahani2018data}.

The regularization terms penalizing the norm of the portfolio weight vector, which emerge from our reformulation of the Gelbrich risk, can be viewed as {\em implicit} norm constraints. In the context of a Markowitz model, \cite{ref:jagannathan2003risk} show that imposing {\em explicit} norm constraints is equivalent to shrinking the estimator of the covariance matrix. \cite{ref:demiguel2009generalized} offer a Bayesian interpretation for the resulting optimal portfolios and show empirically that they display an excellent out-of-sample performance. In contrast, our paper offers a new probabilistic interpretation for norm regularization terms and unveils how they depend on the risk measure and any available structural information.

The Gelbrich ambiguity set was first proposed by \cite{ref:nguyen2019bridging} to robustify minimum mean square error estimation problems against distributional uncertainty. Similar ideas were also used by \cite{ref:nguyen2018distributionally} and~\cite{ref:shafieezadeh2018wasserstein} in the context of inverse covariance matrix estimation and Kalman filtering, respectively.

The remainder of the paper is structured as follows. Section~\ref{sec:statement} formally introduces the Gelbrich risk. Section~\ref{sec:properties} investigates its attractive conceptual and statistical properties. Section~\ref{sec:linearportfolio} demonstrates its efficient computability for law invariant, translation invariant and positive homogeneous base risk measures in a classical portfolio selection setting. Section~\ref{sec:nonlinearportfolio} extends these findings to generalized portfolio selection and index tracking problems, and Section~\ref{sec:numerical} reports on numerical results. All proofs as well as some additional tractability results are relegated to the appendix.

\textit{Notation.} The $2$-norm of a vector $x \in \R^n$ is denoted by $\| x \|$. Similarly, we use $\| A \|$ and $\| A \|_F$ to denote the spectral and Frobenius norms of a square matrix $A \in \R^{n \times n}$, respectively. The space of all symmetric matrices in~$\R^{n\times n}$ is denoted as~$\Sym^n$, while $\PD^n$ ($\PSD^n$) stands for the cone of all positive (semi)definite matrices in~$\Sym^n$. For any $A \in \Sym^n$, $\lambda_{\min}(A)$ and $\lambda_{\max}(A)$ denote the minimum and maximum eigenvalues of $A$, respectively. We denote the Borel $\sigma$-algebra on $\R^n$ by $\mc B(\R^n)$, the space of Borel-measurable functions from~$\R^n$ to~$\mathbb R$ by~$\mc L_0$, and the set of all probability distributions on~$\mc B(\R^n)$ by~$\mc M$. The expectation of a random variable $\ell\in\mc L_0$ under $\Q\in\M$ is denoted by $\EE_\QQ[\ell]$. We denote by $\mc M_2$ the set of all~$\Q\in\mc M$ with finite second moments, that is, with~$\EE_\QQ[\|\xi\|^2] <\infty$.

\section{Problem Statement}
\label{sec:statement}
We study decision problems under distributional uncertainty that aim to minimize the risk of a loss affected by a vector~$\xi \in \R^n$ of {\em risk factors}. Formally, a {\em loss function} $\ell\in \mc L_0$ assigns each realization~$\xi \in \R^n$ of the risk factors a loss~$\ell(\xi)\in\R$. Different loss functions correspond to different decision alternatives available to a risk-averse decision maker. If the risk factors are governed by a probability distribution~$\PP\in\mc M$, then the decision maker ranks the loss functions according to a risk measure~$\risk_\PP: \mc L_0 \to \R \cup \{+\infty\}$, which usually depends on~$\PP$. We thus define the {\em risk} of any loss function~$\ell\in \Lc_0$ under~$\PP$ as~$\risk_\PP(\ell)$. In addition, we define the {\em optimal risk} corresponding to a set~$\mc L \subseteq \mc L_0$ of {\em admissible} loss functions under~$\PP$ as~$\inf_{\ell\in\mc L}\risk_\PP(\ell)$. Unfortunately, the true probability distribution~$\PP$ is almost never known in practice, and thus neither the risk of a fixed loss function nor the optimal risk can be evaluated reliably.

We henceforth assume that the decision maker has only access to limited statistical information, along with some prior structural information, about~$\PP$. Formally, she only knows that $\PP$ lies in some ambiguity set $\mc P\subseteq\M$. For a given family of risk measures $\{\mc R_\QQ\}_{\QQ\in\mc M}$ this leads to the corresponding {\em worst-case risk} of any loss function $\ell\in\mc L_0$ given by
\begin{equation}\label{wcrm}
    \mc R_{\mc P}(\ell) = \sup_{\QQ \in\mc P} \mc R_\QQ(\ell).
\end{equation}
Ranking different loss functions by their worst-case risk, the decision maker thus solves a distributionally robust optimization problem that finds the {\em optimal worst-case risk}
\begin{equation}\label{optimalwcr}
   \mc R_{\mc P}(\mc L) = \inf_{\ell\in\mc L} \mc R_{\mc P}(\ell) = \inf_{\ell\in\mc L} \sup_{\QQ \in\mc P} \mc R_\QQ(\ell).
\end{equation}

In the following we discuss the choice of the ambiguity set~$\mc P$. Specifically, in Section~\ref{sec:structural} we first formalize the modeling of structural information, and in Section~\ref{ref:statistical} we address the modeling of statistical information. We thereby focus on mean-covariance uncertainty.

\subsection{Structural Information}
\label{sec:structural}

We encode structural information about the unknown probability distribution~$\PP$ by a structural ambiguity set defined as follows. 

\begin{definition}[Structural ambiguity set]
    \label{def:structural}
    A structural ambiguity set~$\mc S$ is a subset of~$\mc M_2$ that is closed under positive semidefinite affine pushforwards. That is, for any $\QQ \in \mc S$ and any transformation $f: \R^n \to \R^n$ of the form $f(\xi) = A\xi + b$ for some $A \in \PSD^{n}$ and $b \in \R^n$, the pushforward distribution $\QQ \circ f^{-1}$ belongs to $\mc S$.
\end{definition}

The entire set~$\mathcal M_2$ trivially constitutes a structural ambiguity set. Below we discuss non-trivial examples of structural ambiguity sets that will be addressed in this paper.

\begin{definition}[Symmetric distribution]
    \label{definition:symmetric3}
    The probability distribution $\QQ\in\mc M_2$ is symmetric if there exists $\m \in \R^n$ with $\QQ[\xi \le \m- \tau]=\QQ[\xi \ge \m+\tau]$ for all $\tau \in \R^n$.
\end{definition}

By definition, $\QQ$ is symmetric about $\mu$ if and only if the random vectors $\xi-\mu$ and $\mu-\xi$ have the same cumulative distribution function under $\QQ$, which is equivalent to the condition that $\QQ\left[(\xi-\mu)\in B\right] = \QQ\left[(\mu-\xi) \in B \right]$ for all Borel sets $B \subseteq \R^n$. Hence, the set of symmetric probability distributions is closed under positive semidefinite affine pushforwards and thus constitutes a structural ambiguity set; see also ~\cite[Lemma~1]{ref:yu2009projection}.

While there is consensus about what it means for a {\em univariate} distribution to be unimodal, there are several non-equivalent notions of unimodality for {\em multivariate} distributions. In the following we will argue that two of these notions, namely linear unimodality and log-concavity, give rise to structural ambiguity sets.

\begin{definition}[Linear unimodal distribution]
    \label{definition:linear-unimodal}
    The probability distribution $\QQ\in\mc M_2$ is  linear unimodal if there exists $\mu\in\R^n$ such that the cumulative distribution function of $\w^\top \xi$ under $\QQ$ is convex on $(-\infty, w^\top\mu]$ and concave on $[w^\top\mu, +\infty)$ for all $\w \in \R^n$.
\end{definition}

It is easy to show that the set of linear unimodal distributions is closed under positive semidefinite affine pushforwards and thus also constitutes a structural ambiguity set.

\begin{definition}[Log-concave distribution]
    \label{definition:log-concave}
    The probability distribution $\QQ \in \mc M_2$ is log-concave if for any Borel sets~$B_1, B_2 \in \mc B(\R^n)$ and for any scalar weight~$\theta \in [0, 1]$ we have
    \begin{align*}
        \mathbb Q( \theta B_1 + (1 - \theta) B_2 ) \geq \mathbb Q(B_1)^\theta \mathbb Q(B_2)^{1 - \theta},
    \end{align*}
    where the convex combination of $B_1$ and $B_2$ is understood in the sense of Minkowski.
\end{definition}

Log-concave distributions play an important role in statistics and optimization. Many standard distributions such as the uniform distributions on convex sets as well as the Gaussian, Wishart or Dirichlet distributions are log-concave. All log-concave distributions have sub-exponential tails \cite{ref:borell1983convexity}. Moreover, by \cite[Lemma~2.1]{ref:dharmadhikari1988unimodality} the family of all log-concave distributions is closed under positive semidefinite affine pushforwards and thus constitutes a structural ambiguity set.

\begin{definition}[Elliptical distribution]
    \label{definition:elliptical}
    The probability distribution $\QQ\in\mc M_2$ is elliptical if its characteristic function $\EE_\QQ[\exp(i \tau^\top \xi)]$ can be written as $\exp(i \tau^\top \m) \phi(\tau^\top \cov \tau)$ for some location parameter $\m \in \R^n$, dispersion matrix $\cov \in \PSD^n$ and characteristic generator $\phi: \R_+ \to \R$, where $i = \sqrt{-1}$ denotes the imaginary unit.
\end{definition}

By \cite[Theorem~1]{ref:cambanis1981theory}, the family of all elliptical distributions with the same characteristic generator is closed under positive semidefinite affine pushforwards and thus constitutes a structural ambiguity set. The same theorem implies that every elliptical distribution is symmetric. However, not every elliptical distribution is unimodal. Examples of elliptical distributions that fail to be unimodal include certain Kotz-type or multivariate Bessel distributions. The location parameter $\m$ of an elliptical distribution~$\QQ$ always matches the mean of~$\QQ$. In addition, one can show that the generator $\phi$ of~$\QQ$ may not be chosen freely but is only admissible if $\phi(\|\tau\|^2)$ represents the characteristic function of some probability distribution $\QQ'$ with finite second moments. Note that $\QQ'$ differs from $\QQ$ unless $\m=0$ and $\cov=I_n$. As probability distributions are normalized, this condition can only hold if $\phi(0) = 1$. And as any characteristic function is continuous thanks to the dominated convergence theorem, this condition further implies that $\phi$ must be continuous. One can also show that the covariance matrix of~$\QQ$ is given by $-2 \phi'(0)\cov$, where~$\phi'(0)$ stands for the right derivative of~$\phi(u)$ at $u=0$. Assuming that $\QQ$ has finite second moments is thus equivalent to assuming that~$\phi'(0)$ exists and is finite. In the remainder of the paper we will assume without loss of generality that $\phi'(0)=-\frac{1}{2}$, which means that the dispersion matrix $\cov$ coincides with the covariance matrix of~$\QQ$. Indeed, changing the characteristic generator to $\phi(\frac{-u}{2\phi'(0)})$ and the dispersion matrix to $-2 \phi'(0)\cov$ does not change $\QQ$ but ensures that the dispersion matrix and the covariance matrix of~$\QQ$ coincide. 
Note that the families of all symmetric, linear unimodal or log-concave distributions as well as the family of all elliptical distributions with a given generator fail to be convex. For example, the set of Gaussian distributions, which is obtained by setting $\phi(u)=e^{-u/2}$, is non-convex because mixtures of Gaussian distributions are generically multimodal and thus not Gaussian. 

We now define the smallest structural ambiguity set that contains a given~$\Pnom\in\mc M_2$. 

\begin{definition}[Structural ambiguity set generated by~$\Pnom$] \label{def:element}
     The structural ambiguity set generated by $\Pnom\in\mc M_2$ is the family of all positive semidefinite affine pushforwards of $\Pnom$.
\end{definition}

By \cite[Theorem~1]{ref:cambanis1981theory}, the set of all elliptical distributions with generator~$\phi$ can be viewed as the structural ambiguity set generated by the standardized elliptical distribution~$\Pnom$ with generator~$\phi$, mean~$\mu=0$ and covariance matrix~$\cov=I$. In contrast, the sets of all symmetric, linear unimodal or log-concave distributions are not generated by any single distribution. Also, not every distribution in the structural ambiguity set~$\mc S$ generated by~$\Pnom$ generates all of~$\mc S$. For example, the Dirac distribution~$\delta_0$ that concentrates unit mass at~0 constitutes a (degenerate) Gaussian distribution with mean~$\mu=0$ and covariance matrix~$\cov=0$. However, $\delta_0$ fails to generate the family of all Gaussian distributions because any positive semidefinite affine pushforward of~$\delta_0$ is also Dirac distribution.

\subsection{Statistical Information}
\label{ref:statistical}

In addition to structural information captured by the ambiguity set~$\mc S$, the decision maker may have access to a finite training sample that provides statistical information about the unknown probability distribution~$\PP$. In a financial context, such training samples are routinely used to construct estimators~$\msa$ and~$\covsa$ for the mean and covariance matrix of~$\PP$, respectively. Indeed, \cite{Roy52} asserts that the first and second moments of~$\PP$ `{\em are the only quantities that can be distilled out of our knowledge of the past}.' Moreover, he adds that `{\em the slightest acquaintance with problems of analysing economic time series will suggest that this assumption is optimistic rather than unnecessarily restrictive.}' Roy's warning alerts us that the true mean~$\mu$ and the true covariance matrix~$\cov$ of~$\PP$ typically differ from their noisy estimators~$\msa$ and~$\covsa$. In the following, we quantify the corresponding estimation errors via the Gelbrich distance in the space of mean-covariance pairs.

\begin{definition}[Gelbrich distance] \label{def:Gelbrich}
    The Gelbrich distance between two mean-covari\-ance pairs~$(\m_1, \cov_1)$ and $(\m_2, \cov_2)$ in~$\R^n \times \PSD^n$ is given by
\begin{equation*}
\Gelbrich\big( (\m_1, \cov_1), (\m_2, \cov_2) \big) = \sqrt{ \| \m_1 - \m_2 \|^2 + \Tr{\cov_1 + \cov_2 - 2 \big( \cov_2^{\half} \cov_1 \cov_2^{\half} \big)^{\half} } }.
\end{equation*}
\end{definition}
One can show that the Gelbrich distance is non-negative, symmetric and subadditive and that it vanishes if and only if $(\m_1, \cov_1) = (\m_2, \cov_2)$, which implies that it represents a metric on $\R^n \times \PSD^n$ \cite[pp.~239]{ref:givens1984class}. 


If $\mu_1=\mu_2$, then the Gelbrich distance reduces to the Bures distance that measures the dissimilarity between density matrix operators in quantum information theory~\cite{ref:bhatia2018strong, ref:bhatia2018bures}. In this case, the Gelbrich distance induces a Riemannian metric on the space of positive semidefinite matrices. If, in addition, the two covariance matrices are diagonal, then the Gelbrich distance simplifies to the Hellinger distance, which is closely related to the Fisher-Rao metric ubiquitous in information theory~\cite{ref:liese1987convex}.

We can thus define the mean-covariance uncertainty set as the ball of radius~$\rho\ge 0$ centered at the estimators~$\msa$ and~$\covsa$ of the mean and covariance matrix of~$\PP$,
\[
\U_{\rho}(\msa, \covsa) = 
\left\{ (\m,\cov) \in \R^n\times \PSD^n: \Gelbrich\big( (\m, \cov), (\msa, \covsa) \big) \leq \rho \right\}.
\]
We can then introduce the {\em Gelbrich ambiguity set} as the family of all probability distributions of~$\xi$ that are consistent with the available structural and statistical information. Formally, the Gelbrich ambiguity set is defined as the pre-image of $\mathcal U_{\rho}(\msa, \covsa)$ under the transformation that maps~$\QQ \in \mc S$ to its first and second moments.
\begin{definition}[Gelbrich ambiguity set]
    \label{def:gelbrich-hull}
    The Gelbrich ambiguity set is given by
    \begin{equation*}
    \label{eq:Gelbrich:ball}
    \mc{G}_{\rho}(\msa, \covsa) = \left\{ \QQ \in \mc S: 
    \left(\EE_{\QQ}[\xi] , \EE_\QQ[(\xi-\EE_{\QQ}[\xi])(\xi-\EE_{\QQ}[\xi])^\top] \right)  \in \mathcal U_{\rho}(\msa, \covsa)
    \right\}.
    \end{equation*}
\end{definition}

Note that the Gelbrich ambiguity set contains all distributions in the structural ambiguity set~$\mc S$ whose mean-covariance pairs have a Gelbrich distance of at most~$\rho$ from the estimators~$(\msa,\covsa)$. If the estimation error of~$(\msa,\covsa)$ is at most~$\rho$, then the unknown true distribution~$\PP$ is thus guaranteed to belong to the Gelbrich ambiguity set. In this case the true risk of a Borel-measurable loss function~$\ell\in \Lc_0$ cannot be evaluated because~$\PP$ is unknown (except in trivial cases, {\em e.g.}, when~$\ell$ is constant). However, we can evaluate the worst-case risk of~$\ell$ with respect to all probability distributions in the Gelbrich ambiguity set~$\mc{G}_{\rho}(\msa, \covsa)$. We thus define the {\em Gelbrich risk} $\risk_{\mc{G}_{\rho}(\msa, \covsa)}(\ell)$ as the worst-case risk~\eqref{wcrm} with respect to the ambiguity set~$\mc P = \mc{G}_{\rho}(\msa, \covsa)$. 
In addition, we define the {\em optimal Gelbrich risk} $\risk_{\mc{G}_{\rho}(\msa, \covsa)}(\Lc)$ corresponding to a set~$\mc L \subseteq \mc L_0$ of admissible loss functions as the infimum of the Gelbrich risk over all loss functions~$\ell\in\mc L$, defined as in~\eqref{optimalwcr} with $\mc P = \mc{G}_{\rho}(\msa, \covsa)$.
Note that by solving the optimal Gelbrich risk problem, the decision maker anticipates the worst possible probability distribution in the Gelbrich ambiguity set and seeks a decision alternative that results in the least possible risk under this worst-case distribution. Computing $\risk_{\mc{G}_{\rho}(\msa, \covsa)}(\Lc)$ thus amounts to solving a distributionally robust optimization problem.

\section{Properties of the Gelbrich Ambiguity Set}
\label{sec:properties}
We now provide conceptual, statistical and computational justification for modeling distributional uncertainty via Gelbrich ambiguity sets. In Section~\ref{sec:wasserstein-risk} we first show that the Gelbrich ambiguity set is closely related to the Wasserstein ambiguity set, which is widely used in distributionally robust optimization. In Section~\ref{sect:guarantee} we then investigate the statistical and in Section~\ref{sect:computational-perperties} the computational properties of the Gelbrich ambiguity set.

\subsection{Relation between the Gelbrich and Wasserstein Ambiguity Sets}
\label{sec:wasserstein-risk}

Instead of directly estimating the mean and covariance matrix from a training sample, one could construct a {\em nominal} probability distribution~$\Pnom$ representing a best guess of the unknown true distribution~$\PP$. Throughout the rest of the paper we assume that~$\Pnom$ belongs to the structural ambiguity set~$\mc S$ and that~$\msa$ and~$\covsa$ coincide with the mean and covariance matrix of~$\Pnom$, respectively. Note that the first and second moments of~$\Pnom$ exist because~$\mc S\subseteq\mc M_2$. Of course, the decision maker should not put full trust into the nominal distribution~$\Pnom$, which can be viewed as a noisy estimator for~$\PP$, and the corresponding estimation errors are conveniently measured by the 2-Wasserstein distance on the space~$\mc M_2$.

\begin{definition}[Wasserstein distance]
    \label{definition:wasserstein}    
    The 2-Wasserstein distance between two distributions $\QQ_1,\QQ_2\in \mc M_2$~is 
    \be
    \notag
    \Wass(\QQ_1, \QQ_2) = \Min{\pi\in\Pi(\QQ_1,\QQ_2)}  \left( \int_{\R^n \times \R^n} \|\xi_1 - \xi_2\|^2\, \pi({\rm d}\xi_1, {\rm d} \xi_2) \right)^{\half},
    \ee
    where $\Pi(\QQ_1,\QQ_2)$ denotes the set of all couplings of $\QQ_1$ and $\QQ_2$, that is, the set of all joint distributions of $\xi_1\in\R^n$ and $\xi_2\in\R^n$ with marginal distributions $\QQ_1$ and  $\QQ_2$, respectively. 
\end{definition}

The 2-Wasserstein distance is non-negative, symmetric and subadditive, and it vanishes only if $\QQ_1 = \QQ_2$, which implies that it represents a metric on~$\mc M_2$ \cite[p.~94]{ref:villani2008optimal}. In addition, the minimization problem over~$\pi$ is always solvable \cite[Theorem~5.9]{ref:villani2008optimal}, and $\Wass(\QQ_1, \QQ_2)^2$ can be viewed as the minimum cost of transporting the distribution $\QQ_1$ to $\QQ_2$, assuming that the cost of moving a unit probability mass from~$\xi_1$ to~$\xi_2$ amounts to~$\|\xi_1-\xi_2\|^2$. The variable~$\pi$ thus encodes a (probability) mass transportation plan. 

We can now define the Wasserstein ambiguity set with structural information
\begin{equation*}
\label{eq:ball}
\mc W_{\rho}(\Pnom) = \left\{ \QQ \in \mathcal S: \Wass(\Pnom, \QQ) \leq \rho \right\}
\end{equation*}
as the ball of radius $\rho \geq 0$ in the structural ambiguity set $\mathcal S$ centered at the nominal distribution $\Pnom$ with respect to the 2-Wasserstein distance. Intuitively, the radius~$\rho$ of this ambiguity set quantifies the decision maker's distrust in the nominal distribution~$\Pnom$. Note that since~$\Pnom\in\mc S$ and~$\Wass(\Pnom, \Pnom) = 0$, the Wasserstein ambiguity set is non-empty for every~$\rho\ge 0$.

If the true probability distribution~$\PP$ of the risk factors is unknown, then the decision maker could rank different loss functions in view of their worst-case risk over the Wasserstein ambiguity set. 
 We thus define the {\em Wasserstein risk} $\risk_{\mc{W}_{\rho}(\Pnom)}(\ell)$ as the worst-case risk~\eqref{wcrm} with $\mc P = \mc{W}_{\rho}(\Pnom)$. 
In addition, we define the {\em optimal Wasserstein risk} $\risk_{\mc{W}_{\rho}(\Pnom)}(\Lc)$ corresponding to a set~$\mc L \subseteq \mc L_0$ of admissible loss functions as the infimum of the Wasserstein risk over all loss functions~$\ell\in\mc L$, defined as in~\eqref{optimalwcr} with $\mc P = \mc{W}_{\rho}(\Pnom)$.

Distributionally robust optimization with Wasserstein ambiguity sets enjoys increasing popularity in economics and operations research because it offers attractive out-of-sample performance and asymptotic consistency guarantees while being computationally tractable \cite{ref:kuhn2019tutorial}. For example, if the risk measure~$\risk_\QQ$ coincides with the expected value under~$\QQ$, the loss function~$\ell$ is representable as a pointwise maximum of finitely many concave functions, $\Pnom$ is discrete and~$\mc S=\mc M_2$, then, under mild technical conditions, the optimal Wasserstein risk $\risk_{\mc{W}_{\rho}(\Pnom)}(\Lc)$ can be computed by solving a tractable convex optimization problem~\cite[\S~6]{ref:zhen2019distributionally}. However, tractability results for more general risk measures, nominal distributions and structural ambiguity sets are scarce. And even if a tractable reformulation exists, its size typically scales with the cardinality of the support of~$\Pnom$. In this section we will show that the Gelbrich ambiguity set provides an outer approximation of the Wasserstein ambiguity set and that this approximation becomes exact if the structural ambiguity set~$\mc S$ is generated by the nominal distribution~$\Pnom$. This result is significant for several reasons. First, it implies that if the Wasserstein ambiguity set is designed as a confidence region for~$\PP$, then the corresponding Gelbrich ambiguity set also constitutes a confidence region for~$\PP$ with the same coverage probability. The Gelbrich risk thus inherits any known statistical guarantees for the Wasserstein risk. Moreover, this result will later allow us to construct tractable conservative approximations or exact tractable reformulations of the Wasserstein risk that do not grow with the cardinality of the support of the nominal distribution. We emphasize that these approximations and reformulations are available for a broad range of risk measures, nominal distributions and structural ambiguity sets for which there currently exist no tractability results.

To show that the Gelbrich risk upper bounds the Wasserstein risk, we recall that the Gelbrich distance provides a lower bound on the 2-Wasserstein distance between two probability distributions that depends exclusively on their mean vectors and covariance matrices. 

\begin{theorem}[{Gelbrich bound \cite[Theorem~2.1]{ref:gelbrich1990formula}}]
    \label{theorem:gelbrich}
    For any distributions $\QQ_1,\QQ_2\in\mc M_2$ with mean vectors $\mu_1$, $\mu_2 \in \R^n$ and covariance matrices $\cov_1$, $\cov_2 \in \PSD^n$, respectively, we have $\Wass(\QQ_1, \QQ_2) \geq \Gelbrich ((\m_1, \cov_1), (\m_2, \cov_2))$.
\end{theorem}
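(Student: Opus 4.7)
The plan is to relax the optimal-coupling definition of $\Wass(\QQ_1,\QQ_2)^2$ to a purely second-order matrix problem and then solve that relaxation in closed form. For any $\pi\in\Pi(\QQ_1,\QQ_2)$, I would first expand
\begin{align*}
\int\|\xi_1-\xi_2\|^2\,\pi(\diff\xi_1,\diff\xi_2) \;=\; \EE_{\QQ_1}[\|\xi_1\|^2]+\EE_{\QQ_2}[\|\xi_2\|^2]-2\,\EE_\pi[\xi_1^\top\xi_2],
\end{align*}
and then use the identities $\EE_{\QQ_i}[\|\xi_i\|^2]=\|\mu_i\|^2+\Tr{\cov_i}$ and $\EE_\pi[\xi_1^\top\xi_2]=\mu_1^\top\mu_2+\Tr{C}$, where $C\defeq\EE_\pi[(\xi_1-\mu_1)(\xi_2-\mu_2)^\top]$ is the cross-covariance. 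The transport cost rewrites as $\|\mu_1-\mu_2\|^2+\Tr{\cov_1+\cov_2}-2\Tr{C}$, so minimizing over $\pi$ reduces to \emph{maximizing} $\Tr{C}$ over all admissible cross-covariances.

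Next, I would pass to a matrix relaxation by keeping only the necessary constraint that the joint covariance $\left(\begin{smallmatrix}\cov_1 & C\\ C^\top & \cov_2\end{smallmatrix}\right)$ is positive semidefinite, which can only loosen the resulting bound on $\Wass^2$. Temporarily assuming $\cov_1,\cov_2\succ 0$, I would reparameterize $C=\cov_1^{\half}T\cov_2^{\half}$; a Schur-complement calculation converts the block PSD constraint into the operator-norm bound $T^\top T\preceq I$. Applying von Neumann's trace inequality to $\Tr{\cov_2^{\half}\cov_1^{\half}T}$ then yields
\begin{align*}
\Tr{C}\;\le\;\sum_{i=1}^n\sigma_i\!\big(\cov_2^{\half}\cov_1^{\half}\big)\;=\;\Tr{(\cov_2^{\half}\cov_1\cov_2^{\half})^{\half}}.
\end{align*}
Substituting this bound back into the cost expansion gives $\Wass(\QQ_1,\QQ_2)^2\ge\|\mu_1-\mu_2\|^2+\Tr{\cov_1+\cov_2-2(\cov_2^{\half}\cov_1\cov_2^{\half})^{\half}}=\Gelbrich((\mu_1,\cov_1),(\mu_2,\cov_2))^2$, which is the claim.

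The main obstacle I anticipate is the degenerate case where $\cov_1$ or $\cov_2$ fails to be strictly positive definite and the Schur reparameterization is ill-defined. I would resolve this by a regularization argument: on an enlarged probability space, replace $\xi_i$ by $\xi_i+\eta_i$ with independent $\eta_i\sim\mc N(0,\eps I)$, so that the perturbed marginals have covariances $\cov_i+\eps I\succ 0$. A direct product-coupling computation shows that adding such independent noise increases the squared 2-Wasserstein distance by exactly $2n\eps$, so applying the nondegenerate bound to the perturbed distributions and subtracting $2n\eps$ from both sides yields the same lower bound but with $\cov_i$ replaced by $\cov_i+\eps I$. Letting $\eps\downarrow 0$ and invoking joint continuity of the map $(A,B)\mapsto\Tr{(B^{\half}AB^{\half})^{\half}}$ on $\PSD^n\times\PSD^n$ recovers the bound in full generality. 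Note that no tightness statement is needed, since the theorem only asserts one direction of inequality; the matching attainment (realized by Gaussian marginals and their Monge map) would be recorded separately if needed elsewhere.
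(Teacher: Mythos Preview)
Your approach is essentially the same as the paper's: both reduce $\Wass^2$ to the transport cost $\|\mu_1-\mu_2\|^2+\Tr{\cov_1+\cov_2}-2\Tr{C}$, relax to the semidefinite constraint $\left(\begin{smallmatrix}\cov_1 & C\\ C^\top & \cov_2\end{smallmatrix}\right)\succeq 0$ on the cross-covariance, and then identify the optimal value of the resulting matrix problem as $\Gelbrich^2$. The only difference is that the paper outsources this last identification to the literature (citing \cite{ref:malago2018wasserstein} and related works, which handle all $\cov_i\in\PSD^n$ directly), whereas you solve it yourself via the reparameterization $C=\cov_1^{1/2}T\cov_2^{1/2}$, von Neumann's trace inequality, and a regularization for degenerate covariances. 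Your route is more self-contained; the paper's is shorter.

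One small correction: your product-coupling step does not show that adding independent $\mc N(0,\eps I)$ noise increases $\Wass^2$ by \emph{exactly} $2n\eps$; it only gives the upper bound $\Wass^2(\tilde\QQ_1,\tilde\QQ_2)\le \Wass^2(\QQ_1,\QQ_2)+2n\eps$ (indeed, coupling with the \emph{same} noise shows $\Wass^2$ cannot increase at all). Fortunately the inequality direction is what you need, so the rest of your limiting argument goes through unchanged.
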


The Gelbrich bound of Theorem~\ref{theorem:gelbrich} may be useful when the exact Wasserstein distance between two probability distributions is inaccessible. Indeed, computing Wasserstein distances is generically $\#$P-hard \cite[Theorem~2.2]{ref:taskesen2019complexity}. Even though the Gelbrich distance is non-convex, the squared Gelbrich distance is jointly convex in both of its arguments. This is evident from the proof of Theorem~\ref{theorem:gelbrich}, which shows that $\Gelbrich^2 ( (\m_1, \cov_1), (\m_2, \cov_2) )$ equals the optimal value of a semidefinite program, and because convexity is preserved under partial minimization \cite[Proposition~3.3.1]{ref:bertsekas2009convex}.  Maybe surprisingly, the Gelbrich bound is tight in several cases of practical interest. 

\begin{theorem}[Tightness of the Gelbrich bound] \label{theorem:Wasserstein=Gelbrich}
    Suppose that the distributions $\QQ_1,\QQ_2\in\mc M_2$ have mean vectors $\mu_1, \mu_2 \in \R^n$ and covariance matrices $\cov_1, \cov_2 \in \PSD^n$, respectively. If~$\cov_1\succ 0$ and $\QQ_2$ is a positive semidefinite affine pushforward of~$\QQ_1$, then we have \[ \Wass(\QQ_1, \QQ_2) = \Gelbrich \big((\m_1, \cov_1), (\m_2, \cov_2) \big).\]
\end{theorem}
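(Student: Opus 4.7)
My strategy is to exhibit an explicit coupling whose cost matches $\Gelbrich^2((\mu_1,\cov_1),(\mu_2,\cov_2))$, so that combining with the Gelbrich lower bound of Theorem~\ref{theorem:gelbrich} collapses the inequality to an equality. By hypothesis, there exist $A \in \PSD^n$ and $b \in \R^n$ with $\QQ_2 = \QQ_1 \circ f^{-1}$, where $f(\xi) = A\xi + b$; consequently $\mu_2 = A\mu_1 + b$ and $\cov_2 = A \cov_1 A$. I take $\pi$ to be the deterministic coupling given by the joint law of $(\xi_1, A\xi_1 + b)$ under $\QQ_1$. Its marginals are exactly $\QQ_1$ and $\QQ_2$, so $\pi \in \Pi(\QQ_1,\QQ_2)$ and $\Wass(\QQ_1,\QQ_2)^2$ is dominated by the $\pi$-cost.

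\textbf{Cost of the coupling.} Splitting $\xi_1$ into its mean and centered part and noting that the cross term vanishes because $\xi_1 - \mu_1$ is mean-zero under $\QQ_1$, the cost reads
\[
\int_{\R^n \times \R^n} \|\xi_1 - \xi_2\|^2 \, \pi(\diff \xi_1, \diff \xi_2) = \|\mu_2 - \mu_1\|^2 + \Tr{(A-I)\cov_1(A-I)}.
\]
Expanding $(A-I)^2 = A^2 - 2A + I$ and invoking $\Tr{A^2\cov_1} = \Tr{A\cov_1 A} = \Tr{\cov_2}$, the second summand simplifies to $\Tr{\cov_1 + \cov_2 - 2A\cov_1}$.

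\textbf{Bures-type identity.} The crux is the identity $(\cov_1^{\half}\cov_2 \cov_1^{\half})^{\half} = \cov_1^{\half} A \cov_1^{\half}$. Squaring the right-hand side and using $\cov_2 = A\cov_1 A$ gives $\cov_1^{\half} A \cov_1 A \cov_1^{\half} = \cov_1^{\half}\cov_2\cov_1^{\half}$; since $A \in \PSD^n$, the matrix $\cov_1^{\half} A \cov_1^{\half}$ is itself PSD and hence equal to the principal square root of its square. Taking traces, together with $\Tr{(\cov_2^{\half}\cov_1\cov_2^{\half})^{\half}} = \Tr{(\cov_1^{\half}\cov_2\cov_1^{\half})^{\half}}$ (equal eigenvalues by similarity), I conclude $\Tr{A\cov_1} = \Tr{(\cov_2^{\half}\cov_1\cov_2^{\half})^{\half}}$. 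Substituting back shows the cost of $\pi$ equals $\Gelbrich^2((\mu_1,\cov_1),(\mu_2,\cov_2))$; combined with Theorem~\ref{theorem:gelbrich}, the advertised equality follows.

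\textbf{Main obstacle and role of $\cov_1 \succ 0$.} The only non-routine step is the Bures-type identity, and this is the part I expect to require the most care; everything else is bookkeeping. The assumption $\cov_1 \succ 0$ is not strictly needed for this coupling-based argument (the identity above holds whenever $A$ is PSD), but it canonically pins down $A$ via the closed form $A = \cov_1^{-\half}(\cov_1^{\half}\cov_2\cov_1^{\half})^{\half}\cov_1^{-\half}$, which guarantees an unambiguous representation of the pushforward and may be leveraged in alternative proof strategies.
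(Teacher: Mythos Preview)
Your argument is correct and, in fact, more self-contained than the paper's. The paper proceeds differently: it first pins down $A$ explicitly via $A=\cov_1^{-\half}(\cov_1^{\half}\cov_2\cov_1^{\half})^{\half}\cov_1^{-\half}$ (this is where $\cov_1\succ 0$ is genuinely used), shifts both distributions to have mean zero, and then invokes an external result of Cuesta-Albertos et al.\ (stated as Lemma~\ref{lemma:affine-coupling}) asserting that for a zero-mean distribution and its PSD linear pushforward the deterministic coupling is \emph{optimal}, i.e.\ $\Wass^2(\bar\QQ_1,\bar\QQ_2)=\EE_{\bar\QQ_1}[\|\xi-A\xi\|^2]$. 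The paper explicitly flags this lemma as ``nontrivial.'' Your route sidesteps it entirely: you only need the deterministic coupling as an \emph{upper} bound on $\Wass^2$---which is immediate from the definition---and then close the sandwich with the Gelbrich lower bound of Theorem~\ref{theorem:gelbrich}, which the paper has already established. As a bonus, your observation that the Bures-type identity $(\cov_1^{\half}\cov_2\cov_1^{\half})^{\half}=\cov_1^{\half}A\cov_1^{\half}$ holds for any $A\in\PSD^n$ regardless of the invertibility of $\cov_1$ is correct, so your argument actually delivers the conclusion without the hypothesis $\cov_1\succ 0$; in the paper that hypothesis is needed only to make the explicit formula for $A$ well-defined and to feed into Lemma~\ref{lemma:affine-coupling}.
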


Recall that $\QQ_2$ is a positive semidefinite affine pushforward of~$\QQ_1$ if there exists an affine function $f(\xi) = A \xi + b$ with  $A \in \PSD^n$ and $b \in \R^n$  such that $\QQ_2 = \QQ_1 \circ f^{-1}$. The proof of Theorem~\ref{theorem:Wasserstein=Gelbrich} reveals that $f$ is uniquely determined by $\m_1$, $\m_2$, $\cov_1$ and~$\cov_2$ via the relations
\begin{equation}
\label{eq:optimal-affine}
A = \cov_1^{-\half} \big( \cov_1^{\half} \cov_2 \cov_1^{\half} \big)^{\half} \cov_1^{-\half} \qquad \text{and} \qquad b = \mu_2 - A \mu_1.
\end{equation}
Note that the inverse of $\cov_1^{-\frac{1}{2}}$ exists because of our assumption that $\cov_1$ is positive definite. We emphasize, however, that Theorem~\ref{theorem:Wasserstein=Gelbrich} remains valid if $\cov_1$ is only positive {\em semi}definite and if $\QQ_2 \circ P_{\cov_1}^{-1} = \QQ_1 \circ f^{-1}$, where $f$ is parametrized as in~\eqref{eq:optimal-affine} with $\cov_1^{-1}$ representing the Moore-Penrose inverse of $\cov_1$, while $P_{\cov_1}$ denotes the orthogonal projection onto the column space of $\cov_1$ \cite[Theorem~2.1]{ref:gelbrich1990formula}. To keep this paper self-contained, we prove Theorem~\ref{theorem:Wasserstein=Gelbrich}, which is weaker but sufficient for our purposes, in the online appendix. 

If~$\QQ$ belongs to the structural ambiguity set generated by~$\Pnom$, then it constitutes a positive semidefinite affine pushforward of~$\Pnom$. If in addition~$\covsa\succ 0$, then Theorem~\ref{theorem:Wasserstein=Gelbrich} implies that the 2-Wasserstein distance between~$\QQ$ and~$\Pnom$ coincides with the Gelbrich distance between their mean-covariance pairs.

We now show that $\mc U_\rho(\msa, \covsa)$ covers the projection of the 2-Wasserstein ball~$\mc W_{\rho}(\Pnom)$ onto the space of mean-covariance pairs. A sharper results is available if~$\covsa \succ 0$, in which case~$\mc U_\rho(\msa, \covsa)$ coincides exactly with the aforementioned projection.

\begin{proposition}[Projection of $\mc W_{\rho}(\Pnom)$ onto the mean-covariance space]
    \label{prop:gelbrich-projection} 
    If the nominal distribution $\Pnom$ has mean $\msa\in\R^n$ and covariance matrix $\covsa\in \PSD^n$, then we have
    \begin{equation}
    \label{eq:Gelbrich-inclusion}
    \left\{ \left(\EE_{\QQ}[\xi] , \EE_\QQ[(\xi-\EE_{\QQ}[\xi])(\xi-\EE_{\QQ}[\xi])^\top] \right) : \QQ\in \mc W_{\rho}(\Pnom)\right\} \subseteq \U_{\rho}(\msa, \covsa).
    \end{equation}
    If in addition $\covsa \succ 0$, then the inclusion becomes an equality.
\end{proposition}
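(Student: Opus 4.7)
My plan is to prove the two parts separately, using Theorems \ref{theorem:gelbrich} and \ref{theorem:Wasserstein=Gelbrich} as the two main inputs.

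For the inclusion~\eqref{eq:Gelbrich-inclusion}, I would fix an arbitrary $\QQ\in\mc W_\rho(\Pnom)$ and denote by $(\mu_\QQ,\Sigma_\QQ)$ its mean-covariance pair, which exists because $\QQ\in\mc S\subseteq\mc M_2$. By the Gelbrich bound of Theorem~\ref{theorem:gelbrich} applied to $\Pnom$ and $\QQ$,
\[
\Gelbrich\big((\msa,\covsa),(\mu_\QQ,\Sigma_\QQ)\big)\;\le\;\Wass(\Pnom,\QQ)\;\le\;\rho,
\]
so $(\mu_\QQ,\Sigma_\QQ)\in\U_\rho(\msa,\covsa)$. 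This establishes the inclusion unconditionally.

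For the reverse inclusion when $\covsa\succ 0$, I would construct, for any target pair $(\mu,\Sigma)\in\U_\rho(\msa,\covsa)$, an explicit witness distribution $\QQ\in\mc W_\rho(\Pnom)$ whose mean-covariance pair equals $(\mu,\Sigma)$. The natural candidate is the positive semidefinite affine pushforward suggested by~\eqref{eq:optimal-affine}: set
\[
A \;=\; \covsa^{-\half}\big(\covsa^{\half}\Sigma\covsa^{\half}\big)^{\half}\covsa^{-\half},\qquad b\;=\;\mu - A\msa,\qquad \QQ\;=\;\Pnom\circ f^{-1},
\]
where $f(\xi)=A\xi+b$. Positive definiteness of $\covsa$ makes $A$ well defined, and $A\in\PSD^n$ because it is a congruence transform of the PSD matrix $(\covsa^{\half}\Sigma\covsa^{\half})^{\half}$.

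The remaining steps are straightforward verifications. First, $\QQ\in\mc S$ because $\Pnom\in\mc S$ and $\mc S$ is closed under positive semidefinite affine pushforwards (Definition~\ref{def:structural}). Second, a direct computation shows that $\QQ$ has mean $A\msa+b=\mu$ and covariance $A\covsa A$, and using $\covsa^{-\half}\covsa\covsa^{-\half}=I$ this covariance telescopes to $\covsa^{-\half}(\covsa^{\half}\Sigma\covsa^{\half})\covsa^{-\half}=\Sigma$. Third, since $\QQ$ is by construction a positive semidefinite affine pushforward of $\Pnom$ and $\covsa\succ 0$, Theorem~\ref{theorem:Wasserstein=Gelbrich} (tightness of the Gelbrich bound) yields
\[
\Wass(\Pnom,\QQ)\;=\;\Gelbrich\big((\msa,\covsa),(\mu,\Sigma)\big)\;\le\;\rho,
\]
so $\QQ\in\mc W_\rho(\Pnom)$, which closes the reverse inclusion.

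The only delicate point is justifying formula~\eqref{eq:optimal-affine}: one needs $\covsa\succ 0$ both to invert $\covsa^{\half}$ in the definition of $A$ and to invoke Theorem~\ref{theorem:Wasserstein=Gelbrich} for the tightness equality. Everything else is mechanical: the inclusion part is an immediate consequence of the Gelbrich lower bound, while the equality part is an explicit construction whose two ingredients—closure of $\mc S$ under PSD affine pushforwards and tightness of the Gelbrich bound on pushforwards—have already been established.
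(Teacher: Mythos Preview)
Your proposal is correct and follows essentially the same approach as the paper: the forward inclusion via Theorem~\ref{theorem:gelbrich}, and the reverse inclusion via the explicit positive semidefinite affine pushforward~\eqref{eq:optimal-affine} combined with Theorem~\ref{theorem:Wasserstein=Gelbrich}. The paper's proof is slightly terser but uses precisely the same construction and verifications.
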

The above results culminate in the following main theorem.

\begin{theorem}[Relation between the Gelbrich and Wasserstein ambiguity sets]
    \label{theorem:gelbrich:amb}
    If the nominal distribution $\Pnom$ has mean $\msa \in \R^n$ and covariance matrix~$\covsa \in \PSD^n$, then we have $\mc W_{\rho}(\Pnom) \subseteq \mc{G}_{\rho}(\msa, \covsa)$. In addition, if $\mc S$ is the structural ambiguity set generated by~$\Pnom$ and if $\covsa \succ 0$, then the inclusion becomes an equality.
\end{theorem}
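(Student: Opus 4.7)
The plan is to derive both inclusions essentially as corollaries of the Gelbrich bound (Theorem~\ref{theorem:gelbrich}) and its tightness (Theorem~\ref{theorem:Wasserstein=Gelbrich}), using Proposition~\ref{prop:gelbrich-projection} as an intermediate packaging device.

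For the forward inclusion $\mc W_\rho(\Pnom)\subseteq\mc G_\rho(\msa,\covsa)$, I would fix an arbitrary $\QQ\in\mc W_\rho(\Pnom)$ and denote its mean and covariance matrix by $\mu_\QQ$ and $\cov_\QQ$. By definition of the Wasserstein ambiguity set, $\QQ\in\mc S$, so the structural membership required by Definition~\ref{def:gelbrich-hull} is automatic. It therefore only remains to verify that $(\mu_\QQ,\cov_\QQ)\in\mc U_\rho(\msa,\covsa)$. This is exactly the content of the inclusion in Proposition~\ref{prop:gelbrich-projection}, whose proof in turn rests on the Gelbrich bound
\[
\Gelbrich\big((\msa,\covsa),(\mu_\QQ,\cov_\QQ)\big)\leq\Wass(\Pnom,\QQ)\leq\rho.
\]

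For the reverse inclusion under the additional hypotheses that $\mc S$ is the structural ambiguity set generated by~$\Pnom$ and that $\covsa\succ 0$, I would start from $\QQ\in\mc G_\rho(\msa,\covsa)$ and aim to show $\Wass(\Pnom,\QQ)\leq\rho$. Membership in $\mc S$ combined with Definition~\ref{def:element} yields that $\QQ$ is a positive semidefinite affine pushforward of $\Pnom$; this is the crucial structural hook that lets one invoke the tightness result. Applying Theorem~\ref{theorem:Wasserstein=Gelbrich} with $\QQ_1=\Pnom$ (which has positive definite covariance $\covsa$) and $\QQ_2=\QQ$ gives
\[
\Wass(\Pnom,\QQ)=\Gelbrich\big((\msa,\covsa),(\mu_\QQ,\cov_\QQ)\big)\leq\rho,
\]
where the last inequality uses $(\mu_\QQ,\cov_\QQ)\in\mc U_\rho(\msa,\covsa)$. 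Since $\QQ\in\mc S$ is already established, this yields $\QQ\in\mc W_\rho(\Pnom)$.

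I do not anticipate a serious obstacle: both heavy-lifting results (the Gelbrich bound and its tightness for affine pushforwards) are quoted from the literature or appear earlier in the paper, and the argument is essentially an unpacking of definitions. The one subtle point to be careful about is the role of the hypothesis $\covsa\succ 0$ in the reverse inclusion. It enters through Theorem~\ref{theorem:Wasserstein=Gelbrich}, which requires positive definiteness of the covariance of the distribution from which one pushes forward; since $\mc S$ is generated by~$\Pnom$, it is natural to take $\Pnom$ as $\QQ_1$, and then $\covsa\succ 0$ is precisely what is needed. Without this assumption one would have to route through the Moore--Penrose version of the pushforward map mentioned after~\eqref{eq:optimal-affine}, which is unnecessary here.
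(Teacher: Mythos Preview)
Your proposal is correct and follows essentially the same approach as the paper: the forward inclusion is obtained from the Gelbrich bound (Theorem~\ref{theorem:gelbrich}), and the reverse inclusion under the extra hypotheses follows from the tightness result (Theorem~\ref{theorem:Wasserstein=Gelbrich}) applied with $\QQ_1=\Pnom$. The only cosmetic difference is that the paper's proof invokes Theorem~\ref{theorem:gelbrich} directly for the forward inclusion rather than routing through Proposition~\ref{prop:gelbrich-projection}, but since you also spell out the underlying Gelbrich-bound inequality, the substance is identical.
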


Theorem~\ref{theorem:gelbrich:amb} implies that the Gelbrich ambiguity set~$\mc{G}_{\rho}(\msa, \covsa)$ constitutes an outer approximation for the Wasserstein ambiguity set~$\mc W_{\rho}(\Pnom)$ that ignores any information about the nominal distribution except for its structure, mean and covariance matrix. Discarding all higher-order moments can be interpreted as a compression of the available information. Below we will argue that this compression can be leveraged to construct tractable reformulations of the (optimal) Wasserstein risk. An immediate consequence of Theorem~\ref{theorem:gelbrich:amb} is that the (optimal) Gelbrich risk provides an upper bound on the (optimal) Wasserstein risk. We formalize this insight in the following corollary, which we state without proof. 

\begin{corollary}[Gelbrich risk versus Wasserstein risk]
    \label{cor:gelbrich-risk}
    If the nominal distribution~$\Pnom$ has mean $\msa \in \R^n$ and covariance matrix $\covsa \in \PSD^n$, then we have $\risk_{\mc W_{\rho}(\Pnom)}(\ell)\le \risk_{\mc G_{\rho}(\msa, \covsa)}(\ell)$ for all $\ell\in\Lc$ and $\risk_{\mc W_{\rho}(\Pnom)}( \Lc) \le \risk_{\mc G_{\rho}(\msa, \covsa)}(\Lc)$. In addition, if $\mc S$ is the structural ambiguity set generated by $\Pnom$ and if $\covsa \succ 0$, then these inequalities become equalities.
\end{corollary}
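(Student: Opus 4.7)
The plan is to read off both claims directly from Theorem on the relation between the Gelbrich and Wasserstein ambiguity sets, exploiting only the monotonicity of the supremum (and then the infimum) with respect to set inclusion. By definition~\eqref{wcrm}, $\risk_{\mc P}(\ell) = \sup_{\QQ\in\mc P}\risk_\QQ(\ell)$ is a nondecreasing functional of $\mc P$, so any containment of ambiguity sets immediately lifts to an inequality of worst-case risks.

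First, I would fix an arbitrary $\ell\in\mc L_0$. Theorem~\ref{theorem:gelbrich:amb} gives the inclusion $\mc W_{\rho}(\Pnom)\subseteq\mc G_{\rho}(\msa,\covsa)$. Taking $\sup_{\QQ\in\,\cdot\,}\risk_\QQ(\ell)$ over both sides yields
\begin{equation*}
\risk_{\mc W_{\rho}(\Pnom)}(\ell)=\sup_{\QQ\in\mc W_{\rho}(\Pnom)}\risk_\QQ(\ell)\le\sup_{\QQ\in\mc G_{\rho}(\msa,\covsa)}\risk_\QQ(\ell)=\risk_{\mc G_{\rho}(\msa,\covsa)}(\ell),
\end{equation*}
which is the first inequality. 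To obtain the second inequality, I would take the infimum of both sides of the pointwise inequality above over $\ell\in\mc L$, which by definition~\eqref{optimalwcr} gives $\risk_{\mc W_{\rho}(\Pnom)}(\mc L)\le\risk_{\mc G_{\rho}(\msa,\covsa)}(\mc L)$. Both steps rely only on the elementary fact that $\sup$ and $\inf$ are order-preserving (for $\sup$) and order-reversing-composed-with-order-preserving operations.

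For the equality statements, I would invoke the second half of Theorem~\ref{theorem:gelbrich:amb}: when $\mc S$ is the structural ambiguity set generated by $\Pnom$ and $\covsa\succ 0$, the inclusion becomes the set-theoretic identity $\mc W_{\rho}(\Pnom)=\mc G_{\rho}(\msa,\covsa)$. Taking suprema of $\risk_\QQ(\ell)$ over identical sets yields equality of the risks pointwise in $\ell$, and taking the infimum over $\ell\in\mc L$ then gives equality of the optimal risks.

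There is essentially no genuine obstacle here: the corollary is a formal consequence of Theorem~\ref{theorem:gelbrich:amb} combined with the monotonicity of $\sup$ under set inclusion and the monotonicity of $\inf$ under pointwise inequality. The only thing worth spelling out (and the reason the authors state the corollary without proof) is that the argument goes through for an \emph{arbitrary} family $\{\risk_\QQ\}_{\QQ\in\mc M}$ of risk measures, since no structural property beyond the definition of worst-case risk is invoked.
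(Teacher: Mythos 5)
Your proposal is correct and matches the paper's intent exactly: the corollary is stated without proof precisely because it follows immediately from the inclusion (respectively equality) of ambiguity sets in Theorem~\ref{theorem:gelbrich:amb} together with the monotonicity of $\sup$ under set inclusion and of $\inf$ under pointwise inequality. Nothing further is needed.
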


Corollary~\ref{cor:gelbrich-risk} implies that any finite-sample guarantee for the (optimal) Wasserstein risk immediately leads to a finite-sample guarantee for the (optimal) Gelbrich risk. For example, if~$\Pnom$ is set to the discrete empirical distribution of the training sample, then the measure concentration results by \cite{ref:fournier2015rate} can be used to calibrate the radius~$\rho$ to the sample size so that the unknown true distribution~$\PP$ belongs to the Wasserstein ambiguity set with probability~$1-\eta$ for any given~$\eta\in (0,1)$ \cite[Theorem~3.4]{ref:esfahani2018data}. For this choice of~$\rho$, the Wasserstein and the Gelbrich risk exceed the true risk with probability at least~$1-\eta$. However, the underlying measure concentration results suffer from a fundamental curse of dimensionality, which implies that~$\rho$ must decay extremely slowly with the sample size even if~$\xi$ has moderate dimension. In the next section we will demonstrate that this curse of dimensionality can be circumvented by working with the Gelbrich instead of the Wasserstein ambiguity set.

\subsection{Statistical Properties of the Gelbrich Ambiguity Set}
\label{sect:guarantee}
    
We now show that the Gelbrich distance enjoys attractive measure concentration properties, which make it ideal to construct ambiguity sets for sub-Gaussian distributions. These measure concentration properties thus lend themselves for calibrating the radius~$\rho$ of the Gelbrich ambiguity set~$\mc{G}_{\rho}(\msa, \covsa)$. Throughout this section we denote by~$\PP$ the unknown true distribution of the vector~$\xi\in\R^n$ of risk factors, and we assume that~$\PP$ has a finite mean~$\mu=\EE_\PP[\xi]$, second moment matrix~$M=\EE_\PP[\xi\xi^\top]$ and covariance matrix~$\cov=M-\mu\mu^\top$. Similarly, we use $\widehat \mu_N$ to denote an estimator for $\mu$ and $\widehat M_N$ to denote an estimator for $M$ constructed from $N$ independent data points $\xi_1, \dots, \xi_N$ sampled from $\mathbb P$. These estimators can then be combined to construct an estimator $\covsa_N = \widehat M_N - \msa_N \msa_N^\top$ for $\cov$. The following theorem shows that any finite-sample guarantees for the estimators $\msa_N$ and $\widehat M_N$ give rise to a finite-sample guarantee for the Gelbrich distance between $(\msa_N, \covsa_N)$ and $(\mu,\cov)$.

\begin{theorem}[Finite-sample guarantees]
    \label{theorem:finite-sample}
    
    Suppose that the estimators $\widehat \mu_N$ and $\widehat M_N$ satisfy
    \begin{align*}
        \PP^N [ \| \msa_N - \mu \| \leq \rho_\mu(\eta_\mu) ] \geq 1 - \eta_\mu \quad \text{and} \quad \PP^N [ \| \widehat M_N - M \| \leq \rho_M(\eta_M) ] \geq 1 - \eta_M
    \end{align*}
    for any significance levels $\eta_\mu, \eta_M \in (0, 1]$, where $\rho_\mu$ and $\rho_M$ are non-increasing functions from $(0, 1]$ to $\R_+$, and set $\covsa_N = \widehat M_N - \msa_N \msa_N^\top$. 
    If $\cov \succ 0$, then there exist positive constants $c_1, c_2$ and $c_3$ that depend on $\PP$ only through $\mu$, $\cov$ and $n$ such that for any~$\eta_\mu, \eta_M \in (0, 1]$ we have
    \begin{align*}
    \PP^N \left[ \Gelbrich \big( (\msa_N, \covsa_N), (\mu, \cov) \big) \leq \rho(\eta_\mu, \eta_M) \right] \geq 1-\eta_\mu - \eta_M,
    \end{align*}
    where $\rho(\eta_\mu, \eta_M)= c_1 \rho_{\mu}(\eta_\mu) + c_2 \rho_{\mu}(\eta_\mu)^2 + c_3 \rho_M(\eta_M)$.
\end{theorem}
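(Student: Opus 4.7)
The plan is to decouple the mean and covariance contributions to the Gelbrich distance and control each under a single high-probability event. First, a union bound applied to the two assumed deviation inequalities shows that the event $E = \{ \| \msa_N - \mu \| \leq \rho_\mu(\eta_\mu) \} \cap \{ \| \widehat M_N - M \| \leq \rho_M(\eta_M) \}$ satisfies $\PP^N(E) \geq 1 - \eta_\mu - \eta_M$. The rest of the argument is deterministic and conditions on $E$. From Definition~\ref{def:Gelbrich} and the elementary inequality $\sqrt{a^2+b^2}\leq a+b$ for $a,b\geq 0$, one has
\begin{equation*}
    \Gelbrich\big( (\msa_N, \covsa_N), (\mu, \cov) \big) \leq \|\msa_N - \mu\| + \sqrt{\Tr{\cov + \covsa_N - 2(\covsa_N^{\half}\cov\covsa_N^{\half})^{\half}}},
\end{equation*}
so the first term is bounded by $\rho_\mu(\eta_\mu)$ and everything hinges on the Bures-type second term. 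I would first control $\|\covsa_N-\cov\|$ through the identity $\covsa_N - \cov = (\widehat M_N - M) - (\msa_N-\mu)\mu^\top - \mu(\msa_N-\mu)^\top - (\msa_N-\mu)(\msa_N-\mu)^\top$, which by the triangle inequality and submultiplicativity of the spectral norm yields $\|\covsa_N - \cov\| \leq \rho_M(\eta_M) + 2\|\mu\|\rho_\mu(\eta_\mu) + \rho_\mu(\eta_\mu)^2$ on $E$.

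Next, I would link the Bures-type term to $\|\covsa_N - \cov\|$ through two classical deterministic inequalities. The first is the trace inequality $\Tr{(\covsa_N^{\half}\cov\covsa_N^{\half})^{\half}} = \|\cov^{\half}\covsa_N^{\half}\|_1 \geq \Tr{\cov^{\half}\covsa_N^{\half}}$, which follows from $|\Tr{X}|\leq \|X\|_1$ applied to $X = \cov^{\half}\covsa_N^{\half}$ together with the identity $X^\top X = \covsa_N^{\half}\cov\covsa_N^{\half}$, and which immediately yields $\Tr{\cov + \covsa_N - 2(\covsa_N^{\half}\cov\covsa_N^{\half})^{\half}} \leq \|\cov^{\half} - \covsa_N^{\half}\|_F^2$. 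The second is a Lipschitz estimate for the matrix square root: setting $X = \cov^{\half} - \covsa_N^{\half}$, the Sylvester-type identity $\cov - \covsa_N = \cov^{\half}X + X\covsa_N^{\half}$ combined with trace cyclicity gives $\Tr{X(\cov-\covsa_N)} = \Tr{(\cov^{\half} + \covsa_N^{\half})X^2}$. Since $X^2 \succeq 0$ and $\lambda_{\min}(\cov^{\half} + \covsa_N^{\half}) \geq \sqrt{\lambda_{\min}(\cov)}$ by Weyl's inequality (using $\cov \succ 0$ and $\covsa_N \succeq 0$), Cauchy--Schwarz then yields $\|\cov^{\half} - \covsa_N^{\half}\|_F \leq \|\cov - \covsa_N\|_F / \sqrt{\lambda_{\min}(\cov)}$.

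Chaining these bounds with $\|A\|_F \leq \sqrt{n}\,\|A\|$ to pass from spectral to Frobenius norm, I expect to arrive at
\begin{equation*}
\Gelbrich\big( (\msa_N, \covsa_N), (\mu, \cov) \big) \leq \rho_\mu(\eta_\mu) + \frac{\sqrt{n}}{\sqrt{\lambda_{\min}(\cov)}} \Big( \rho_M(\eta_M) + 2\|\mu\|\rho_\mu(\eta_\mu) + \rho_\mu(\eta_\mu)^2 \Big),
\end{equation*}
which is exactly of the announced form with $c_1 = 1 + 2\|\mu\|\sqrt{n}/\sqrt{\lambda_{\min}(\cov)}$ and $c_2 = c_3 = \sqrt{n}/\sqrt{\lambda_{\min}(\cov)}$; these depend on $\PP$ only through $\mu$, $\cov$ and $n$, as required. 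The main obstacle is the Lipschitz estimate for the matrix square root in Frobenius norm, which is subtle because $\covsa_N$ need not be positive definite a priori; the Sylvester argument nevertheless pushes through because it only requires the lower bound $\lambda_{\min}(\cov^{\half} + \covsa_N^{\half}) > 0$, which is guaranteed by the standing assumption $\cov \succ 0$ together with $\covsa_N \succeq 0$.
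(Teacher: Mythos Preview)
Your proof is correct and follows essentially the same route as the paper: bound the Gelbrich distance by $\|\msa_N-\mu\|+\|\cov^{1/2}-\covsa_N^{1/2}\|_F$, control the square-root difference by a Lipschitz estimate in terms of $\|\cov-\covsa_N\|_F$, and expand $\covsa_N-\cov$ via $\widehat M_N-M$ and $\msa_N-\mu$. The only cosmetic differences are that the paper cites \cite{ref:bhatia2018bures} and \cite{ref:schmitt1992perturbation} for the two key inequalities (which you derive directly), and that by bounding the rank-one pieces $\|\msa_N\msa_N^\top-\mu\mu^\top\|_F$ in Frobenius norm first the paper avoids the extra $\sqrt{n}$ factor you pick up in $c_1$ and $c_2$; either set of constants is admissible for the statement.
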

Theorem~\ref{theorem:finite-sample} guarantees that if $\rho\geq \rho(\eta_\mu, \eta_M)$, then the Gelbrich ambiguity set~$\mc{G}_{\rho}(\msa, \covsa)$ contains the unknown data-generating distribution~$\PP$ with probability at least $1-\eta_\mu - \eta_M$ under $\PP^N$. Moreover, if $\rho_\mu$ and $\rho_M$ scale as $\mathcal O(1 / \sqrt{N})$, then $\rho(\eta_\mu, \eta_M)$ scales as $\mathcal O(1 / \sqrt{N})$, too. 
This happens, for example, if $\msa_N$ is the sample mean and $\widehat M$ is the sample second moment matrix of a sub-Gaussian distribution~$\PP$.   

\begin{definition}[Sub-Gaussian probability distribution]
    \label{definition:sub-Gaussian}
    The probability distribution $\PP$ of $\xi\in\R^n$ is sub-Gaussian with variance proxy~$\sigma^2 \in \R_+$ if 
    \[ 
    \EE_{\PP} \left[ \exp \left( z^\top \!\! \left( \xi - \EE_{\PP} \left[\xi\right] \right) \right) \right] \leq \exp \left({\textstyle\frac{1}{2}}\| z \|^2 \sigma^2 \right) \quad \forall z \in \R^n. 
    \]
\end{definition}

\begin{corollary}[Empirical estimators]
    \label{corollary:finite-sample:SAA}
    If $\PP$ is sub-Gaussian with mean $\mu$, covariance matrix $\cov \succ 0$, and variance proxy $\sigma^2$, and if 
    \begin{align*}
        \msa_N = \frac{1}{N} \sum_{i=1}^N  \xi_i, \quad \wh M_N = \frac{1}{N} \sum_{i=1}^N  \xi_i \xi_i^\top \quad \text{and} \quad \covsa_N = \wh M_N-\msa_N\msa_N^\top
    \end{align*}
    are the sample mean, the sample second moment matrix and the sample covariance matrix, respectively, then there exist $c_1,c_2>0$ that depend on $\PP$ only through $\mu$, $\cov$, $\sigma^2$ and $n$ such that for any~$\eta \in (0, 1]$ we have $\PP^N [ \Gelbrich ( (\msa_N, \covsa_N), (\mu, \cov) ) \leq \rho(\eta) ] \geq 1-\eta$, where $\rho(\eta)= \left(c_1 + c_2 \log(1/\eta)\right)/\sqrt{N}$.
\end{corollary}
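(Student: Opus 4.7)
The plan is to invoke Theorem~\ref{theorem:finite-sample} with the empirical estimators $\msa_N$ and $\wh M_N$, so the task reduces to establishing sub-Gaussian/sub-exponential concentration rates of order $\mathcal O(1/\sqrt{N})$ for $\|\msa_N-\mu\|$ and $\|\wh M_N - M\|$ separately, then combining them with a union bound. Concretely, I would split the confidence budget as $\eta_\mu=\eta_M=\eta/2$, obtain explicit functions $\rho_\mu$ and $\rho_M$ satisfying the hypotheses of Theorem~\ref{theorem:finite-sample}, and then plug them into the expression $\rho(\eta_\mu,\eta_M)=c_1\rho_\mu(\eta_\mu)+c_2\rho_\mu(\eta_\mu)^2+c_3\rho_M(\eta_M)$. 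Since by assumption $\rho_\mu,\rho_M=\mathcal O(1/\sqrt{N})$, the quadratic term $\rho_\mu^2$ is of smaller order and will be absorbed into the leading $1/\sqrt{N}$ rate.

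For the sample mean, the sub-Gaussian hypothesis yields that $z^\top(\msa_N-\mu)$ is sub-Gaussian with variance proxy $\sigma^2\|z\|^2/N$ for every $z\in\R^n$. A standard $\varepsilon$-net argument over the unit sphere in $\R^n$ (or equivalently a direct Hanson--Wright-style bound) then gives
\begin{equation*}
\PP^N\!\left[\,\|\msa_N-\mu\|\le \frac{C_1\sigma\big(\sqrt{n}+\sqrt{\log(1/\eta_\mu)}\big)}{\sqrt{N}}\,\right]\ge 1-\eta_\mu,
\end{equation*}
for an absolute constant $C_1$, which defines an admissible $\rho_\mu(\eta_\mu)$ of the desired order. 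For the second moment estimator, I would exploit the fact that if $\xi$ is sub-Gaussian with variance proxy $\sigma^2$, then the rank-one random matrix $\xi\xi^\top$ is sub-exponential, with sub-exponential norm controlled by $\sigma$, $\|\mu\|$ and $n$. A matrix Bernstein inequality (or equivalently a sub-exponential matrix concentration bound such as the one in Vershynin's monograph) applied to the i.i.d.\ centered summands $\xi_i\xi_i^\top - M$ yields
\begin{equation*}
\PP^N\!\left[\,\|\wh M_N-M\|\le \frac{C_2\big(1+\log(1/\eta_M)\big)}{\sqrt{N}}\,\right]\ge 1-\eta_M,
\end{equation*}
for a constant $C_2$ depending on $\PP$ only through $\mu$, $\cov$, $\sigma^2$ and $n$, giving an admissible $\rho_M(\eta_M)$.

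Combining these two estimates via a union bound at confidence level $\eta/2$ each, and then plugging $\rho_\mu$ and $\rho_M$ into the conclusion of Theorem~\ref{theorem:finite-sample} under the standing assumption $\cov\succ 0$, produces a bound of the form
\begin{equation*}
\Gelbrich\big((\msa_N,\covsa_N),(\mu,\cov)\big)\le \frac{c_1+c_2\log(1/\eta)}{\sqrt{N}}
\end{equation*}
with probability at least $1-\eta$, after absorbing the lower-order $\rho_\mu^2=\mathcal O(1/N)$ contribution and the $\sqrt{\log(1/\eta)}\le 1+\log(1/\eta)$ slack into the two constants. The main obstacle, and the step that demands most care, is the second-moment concentration: I must verify that the sub-exponential parameters governing $\xi\xi^\top-M$ depend on $\PP$ only through $\mu$, $\cov$, $\sigma^2$ and $n$ so that the constants $c_1,c_2$ inherit this dependence, and I must also confirm that the constants $c_1,c_2,c_3$ supplied by Theorem~\ref{theorem:finite-sample} have the same parametric dependence, so that the final rate is genuinely dimension-explicit but distribution-free beyond the declared parameters.
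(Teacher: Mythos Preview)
Your proposal is correct and follows essentially the same route as the paper: split the budget as $\eta_\mu=\eta_M=\eta/2$, establish separate $\mathcal O(1/\sqrt{N})$ concentration bounds for $\|\msa_N-\mu\|$ and $\|\wh M_N-M\|$, and feed them into Theorem~\ref{theorem:finite-sample}. The paper uses the Hsu--Kakade--Zhang tail bound for the mean (a Hanson--Wright variant, as you suggest) and Wainwright's sub-Gaussian covariance concentration for the second moment, then applies $\sqrt{x}\le 1+x$ and $1/N\le 1/\sqrt{N}$ to absorb the $\rho_\mu^2$ and mixed-regime terms into the final $(c_1+c_2\log(1/\eta))/\sqrt{N}$ form---exactly the simplifications you outline.
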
    

Corollary~\ref{corollary:finite-sample:SAA} implies that if we use the na\"ive empirical estimators for the first and second moments of~$\PP$ and if $\rho\geq \rho(\eta)$, 
then the Gelbrich ambiguity set~$\mc{G}_{\rho}(\msa, \covsa)$ constitutes a $(1-\eta)$-confidence set for the true distribution~$\PP$.  
Consequently, the Gelbrich risk~$\risk_{\mc{G}_{\rho}(\msa, \covsa)}(\ell)$ provides a $(1-\eta)$-upper confidence bound on the true risk~$\risk_\PP(\ell)$ for any fixed loss function~$\ell\in\mathcal L_0$, and the optimal Gelbrich risk~$\risk_{\mc{G}_{\rho}(\msa, \covsa)}(\mathcal L)$ provides a $(1-\eta)$-upper confidence bound on the true optimal risk~$\inf_{\ell \in\mathcal L}\risk_\PP(\ell)$. 

Corollary~\ref{corollary:finite-sample:SAA} provides the first {\em non-asymptotic} concentration bound for the Gelbrich distance between the empirical and true mean-covariance pairs, based on data points from {\em any sub-Gaussian} distribution. Prior work, such as \cite[Theorem~2.1]{ref:rippl2016limit}, studied only the {\em asymptotic} behavior of the Wasserstein distance between the empirical and true distributions (or, equivalently, the Gelbrich distance between the corresponding mean-covariance pairs), assuming that the data points follow a Gaussian distribution. The asymptotic convergence rate was shown to be $\mathcal O(1/\sqrt{N})$. In contrast, Corollary~\ref{corollary:finite-sample:SAA} proves a {\em finite-sample} guarantee with the same rate for {\em any} sub-Gaussian distribution.

The sample mean and the sample covariance matrix are not appropriate for dealing with heavy-tailed distributions or high-dimensional data. There is a vast literature on robust estimators for the mean that are based on the median of means principle \cite{ref:nemirovskij1983problem,ref:catoni2012challenging}. They are harder to compute than the sample mean but still offer finite-sample guarantees with error terms that scale as $\mathcal O(1 / \sqrt{N})$ even though the data-generating distribution may fail to be sub-Gaussian \cite{ref:lugosi2019mean}. In addition, there are sophisticated covariance estimators based on factor models~\cite{ref:green1992will,ref:chen2016efficient}, shrinkage estimators \cite{ref:ledoit2003improved,ledoit2004well,demiguel2013size,ref:nguyen2018distributionally}, M-estimators \cite{maronna1976robust, huber1981robust, ref:demiguel2009portfolio} and S-estimators \cite{lopuhaa1989relation, campbell1998calculation, ref:demiguel2009portfolio}; see also \cite{ref:fan2016overview}. These estimators are again harder to compute than the sample covariance matrix, but they offer $\mathcal O(1 / \sqrt{N})$ finite-sample guarantees even for distributions that fail to be sub-Gaussian. Theorem~\ref{theorem:finite-sample} indicates that any of these robust estimators can be used to construct a Gelbrich ambiguity set that offers rigorous finite-sample guarantees.

\subsection{Computational Properties of the Gelbrich Ambiguity Set}
\label{sect:computational-perperties}

We now establish basic properties of the Gelbrich ambiguity set that are conducive to the solvability and computational tractability of 
distributionally robust optimization problems with Gelbrich ambiguity sets. 
To this end, we first study the ball $\mc U_\rho(\msa, \covsa)$ in the space of mean-covariance pairs. It is natural to expect that this set is compact and convex because $\Gelbrich$ represents a metric on~$\R^n \times \PSD^n$. The next proposition formalizes this intuition.

\begin{proposition}[Properties of $\mathcal U_{\rho}(\msa, \covsa)$] \label{prop:compact:U}
    For any $\msa \in \R^n$, $\covsa \in \PSD^n$ and $\rho \in \R_+$, the set $\mathcal U_{\rho}(\msa, \covsa)$ is compact and convex.
\end{proposition}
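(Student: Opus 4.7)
The plan is to verify the three required properties (convexity, closedness, boundedness) separately and then combine closedness with boundedness to conclude compactness, since $\mathcal U_{\rho}(\msa,\covsa)$ lives in the finite-dimensional space $\R^n\times\Sym^n$.

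\textbf{Convexity.} I would invoke the observation already made in the paper (in the discussion following Theorem~\ref{theorem:gelbrich}) that $\Gelbrich^2\bigl((\m,\cov),(\msa,\covsa)\bigr)$ equals the optimal value of a semidefinite program in which $(\m,\cov)$ appears linearly in both the objective and the constraints, and is therefore jointly convex in $(\m,\cov)$ by preservation of convexity under partial minimization. Alternatively, a direct argument splits the squared distance into the convex quadratic term $\|\m-\msa\|^2$ and $\Tr{\cov}+\Tr{\covsa}-2\Tr{(\covsa^{\half}\cov\covsa^{\half})^{\half}}$; here $\Tr{\cov}$ is linear, while $\cov\mapsto\Tr{(\covsa^{\half}\cov\covsa^{\half})^{\half}}$ is concave because the map $A\mapsto\Tr{A^{\half}}$ is concave on $\PSD^n$ and is composed with the linear map $\cov\mapsto\covsa^{\half}\cov\covsa^{\half}$. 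Either way, $\Gelbrich^2\bigl(\cdot,(\msa,\covsa)\bigr)$ is convex on $\R^n\times\PSD^n$, and $\mathcal U_{\rho}(\msa,\covsa)$ is simply the intersection of the convex cone $\R^n\times\PSD^n$ with the sublevel set $\{\Gelbrich^2\le\rho^2\}$, hence convex.

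\textbf{Closedness.} I would check that $(\m,\cov)\mapsto\Gelbrich\bigl((\m,\cov),(\msa,\covsa)\bigr)$ is continuous on $\R^n\times\PSD^n$. The term $\|\m-\msa\|^2$ is obviously continuous, and for the covariance term one uses linearity of $\cov\mapsto\covsa^{\half}\cov\covsa^{\half}$ together with the well-known continuity of the matrix square root on $\PSD^n$ and of the trace. Since $\PSD^n$ is itself closed in $\Sym^n$, the set $\mathcal U_{\rho}(\msa,\covsa)$ is the intersection of two closed sets and therefore closed.

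\textbf{Boundedness.} The elegant way is to use that $\Gelbrich$ is a metric on $\R^n\times\PSD^n$, as noted just after Definition~\ref{def:Gelbrich}. A direct evaluation shows $\Gelbrich\bigl((\m,\cov),(0,0)\bigr)=\sqrt{\|\m\|^2+\Tr{\cov}}$, so the triangle inequality yields
\[
\sqrt{\|\m\|^2+\Tr{\cov}}\;\le\;\sqrt{\|\msa\|^2+\Tr{\covsa}}+\rho
\qquad\forall\,(\m,\cov)\in\mathcal U_{\rho}(\msa,\covsa).
\]
This bounds both $\|\m\|$ and $\Tr{\cov}$, and since $\cov\succeq 0$ one has $\|\cov\|\le\Tr{\cov}$, giving uniform bounds on both coordinates.

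Combining closedness with boundedness in the finite-dimensional ambient space $\R^n\times\Sym^n$ gives compactness, completing the proof. I do not expect any genuine obstacle: the only step that requires some care is the continuity of the matrix square root underlying closedness, and the cleanest route to boundedness is to exploit the metric property of $\Gelbrich$ rather than to manipulate the square-root trace directly.
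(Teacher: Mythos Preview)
Your proof is correct and follows essentially the same approach as the paper: convexity via the joint convexity of $\Gelbrich^2$ in $(\m,\cov)$, closedness via continuity of the matrix square root, and boundedness by a direct estimate. The only minor difference is that for boundedness the paper cites an external lemma giving the explicit bounds $\|\m-\msa\|\le\rho$ and $0\preceq\cov\preceq(\rho+\Tr{\covsa}^{1/2})^2 I$, whereas your triangle-inequality argument against the reference pair $(0,0)$ is a cleaner, self-contained way to reach the same conclusion.
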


Consider now the Gelbrich ambiguity set $\mc G_\rho(\msa, \covsa)$ of Definition~\ref{def:gelbrich-hull}. Note that the mean $\mu= \EE_{\QQ}[\xi]$ and the second moment matrix $M=\EE_{\QQ}[\xi\xi^\top]$ are linear in the underlying probability distribution~$\QQ$, whereas the covariance matrix $\cov=M-\mu\mu^\top$ is indefinite quadratic in~$\QQ$. The constraint requiring the mean and covariance matrix of~$\QQ$ to fall into the convex set~$\mathcal U_\rho(\msa, \covsa)$ thus appears to be non-convex in~$\QQ$. One might therefore suspect that the Gelbrich ambiguity set is non-convex and that evaluting the Gelbrich risk $\risk_{\mc{G}_{\rho}(\msa, \covsa)}(\ell)$ is hard. We will now show that the Gelbrich ambiguity set is nonetheless convex. To this end, we define the following transform of $\mathcal U_{\rho}(\msa, \covsa)$, 
\[
\mathcal V_{\rho}(\msa, \covsa) = 
\left\{ (\m,M) \in \R^n\times \PSD^n:  (\mu, M-\mu\mu^\top) \in \mathcal U_{\rho}(\msa, \covsa) \right\}.
\]
Even though it is constructed as the pre-image of a convex set under an indefinite quadratic transformation, one can show that $\mathcal V_{\rho}(\msa, \covsa)$ is convex. 

\begin{proposition}[Properties of $\mathcal V_{\rho}(\msa, \covsa)$] \label{prop:compact:V}
    For any $\msa \in \R^n$, $\covsa \in \PSD^n$ and $\rho \in \R_+$, the set $\mathcal V_{\rho}(\msa, \covsa)$ is compact and convex.
\end{proposition}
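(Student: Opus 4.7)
The plan is to split the proof into compactness, which will follow from Proposition~\ref{prop:compact:U} via a continuity argument, and convexity, which is more delicate because the defining transformation $M\mapsto M-\mu\mu^\top$ is indefinite quadratic. Convexity will be obtained by lifting $\mathcal V_\rho(\msa,\covsa)$ to a higher-dimensional SDP-representable set in the space $(\mu,M,C)$.

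For compactness I would observe that the map $\Phi:(\mu,M)\mapsto(\mu,M-\mu\mu^\top)$ is continuous and that $\mathcal V_\rho(\msa,\covsa)=\Phi^{-1}(\mathcal U_\rho(\msa,\covsa))$. Closedness is then immediate from Proposition~\ref{prop:compact:U}. Boundedness of $\mu$ comes from $\|\mu-\msa\|\le\rho$ on $\mathcal U_\rho(\msa,\covsa)$, and boundedness of $M=\cov+\mu\mu^\top$ follows by combining this with boundedness of $\cov$ on $\mathcal U_\rho(\msa,\covsa)$ via the triangle inequality.

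For convexity I would invoke the SDP representation of the Bures-type trace that already underlies the proof of Theorem~\ref{theorem:gelbrich}, namely
\begin{equation*}
\Tr{\big(\covsa^{\half}\cov\covsa^{\half}\big)^{\half}} \;=\; \max_{C\in\Sym^n}\Big\{\Tr{C}:\ \begin{pmatrix}\cov & C \\ C^\top & \covsa\end{pmatrix}\succeq 0\Big\}.
\end{equation*}
Setting $\cov=M-\mu\mu^\top$ and using $\Tr{\mu\mu^\top}=\|\mu\|^2$ to cancel the $\|\mu\|^2$ contributions in $\|\mu-\msa\|^2$ and in $\Tr{\cov}$, the inequality $\Gelbrich^2((\mu,M-\mu\mu^\top),(\msa,\covsa))\le\rho^2$ becomes equivalent to the existence of $C\in\Sym^n$ satisfying the affine inequality
\begin{equation*}
\Tr{M}-2\msa^\top\mu-2\Tr{C}\le\rho^2-\|\msa\|^2-\Tr{\covsa}
\end{equation*}
together with $\big(\begin{smallmatrix} M-\mu\mu^\top & C \\ C^\top & \covsa \end{smallmatrix}\big)\succeq 0$. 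The remaining obstacle is the nonlinear dependence of the PSD constraint on $\mu$; I would eliminate it by a Schur-complement lifting with the scalar $1$ as the pivot, which shows that the above PSD constraint is equivalent to the affine PSD constraint
\begin{equation*}
\begin{pmatrix}M & C & \mu \\ C^\top & \covsa & 0 \\ \mu^\top & 0 & 1\end{pmatrix}\succeq 0,
\end{equation*}
and which, as a bonus, automatically enforces both $M\in\PSD^n$ and $M-\mu\mu^\top\in\PSD^n$.

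Once these two reductions are in place, $\mathcal V_\rho(\msa,\covsa)$ is exhibited as the projection onto $(\mu,M)$ of a convex set described by one affine inequality and one affine PSD constraint in the extended variables $(\mu,M,C)$. Since projections of convex sets are convex, convexity follows. The main obstacle throughout is precisely the quadratic term $\mu\mu^\top$: once it is absorbed into a larger affine PSD matrix via Schur complement and the Bures trace is rewritten as an SDP, the proof reduces to a routine projection-of-a-convex-set argument.
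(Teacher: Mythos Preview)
Your proof is correct. Both parts differ in flavor from the paper's argument, though.

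For compactness, the paper goes the other direction: it views $\mathcal V_\rho(\msa,\covsa)$ as the \emph{image} of the compact set $\mathcal U_\rho(\msa,\covsa)$ under the continuous map $(\mu,\cov)\mapsto(\mu,\cov+\mu\mu^\top)$, so compactness is immediate in one line. Your preimage-plus-boundedness argument is equivalent but slightly longer.

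For convexity, the approaches genuinely diverge. The paper dualizes the SDP for $\Gelbrich^2$ to obtain a representation as a supremum over auxiliary matrices $A_{11},A_{22}$; after substituting $\cov=M-\mu\mu^\top$ the resulting objective is jointly convex in $(\mu,M)$ for each fixed $(A_{11},A_{22})$ (the quadratic in $\mu$ has a positive semidefinite coefficient), and a supremum of convex functions is convex, so $\mathcal V_\rho(\msa,\covsa)$ is a sublevel set of a convex function. You instead stay on the primal side: the cancellation $\|\mu\|^2-\Tr{\mu\mu^\top}=0$ linearizes the scalar inequality, and the Schur-complement lift with pivot $1$ linearizes the PSD constraint, yielding an explicit affine SDP description in $(\mu,M,C)$ whose projection is $\mathcal V_\rho(\msa,\covsa)$. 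Your route is arguably more constructive---it produces a lifted SDP representation of $\mathcal V_\rho(\msa,\covsa)$ that could be used directly in computation---whereas the paper's dual argument establishes the stronger analytic fact that the squared Gelbrich distance itself is jointly convex in $(\mu,M)$. One minor remark: your display has $C\in\Sym^n$ while Lemma~\ref{lemma:Gelbrich-SDP} allows $C\in\R^{n\times n}$; this is harmless since any feasible $C$ can be symmetrized without changing $\Tr{C}$ or feasibility, and you need $C$ symmetric anyway for the $3\times3$ block matrix to live in $\Sym^{2n+1}$.
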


Proposition~\ref{prop:compact:V} implies that the Gelbrich ambiguity set~$\mc G_\rho(\msa, \covsa)$ is convex because it can be viewed as the pre-image of the convex set~$\mathcal V_{\rho}(\msa, \covsa)$ under the linear transformation that maps any probability distribution to its mean and second moment matrix. We formalize this insight in the next corollary, which we state without proof.

\begin{corollary}[Convexity of $\mc G_\rho(\msa, \covsa)$]
    The Gelbrich ambiguity set is convex.
\end{corollary}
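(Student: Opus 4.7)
The proof is essentially outlined in the paragraph immediately preceding the corollary. The plan is to express $\mc G_\rho(\msa, \covsa)$ as a pre-image of the convex set $\mathcal V_\rho(\msa, \covsa)$ (whose convexity is the content of Proposition~\ref{prop:compact:V}) under a map that respects convex combinations of probability measures, and then conclude via a standard convexity-preservation argument.

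First, I would introduce the moment map $T: \mc M_2 \to \R^n \times \Sym^n$ defined by $T(\QQ) = (\EE_\QQ[\xi], \EE_\QQ[\xi \xi^\top])$. By linearity of integration in the underlying measure, $T$ is affine in convex combinations, i.e., $T(\theta \QQ_1 + (1-\theta) \QQ_2) = \theta T(\QQ_1) + (1-\theta) T(\QQ_2)$ for any $\QQ_1,\QQ_2 \in \mc M_2$ and $\theta \in [0,1]$. Second, I would rewrite the Gelbrich ambiguity set through $T$. Using the identity $\EE_\QQ[(\xi-\EE_\QQ[\xi])(\xi-\EE_\QQ[\xi])^\top] = \EE_\QQ[\xi\xi^\top] - \EE_\QQ[\xi]\EE_\QQ[\xi]^\top$ together with the very definition of $\mathcal V_\rho(\msa, \covsa)$, the membership condition from Definition~\ref{def:gelbrich-hull} becomes equivalent to $T(\QQ) \in \mathcal V_\rho(\msa, \covsa)$. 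Hence
\[\mc G_\rho(\msa, \covsa) = \mc S \cap T^{-1}\big(\mathcal V_\rho(\msa, \covsa)\big).\]
Since $\mathcal V_\rho(\msa, \covsa)$ is convex by Proposition~\ref{prop:compact:V} and $T$ preserves convex combinations, $T^{-1}(\mathcal V_\rho(\msa, \covsa))$ is convex; intersecting with the structural ambiguity set $\mc S$ (assumed convex, e.g., $\mc S = \mc M_2$) then yields the claim.

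The real work has already been done in Proposition~\ref{prop:compact:V}: although the raw mean-covariance constraint is indefinite-quadratic in $\QQ$ through the outer product of the mean appearing in the covariance, the re-parametrisation to the second moment matrix $M = \cov + \mu\mu^\top$ linearises the membership condition, while $\mathcal V_\rho$ nevertheless remains convex. Once that hidden convexity is on the table, the corollary reduces to a one-line application of the standard fact that pre-images of convex sets under affine maps are convex, exactly as foreshadowed in the text preceding the statement; no further estimates are required.
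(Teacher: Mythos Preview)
Your approach is exactly the one the paper sketches in the paragraph preceding the corollary: express $\mc G_\rho(\msa,\covsa)$ as the pre-image of the convex set $\mathcal V_\rho(\msa,\covsa)$ under the affine moment map $\QQ\mapsto(\EE_\QQ[\xi],\EE_\QQ[\xi\xi^\top])$ and invoke Proposition~\ref{prop:compact:V}. In fact you are slightly more careful than the paper: you correctly isolate the requirement that the structural ambiguity set $\mc S$ be convex for the intersection argument to go through. The paper's own text glosses over this, yet earlier explicitly observes that the families of symmetric, linear unimodal, log-concave, and Gaussian distributions all \emph{fail} to be convex; for such $\mc S$ the Gelbrich ambiguity set is in general not convex (a mixture of two Gaussians in $\mc G_\rho(\msa,\covsa)$ is typically non-Gaussian). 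So your parenthetical ``assumed convex, e.g., $\mc S=\mc M_2$'' is not a side remark but a genuine hypothesis that the corollary, as stated, tacitly needs.
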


To close this section, we establish a decomposition of the Gelbrich ambiguity set that will prove useful for evaluating $\risk_{\mc{G}_{\rho}(\msa, \covsa)}(\ell)$ and $\risk_{\mc{G}_{\rho}(\msa, \covsa)}(\Lc)$.  
By definition, the Gelbrich ambiguity set $\mc{G}_{\rho}(\msa, \covsa)$ encompasses all distributions in $\mc S$ whose mean vectors and covariance matrices belong to $\mathcal U_{\rho}(\msa, \covsa)$. Denoting by $ \mathcal C(\mu,\Sigma)$ the {\em structured Chebyshev ambiguity set} that contains all distributions in $\mc S$ with mean~$\m$ and covariance matrix~$\cov$, the Gelbrich ambiguity set can be decomposed as 
\begin{equation}
\label{eq:gelbrich-union}
\mc{G}_{\rho}(\msa, \covsa) = \bigcup_{(\mu,\Sigma) \in \mathcal U_{\rho}(\msa, \covsa)} \mathcal C(\mu,\Sigma).
\end{equation}
In particular, we have $\mc{G}_{0}(\msa, \covsa)=\mc{C}(\msa, \covsa)$.
The decomposition~\eqref{eq:gelbrich-union} indicates that if $\mc{G}_{\rho}(\msa, \covsa)$ contains a particular distribution~$\QQ$, then it contains {\em all} distributions in $\mc S$ with the same mean and covariance matrix as $\QQ$. Moreover, it allows us to represent the Gelbrich risk of any loss function~$\ell\in\Lc_0$ as 
\begin{subequations}\label{eq:two-layer}
    \begin{align}
    \label{eq:two-layer-a}
    \risk_{\mc{G}_{\rho}(\msa, \covsa)}(\ell)~&=\Sup{(\mu, \cov) \in \mathcal U_\rho(\msa, \covsa)} ~ \Sup{\QQ \in \mathcal C(\mu,\Sigma)} \risk_\QQ(\ell) \\
    & = \Sup{(\mu, M) \in \mathcal V_\rho(\msa, \covsa)} ~ \Sup{\QQ \in \mathcal C(\mu,M-\mu\mu^\top)} \risk_\QQ(\ell) \label{eq:two-layer-b},
    \end{align}
\end{subequations}
where the second equality holds because~$(\m,M)\in \mathcal V_\rho(\msa, \covsa)$ if and only if~$(\m,M-\m\m^\top)\in \mathcal U_\rho(\msa, \covsa)$. The reformulation~\eqref{eq:two-layer-b} suggests that the Gelbrich risk evaluation problem may be computationally tractable in situations of practical interest. To see this, note that the inner maximization problem in~\eqref{eq:two-layer-b} simply evaluates the worst-case risk over the Chebyshev ambiguity set of all probability distributions with mean~$\m$ and second moment matrix~$M$. If the risk measure~$\risk_\QQ(\ell)$ is linear in~$\QQ$ ({\em e.g.}, if it represents the expected loss) or concave in~$\QQ$ ({\em e.g.}, if it represents the variance, the VaR or the CVaR of the loss), then this inner problem constitutes a convex maximization problem over all probability distributions~$\QQ$ that satisfy the linear equality constraints $\EE_\QQ[\xi]=\mu$ and $\EE_\QQ[\xi\xi^\top]=M$. As concavity is preserved under partial maximization, the optimal value of the inner problem in~\eqref{eq:two-layer-a} is jointly concave in the constraint right hand sides~$\m$ and~$M$. The  outer problem in~\eqref{eq:two-layer-b} thus maximizes a concave function ({\em i.e.}, the worst-case Chebyshev risk) over all mean vectors and second moment matrices in the convex set~$\mathcal V_\rho(\msa, \covsa)$; see Proposition~\ref{prop:compact:V}. Consequently, both the inner and the outer maximization problems in~\eqref{eq:two-layer-b} are convex, and thus there is hope that both of them are computationally tractable. We will further investigate the decomposition~\eqref{eq:two-layer} in Section~\ref{sec:linearportfolio}.

Conceptually, the inner problems in~\eqref{eq:two-layer-a} and~\eqref{eq:two-layer-b} hedge against uncertainty in the shape and the outer problems hedge against uncertainty in the location and dispersion of the distribution of the risk factors.
We are not the first to study two-layer distributionally robust optimization problems with an outer layer that hedges against mean-covariance uncertainty. As the worst-case risk over a Chebyshev ambiguity set is concave in~$(\mu, M)$ but non-concave in~$(\mu,\cov)$ for most common risk measures, moment uncertainty has mostly been modeled through convex uncertainty sets for~$(\mu,M)$. This choice leads to convex outer-layer problems. For example, uncertainty sets that restrict~$\mu$ to an ellipsoid and~$M$ to the intersection of two positive semi-definite cones were proposed by~\cite{ref:delage2010distributionally}, whereas rectangular uncertainty sets for~$(\mu,M)$ were studied by~\cite{ref:zymler2013wcvar} and~\cite{ ref:hanasusanto2015distributionally}. Generic convex uncertainty sets for~$(\mu,\cov)$ render the outer-layer problems convex only in special situations, {\em e.g.}, when the loss function is quadratic and the VaR is used as a risk measure; see~\cite{ref:ghaoui2003worst, ref:rujeerapaiboon2016robust}. Our new convex uncertainty set~$\mathcal U_\rho(\msa, \covsa)$ for $(\mu,\cov)$ is not only remarkable due to its connection to the Wasserstein ambiguity set but also because it leads to a convex outer-layer problem in~\eqref{eq:two-layer-a} irrespective of the loss function (as long as the risk measure is concave in~$\QQ$).

\section{Gelbrich Risk of Linear Portfolio Loss Functions}
\label{sec:linearportfolio}

We now derive explicit formulas for the Gelbrich risk of linear portfolio loss functions of the form~$\ell(\xi) = -\w^\top \xi$, where~$\xi$ stands for the vector of asset returns, and~$w$ collects the portfolio weights. Thus,~$\ell(\xi)$ represents the negative portfolio return. As a preparation, Section~\ref{sec:risk-measures} reviews several basic properties of risk measures. Section~\ref{sec:regularized-mean-deviation} then shows that if the risk measure at hand is law-invariant, translation invariant and positive homogeneous (but not necessarily convex) and the structural ambiguity set satisfies a stability condition, then the Gelbrich risk simplifies to a regularized mean-standard deviation risk measure, which is convex and can be minimized efficiently. In addition, we analytically characterize the extremal distributions that attain the supremum in evaluating $\risk_{\mc{G}_{\rho}(\msa, \covsa)}(\ell)$. Remarkably, the Gelbrich risk, its optimal portfolios as well as the corresponding extremal distributions depend on the underlying risk measure only through a scalar, which we term the {\em standard risk coefficient}, and which can be calculated offline. In Section~\ref{sec:standard-risk} we thus provide closed-form expressions for the standard risk coefficients of the VaR, the CVaR and the mean-standard deviation risk measure. We also derive the standard risk coefficients of all spectral risk measures, all risk measures that admit a Kusuoka representation and all distortion risk measures. The online appendix shows that most results of Sections~\ref{sec:regularized-mean-deviation}--\ref{sec:standard-risk} extend to the mean-variance risk measures, even though they fail to be positive homogeneous.

\subsection{Basic Properties of Risk Measures}
\label{sec:risk-measures}
Virtually all risk measures used in economics and finance are law-invariant~\cite[\S~4.5]{ref:follmer2008stochastic}. Such risk measures are usually defined in view of the probability distribution~$\PP$ of the relevant risk factors. As we study situations in which~$\PP$ is ambiguous, we now extend the notion of law-invariance to {\em families} of risk measures $\{\risk_\QQ\}_{\QQ \in \mc M}$. 

\begin{definition}[Law-invariant family of risk measures] \label{def:law-invariant}
    The family of risk measures $\{\risk_\QQ\}_{\QQ \in \mc M}$ is law-invariant if~$\risk_{\QQ_1}(\ell_1) = \risk_{\QQ_2}(\ell_2)$ for any loss functions $\ell_1,\ell_2\in\Lc_0$ and probability distributions $\QQ_1, \QQ_2 \in \mc M$ such that the distribution of~$\ell_1(\xi)$ under~$\QQ_1$ matches the distribution of $\ell_2(\xi)$ under~$\QQ_2$. 
\end{definition}

Note that if the family of risk measures $\{\risk_\QQ\}_{\QQ \in \mc M}$ is law-invariant in the sense of Definition~\ref{def:law-invariant}, then the risk measure $\risk_{\QQ}$ is law-invariant in the sense of \cite[\S~4.5]{ref:follmer2008stochastic} for any fixed~$\QQ \in \mc M$. Conversely, the following remark shows that any law-invariant risk measure~$\risk_\PP$ associated with a continuous probability distribution~$\PP \in \mc M$ naturally induces a law-invariant family of risk measures~$\{ \risk_\QQ \}_{\QQ \in \mc M}$.

\begin{remark}[Constructing a law-invariant family of risk measures]
    \label{remark:construction}
    Assume that~$\risk_\PP$ is a law-invariant risk measure associated with a continuous probability distribution~$\PP\in\mc M$, and let~$\QQ\in\mc M$ be any other probability distribution. In particular, $\QQ$ does not have to be continuous. Denote by~$\varphi_\PP:\R^n\rightarrow[0,1]^n$ the Rosenblatt transformation corresponding to~$\PP$ \cite{ref:rosenblatt1952} and by $\psi_\QQ:[0,1]^n\rightarrow \R^n$ the inverse Rosenblatt transformation corresponding to~$\QQ$ \cite[\S~2.5]{ref:chen2011markov}. As~$\PP$ is continuous, one can show that~$\varphi_\PP(\xi)$ is uniformly distributed on~$[0,1]^n$ under~$\PP$. In addition, $\psi_\QQ(\varphi_\PP(\xi))$ follows the distribution~$\QQ$ under~$\PP$. We can now define a risk measure~$\risk_\QQ$ corresponding to~$\QQ$ by setting~$\risk_\QQ(\ell)=\risk_\PP(\ell(\psi_\QQ(\varphi_\PP(\xi))))$ for all~$\ell\in\Lc_0$. The family~$\{ \risk_\QQ \}_{\QQ \in \mc M}$ constructed in this way is law-invariant in the sense of Definition~\ref{def:law-invariant} as~$\risk_\PP$ is law-invariant in the usual~sense.
\end{remark}

Note that `risk measures' in colloquial English ({\em e.g.}, the `variance,' the `VaR' or the `CVaR' etc.) make no reference to a specific probability distribution and are therefore naturally interpreted as families of risk measures of the form~$\{\risk_\QQ\}_{\QQ \in \mc M}$. All of these standard families of risk measures are in fact law-invariant. We now recall some basic properties displayed by many popular risk measures.

\begin{definition}[Properties of risk measures] \label{def:risk:measure}
    A risk measure $\risk_\QQ$ associated with a probability distribution~$\QQ\in\mc M$ is
    \begin{itemize}[leftmargin=.2in]
        \item[$\diamond$] translation invariant if $\risk_\QQ(\ell + \lambda) = \risk_\QQ(\ell) + \lambda$ for all $\ell \in \mc L_0$, $\lambda \in \R$;
        \item[$\diamond$] positive homogeneous if $\risk_\QQ(\lambda \ell) = \lambda \risk_\QQ(\ell)$ for all $\ell \in \mc L_0$, $\lambda \in\R_+$;
        \item[$\diamond$] monotonic if $\risk_\QQ(\ell_1) \leq \risk_\QQ(\ell_2)$ for all $\ell_1, \ell_2 \in \mc L_0$ such that $\ell_1 \leq \ell_2$ $\QQ$-almost surely;
        \item[$\diamond$] convex if $\risk_\QQ(\lambda \ell_1 + (1- \lambda) \ell_2) \leq \lambda \risk_\QQ( \ell_1) + (1-\lambda) \risk_\QQ(\ell_2)$ for all $\ell_1, \ell_2 \in \mc L_0$, $\lambda \in [0, 1]$.
    \end{itemize}
    A risk measure is called coherent if it satisfies all of the above properties.
\end{definition}

\subsection{Law Invariant, Translation Invariant and Positive Homogeneous Risk Measures}
\label{sec:regularized-mean-deviation}

We will now demonstrate that many commonly used families of risk measures impact the Gelbrich risk of a linear loss function only through a scalar, which we define as follows.

\begin{definition}[Standard risk coefficient]
\label{def:alpha}
    The standard risk coefficient of a family~$\{ \risk_\QQ\}_{\QQ \in \mc M}$ of risk measures corresponding to a structural ambiguity set~$\mc S$, mean~$\m \in \R^n$, covariance matrix~$\cov \in \PSD^n$ and portfolio vector~$\w \in \R^n$ with~$\w^\top \cov \w \neq 0$ is
    \be
    \alpha(\mu,\cov,w) = \Sup{\QQ \in \mathcal C(\m, \cov)} \; \risk_\QQ \left( - \frac{\w^\top (\xi- \m)}{\sqrt{\w^\top \cov \w}} \right). \label{eq:alpha:def}
    \ee
\end{definition} 

Recall that $\mathcal C(\m, \cov)$ denotes the structured Chebyshev ambiguity set of all distributions in~$\mathcal{S}$ with mean~$\m$ and covariance matrix~$\cov$. For generic families of risk measures, the standard risk coefficient of Definition~\ref{def:alpha} depends on~$\m$, $\cov$ and~$\w$. However, if the family of risk measures is law-invariant and the structural ambiguity set is stable in the sense of the following definition, then the standard risk coefficient is constant in these parameters and depends solely on the family of risk measures and the structural ambiguity set at hand. 

\begin{definition}[Stable structural ambiguity set]
    \label{def:stable}
    A structural ambiguity set~$\mathcal S$ is stable if it is closed under arbitrary affine pushforwards and convolutions. Thus, if~$\QQ \in \mc S$ and $f: \R^n \to \R^n$ is of the form $f(\xi) = A\xi + b$ for some $A \in \mathbb R^{n \times n}$ and $b \in \R^n$, then~$\QQ \circ f^{-1}\in \mc S$. Similarly, if~$\mathbb Q_1, \mathbb Q_2 \in \mathcal S$, then~$\mathbb Q_1 * \mathbb Q_2\in\mc S$, where the convolution is defined through $\mathbb Q_1 * \mathbb Q_2(B) = \int_{\mathbb R^n} \int_{\mathbb R^n} \mathds 1_{\xi_1 + \xi_2\in B}\, \diff \mathbb Q_1(\xi_1) \,\diff \mathbb Q_2(\xi_2) $ for all Borel sets~$B \in \mc B(\mathbb R^n)$.
\end{definition}

To motivate our terminology, recall that a distribution is called stable if any linear combination of two independent random variables with this distribution has the same distribution up to location and scaling. For example, the Gaussian, Cauchy and L\'{e}vy distributions are stable. Definition~\ref{def:stable} generalizes this notion to ambiguity sets. Indeed, it implies that if two independent random vectors~$\xi_1$ and~$\xi_2$ have distributions~$\mathbb Q_1$ and~$\mathbb Q_2$, respectively, both of which belong to the same stable structural ambiguity set~$\mathcal S$, and if~$A_1,A_2 \in \mathbb R^{n \times n}$, then the probability distribution of the linear combination~$\xi = A_1 \xi_1 + A_2 \xi_2$ also belongs to~$\mathcal S$. The structural ambiguity set~$\mathcal M_2$ of all distributions with finite second moment is trivially stable. By~\cite[\S~2]{ref:yu2009projection}, the sets of all symmetric, all symmetric linear unimodal and all log-concave distributions with finite second moments also constitute stable ambiguity sets. In addition, the ambiguity set of all Gaussian distributions is stable. However, some structural ambiguity sets fail to be stable. The set of all elliptical distributions with the same characteristic generator, for example, is not necessarily stable. Indeed, the convolution of two Laplace distributions is not a Laplace distribution, for instance. One can also show that the structural ambiguity set of all linear unimodal (but not necessarily symmetric) distributions also fails to be stable even in the univariate case. One can further show that the structural ambiguity set generated by a distribution~$\Pnom$ is stable only if~$\Pnom$ is Gaussian.
We can now state the announced result, which is inspired by~\cite[Theorems~1 and~2]{ref:yu2009projection}.

\begin{proposition}[Standard risk coefficient] \label{proposition:alpha}
    If $\{\risk_\QQ\}_{\QQ \in \mc M}$ is a law-invariant family of risk measures and the structural ambiguity set~$\mathcal S$ is stable, then the corresponding standard risk coefficient $\alpha$ is independent of~$\m$, $\cov$ and~$\w$.
\end{proposition}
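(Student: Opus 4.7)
My plan is to reduce every admissible triple $(\mu, \cov, w)$ to the single canonical triple $(0, I_n, e_1)$ via constructions that preserve the distribution of the standardized loss. Law invariance of $\{\risk_\QQ\}_{\QQ\in\mc M}$ will then force $\alpha$ to take a common value at every valid triple.

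First I would eliminate the mean. The map $f(\xi) = \xi - \mu$ is an affine pushforward permitted by stability, so $\QQ \mapsto \QQ \circ f^{-1}$ is a bijection from $\mathcal C(\mu, \cov)$ onto $\mathcal C(0, \cov)$ that preserves the random variable $-w^\top(\xi - \mu)/\sqrt{w^\top \cov w}$ pointwise. Law invariance then yields $\alpha(\mu, \cov, w) = \alpha(0, \cov, w)$.

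The main step is to prove $\alpha(0, \cov, w) = \alpha(0, I_n, e_1)$ whenever $w^\top \cov w > 0$, and I would establish this through two explicit transport-style constructions, exploiting the two closure properties that define stability. For $\alpha(0, \cov, w) \le \alpha(0, I_n, e_1)$, take $\QQ \in \mathcal C(0, \cov)$ and set $\xi^* = \alpha_1 \xi^{(1)} + \sum_{i=2}^n \alpha_i \xi^{(i)}$ with $\xi^{(1)}, \ldots, \xi^{(n)}$ independent copies of $\xi \sim \QQ$, $\alpha_1 = e_1 w^\top/\sqrt{w^\top \cov w}$, and $\alpha_i = e_i v_i^\top$ for $i \ge 2$, where each $v_i$ is normalized so that $v_i^\top \cov v_i = 1$ (possible because $\cov \ne 0$). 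A direct computation gives $\EE[\xi^*] = 0$, $\Cov(\xi^*) = e_1 e_1^\top + \sum_{i=2}^n e_i e_i^\top = I_n$, and $\xi_1^* = w^\top \xi^{(1)}/\sqrt{w^\top \cov w}$, so the two standardized losses coincide almost surely. The law of $\xi^*$ lies in $\mc S$ because each summand is an $\R^n \to \R^n$ affine pushforward of an element of $\mc S$ and $\mc S$ is closed under convolutions. For the reverse inequality, given $\QQ^* \in \mathcal C(0, I_n)$ I set $\xi = (v e_1^\top) \xi^{*(1)} + B\, \xi^{*(2)}$ with $v = \cov w/\sqrt{w^\top \cov w}$, $B = (\cov - v v^\top)^{1/2}$, and independent copies $\xi^{*(1)}, \xi^{*(2)} \sim \QQ^*$. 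A Schur-complement argument shows $\cov - v v^\top \succeq 0$, and the identity $w^\top(\cov - v v^\top) w = 0$ yields $B w = 0$; hence $\xi$ has mean $0$, covariance $v v^\top + B B^\top = \cov$, and standardized loss equal to $-\xi_1^{*(1)}$ pointwise. Stability again places the law of $\xi$ in $\mc S$.

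Combining the two constructions and invoking law invariance, the sets of achievable risks $\{\risk_\QQ(-w^\top \xi/\sqrt{w^\top \cov w}): \QQ \in \mathcal C(0, \cov)\}$ and $\{\risk_{\QQ^*}(-\xi_1^*): \QQ^* \in \mathcal C(0, I_n)\}$ coincide, hence so do their suprema, giving $\alpha(0, \cov, w) = \alpha(0, I_n, e_1)$; this quantity depends only on $\mc S$ and $\{\risk_\QQ\}_{\QQ\in\mc M}$. The principal obstacle I anticipate is that the naive strategy of a single invertible affine change of variables $A = \cov_2^{1/2} U \cov_1^{-1/2}$ breaks down when $\cov$ is rank-deficient; the convolution-based constructions above bypass this by distributing the desired covariance across several independent $\mc S$-copies, which is precisely where both closure properties in the definition of a stable structural ambiguity set come into play.
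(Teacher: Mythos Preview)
Your proposal is correct and follows essentially the same strategy as the paper: both exploit law invariance together with the two closure properties that define stability (arbitrary affine pushforwards and convolutions) to show that the family of achievable distributions of the standardized loss $-w^\top(\xi-\mu)/\sqrt{w^\top\cov w}$ does not depend on the triple $(\mu,\cov,w)$. The organizational difference is that the paper compares every triple to an intrinsic one-dimensional family $\mc F(0,1)=\{F^{\mathbb D}_{v^\top\xi}: v\in\R^n,\ \mathbb D\in\mc S,\ \EE_{\mathbb D}[v^\top\xi]=0,\ \Var_{\mathbb D}(v^\top\xi)=1\}$ rather than to the fixed canonical point $(0,I_n,e_1)$; because membership in $\mc F(0,1)$ imposes no covariance constraint on the $n$-vector, one of the two inclusions becomes a mere translation, and the nontrivial direction needs only a two-summand convolution. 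Your choice forces the target covariance to be exactly $I_n$, which is why you need an $n$-summand convolution in one direction and a two-summand construction in the other. Your reverse-direction transport $\xi=(ve_1^\top)\xi^{*(1)}+B\xi^{*(2)}$ with $v=\cov w/\sqrt{w^\top\cov w}$ and $B=(\cov-vv^\top)^{1/2}$ is essentially identical to the paper's construction, which uses $(I-\frac{1}{s^2}\cov ww^\top)\cov^{1/2}$ in place of $B$.
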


Proposition~\ref{proposition:alpha} is a key ingredient to prove our following main result.

\begin{theorem}[Gelbrich risk of linear loss functions]
    \label{theorem:meanstd}
    If $\{\risk_\QQ \}_{\QQ \in \mc M}$ is a law-invariant family of translation invariant and positive homogeneous risk measures, the structural ambiguity set~$\mc S$ is stable and the corresponding standard risk coefficient satisfies $0 \leq \alpha < +\infty$, then the Gelbrich risk of the portfolio loss function~$\ell(\xi) = -\w^\top \xi$ is given~by
    \be    \label{eq:MS:1}
        \Sup{\QQ \in \mc{G}_{\rho}(\msa, \covsa)} \; \risk_\QQ \left( - \w^\top \xi \right) =
        -\msa^\top \w + \alpha \sqrt{\w^\top \covsa \w } + \rho \sqrt{1+ \alpha^2}\, \| \w \|.
    \ee
    In addition, if $\mc S$ is the structural ambiguity set generated by a Gaussian nominal distribution~$\Pnom$ with~$\covsa \succ 0$, then the Wasserstein risk coincides with the Gelbrich risk.
\end{theorem}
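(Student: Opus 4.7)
The plan is to exploit the two-layer decomposition~\eqref{eq:two-layer-a}, which lets me reduce the Gelbrich risk evaluation to an inner Chebyshev problem over distributions with fixed moments and an outer geometric problem over the Gelbrich ball $\mathcal U_\rho(\msa,\covsa)$. First, I would fix $(\mu,\cov)\in\mathcal U_\rho(\msa,\covsa)$ and evaluate the inner supremum over $\mathcal C(\mu,\cov)$. Using translation invariance I write $\risk_\QQ(-w^\top\xi)=-w^\top\mu+\risk_\QQ(-w^\top(\xi-\mu))$; if $w^\top\cov w>0$, positive homogeneity yields $\risk_\QQ(-w^\top(\xi-\mu))=\sqrt{w^\top\cov w}\,\risk_\QQ(-w^\top(\xi-\mu)/\sqrt{w^\top\cov w})$, and Definition~\ref{def:alpha} together with Proposition~\ref{proposition:alpha} identifies the supremum with $\alpha$. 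The degenerate case $w^\top\cov w=0$ is handled by noting that positive homogeneity with $\lambda=0$ together with translation invariance forces $\risk_\QQ(0)=0$, so both sides agree. This produces the closed form
\[
\Sup{\QQ\in\mathcal C(\mu,\cov)} \risk_\QQ(-w^\top\xi)=-w^\top\mu+\alpha\sqrt{w^\top\cov w}.
\]

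Next I tackle the outer problem of maximizing $-w^\top\mu+\alpha\sqrt{w^\top\cov w}$ over $\mathcal U_\rho(\msa,\covsa)$. Setting $d_\mu=\|\mu-\msa\|$ and $d_\cov^2=\Tr{\cov+\covsa-2(\covsa^{\half}\cov\covsa^{\half})^{\half}}$, the Gelbrich constraint reads $d_\mu^2+d_\cov^2\le\rho^2$. For the mean term, Cauchy--Schwarz gives $-w^\top\mu\le-w^\top\msa+d_\mu\|w\|$, with equality for $\mu=\msa+d_\mu\,w/\|w\|$. The main obstacle is bounding $\sqrt{w^\top\cov w}$ given only the Bures constraint on $\cov$. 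I would handle this by the coupling argument underlying Theorem~\ref{theorem:gelbrich}: for the optimal Gaussian coupling $\pi$ of $\N(0,\covsa)$ and $\N(0,\cov)$ we have $\EE_\pi[\|\xi_1-\xi_2\|^2]=d_\cov^2$, and the triangle inequality for the $L^2$-norm of $w^\top\xi$ yields $\sqrt{w^\top\cov w}\le\sqrt{w^\top\covsa w}+\|w\|\sqrt{\EE_\pi[\|\xi_1-\xi_2\|^2]}\le\sqrt{w^\top\covsa w}+d_\cov\|w\|$. Sharpness follows either by choosing $\cov$ to be a rank-one perturbation of $\covsa$ along the direction determined by $w$, or, in the setting of the present paper, by invoking Proposition~\ref{prop:gelbrich-projection} together with the explicit pushforward~\eqref{eq:optimal-affine}.

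Combining both bounds, the outer supremum becomes $-w^\top\msa+\alpha\sqrt{w^\top\covsa w}+\|w\|\sup\{d_\mu+\alpha d_\cov:d_\mu^2+d_\cov^2\le\rho^2,\,d_\mu,d_\cov\ge 0\}$. One more application of Cauchy--Schwarz to the inner product $(d_\mu,d_\cov)\cdot(1,\alpha)$ gives the maximum $\rho\sqrt{1+\alpha^2}$, attained at $(d_\mu,d_\cov)=\rho(1,\alpha)/\sqrt{1+\alpha^2}$. Substituting yields exactly~\eqref{eq:MS:1}. Since $\alpha<\infty$ by assumption, the suprema are all finite, and the optimizers in the three Cauchy--Schwarz inequalities can be realized jointly, so the bound is tight.

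For the Wasserstein claim, the structural ambiguity set generated by a Gaussian $\Pnom$ consists only of Gaussians; since $\covsa\succ 0$, Theorem~\ref{theorem:gelbrich:amb} gives $\mc W_\rho(\Pnom)=\mc G_\rho(\msa,\covsa)$, so the worst-case risk over the two sets coincides and equals the right-hand side of~\eqref{eq:MS:1}.
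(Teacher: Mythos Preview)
Your approach is correct and genuinely different from the paper's. Both proofs share the inner-layer reduction
\(\sup_{\QQ\in\mathcal C(\mu,\cov)}\risk_\QQ(-w^\top\xi)=-w^\top\mu+\alpha\sqrt{w^\top\cov w}\),
but diverge on the outer problem over~$\mathcal U_\rho(\msa,\covsa)$. The paper dualizes the Gelbrich constraint with a multiplier~$\gamma$, introduces a second multiplier~$\lambda$ for an epigraphical reformulation of the square-root term, applies the substitution~$B=(\covsa^{1/2}\cov\covsa^{1/2})^{1/2}$, invokes the Sherman--Morrison formula, and then solves two nested univariate problems in~$\lambda$ and~$\gamma$ in closed form; the degenerate case~$\covsa\succeq 0$ is handled afterwards via Berge's maximum theorem. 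Your route is more elementary and geometric: the coupling argument yields the sharp scalar inequality \(\sqrt{w^\top\cov w}\le\sqrt{w^\top\covsa w}+d_\cov\|w\|\), which together with Cauchy--Schwarz for the mean reduces the outer problem to a two-dimensional disc maximization whose optimum~$\rho\sqrt{1+\alpha^2}$ is immediate. This buys transparency and avoids all duality machinery, at the cost of being tailored to the linear loss; the paper's support-function calculus, by contrast, is heavier but reusable for the nonlinear settings of Section~\ref{sec:nonlinearportfolio} and Appendix~\ref{sect:mean:var}. Two minor points: (i) your ``rank-one perturbation'' attaining equality in the covariance bound should be the \emph{multiplicative} form~$\cov^\star=(I+cww^\top)\covsa(I+cww^\top)$ (an additive perturbation~$\covsa+sww^\top$ is not sharp for general~$\covsa$); your alternative via Proposition~\ref{prop:gelbrich-projection} and~\eqref{eq:optimal-affine} delivers exactly this and requires~$\covsa\succ 0$; (ii) the case of singular~$\covsa$ (equivalently~$w^\top\covsa w=0$) therefore needs a separate continuity argument, just as in the paper. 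Also, the optimizer in your first Cauchy--Schwarz should read~$\mu=\msa-d_\mu w/\|w\|$ rather than~$+$.
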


We emphasize that the standard risk coefficient~$\alpha$ can in general be negative. In this case, evaluating the Gelbrich risk of $-\w^\top \xi$ requires the solution of a non-convex optimization problem, and Theorem~\ref{theorem:meanstd} no longer holds (see problem~\eqref{eq:primal} in the proof of Theorem \ref{theorem:meanstd}). A sufficient condition for the non-negativity of~$\alpha$ is described in the following proposition.
\begin{proposition}[Non-negative standard risk coefficient]
    \label{prop:alpha:positive}
    If~$\{\risk_\QQ\}_{\QQ \in \mc M}$ is a law-invariant family of coherent risk measures and the structural ambiguity set~$\mathcal S$ contains a symmetric distribution, then the corresponding standard risk coefficient~$\alpha$ is non-negative.
\end{proposition}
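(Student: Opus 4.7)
My plan is to exhibit one distribution $\QQ \in \mathcal C(\mu,\cov)$ under which the normalized loss $Y := -w^\top(\xi-\mu)/\sqrt{w^\top\cov w}$ is symmetric about zero, and then use the coherent risk measure axioms together with law-invariance to conclude $\risk_\QQ(Y)\geq 0$. Since $\alpha(\mu,\cov,w)$ is the supremum of $\risk_{\QQ'}(Y)$ over $\QQ' \in \mathcal C(\mu,\cov)$, producing a single such $\QQ$ immediately yields $\alpha\geq 0$.

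To construct $\QQ$, I would take the given symmetric $\QQ_0\in\mc S$, say symmetric about some $\mu_0\in\R^n$ with covariance $\cov_0$, and push it forward through a positive semidefinite affine map $f(\xi)=A\xi+b$. In the generic case $\cov_0\succ 0$, setting
\[
A = \cov_0^{-\half}\bigl(\cov_0^{\half}\cov\cov_0^{\half}\bigr)^{\half}\cov_0^{-\half} \in \PSD^n, \qquad b = \mu - A\mu_0,
\]
yields $A\mu_0+b=\mu$ and $A\cov_0 A=\cov$. Closure of $\mc S$ under positive semidefinite affine pushforwards (Definition~\ref{def:structural}) guarantees $\QQ:=\QQ_0\circ f^{-1}\in\mc S$, hence $\QQ\in\mathcal C(\mu,\cov)$; and since affine maps preserve symmetry, $\QQ$ is symmetric about $\mu$. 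Consequently $w^\top(\xi-\mu)$ is symmetric about zero under $\QQ$, so $Y$ and $-Y$ have the same distribution.

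Law-invariance applied to the pair $(Y,-Y)$ then gives $\risk_\QQ(Y)=\risk_\QQ(-Y)$. Positive homogeneity with $\lambda=0$ implies $\risk_\QQ(0)=0$, and convexity combined with the previous identity delivers
\[
0 = \risk_\QQ(0) = \risk_\QQ\bigl(\tfrac{1}{2}Y + \tfrac{1}{2}(-Y)\bigr) \leq \tfrac{1}{2}\risk_\QQ(Y) + \tfrac{1}{2}\risk_\QQ(-Y) = \risk_\QQ(Y),
\]
so $\risk_\QQ(Y)\geq 0$ and therefore $\alpha(\mu,\cov,w)\geq 0$.

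The main technical obstacle I anticipate is the degenerate case in which every symmetric distribution of $\mc S$ has rank-deficient covariance, so that the closed form for $A$ breaks down and one cannot exactly match an arbitrary PSD target $\cov$ via a positive semidefinite affine pushforward. For all concrete structural ambiguity sets discussed in Section~\ref{sec:structural} ($\mc M_2$, and the symmetric, linear unimodal, log-concave, and elliptical families), a full-rank Gaussian-type symmetric element is available and the generic construction suffices. A case-tailored approximation or additional closure argument would be needed to cover pathological corners, but I do not expect this to alter the substance of the argument above.
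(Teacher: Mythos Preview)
Your proposal is correct and follows essentially the same route as the paper: produce a symmetric $\QQ\in\mathcal C(\mu,\cov)$ via a positive semidefinite affine pushforward of the given symmetric element of~$\mc S$, then combine positive homogeneity ($\risk_\QQ(0)=0$), convexity, and law-invariance ($\risk_\QQ(Y)=\risk_\QQ(-Y)$) to obtain $\risk_\QQ(Y)\geq 0$ and hence $\alpha\geq 0$. The paper's proof simply asserts the ``without loss of generality'' step rather than writing out the explicit formula for~$A$; your honest flagging of the rank-deficient corner case is a gap the paper glosses over in the same way, and is harmless in the stable-$\mc S$ context in which the proposition is actually used (where $\alpha$ is independent of $(\mu,\cov,w)$ by Proposition~\ref{proposition:alpha}, so one may simply take $(\mu,\cov)$ to be the moments of the given symmetric distribution).
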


From now on we assume that~$\{\risk_\QQ \}_{\QQ \in \mc M}$ is a law-invariant family of translation invariant and positive homogeneous risk measures and that~$\mc S$ is a stable ambiguity set with standard risk coefficient~$\alpha \ge 0$. Denoting by~$\Lc$ the set of portfolio loss functions~$\ell(\xi)=-w^\top \xi$ with portfolio weights~$w$ belonging to a set $\Omega\subseteq \R^n$, Theorem~\ref{theorem:meanstd} allows us to reformulate the optimal Gelbrich risk $\risk_{\mc{G}_{\rho}(\msa, \covsa)}(\Lc)$~as
\begin{align}
    \label{eq:optimize-w-gelbrich}
    \Min{\w \in \Omega} \Sup{\QQ \in \mc{G}_{\rho}(\msa, \covsa)} \risk_{\QQ}(-\w^\top \xi) 
    = \Min{\w \in \Omega} ~ - \msa^\top \w + \alpha \sqrt{\w^\top \covsa \w} + \rho \sqrt{1 + \alpha^2} \| \w\|.
\end{align}
Note that if~$\Omega$ is convex, then~\eqref{eq:optimize-w-gelbrich} constitutes a finite convex program. In particular, if~$\Omega$ is representable via second-order cone constraints, then problem~\eqref{eq:optimize-w-gelbrich} reduces to a tractable second-order cone program that can be solved highly efficiently with off-the-shelf solvers. As the Gelbrich risk upper bounds the Wasserstein risk by virtue of Corollary~\ref{cor:gelbrich-risk}, the convex program~\eqref{eq:optimize-w-gelbrich} provides a conservative and efficiently computable proxy for the optimal Wasserstein risk~$\risk_{\mc{W}_{\rho}(\Pnom)}(\Lc)$, which may be hard to compute exactly. Note also that~\eqref{eq:optimize-w-gelbrich} can be interpreted as a regularized Markowitz portfolio selection problem with an $\ell_2$-regularization term that scales with the size parameter~$\rho$ of the Gelbrich ambiguity set.

\begin{remark}[Portfolio constraints]
    There is a vast literature on the impact of portfolio constraints on performance. For example, it has been argued that the portfolio's out-of-sample risk can be reduced by imposing norm constraints~\cite{ref:jagannathan2003risk,ref:demiguel2009generalized}, no-short-sales constraints~\cite{ref:jagannathan2003risk} or combinations thereof~\cite{ref:kan2007optimal,ref:tu2011markowitz,ref:zhao2019portfolio}. Our Theorem~\ref{theorem:meanstd} implies that robustifying the risk measure with respect to a Gelbrich ambiguity set is equivalent to penalizing the portfolio's 2-norm---irrespective of the feasible set~$\Omega$. Thus, one may simultaneously robustify the risk measure and restrict the feasible set to improve performance. 
\end{remark}

Under mild conditions on~$\Omega$, one can show that the optimizer of the Gelbrich risk portfolio selection problem~\eqref{eq:optimize-w-gelbrich} converges to the equally weighted portfolio as~$\rho$ tends to infinity ({\em i.e.}, in the limit of extreme uncertainty). A similar result was proved by~\cite{ref:pflug20121/N} for a specific class of {\em convex} risk measures and for a Wasserstein ambiguity set.

\begin{corollary}[The equally weighted portfolio is optimal under high uncertainty]\label{corol:1/N}
    If the assumptions of Theorem~\ref{theorem:meanstd} hold, 
    $\Omega\subseteq \{w\in\mathbb R^n:e^\top \w = 1\}$ is a closed set of portfolio weights with $e\in\R^n$ being the vector of all ones, and if~$\Omega$ contains the equally weighted portfolio~$\frac{1}{n} e$, then the unique minimizer of~\eqref{eq:optimize-w-gelbrich} converges to~$\frac{1}{n}e$ as~$\rho$ tends to~$\infty$.
\end{corollary}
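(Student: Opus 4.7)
By Theorem~\ref{theorem:meanstd}, the problem~\eqref{eq:optimize-w-gelbrich} reduces to minimizing
\[
f_\rho(\w) \;=\; -\msa^\top \w + \alpha\sqrt{\w^\top \covsa \w} + \rho\sqrt{1+\alpha^2}\,\|\w\|
\]
over the closed set $\Omega$, which lies in the affine hyperplane $H=\{\w\in\R^n:e^\top\w=1\}$. Let $\w_\rho$ denote any minimizer. The governing intuition is that as $\rho\to\infty$ the linearly growing last term dominates, so $\w_\rho$ should approximately solve $\min_{\w\in\Omega}\|\w\|$, whose unique minimizer over $H$ (and hence over $\Omega$, since $\tfrac1n e\in\Omega$) is the equally weighted portfolio $\tfrac1n e$ by Cauchy--Schwarz. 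The plan is therefore to sandwich $\|\w_\rho\|$ between quantities that both tend to $1/\sqrt{n}$ and then invoke the equality case of Cauchy--Schwarz together with a subsequence argument.

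For the upper bound I exploit the optimality inequality $f_\rho(\w_\rho)\le f_\rho(\tfrac1n e)$. Using $\alpha\ge 0$ together with $\sqrt{\w_\rho^\top\covsa\w_\rho}\ge 0$ and $-\msa^\top\w_\rho\ge -\|\msa\|\|\w_\rho\|$ on the left, and $\|\tfrac1n e\|=1/\sqrt{n}$ on the right, this rearranges to
\[
\bigl(\rho\sqrt{1+\alpha^2}-\|\msa\|\bigr)\,\|\w_\rho\| \;\le\; \frac{\rho\sqrt{1+\alpha^2}}{\sqrt{n}}+C,
\]
where $C=\tfrac1n|\msa^\top e|+\tfrac{\alpha}{n}\sqrt{e^\top\covsa e}$ is independent of $\rho$. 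For all sufficiently large $\rho$ the left-hand coefficient is positive, and dividing yields $\|\w_\rho\|\le 1/\sqrt{n}+O(1/\rho)$. For the matching lower bound, the constraint $e^\top\w_\rho=1$ and Cauchy--Schwarz give $1\le\|e\|\|\w_\rho\|=\sqrt{n}\|\w_\rho\|$, so $\|\w_\rho\|\ge 1/\sqrt{n}$. Combining the two bounds yields $\|\w_\rho\|\to 1/\sqrt{n}$.

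To conclude convergence, observe that $\{\w_\rho\}$ is bounded, so every subsequence admits a further subsequence converging, by closedness of $\Omega$, to some $\w^\star\in\Omega$ with $e^\top\w^\star=1$ and $\|\w^\star\|=1/\sqrt{n}$. The equality case of Cauchy--Schwarz forces $\w^\star$ to be proportional to $e$, hence $\w^\star=\tfrac1n e$, and the full net $\w_\rho$ therefore converges to $\tfrac1n e$. The main obstacle is the ``unique minimizer'' phrasing for finite $\rho$: the function $\w\mapsto\|\w\|$ is strictly convex on any affine hyperplane avoiding the origin, since affine slices of the strictly convex Euclidean unit ball are strictly convex, so $f_\rho$ is strictly convex on $H$ whenever $\rho>0$, which yields uniqueness provided $\Omega$ is convex; when $\Omega$ is merely closed the sandwich argument above still shows that every selection of minimizers converges to the equally weighted portfolio, so the stated conclusion survives under the milder hypothesis.
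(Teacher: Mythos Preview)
Your proof is correct and takes a genuinely different route from the paper. The paper rescales by $\lambda=1/\rho$, observes that the minimizer of $-\lambda\msa^\top\w+\lambda\alpha\sqrt{\w^\top\covsa\w}+\sqrt{1+\alpha^2}\,\|\w\|$ over $\Omega$ coincides with the minimizer of~\eqref{eq:optimize-w-gelbrich}, and then invokes Berge's maximum theorem to conclude that $\w^\star(\lambda)$ is continuous on $[0,1]$ and hence converges to $\w^\star(0)=\tfrac{1}{n}e$ as $\lambda\downarrow 0$. Your argument instead sandwiches $\|\w_\rho\|$ directly between $1/\sqrt{n}$ (via Cauchy--Schwarz on $e^\top\w_\rho=1$) and $1/\sqrt{n}+O(1/\rho)$ (via the optimality inequality $f_\rho(\w_\rho)\le f_\rho(\tfrac{1}{n}e)$), and then identifies the limit through the equality case of Cauchy--Schwarz combined with a subsequence argument.

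Your approach is more elementary---it avoids Berge's theorem entirely---and, as you note, it is also more robust: it delivers the convergence conclusion for \emph{any} selection of minimizers even when $\Omega$ is merely closed and not convex, a point the paper's proof glosses over (the paper asserts uniqueness ``because its feasible set~$\Omega$ is closed and its objective function is strictly convex and coercive,'' but strict convexity of the objective does not by itself force uniqueness over a non-convex feasible set). The paper's rescaling trick, on the other hand, is slicker once one is willing to invoke Berge, and it makes the limiting problem $\min_{\w\in\Omega}\|\w\|$ appear transparently as the $\lambda=0$ endpoint of a continuous family.
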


Worst-case distributions that maximize the risk of a fixed portfolio over the Gelbrich ambiguity set can expose potential threats to the portfolio or may be useful for stress test experiments. Therefore, we now aim to characterize the worst-case distributions~$\QQ\opt$ that attain the supremum in evaluating the Gelbrich risk~$\risk_{\mc{G}_{\rho}(\msa, \covsa)}(\ell)$ for~$\ell(\xi) = -\w^\top \xi$. 

\begin{proposition}[Worst-case moments]
    \label{prop:WC}
    If $\{\risk_\QQ \}_{\QQ \in \mc M}$ is a law-invariant family of translation invariant and positive homogeneous risk measures, the structural ambiguity set~$\mc S$ is stable, the standard risk coefficient satisfies $0 < \alpha < +\infty$ and~$\covsa \succ 0$, then any extremal distribution $\QQ\opt$ that attains the Gelbrich risk of the loss function~$\ell(\xi) = -\w^\top \xi$ has the same mean $\m\opt \in \R^n$ and covariance matrix $\cov\opt \in \PSD^n$, where
    \begin{align*}
    \m\opt &= \msa - \frac{\rho}{\sqrt{1+\alpha^2} \|\w\|} w \quad \text{and} \\
    \cov\opt &= \left( I + \frac{\rho \alpha \w\w^\top}{\sqrt{1+\alpha^2} \|\w\| \sqrt{\w^\top \covsa \w}} \right) \covsa \left( I + \frac{\rho \alpha \w\w^\top}{\sqrt{1+\alpha^2} \|\w\| \sqrt{\w^\top \covsa \w}} \right).
    \end{align*}
\end{proposition}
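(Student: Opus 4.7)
The plan is to extract the extremal moments from the outer-layer optimization in the two-layer decomposition~\eqref{eq:two-layer-a}, exploiting the fact that Theorem~\ref{theorem:meanstd} has already identified the value of the Gelbrich risk. By~\eqref{eq:two-layer-a} and Proposition~\ref{proposition:alpha}, the inner Chebyshev subproblem collapses (via law invariance, translation invariance, and positive homogeneity) to $-\mu^\top w + \alpha\sqrt{w^\top \Sigma w}$, so
\[
\risk_{\mc{G}_{\rho}(\msa, \covsa)}(\ell) = \sup_{(\mu,\Sigma) \in \mc U_\rho(\msa, \covsa)} \Bigl( -\mu^\top w + \alpha\sqrt{w^\top \Sigma w}\,\Bigr).
\]
Any extremal distribution $\QQ\opt \in \mc G_\rho(\msa, \covsa)$ must attain this outer supremum at its own mean-covariance pair $(\mu\opt,\Sigma\opt)$, so it is enough to prove that the outer problem has a unique maximizer equal to the pair announced in the proposition.

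My approach is to chain three Cauchy-Schwarz-type inequalities whose combined upper bound exactly reproduces the value in Theorem~\ref{theorem:meanstd}. First, $-(\mu-\msa)^\top w \le \|\mu-\msa\|\,\|w\|$. Second, introduce the Bures component $r := \sqrt{\Tr{\Sigma + \covsa - 2(\covsa^{\half}\Sigma\covsa^{\half})^{\half}}}$ of the Gelbrich distance; since $\covsa \succ 0$ and $\mc N(0,\Sigma)$ is a positive semidefinite affine pushforward of $\mc N(0,\covsa)$, Theorem~\ref{theorem:Wasserstein=Gelbrich} supplies a Wasserstein-optimal Gaussian coupling $(Y_1,Y_2)$ with $Y_1 \sim \mc N(0,\covsa)$, $Y_2 \sim \mc N(0,\Sigma)$ and $\EE[\|Y_1-Y_2\|^2]=r^2$. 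Applying the $L^2$ triangle inequality to $w^\top Y_2 = w^\top Y_1 + w^\top(Y_2-Y_1)$ followed by Cauchy-Schwarz for $\Var[w^\top(Y_2-Y_1)] \le \|w\|^2 \EE[\|Y_2-Y_1\|^2]$ yields $\sqrt{w^\top\Sigma w} \le \sqrt{w^\top\covsa w} + r\,\|w\|$. Third, with $s := \|\mu-\msa\|$, the Gelbrich constraint $s^2+r^2 \le \rho^2$ together with Cauchy-Schwarz on $(s,r)$ against $(1,\alpha)$ gives $s+\alpha r \le \sqrt{1+\alpha^2}\,\rho$. Concatenating the three bounds exactly reproduces the Gelbrich risk formula of Theorem~\ref{theorem:meanstd}.

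Since Theorem~\ref{theorem:meanstd} provides equality, every one of the three Cauchy-Schwarz bounds above must saturate at any extremal $(\mu\opt,\Sigma\opt)$. Saturation of the third inequality forces $(s,r) \parallel (1,\alpha)$ with $s^2+r^2=\rho^2$, uniquely yielding $s=\rho/\sqrt{1+\alpha^2}$ and $r=\rho\alpha/\sqrt{1+\alpha^2}$; saturation of the first inequality then forces $\mu\opt = \msa - (s/\|w\|)w$, matching the claim. The delicate step is uniqueness of $\Sigma\opt$: saturation of the triangle inequality demands perfect positive linear dependence between $w^\top Y_1$ and $w^\top(Y_2-Y_1)$, while saturation of Cauchy-Schwarz demands $\Cov(Y_2-Y_1) = (r^2/\|w\|^2)\, ww^\top$. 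Combining these with the explicit linear form of the Wasserstein-optimal Gaussian coupling, $Y_2 = A Y_1$ with $A = \covsa^{-\half}(\covsa^{\half}\Sigma\covsa^{\half})^{\half}\covsa^{-\half}\succeq 0$ (available because $\covsa\succ 0$), reduces the problem to showing that $(A-I)\covsa(A-I) \propto ww^\top$ and symmetry of $A-I$ force $A-I = \beta\, ww^\top$ for a unique $\beta \ge 0$; the Bures budget $r=\rho\alpha/\sqrt{1+\alpha^2}$ then pins $\beta = \rho\alpha/(\sqrt{1+\alpha^2}\,\|w\|\sqrt{w^\top\covsa w})$ and delivers $\Sigma\opt=A\covsa A$ exactly as stated. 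The main obstacle I anticipate is this concluding linear-algebra step, where the hypothesis $\covsa\succ 0$ is essential to invert $\covsa^{\half}$ and obtain both the uniqueness of $A$ and its rank-one-perturbation form.
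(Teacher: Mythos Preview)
Your proof is correct and takes a genuinely different route from the paper. The paper proceeds via Lagrangian duality: it introduces an auxiliary variable $t$ for $\sqrt{w^\top\Sigma w}$, forms the Lagrangian with multipliers $\gamma$ (for the Gelbrich constraint) and $\lambda$ (for $t^2\le w^\top\Sigma w$), reads off the unique optimal duals $\gamma^\star,\lambda^\star$ already computed in the proof of Theorem~\ref{theorem:meanstd}, and then invokes the saddle-point property to conclude that any primal maximizer must maximize the separable Lagrangian at $(\gamma^\star,\lambda^\star)$. This yields $\mu^\star$ from a quadratic, and $\Sigma^\star$ from a Lyapunov equation solved with Sherman--Morrison.

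Your approach replaces all of this machinery by a chain of three elementary inequalities (Cauchy--Schwarz on $\mu-\msa$, an $L^2$-triangle plus trace bound on the Bures part, and Cauchy--Schwarz on $(s,r)$ against $(1,\alpha)$) whose concatenation reproduces the right-hand side of~\eqref{eq:MS:1}; you then read off the extremal pair from the equality conditions. The optimal-transport interpretation of the Bures term via the explicit Gaussian Brenier map $A=\covsa^{-1/2}(\covsa^{1/2}\Sigma\covsa^{1/2})^{1/2}\covsa^{-1/2}$ is a nice touch: it makes the rank-one structure $A-I=\beta\,ww^\top$ emerge naturally from the saturation of $w^\top\Cov(Y_2-Y_1)w=\|w\|^2 r^2$ (which forces $\Cov(Y_2-Y_1)\propto ww^\top$) combined with the symmetry of $A-I$. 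The linear-algebra step you flagged as delicate indeed goes through cleanly once $\covsa\succ 0$ lets you factor $(A-I)\covsa^{1/2}$ as rank one with column space $\mathrm{span}\{w\}$. Compared to the paper, your argument is more self-contained (it does not rely on revisiting the duality computations inside the proof of Theorem~\ref{theorem:meanstd}) and gives a more transparent geometric explanation for the form of $\Sigma^\star$; the paper's approach, in turn, is more mechanical and would generalize more readily to objectives that are not amenable to such a clean inequality chain.
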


Proposition~\ref{prop:WC} characterizes only the first two moments of the extremal distributions that attain the Gelbrich risk. In general, $\mc S$ may contain multiple distributions with these moments. When~$\mathcal S$ is the set of all Gaussian distributions, however, the Gelbrich risk is uniquely attained by the Gaussian distribution with mean~$\mu^\star$ and covariance matrix~$\cov^\star$.


\subsection{Calculation of the Standard Risk Coefficient}
\label{sec:standard-risk}

We now show that the standard risk coefficient~$\alpha$ is given closed form for a large class of risk measures. First, we focus on the value-at-risk (VaR). For any probability distribution~$\QQ\in\mc M$, the VaR at level~$\beta \in (0, 1)$ of any loss function~$\ell \in\Lc_0$ is defined as
\[
\QQ\text{-}\VaR_{\beta} (\ell) = \inf \left\{ \tau \in \R : \QQ[\ell(\xi) \leq \tau] \geq 1-\beta \right\}.
\]
VaR fails to be convex, yet it is widely used by financial institutions and regulators~\cite{ref:jorion1996var, JPM_96, duf_pan_97}. In addition, VaR induces a law-invariant family of translation invariant and positive homogeneous risk measures, and thus
Proposition~\ref{proposition:alpha} and Theorem~\ref{theorem:meanstd} apply whenever the structural ambiguity set~$\mc S$ is stable. The following proposition describes situations in which~$\alpha$ is available in closed form.

\begin{proposition}[Standard risk coefficient for VaR] 
    \label{proposition:VaR}
    If~$\beta\in(0,1)$ and~$\risk_\QQ=\QQ\text{\em -VaR}_\beta$ for~$\QQ\in\mc M$, then~$\alpha$ is available in closed form for several stable structural ambiguity sets.
    \begin{enumerate}[label = (\roman*)]
        \item If~$\mc S=\mc M_2$, then $\alpha = \sqrt{(1-\beta)/\beta}$.
        \item If~$\mc S$ is the set of all symmetric distributions in~$\mc M_2$, then~$\alpha = \sqrt{1/(2\beta)}$ for~$\beta<\half$ and~$\alpha = 0$ for~$\beta\ge \half$.
        \item If~$\mc S$ is the set of all symmetric linear unimodal distributions in~$\mc M_2$, then~$\alpha = 2/(3\sqrt{2\beta})$ for~$\beta <\half$ and~$\alpha = 0$ for~$\beta \ge \half$.
        \item If~$\mc S$ is the set of all Gaussian distributions, then $\alpha = \Phi^{-1}(1-\beta)$, where~$\Phi$ denotes the cumulative distribution function of the standard Gaussian distribution.
    \end{enumerate}
\end{proposition}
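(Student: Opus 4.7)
The plan is to first invoke Proposition~\ref{proposition:alpha} to reduce each of the four cases to a one-dimensional moment problem, and then to apply a sharp tail inequality tailored to the corresponding structural class. Since VaR induces a law-invariant, translation invariant, positive homogeneous family of risk measures and each structural ambiguity set listed in~(i)--(iv) is stable in the sense of Definition~\ref{def:stable}, Proposition~\ref{proposition:alpha} guarantees that $\alpha$ is independent of $(\m,\cov,\w)$. I would therefore set $\m=0$, $\cov=I_n$ and $\w=e_1$, so that the normalized loss becomes $-\eta:=-\xi_1$ with $\EE[\eta]=0$ and $\Var(\eta)=1$; stability of $\mc S$ under affine pushforwards allows the marginal law of $\xi_1$ to range over exactly the one-dimensional mean-zero, unit-variance distributions that share the relevant structural property. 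Computing $\alpha$ then amounts to evaluating $\sup_{\QQ'}\QQ'\text{-}\VaR_\beta(-\eta)$ over this univariate class.

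For part~(i) I would apply the Cantelli inequality $\QQ'[-\eta\ge t]\le 1/(1+t^2)$ for $t>0$, which immediately yields $\alpha\le\sqrt{(1-\beta)/\beta}$. Tightness follows from the two-point family in which $-\eta$ takes the value $\sqrt{(1-\beta')/\beta'}$ with probability $\beta'$ and $-\sqrt{\beta'/(1-\beta')}$ with probability $1-\beta'$: this law has zero mean and unit variance, its VaR equals $\sqrt{(1-\beta')/\beta'}$ whenever $\beta'>\beta$, and letting $\beta'\downarrow\beta$ saturates the bound. For part~(ii) I would combine symmetry with the two-sided Chebyshev bound $\QQ'[|\eta|\ge t]\le 1/t^2$ to get $\QQ'[\eta\ge t]\le 1/(2t^2)$, and hence $\alpha\le 1/\sqrt{2\beta}$ for $\beta<\half$; when $\beta\ge\half$ the median of any symmetric law is zero, forcing $\alpha\le 0$. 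Tightness in the subcritical regime is obtained by the three-point symmetric distribution $\QQ'[\eta=\pm 1/\sqrt{2\beta'}]=\beta'$, $\QQ'[\eta=0]=1-2\beta'$ with $\beta'>\beta$, whose VaR equals $1/\sqrt{2\beta'}$ and converges to $1/\sqrt{2\beta}$ as $\beta'\downarrow\beta$; the supercritical bound $\alpha=0$ is attained by the same family once $\beta'<\half$.

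For part~(iii) I would invoke Khinchine's representation of symmetric unimodal distributions, writing $\eta=UZ$ with $U\sim\mathrm{Unif}[-1,1]$ independent of $Z\ge 0$, so that the unit-variance constraint becomes $\EE[Z^2]=3$ and $\QQ'[\eta\ge t]=\tfrac12\EE[(1-t/Z)^+]$. Recasting this as a moment problem in $W=Z^2$ with $\EE[W]=3$ and concavifying the integrand $w\mapsto (1-t/\sqrt{w})^+/2$, the concave envelope becomes a tangent line from the origin touching the graph at $w=9t^2/4$, which yields the Gauss-type bound $\QQ'[\eta\ge t]\le 2/(9t^2)$ in the relevant regime $t\ge 2/\sqrt{3}$. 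Setting this equal to $\beta$ produces $\alpha=2/(3\sqrt{2\beta})$, and sharpness is realized by the Khinchine mixture $\QQ'[Z=3t^*/2]=4/(3t^{*2})$ and $\QQ'[Z=0]=1-4/(3t^{*2})$ at $t^*=2/(3\sqrt{2\beta})$; symmetry again settles the case $\beta\ge\half$. Part~(iv) is immediate because $\mc C(0,I_n)\cap\mc S$ contains only $\mathcal N(0,I_n)$, so $-\eta\sim\mathcal N(0,1)$ and $\VaR_\beta(-\eta)=\Phi^{-1}(1-\beta)$. The principal difficulty I anticipate is part~(iii): justifying the concavification step and verifying that the extremal Khinchine mixture indeed lies in the symmetric unimodal class with the prescribed variance requires the most care, whereas the remaining parts reduce to classical Chebyshev-type estimates with matching two- or three-point (or degenerate) extremizers.
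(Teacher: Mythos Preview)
The paper does not prove this proposition from scratch; it simply invokes \cite[Proposition~1]{ref:yu2009projection} for the closed-form worst-case Chebyshev VaR and reads off~$\alpha$ by setting $\rho=0$ in~\eqref{eq:MS:1}. Your approach---reducing via Proposition~\ref{proposition:alpha} to a one-dimensional moment problem and then applying a sharp tail bound in each case---is genuinely different and much more informative. Parts~(i), (ii) and~(iv) are correct as you outline them: Cantelli, the symmetric Chebyshev bound, and the singleton Gaussian class give exactly the stated values, and your two- and three-point limiting extremizers work (with the usual $\beta'\downarrow\beta$ argument). Your route is self-contained and exhibits the extremal laws; the paper's route is a black-box citation.

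Part~(iii), however, has a real gap. Your concavification is right, but the tangent from the origin meets $h_t(w)=(1-t/\sqrt w)^+$ at $w_0=9t^2/4$, and the concave envelope at the constrained mean $\EE[W]=3$ equals $4/(9t^2)$ \emph{only when} $3\le w_0$, i.e.\ $t\ge 2/\sqrt 3$. Solving $2/(9t^2)=\beta$ gives $t^\star=2/(3\sqrt{2\beta})$, and $t^\star\ge 2/\sqrt 3$ holds precisely when $\beta\le 1/6$. For $\beta\in(1/6,1/2)$ your proposed extremal Khinchine mixture is not even a probability distribution (the weight $4/(3{t^\star}^2)$ exceeds~$1$); in that range the concave envelope at $w=3$ equals $h_t(3)=1-t/\sqrt 3$ itself, attained by $Z\equiv\sqrt 3$ (the uniform law on $[-\sqrt 3,\sqrt 3]$), and inverting $\tfrac12(1-t/\sqrt 3)=\beta$ yields $\sqrt 3(1-2\beta)$, which is strictly smaller than $2/(3\sqrt{2\beta})$ on $(1/6,1/2)$. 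So your tightness argument establishes the stated formula only for $\beta\le 1/6$, and a faithful execution of your own method on $(1/6,1/2)$ actually produces a different value from the one asserted---something you will want to reconcile against the cited source before declaring the case closed.
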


In assertions~\emph{(i)}, \emph{(ii)} and~\emph{(iii)} of Proposition~\ref{proposition:VaR} the standard risk coefficient~$\alpha$ is non-negative for all~$\beta\in(0,1)$ even though Proposition~\ref{prop:alpha:positive} does not apply (as VaR fails to be convex). In assertion~\emph{(iv)}, on the other hand, $\alpha$ becomes negative for~$\beta>\half$. Hence, Theorem~\ref{theorem:meanstd} does {\em not} apply to the VaR at level~$\beta>\half$ if~$\mc S$ is the family of all Gaussian distributions.

We now address the conditional value-at-risk (CVaR). For any distribution~$\QQ\in\mc M$, the CVaR at level~$\beta \in (0, 1)$ of any loss function~$\ell \in\Lc_0$ is defined as
\[
\QQ\text{-}\CVaR_{\beta} (\ell) = \Inf{\tau \in \R} \left\{ \tau + \frac{1}{\beta} \EE_\QQ\left[ \max \{ \ell(\xi) - \tau, 0 \} \right] \right\}.
\]
It is well known that CVaR induces a law-invariant family of coherent risk measures \cite{ref:artzner1999coherent, ref:rockafellar2000optimization}, and thus
Proposition~\ref{proposition:alpha} and Theorem~\ref{theorem:meanstd} apply whenever the structural ambiguity set~$\mc S$ is stable. The following proposition shows that~$\alpha$ is again available in closed form in several situations of practical interest.

\begin{proposition}[Standard risk coefficient for CVaR]
    \label{proposition:CVaR}
    If~$\beta\in(0,1)$ and~$\risk_\QQ=\QQ\text{\em -CVaR}_\beta$ for~$\QQ\in\mc M$, then~$\alpha$ is available in closed form for several stable structural ambiguity sets.
    \begin{enumerate}[label = (\roman*)]
        \item If~$\mc S=\mc M_2$, then $\alpha = \sqrt{(1-\beta)/\beta}$.
        \item If~$\mc S$ is the set of all symmetric distributions in~$\mc M_2$, then~$\alpha = \sqrt{1/(2\beta)}$ for~$\beta< \half$ and~$\alpha = \sqrt{1-\beta}/(\sqrt{2}\beta)$ for~$\beta\ge \half$.
        \item If~$\mc S$ is the set of all symmetric linear unimodal distributions in~$\mc M_2$, then~$\alpha = 2/(3\sqrt{\beta})$ for~$\beta \leq \frac{1}{3}$, $\alpha = \sqrt{3}(1-\beta)$ for~$\frac{1}{3}<\beta \leq \frac{2}{3}$ and~$\alpha = 2\sqrt{1-\beta} /(3\beta)$ for~$\beta >\frac{2}{3}$.
        \item If~$\mc S$ is the set of all Gaussian distributions, then $\alpha = (\sqrt{2\pi} \beta)^{-1}\exp(-(\Phi^{-1}(1-\beta))^2/2)$, where~$\Phi$ denotes the cumulative distribution function of the standard Gaussian distribution.
    \end{enumerate}
\end{proposition}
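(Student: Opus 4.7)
The plan rests on the Rockafellar--Uryasev representation of CVaR combined with a minimax exchange. Each of the four ambiguity sets~$\mc S$ in the claim is stable---$\mc M_2$, the symmetric class, the symmetric linear unimodal class, and the Gaussian class are all identified as stable in the discussion following Definition~\ref{def:stable}---and the CVaR family is law invariant. Proposition~\ref{proposition:alpha} therefore yields that $\alpha$ is independent of $\m$, $\cov$, and $\w$, so I reduce to $n=1$, $\m=0$, $\cov=1$, $\w=1$ and obtain $\alpha = \sup_\QQ \QQ\text{-}\CVaR_\beta(\xi)$, where the supremum ranges over univariate $\QQ$ of zero mean and unit variance lying in the one-dimensional analogue of $\mc S$. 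Case~(iv) is immediate: the only such $\QQ$ is $\mc N(0,1)$, and $\alpha = \phi(\Phi^{-1}(1-\beta))/\beta$ is the standard Gaussian conditional tail expectation.

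For the remaining cases I substitute $\QQ\text{-}\CVaR_\beta(Z) = \inf_{\tau}\{\tau + \beta^{-1}\EE_\QQ[(Z-\tau)_+]\}$ and exchange $\sup_\QQ$ with $\inf_\tau$ via Sion's minimax theorem: the inner functional is linear (hence concave and continuous) in $\QQ$ and convex in $\tau$, the feasible $\QQ$'s form a tight, convex, weakly compact set (in case~(iii), convexity and compactness transfer to the law of the mixing variable in the Khintchine parametrization used below), and $\tau$ may without loss be confined to a compact interval around the Chebyshev quantile. This gives
\begin{equation*}
\alpha = \inf_{\tau\in\R}\bigl\{\tau + \beta^{-1}h(\tau)\bigr\}, \qquad h(\tau) = \sup_\QQ \EE_\QQ[(Z-\tau)_+].
\end{equation*}
I compute $h$ separately in each class. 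In case~(i), classical Chebyshev duality yields $h(\tau)=\tfrac12(\sqrt{1+\tau^2}-\tau)$, and the first-order condition $\tau^*/\sqrt{1+\tau^{*2}}=1-2\beta$ gives $\alpha=\sqrt{(1-\beta)/\beta}$. In case~(ii), $Z\eqdist -Z$ implies $\EE_\QQ[(Z-\tau)_+]=\tfrac12\EE_\QQ[(|Z|-\tau)_+]$, a one-sided moment problem for $W=|Z|$ with $\EE[W^2]=1$; two-point extremizers on $\{0,a\}$ give $h(\tau)=(1-\tau)/2$ on $[0,1/2]$ and $h(\tau)=1/(8\tau)$ on $[1/2,\infty)$. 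The identity $h(-s)=h(s)+s$ for $s\ge 0$---which follows from $\EE[(Z-\tau)_+]=-\tau + \EE[(\tau-Z)_+]$ together with $Z\eqdist -Z$---extends $h$ to all of $\R$. In case~(iii), Khintchine's theorem writes $Z\eqdist UV$ with $V\sim\mathrm{Unif}(-1,1)$ independent of $U\ge 0$ and $\EE[U^2]=3$; conditioning on $U$ gives $\EE[(UV-\tau)_+ \mid U=u]=(u-\tau)^2/(4u)$ for $u\ge\tau\ge 0$ and $0$ for $u<\tau$. Solving the resulting moment problem on $U$ via the degenerate extremum $U\equiv\sqrt{3}$ and the two-point extremum on $\{0,3\tau\}$ produces $h(\tau)=(\sqrt{3}-\tau)^2/(4\sqrt{3})$ on $[0,1/\sqrt{3}]$ and $h(\tau)=1/(9\tau)$ on $[1/\sqrt{3},\infty)$, extended to $\R$ via the same relation $h(-s)=h(s)+s$.

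Minimizing $\tau+\beta^{-1}h(\tau)$ over $\R$ closes each case. In case~(ii), the stationary points $\tau^*=1/\sqrt{8\beta}$ for $\beta<1/2$ and $\tau^*=-1/\sqrt{8(1-\beta)}$ for $\beta\ge 1/2$ produce the claimed values $1/\sqrt{2\beta}$ and $\sqrt{1-\beta}/(\sqrt{2}\beta)$. The main obstacle will be case~(iii): the minimization now produces three distinct stationary points---$\tau^*=1/(3\sqrt{\beta})$, $\tau^*=\sqrt{3}(1-2\beta)$, and $\tau^*=-1/(3\sqrt{1-\beta})$---which become active respectively for $\beta\le 1/3$, $1/3<\beta\le 2/3$, and $\beta>2/3$, matching the three formulas in the proposition. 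Verifying that the proposed extremizers of $U$ genuinely attain $h(\tau)$ across each interval of $\tau$ requires a dual certificate: a quadratic majorant $\lambda_0+\lambda_2 u^2$ of the conditional integrand, tangent at the extremal support with $\lambda_0,\lambda_2\ge 0$, which one must verify separately in each regime; cases~(i), (ii), and (iv) then follow from comparatively routine univariate calculus.
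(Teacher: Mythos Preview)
Your proposal is correct in its essentials and takes a genuinely different route from the paper. The paper's own proof is a one-line citation: it invokes \cite[Proposition~2]{ref:yu2009projection}, which already records closed-form expressions for the worst-case CVaR of a linear portfolio loss over the structured Chebyshev ambiguity sets corresponding to each of the four classes, and then reads off~$\alpha$ by comparing with formula~\eqref{eq:MS:1} at~$\rho=0$. You instead rederive everything from first principles via the Rockafellar--Uryasev representation, a minimax interchange, and explicit one-dimensional moment-problem computations (Scarf's bound in case~(i), a second-moment problem for~$|Z|$ in case~(ii), Khintchine's representation in case~(iii)). Your approach is substantially more labor-intensive but entirely self-contained, and it makes transparent \emph{why} the piecewise formulas in~(ii) and~(iii) arise: each regime in~$\beta$ corresponds to the optimal~$\tau^\star$ landing on a different branch of the piecewise function~$h$, hence a different extremal support.

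One point deserves more care. Your invocation of Sion's theorem rests on the feasible set of distributions being weakly compact, but the sets $\{\QQ:\EE_\QQ[Z]=0,\,\EE_\QQ[Z^2]=1\}$ (and likewise $\{\text{law}(U):U\ge 0,\,\EE[U^2]=3\}$) are not weakly compact, since second-moment constraints are not weakly continuous. The cleanest fix is not to appeal to an abstract minimax theorem at all: once you have computed~$\bar\alpha=\inf_\tau\{\tau+\beta^{-1}h(\tau)\}$ together with the optimal~$\tau^\star$ and the extremal distribution~$\QQ^\star$ attaining~$h(\tau^\star)$, simply verify by a direct calculation that $\QQ^\star\text{-}\CVaR_\beta(Z)=\bar\alpha$. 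Weak duality already gives $\alpha\le\bar\alpha$, and this primal witness closes the gap. You effectively have all the ingredients for this verification already; just make it explicit rather than hiding it behind Sion.
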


Propositions~\ref{proposition:VaR}~\emph{(i)} and~\ref{proposition:CVaR}~\emph{(i)} imply that the standard risk coefficients for VaR and CVaR coincide if~$\mc S$ represents the family of all distributions with finite second moments. This result is reminiscent of the observation that distributionally robust chance constraints are equivalent to distributionally robust CVaR constraints when the distributional uncertainty is modeled by a Chebyshev ambiguity set~\cite[Theorem~2.2]{ref:zymler2013distributionally}. We emphasize that the standard risk coefficients for VaR and CVaR differ under the structural ambiguity sets of assertions~\emph{(ii)}, \emph{(iii)} and~\emph{(iv)} of Propositions~\ref{proposition:VaR} and~\ref{proposition:CVaR}, respectively.

Next, we study mean-standard deviation risk measures that are ubiquitous in classical portfolio theory~\cite{ref:rockafellar2002deviation}. For any distribution~$\QQ \in \mc M$, the mean-standard deviation risk measure with risk-aversion coefficient~$\beta \ge 0$ of any loss function~$\ell \in \Lc_0$ is defined as $\risk_\QQ(\ell) = \EE_\QQ[ \ell(\xi)] + \beta (\Var_\QQ(\ell( \xi)))^{1/2}$, where $\Var_\QQ(\ell(\xi))$ denotes the variance of $\ell(\xi)$ under $\QQ$. For any fixed~$\beta$, the mean-standard deviation risk measure induces a law-invariant family of translation invariant and positive homogeneous risk measures, and thus
Proposition~\ref{proposition:alpha} and Theorem~\ref{theorem:meanstd} apply if~$\mc S$ is stable. Note that this includes the expected loss, for $\beta=0$. Proposition~\ref{proposition:meanstd} below derives~$\alpha$ again in closed form.

\begin{proposition}[Standard risk coefficient for mean-standard deviation risk measures]
    \label{proposition:meanstd}
    If~$\risk_\QQ$ is the mean-standard deviation risk measure with risk-aversion coefficient~$\beta \ge 0$ for every~$\QQ\in\mc M$ and if the structural ambiguity set~$\mathcal S$ is stable, then~$\alpha = \beta$.
\end{proposition}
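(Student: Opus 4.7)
The plan is to exploit the special structure of the standardized loss appearing in Definition~\ref{def:alpha}. Under the hypotheses that $\{\risk_\QQ\}_{\QQ\in\mc M}$ is law-invariant (which follows from its definition in terms of the mean and variance of $\ell(\xi)$, quantities that depend only on the law of $\ell(\xi)$) and $\mc S$ is stable, Proposition~\ref{proposition:alpha} guarantees that $\alpha(\mu,\cov,w)$ does not depend on~$\mu$, $\cov$ or~$w$. Hence it suffices to compute the standard risk coefficient for any single triple $(\mu,\cov,w)$ with $w^\top \cov w \ne 0$, or more conveniently, to observe that the value is identical for every admissible choice.

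Next I would examine the standardized loss $\ell(\xi) = -w^\top(\xi-\mu)/\sqrt{w^\top\cov w}$ under an arbitrary distribution $\QQ \in \mc C(\mu,\cov)$. By definition of the structured Chebyshev set, $\EE_\QQ[\xi] = \mu$ and $\EE_\QQ[(\xi-\mu)(\xi-\mu)^\top] = \cov$, so a direct calculation gives
\[
\EE_\QQ[\ell(\xi)] = -\frac{w^\top(\EE_\QQ[\xi]-\mu)}{\sqrt{w^\top\cov w}} = 0, \qquad \Var_\QQ(\ell(\xi)) = \frac{w^\top \cov w}{w^\top\cov w} = 1.
\]
Substituting into the mean-standard deviation risk measure yields $\risk_\QQ(\ell) = 0 + \beta\sqrt{1} = \beta$ uniformly over $\QQ \in \mc C(\mu,\cov)$.

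Finally, taking the supremum over $\QQ$ in the definition of the standard risk coefficient preserves this constant value, so $\alpha = \beta$, as claimed. There is no substantive obstacle here: the entire argument reduces to the observation that the standardization of $\ell$ inside Definition~\ref{def:alpha} forces its first two moments to be $0$ and $1$ under every Chebyshev-feasible~$\QQ$, which is precisely the information the mean-standard deviation risk measure uses. The stability assumption on~$\mc S$ is only invoked via Proposition~\ref{proposition:alpha} to rule out degenerate cases in which $\mc C(\mu,\cov)$ could be empty and to justify that the resulting constant is indeed the standard risk coefficient independent of the parameters.
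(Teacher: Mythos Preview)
Your proof is correct and is essentially the same computation the paper performs, just organized slightly differently. The paper's one-line proof invokes equation~\eqref{eq:MS:1} at $\rho=0$, which says the Chebyshev risk equals $-\msa^\top w + \alpha\sqrt{w^\top\covsa w}$, and then compares with the direct evaluation $-\msa^\top w + \beta\sqrt{w^\top\covsa w}$ of the mean--standard deviation Chebyshev risk; you instead plug the mean--standard deviation risk measure straight into Definition~\ref{def:alpha} and observe that the standardized loss has mean~$0$ and variance~$1$ under every $\QQ\in\mc C(\mu,\cov)$, yielding $\alpha=\beta$ immediately. Both routes amount to the same observation, with yours being slightly more self-contained since it does not rely on Theorem~\ref{theorem:meanstd}.
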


Consider now the family of spectral risk measures introduced by~\cite{ref:acerbi2002spectral}. In the following discussion, for any~$\ell \in\Lc_0$ and~$\QQ\in\mc M$ we use~$F^\QQ_{\ell(\xi)}$ to denote the cumulative distribution function of~$\ell(\xi)$ under~$\QQ$. In addition, we define the quantile function~$(F^\QQ_{\ell(\xi)})^{-1}$ through $(F^\QQ_{\ell(\xi)})^{-1}(\tau) = \inf\{q \in \R: F^\QQ_{\ell(\xi)}(q) \ge \tau\}$ for all~$\tau\in(0,1)$.

\begin{definition}[Spectral risk measures]
    \label{def:spectral-risk-measure}
    An admissible spectrum is a right-continuous and non-decreasing function~$\psi: [0, 1) \to \R_+$ with~$\int_0^1 \psi(\tau) \mathrm{d} \tau = 1$. The spectral risk measure $\risk_{\QQ}$ induced by~$\psi$ under a given distribution~$\QQ\in\mc M$ of the risk factors is defined through 
    \[
    \risk_\QQ(\ell) = \int_0^1 \psi(\tau) (F^\QQ_{\ell(\xi)})^{-1}(\tau) \mathrm{d} \tau \quad \forall \ell\in\mc L_0.
    \]
\end{definition}

For any fixed~$\QQ$, the set of all spectral risk measures coincides with the family of all coherent, law-invariant and comonotonic risk measures that satisfy a nonrestrictive Fatou property \cite[Theorem~7]{ref:kusuoka2001coherent}. On the other hand, any fixed admissible spectrum~$\psi$ induces a family of spectral risk measures parametrized by the distributions~$\QQ\in\mc M$. This family is law-invariant by construction, and thus
Proposition~\ref{proposition:alpha} and Theorem~\ref{theorem:meanstd} apply whenever~$\mc S$ is stable. The following proposition evaluates~$\alpha$ in closed form for~$\mc S=\mc M_2$.

\begin{proposition}[Standard risk coefficient for spectral risk measures] \label{proposition:spectral}
    If there exists a square-integrable admissible spectrum $\psi$ such that~$\risk_{\QQ}$ is the spectral risk measure induced by~$\psi$ for every~$\mathbb Q \in \mathcal M$ and if~$\mc S=\mc M_2$, then~$\alpha = (\int_0^1 \psi(\tau)^2 \mathrm{d} \tau - 1)^{\half}$.
\end{proposition}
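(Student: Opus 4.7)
By Proposition~\ref{proposition:alpha} (applied with $\mc S = \mc M_2$, which is stable and law-invariant) the coefficient $\alpha$ does not depend on $\mu$, $\cov$ or $w$. My plan is to reduce the computation to a one-dimensional variational problem over quantile functions, and then invoke the Cauchy--Schwarz inequality in $L^2([0,1])$ after subtracting the mean constraint.

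First, I fix a convenient choice of parameters, say $\mu = 0$, $\cov = I_n$, $w = e_1$, so that $-w^\top(\xi-\mu)/\sqrt{w^\top \cov w} = -\xi_1$. For any $\QQ \in \mc C(0, I_n)$, the random variable $Z := -\xi_1$ has mean $0$ and variance $1$ under $\QQ$, hence its quantile function $g := F_Z^{-1}$ is a non-decreasing element of $L^2(0,1)$ satisfying $\int_0^1 g(\tau)\,\dd \tau = 0$ and $\int_0^1 g(\tau)^2\,\dd \tau = 1$. Conversely, given any such $g$, one can construct a distribution $\QQ \in \mc C(0, I_n)$ realizing $-\xi_1$ with quantile function $g$: take $\xi_1$ with quantile function $g$ and let $\xi_2, \dots, \xi_n$ be independent standard Gaussians, independent of $\xi_1$. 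Hence
\begin{equation*}
\alpha = \sup_{g} \int_0^1 \psi(\tau)\, g(\tau)\,\dd\tau,
\end{equation*}
where the supremum ranges over all non-decreasing $g \in L^2(0,1)$ with $\int_0^1 g = 0$ and $\int_0^1 g^2 = 1$.

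Next, I exploit the constraint $\int_0^1 g = 0$ to rewrite the objective as $\int_0^1 (\psi(\tau) - 1)\, g(\tau)\,\dd\tau$. Cauchy--Schwarz then yields
\begin{equation*}
\int_0^1 (\psi(\tau) - 1)\, g(\tau)\,\dd\tau \le \Bigl( \int_0^1 (\psi(\tau)-1)^2 \,\dd\tau \Bigr)^{1/2} \Bigl( \int_0^1 g(\tau)^2 \,\dd\tau \Bigr)^{1/2} = \Bigl( \int_0^1 \psi(\tau)^2\,\dd\tau - 1\Bigr)^{1/2},
\end{equation*}
using $\int \psi = 1$ in the last step. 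This provides the upper bound $\alpha \le (\int_0^1 \psi^2 - 1)^{1/2}$.

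For the matching lower bound, equality in Cauchy--Schwarz is attained by the explicit choice $g^\star(\tau) = (\psi(\tau) - 1)/(\int_0^1 \psi^2 - 1)^{1/2}$ (with $\alpha = 0$ in the degenerate case $\psi \equiv 1$). The crucial point I need to check is that $g^\star$ is admissible: it is non-decreasing because $\psi$ is, and by construction $\int g^\star = 0$ and $\int (g^\star)^2 = 1$. The only mild obstacle is to justify that this quantile function arises from an honest square-integrable probability distribution on $\R$; this follows because any non-decreasing $L^2$ function on $(0,1)$ with the two moment conditions is the quantile function of a law on $\R$ with mean $0$ and variance $1$ (apply it to a uniform random variable on $(0,1)$). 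Using the product construction above to embed this law as the first marginal of a distribution in $\mc C(0, I_n)$ shows the bound is attained, completing the proof.
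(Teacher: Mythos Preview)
Your argument is correct and self-contained, whereas the paper's proof simply cites an external result: it invokes \cite[Theorem~2]{ref:li2018risk}, which gives a closed-form expression for the worst-case spectral risk over a Chebyshev ambiguity set, and then reads off~$\alpha$ by comparing with~\eqref{eq:MS:1} at~$\rho=0$ (exactly as in the proofs of Propositions~\ref{proposition:VaR} and~\ref{proposition:CVaR}). Your route---reducing to a one-dimensional variational problem over quantile functions and applying Cauchy--Schwarz after subtracting the constant~$1$ from~$\psi$---is the standard way such formulas are derived in the literature, and is presumably close to what happens inside the cited reference. The advantage of your approach is that it is elementary and makes the proof independent of external sources; the paper's citation keeps the exposition short.

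Two cosmetic points. First, in the parenthetical after Proposition~\ref{proposition:alpha} you write that~$\mc M_2$ is ``stable and law-invariant''; law-invariance is a property of the family~$\{\risk_\QQ\}$, not of the structural ambiguity set, so the phrasing should be adjusted. Second, when you write ``take~$\xi_1$ with quantile function~$g$'' you mean that~$-\xi_1$ has quantile function~$g$ (equivalently, take~$\xi_1=-Z$ where~$Z$ has quantile function~$g$); this slip is harmless because the construction still produces a distribution in~$\mc C(0,I_n)$ with the correct marginal for~$-\xi_1$.
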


Note that the CVaR at level $\beta \in (0, 1)$ is a spectral risk measure with spectrum $\psi(\tau) = \beta^{-1} \mathbbm{1}_{[1- \beta, 1)}(\tau)$; see, {\em e.g.}, \cite[Definition~4.43 and Lemma~4.46]{ref:follmer2008stochastic}. By Proposition~\ref{proposition:spectral}, the standard risk coefficient of the CVaR is thus given by $(\int_0^1 \psi(\tau)^2 \mathrm{d} \tau-1)^\half = \sqrt{(1-\tau)/\tau}$, which confirms the formula derived in Proposition~\ref{proposition:CVaR}~\emph{(i)}.

Next, we address risk measures that admit a Kusuoka representation~\cite{ref:kusuoka2001coherent, ref:shapiro2013kusuoka} and can be expressed as suprema over families of spectral risk measures. 

\begin{definition}[Kusuoka representation] 
    A risk measure $\risk_{\QQ}$ admits a Kusuoka representation under the distribution~$\QQ\in\mc M$ of the risk factors if there exists a set~$\Psi$ of admissible spectra in the sense of Definition~\ref{def:spectral-risk-measure} such that
    \[\risk_\QQ(\ell) = \Sup{\psi \in \Psi} \; \int_0^1 \psi(\tau) (F^\QQ_{\ell(\xi)})^{-1}(\tau) \mathrm{d} \tau\quad\forall \ell\in\mc L_0.
    \]
\end{definition}

For any fixed~$\QQ$, the set of all risk measures that admit a Kusuoka representation coincides with the family of all coherent law-invariant risk measures satisfying the Fatou property \cite[Theorem~10]{ref:kusuoka2001coherent}. On the other hand, any fixed set~$\Psi$ of admissible spectra induces a law-invariant family of coherent risk measures parametrized by the distributions~$\QQ\in\mc M$, and thus
Proposition~\ref{proposition:alpha} and Theorem~\ref{theorem:meanstd} apply whenever~$\mc S$ is stable. The following proposition presents a closed-form expression for~$\alpha$ if~$\mc S=\mc M_2$. 

\begin{proposition}[Standard risk coefficient for risk measures with a Kusuoka representation] \label{proposition:coherent}
    If there exists a set~$\Psi$ of square-integrable admissible spectra such that~$\risk_{\QQ}$ admits a Kusuoka representation induced by~$\Psi$ for every $\mathbb Q \in \mathcal M$ and if~$\mc S=\mc M_2$, then~$\alpha = \sup_{\psi \in \Psi} (\int_0^1 \psi(\tau)^2 \mathrm{d} \tau - 1)^\half$.
\end{proposition}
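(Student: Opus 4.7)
The plan is to reduce this to the already-established spectral case (Proposition~\ref{proposition:spectral}) by swapping two suprema. By Proposition~\ref{proposition:alpha}, the standard risk coefficient $\alpha$ is independent of $\mu$, $\cov$ and $w$, since every risk measure admitting a Kusuoka representation is coherent and hence law-invariant, and $\mc M_2$ is a stable structural ambiguity set. Therefore I may fix any convenient $\mu$, $\cov$ and $w$ (say $\mu = 0$, $\cov = I_n$, $w = e_1$) and evaluate $\alpha$ from its definition~\eqref{eq:alpha:def}.

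Substituting the Kusuoka representation into the definition of $\alpha$ yields
\begin{equation*}
\alpha \;=\; \sup_{\QQ \in \mathcal C(\mu,\cov)}\; \sup_{\psi \in \Psi} \int_0^1 \psi(\tau)\,\bigl(F^\QQ_{-w^\top(\xi-\mu)/\sqrt{w^\top\cov w}}\bigr)^{-1}(\tau)\,\mathrm d\tau.
\end{equation*}
Since the two suprema act on independent variables, they commute, and I may rewrite
\begin{equation*}
\alpha \;=\; \sup_{\psi \in \Psi}\; \sup_{\QQ \in \mathcal C(\mu,\cov)} \int_0^1 \psi(\tau)\,\bigl(F^\QQ_{-w^\top(\xi-\mu)/\sqrt{w^\top\cov w}}\bigr)^{-1}(\tau)\,\mathrm d\tau.
\end{equation*}
For each fixed admissible spectrum $\psi$, the inner supremum is precisely the standard risk coefficient (at the chosen $\mu,\cov,w$) of the spectral risk measure induced by $\psi$ on the structural ambiguity set $\mc S = \mc M_2$. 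Proposition~\ref{proposition:spectral} identifies this inner value as $\bigl(\int_0^1 \psi(\tau)^2\,\mathrm d\tau - 1\bigr)^{\half}$, and the claimed formula $\alpha = \sup_{\psi\in\Psi}\bigl(\int_0^1 \psi(\tau)^2\,\mathrm d\tau - 1\bigr)^{\half}$ follows by taking the outer supremum over $\psi \in \Psi$.

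The only non-routine point is the interchange of the two suprema. This step requires no measure-theoretic or topological hypothesis: for any nonnegative doubly-indexed family $\{a_{\QQ,\psi}\}$, one has $\sup_\QQ \sup_\psi a_{\QQ,\psi} = \sup_\psi \sup_\QQ a_{\QQ,\psi}$, and the Kusuoka functional inside the integral is nonnegative-valued after shifting by a constant (translation invariance of $\risk_\QQ$) so this causes no trouble. I therefore expect the proof to be a short three-line derivation: write the Kusuoka formula, swap suprema, and invoke Proposition~\ref{proposition:spectral}. No additional obstacle needs to be addressed, provided the square-integrability hypothesis on each $\psi \in \Psi$ is invoked precisely to ensure that Proposition~\ref{proposition:spectral} applies uniformly across $\Psi$.
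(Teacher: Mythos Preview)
Your argument is correct and, in fact, more self-contained than the paper's. The paper proves Proposition~\ref{proposition:coherent} by citing an external result (Theorem~3 of \cite{ref:li2018risk}) that gives the Chebyshev risk of a Kusuoka-type risk measure in closed form, and then reads off~$\alpha$ by comparing with~\eqref{eq:MS:1} at~$\rho=0$---exactly the template used for Propositions~\ref{proposition:VaR}--\ref{proposition:spectral}. You instead exploit the structure of the Kusuoka representation directly: since it is a pointwise supremum of spectral risk measures, the two suprema over~$\QQ$ and~$\psi$ commute, and the inner problem reduces to the spectral case already handled in Proposition~\ref{proposition:spectral}. This is a cleaner route within the paper's own framework because it does not require a separate external theorem for the Kusuoka case; it only uses Proposition~\ref{proposition:spectral} (which the paper already cites from \cite{ref:li2018risk}) as a black box.

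One small correction: your justification for swapping suprema is overcomplicated. The identity $\sup_{\QQ}\sup_{\psi} a_{\QQ,\psi}=\sup_{\psi}\sup_{\QQ} a_{\QQ,\psi}$ holds for \emph{any} doubly-indexed family with values in~$\mathbb R\cup\{\pm\infty\}$; no nonnegativity or translation-invariance argument is needed. You can drop that sentence entirely.
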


Lastly, we study the family of distortion risk measures, which measure the risk of an uncertain loss function by its expected value under a distorted distribution~\cite{ref:yaari1987dual}.

\begin{definition}[Distortion risk measures]
    \label{def:distortion-risk-measure}
    An admissible distortion is a non-decreasing function~$h: [0, 1] \to [0, 1]$ with~$\lim_{\tau \downarrow 0} h(\tau) = h(0) = 0$ and $\lim_{\tau \uparrow 1} h(\tau) = h(1) = 1$. The distortion risk measure $\risk_{\QQ}$ induced by~$h$ under the distribution~$\QQ\in\mc M$ is defined through 
    \[
    \risk_\QQ(\ell) 
    = \int_0^\infty \left( 1 - h \left( F^\QQ_{\ell(\xi)}(\tau) \right) \right) \hspace{0.1ex} \mathrm{d} \tau - \int_{-\infty}^0 h \left( F^\QQ_{\ell(\xi)}(\tau) \right) \hspace{0.1ex} \mathrm{d} \tau \quad \forall \ell\in\mc L_0.
    \]
\end{definition}
\noindent If the distortion $h$ is right-continuous, then we have
\begin{align}
    \label{eq:rep}
    \risk_\QQ(\ell) 
    = \int_0^1 \left(F_{\ell(\xi)}^{\mathbb Q}\right)^{-1}\!\!(\tau) \, \mathrm{d} h(\tau)
    = \int_{\mathbb R} \tau \, \mathrm{d} h\!\left( F_{\ell(\xi)}^{\mathbb Q}(\tau) \right),
\end{align}
where the first equality follows from \cite[Lemma~1]{ref:cai2020distributionally}, whereas the second equality follows from the definition of the Lebesgue-Stieltjes integral~\cite[\S~3]{ref:riesz1990functional}. The resulting reformulation reveals that~$\risk_\QQ(\ell)$ can be viewed as the expected value of the distorted cumulative distribution function~$ h\circ F_{\ell(\xi)}^{\mathbb Q}$. The simplest distortion risk measure is the ordinary expectation, which is induced by the trivial distortion~$h(\tau) = \tau$. Other examples include the VaR and the CVaR at level~$\beta \in (0, 1)$, which are induced by the distortions $h(\tau) = \mathbbm 1_{[\beta, 1]}(\tau)$ and $h(\tau) = \max\{ \tau - 1 + \beta, 0 \} / \beta$, respectively; see, {\em e.g.}, \cite[Definition~4.43 and Lemma~4.46]{ref:follmer2008stochastic}. Moreover, one readily verifies that the spectral risk measure with admissible spectrum $\psi$ coincides with the distortion risk measure induced by the convex continuous distortion $h(\tau) = \int_{0}^\tau \psi(t) \mathrm{d} t$. A comprehensive list of distortion risk measures is provided by~\cite{ref:cai2020distributionally}.

Note that~$(F_{\ell(\xi)}^{\mathbb Q})^{-1}(\tau)=\QQ\text{-}\VaR_{1-\tau} (\ell)$ for all~$\tau\in(0,1)$, and recall that VaR induces a law-invariant family of translation invariant and positive homogeneous risk measures. The second expression in~\eqref{eq:rep} thus implies that the distortion risk measure corresponding to a right-continuous distortion~$h$ represents an average of VaRs with different levels~$\tau$. Hence, any such distortion risk measure induces a law-invariant family of translation invariant and positive homogeneous risk measures, which implies that Proposition~\ref{proposition:alpha} and Theorem~\ref{theorem:meanstd} apply if~$\mc S$ is stable. Next, we provide a closed-form expression for $\alpha$ when~$\mc S=\mc M_2$.

\begin{proposition}[Standard risk coefficient for distortion risk measures] \label{proposition:distortion}
    If there exists an admissible right-continuous distortion~$h$ such that~$\risk_{\QQ}$ is the distortion risk measure induced by~$h$ for every $\mathbb Q \in \mathcal M$ and if~$\mc S=\mc M_2$, then~$\alpha = (\int_0^1 h'_{\mathrm{cvx}}(\tau)^2 \mathrm{d} \tau - 1)^{\half}$, where $h'_{\mathrm{cvx}}$ denotes the derivative of the convex envelope of $h$, which exists almost everywhere.
\end{proposition}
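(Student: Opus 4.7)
The plan is to leverage the representation~\eqref{eq:rep} of distortion risk measures together with Proposition~\ref{proposition:spectral}. First I upper-bound the distortion risk by the spectral risk induced by $h'_{\mathrm{cvx}}$, and then I verify tightness. Since distortion risk measures form a law-invariant family and $\mc S = \mc M_2$ is a stable ambiguity set, Proposition~\ref{proposition:alpha} asserts that $\alpha$ is independent of $(\m,\cov,\w)$. Standardizing via law-invariance reduces the computation to $\alpha = \sup_Q \risk_Q(Y)$, where $Q$ ranges over univariate distributions of a random variable $Y$ with mean $0$ and variance $1$.

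Using~\eqref{eq:rep}, write $\risk_Q(Y) = \int_0^1 F_Y^{-1}(\tau)\, \diff h(\tau)$ and let $h_{\mathrm{cvx}}$ denote the convex envelope of $h$ on $[0,1]$. Since $h \ge 0$ with $h(0) = 0$ and $h(1) = 1$, one checks that $h_{\mathrm{cvx}}(0)=0$, $h_{\mathrm{cvx}}(1)=1$, $h - h_{\mathrm{cvx}} \ge 0$, and $h_{\mathrm{cvx}}$ is non-decreasing, so $h'_{\mathrm{cvx}} \ge 0$. Because $F_Y^{-1}$ is non-decreasing, Fubini's theorem yields
\begin{equation*}
\int_0^1 F_Y^{-1}(\tau) \, \diff(h - h_{\mathrm{cvx}})(\tau) = -\int_0^1 (h - h_{\mathrm{cvx}})(s) \, \diff F_Y^{-1}(s) \le 0,
\end{equation*}
using that $h - h_{\mathrm{cvx}}$ vanishes at the endpoints. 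Hence $\risk_Q(Y) \le \int_0^1 F_Y^{-1}(\tau)\, h'_{\mathrm{cvx}}(\tau)\, \diff \tau$, which is the spectral risk measure with spectrum $h'_{\mathrm{cvx}}$ evaluated at $Y$. Since $h'_{\mathrm{cvx}}$ is non-negative, non-decreasing, right-continuous (as the right derivative of a convex function), and integrates to $h_{\mathrm{cvx}}(1)-h_{\mathrm{cvx}}(0)=1$, it is an admissible spectrum, and Proposition~\ref{proposition:spectral} gives $\alpha \le \bigl( \int_0^1 h'_{\mathrm{cvx}}(\tau)^2\, \diff \tau - 1 \bigr)^{\half}$.

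For the reverse inequality, a Lagrangian analysis of the spectral problem shows that its supremum is attained by a distribution $Q^\star$ whose quantile has the form $F_{Y^\star}^{-1}(\tau) = c_1 + c_2 h'_{\mathrm{cvx}}(\tau)$ for suitable $c_1 \in \R$ and $c_2 \ge 0$. On each connected component $(a,b)$ of the open set $\{h > h_{\mathrm{cvx}}\}$, $h_{\mathrm{cvx}}$ is affine, so $h'_{\mathrm{cvx}}$ and hence $F_{Y^\star}^{-1}$ is constant there; call that constant $c$. Since $h = h_{\mathrm{cvx}}$ at $a$ and $b$,
\begin{equation*}
\int_{(a,b]} F_{Y^\star}^{-1}(\tau)\, \diff h(\tau) = c\bigl(h(b)-h(a)\bigr) = c\bigl(h_{\mathrm{cvx}}(b) - h_{\mathrm{cvx}}(a)\bigr) = \int_{(a,b]} F_{Y^\star}^{-1}(\tau)\, \diff h_{\mathrm{cvx}}(\tau),
\end{equation*}
and on the complement $\{h = h_{\mathrm{cvx}}\}$ the measures $\diff h$ and $\diff h_{\mathrm{cvx}}$ agree. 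Summing over components gives $\risk_{Q^\star}(Y^\star) = \int_0^1 F_{Y^\star}^{-1}(\tau) h'_{\mathrm{cvx}}(\tau) \diff \tau$, matching the spectral bound.

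The main obstacle is the tightness argument, which rests on the key structural observation that the worst-case quantile for the spectral risk is constant precisely on the intervals where $h$ strictly exceeds $h_{\mathrm{cvx}}$. A secondary technicality is verifying that $h_{\mathrm{cvx}}$ is non-decreasing for any admissible distortion, ensuring $h'_{\mathrm{cvx}}$ qualifies as an admissible spectrum so that Proposition~\ref{proposition:spectral} applies.
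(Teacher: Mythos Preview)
Your argument is essentially correct and is genuinely different from the paper's proof, which consists of a single citation to \cite[Theorem~3.10]{ref:cai2020distributionally} and reads off~$\alpha$ by comparing that theorem's closed-form Chebyshev risk with~\eqref{eq:MS:1} at~$\rho=0$ (exactly as in the proofs of Propositions~\ref{proposition:VaR}--\ref{proposition:coherent}). Your route instead re-derives the cited result from scratch: you dominate the distortion risk by the spectral risk with spectrum~$h'_{\mathrm{cvx}}$, invoke Proposition~\ref{proposition:spectral}, and then exhibit a maximizing distribution at which the two risks coincide. This is more informative because it explains \emph{why} the convex envelope appears, whereas the paper outsources that insight.

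Two technical points deserve tightening. First, the ``Fubini'' step is really integration by parts, and the boundary terms $F_Y^{-1}(\tau)(h-h_{\mathrm{cvx}})(\tau)\big|_0^1$ are of the form~$0\cdot\infty$ when~$F_Y^{-1}$ is unbounded; a truncation argument or a restriction to bounded~$Y$ (dense for the supremum) resolves this. Second, the claim that $\diff h$ and $\diff h_{\mathrm{cvx}}$ agree on the contact set~$\{h=h_{\mathrm{cvx}}\}$ is not obvious and is not actually needed. A cleaner way to close the tightness argument is to reuse the same integration by parts: $\int F_{Y^\star}^{-1}\,\diff(h-h_{\mathrm{cvx}}) = -\int (h-h_{\mathrm{cvx}})\,\diff F_{Y^\star}^{-1}$, and since $F_{Y^\star}^{-1}=c_1+c_2 h'_{\mathrm{cvx}}$ is constant on each component of~$\{h>h_{\mathrm{cvx}}\}$, the measure~$\diff F_{Y^\star}^{-1}$ is supported on~$\{h=h_{\mathrm{cvx}}\}$, where the integrand vanishes.
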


\subsection{Mahalanobis-Gelbrich Risk}\label{sec:extension}
We now briefly discuss an extension of Theorem~\ref{theorem:meanstd} to a broader class of regularization functions. The Mahalanobis distance between two points $\xi_1,\xi_2\in\mathbb R^n$ induced by a matrix $H \in \mathbb S_{++}^n$ is defined as $\| \xi_1 - \xi_2 \|_H = \big((\xi_1 - \xi_2)^\top H (\xi_1 - \xi_2)\big)^{1/2}$.
Accordingly, the Mahalanobis-Wasserstein distance between two distributions $\QQ_1,\QQ_2\in \mc M_2$ is defined as 
\begin{align*}
    \Wass_H(\QQ_1, \QQ_2) = \Min{\pi\in\Pi(\QQ_1,\QQ_2)}  \left( \int_{\R^n \times \R^n} \|\xi_1 - \xi_2\|^2_H\, \pi({\rm d}\xi_1, {\rm d} \xi_2) \right)^{\half},
\end{align*}
whereas the Mahalanobis-Gelbrich distance between two mean-covariance pairs $(\mu_1, \cov_1), (\mu_2, \cov_2) \in \R^n \times \mathbb S_+^n$ is defined as 
\begin{align*}
    \Gelbrich_H \big( (\m_1, \cov_1), (\m_2, \cov_2) \big) = \sqrt{ \| \m_1 - \m_2 \|_H^2 + \Tr{\cov_1 H + \cov_2 H - 2 \big( \cov_2^{\half} H \cov_1 H \cov_2^{\half} \big)^{\half} } }.
\end{align*}
A straightforward generalization of their proofs shows that Theorems~\ref{theorem:gelbrich} and~\ref{theorem:Wasserstein=Gelbrich} apply verbatim with $\Wass(\QQ_1, \QQ_2)$ and $\Gelbrich\big( (\m_1, \cov_1), (\m_2, \cov_2) \big)$ replaced by $\Wass_H(\QQ_1, \QQ_2)$ and $\Gelbrich_H\big( (\m_1, \cov_1), (\m_2, \cov_2) \big)$, respectively. As a consequence, we can now define the Mahalanobis-Gelbrich ambiguity set with structural information as 
\[
    \mc G_{H,\rho}(\Pnom) = \{ \QQ \in \mathcal S: \Gelbrich_H(\Pnom, \QQ) \leq \rho \}.
\]
A straightforward generalization of its proof then shows that Theorem~\ref{theorem:meanstd} applies verbatim with Equation~\eqref{eq:MS:1} replaced by
\begin{align*}
        \Sup{\QQ \in \mc{G}_{H,\rho}(\msa, \covsa)} \; \risk_\QQ \left( - \w^\top \xi \right) = -\msa^\top \w + \alpha \sqrt{\w^\top \covsa \w } + \rho \sqrt{1+ \alpha^2}\, \| \w \|_{H^{-1}},
    \end{align*}
which is the announced extension.

\section{Gelbrich Risk of Nonlinear Portfolio Loss Functions}\label{sec:nonlinearportfolio}
Even though the Gelbrich risk of a {\em nonlinear} portfolio loss function is generically {\em not} available in closed form, it can sometimes be computed efficiently by solving a tractable convex program. Throughout this section we assume that the structural ambiguity set $\mc S = \mc M_2$ comprises all distributions with finite second moments. All of our tractability results rely on the following theorem.

\begin{theorem}[Gelbrich risk of nonlinear loss functions] \label{theorem:WC-exp}
    If $\ell\in\mc L_0$, $\mc S = \mc M_2$ and $\rho>0$, then we have
    \begin{align*}
        \Sup{\QQ \in \mathcal{G}_{\rho}(\widehat \mu, \covsa)} \EE_{\QQ} \left[ \ell(\xi) \right] =
        \left\{
        \begin{array}{cl}
            \inf &  y_0 + \gamma \big( \rho^2 - \| \widehat \mu \|^2 - \Tr{\covsa} \big) + z + \Tr{Z} \\
            \st &\gamma \in \mathbb R_+, \; z \in \mathbb R_+, \; Z \in \PSD^n, \; (y_0, y, Y) \in \mathcal Y \\[1ex]
            &\begin{bmatrix} \gamma I - Y & \gamma \covsa^\half \\ \gamma \covsa^\half & Z \end{bmatrix} \succeq 0, \; \begin{bmatrix} \gamma I - Y & \gamma \widehat \mu + y \\ (\gamma \widehat \mu + y)^\top & z \end{bmatrix} \succeq 0,
        \end{array}
        \right.
    \end{align*}
    where
    $$\mathcal Y = \Big\{ y_0 \in \mathbb R, \; y \in \mathbb R^n, \; Y \in \Sym^n: y_0 + 2y^\top \xi + \xi^\top Y \xi \geq \ell(\xi) ~ \forall \xi \in \mathbb R^n \Big\}.$$
\end{theorem}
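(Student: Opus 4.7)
The plan is to combine the two-layer decomposition~\eqref{eq:two-layer-b} with classical moment-problem duality and then collapse the outer maximization over $(\mu,M) \in \mc V_\rho(\msa, \covsa)$ by Lagrangian and Schur-complement arguments. Specializing~\eqref{eq:two-layer-b} to $\risk_\QQ(\ell) = \EE_\QQ[\ell(\xi)]$ and $\mc S = \mc M_2$ yields
\[
\sup_{\QQ \in \mc G_\rho(\msa, \covsa)} \EE_\QQ[\ell(\xi)] \;=\; \sup_{(\mu,M)\in\mc V_\rho(\msa, \covsa)} \; \sup_{\QQ:\, \EE_\QQ[\xi]=\mu,\, \EE_\QQ[\xi\xi^\top]=M} \EE_\QQ[\ell(\xi)].
\]
The inner generalized moment problem admits the conic dual $\inf_{(y_0,y,Y)\in\mc Y} y_0 + 2y^\top\mu + \langle Y,M\rangle$, with the factor of two arising from rescaling the dual variable attached to the first-moment constraint. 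Sion's minimax theorem applies because the joint integrand is bilinear in $(\mu,M)$ and $(y_0,y,Y)$, the set $\mc Y$ is convex, and $\mc V_\rho(\msa, \covsa)$ is convex and compact by Proposition~\ref{prop:compact:V}. Swapping the outer supremum with the dual infimum reduces the task to evaluating, for each fixed $(y_0,y,Y)\in\mc Y$, the inner supremum over $(\mu,M)\in\mc V_\rho(\msa, \covsa)$.

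Next I substitute $S = M - \mu\mu^\top \succeq 0$, so that the Gelbrich constraint becomes the single convex inequality $\|\mu-\msa\|^2 + \Tr{S+\covsa-2(\covsa^\half S\covsa^\half)^\half}\leq\rho^2$. This inequality is strictly feasible at $(\mu,S)=(\msa,\covsa)$ for $\rho>0$, so Lagrangian strong duality holds with multiplier $\gamma\geq 0$. Grouping terms, the Lagrangian splits into the constant $y_0 + \gamma(\rho^2 - \|\msa\|^2 - \Tr{\covsa})$, a $\mu$-block $\mu^\top(Y-\gamma I)\mu + 2(y+\gamma\msa)^\top\mu$, and an $S$-block $\Tr{(Y-\gamma I)S} + 2\gamma\Tr{(\covsa^\half S\covsa^\half)^\half}$ over $S\succeq 0$. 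Both inner suprema are finite precisely when $\gamma I - Y\succeq 0$, which matches the top-left blocks of the two LMIs in the theorem.

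For the $\mu$-block, the quadratic optimum equals $(y+\gamma\msa)^\top(\gamma I - Y)^{-1}(y+\gamma\msa)$; introducing an epigraphic slack $z$ and applying the Schur complement rewrites this as the second LMI in the statement. For the $S$-block, I use the classical semidefinite representation of the Bures trace,
\[
\Tr{(\covsa^\half S\covsa^\half)^\half} \;=\; \max_{C} \bigl\{ \Tr{C} \,:\, \begin{bmatrix} S & C \\ C^\top & \covsa \end{bmatrix} \succeq 0 \bigr\},
\]
so that the two suprema may be interchanged and $S$ eliminated (its optimizer is $S = C\covsa^{-1}C^\top$ in the non-degenerate case), leaving the concave quadratic $\sup_C -\Tr{(\gamma I - Y)C\covsa^{-1}C^\top} + 2\gamma\Tr{C}$, whose optimum equals $\gamma^2\Tr{\covsa^\half(\gamma I - Y)^{-1}\covsa^\half}$. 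Upper-bounding this by $\Tr{Z}$ with $Z\succeq \gamma^2\covsa^\half(\gamma I - Y)^{-1}\covsa^\half$ and applying the Schur complement once more produces the first LMI. Assembling the constant, the two epigraphs, and the infima over $\gamma\geq 0$, $z\in\R_+$, $Z\in\PSD^n$ and $(y_0,y,Y)\in\mc Y$ delivers exactly the SDP in the statement.

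The principal technical obstacles I anticipate are (a) strong duality for the inner generalized moment problem, which I plan to settle by invoking Shapiro's conic-duality theorem together with non-emptiness of $\mc C(\mu, M-\mu\mu^\top)$ on $\mc V_\rho(\msa, \covsa)$, and (b) the degenerate cases where $\covsa$ is rank-deficient or $\gamma I - Y$ sits on the boundary of $\PSD^n$. The latter I handle through Moore--Penrose pseudo-inverses and a continuity argument based on the perturbation $\covsa \mapsto \covsa + \epsilon I \succ 0$: the Schur-complement LMIs remain the tight epigraphical representations of $z$ and $\Tr{Z}$ as $\epsilon\downarrow 0$, so the stated SDP is valid without any full-rank hypothesis on $\covsa$, consistent with the theorem's assumption $\covsa\in\PSD^n$.
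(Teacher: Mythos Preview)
Your proposal is correct and follows essentially the same approach as the paper: two-layer decomposition of the Gelbrich ambiguity set, moment-problem duality for the inner supremum, Sion's minimax theorem over the compact convex set $\mathcal V_\rho(\msa,\covsa)$, and then evaluation of the resulting support function via Lagrangian/Schur-complement arguments. The only cosmetic difference is that the paper packages the support-function computation for $\mathcal V_\rho(\msa,\covsa)$ into a separate auxiliary proposition (proved by direct SDP duality on the semidefinite representation of the squared Gelbrich distance), whereas you derive the same two LMIs inline by dualizing the Gelbrich constraint and using the Bures-trace SDP representation; both routes yield the identical reformulation.
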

Theorem~\ref{theorem:WC-exp} expresses the Gelbrich risk of {\em any} measurable loss function as the optimal value of a convex minimization problem. Note, however, that the convex set~$\mathcal Y$ is defined through infinitely many linear inequality constraints parametrized by~$\xi\in\R^n$ and may thus be difficult to handle algorithmically. We now investigate three distinct applications of Theorem~\ref{theorem:WC-exp} in which the Gelbrich risk of a nonlinear loss function can be evaluated tractably.

\subsection{Gelbrich VaR of Piecewise Linear Loss Functions}
\label{sect:polyhedralVaR}

Suppose that 
\[
\ell(\xi) = -\w^\top \max\{ A\xi + a, B\xi + b \}
\]
for some $A, B \in \mathbb R^{k \times n}$, $a, b \in \mathbb R^k$ and portfolio vector $w\in\mathbb R^k$, where the `max' operator applies element-wise. We stress that {\em any} piecewise linear concave loss function with $k$ kinks is representable in this way. Loss functions of this type arise if the portfolio includes (long positions) of European put or call options that expire at the end of the investment horizon \cite{ref:zymler2013wcvar}. Indeed, such options provide portfolio insurance, and their payoffs (negative losses) are piecewice linear in the asset returns~$\xi$.


The next theorem demonstrates that if the risk measure is set to the VaR, then the Gelbrich risk of our piecewise linear loss function can be computed efficiently by solving a semidefinite program.

\begin{theorem}[Polyhedral Gelbrich VaR] \label{theorem:poly-var}
    Suppose that $\ell(\xi) = -\w^\top\max\{ A\xi + a, B\xi + b \}$ for some $A, B \in \mathbb R^{k \times n}$, $a, b \in \mathbb R^k$ and $w\in\mathbb R^k$. If~$\beta\in(0,1)$, $\risk_\QQ=\QQ\text{\em -VaR}_\beta$ and $\mc S = \mc M_2$, then we have
    \begin{align*}
        \Sup{\QQ \in \mathcal{G}_{\rho}(\widehat \mu, \covsa)} \QQ\text{-}\VaR_{\beta}(\ell(\xi))
        = 
        \left\{
        \begin{array}{cl}
            \inf & \tau \\[-1ex]
            \st & \tau, v_0, y_0 \in \mathbb R, ~ \gamma, \eta, z \in \mathbb R_+, ~ v, y \in \mathbb R^n, ~ Y \in \PSD^n, ~ \zeta \in \mathbb R_+^k, ~ Z \in \PSD^n \\
            & \zeta \leq \w, ~ y_0 + \gamma \big( \rho^2 - \| \widehat \mu \|^2 - \Tr{\covsa} \big) + z + \Tr{Z} \leq \eta \beta \\
            & v =  \frac{1}{2} \left( (A - B)^\top \zeta + B^\top \w \right), ~ v_0 = \tau + (a-b)^\top \zeta + b^\top \w \\
            &\begin{bmatrix} \gamma I - Y & \gamma \covsa^\half \\ \gamma \covsa^\half & Z \end{bmatrix} \succeq 0, ~ \begin{bmatrix} \gamma I - Y & \gamma \widehat \mu + y \\ (\gamma \widehat \mu + y)^\top & z \end{bmatrix} \succeq 0 \\[3ex]
            & \begin{bmatrix} Y & y \\ y^\top & y_0 \end{bmatrix} \succeq 0, ~ \begin{bmatrix} Y & y + v \\ y^\top + v^\top & y_0 + v_0 - \eta \end{bmatrix} \succeq 0.
        \end{array}
        \right.
    \end{align*}
\end{theorem}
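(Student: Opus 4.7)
The plan is to reduce the worst-case VaR to a distributionally robust chance constraint, linearize the concave piecewise linear loss via LP duality, bound the resulting chance by a Markov-type majorant with an auxiliary slack~$\eta$, and then apply Theorem~\ref{theorem:WC-exp} to reformulate the inner worst-case expectation as a semidefinite program.

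First, I would write
\begin{equation*}
\Sup{\QQ\in\mc G_\rho(\msa,\covsa)} \QQ\text{-}\VaR_\beta(\ell) = \inf\Big\{\tau\in\R:\; \Sup{\QQ\in\mc G_\rho(\msa,\covsa)} \QQ[\ell(\xi)>\tau]\le\beta\Big\},
\end{equation*}
using the compactness of~$\mc V_\rho(\msa,\covsa)$ from Proposition~\ref{prop:compact:V} together with monotonicity of the chance in~$\tau$ to justify the interchange of $\sup_\QQ$ and $\inf_\tau$. Assuming $w\ge 0$ (as is required for feasibility of $0\le\zeta\le w$), LP duality yields $w^\top\max\{A\xi+a,B\xi+b\} = \max_{\zeta\in[0,w]}\{\zeta^\top(A\xi+a)+(w-\zeta)^\top(B\xi+b)\}$, so that
$\ell(\xi)\le g_\zeta(\xi) := -\zeta^\top(A\xi+a)-(w-\zeta)^\top(B\xi+b) = -2v^\top\xi - v_0 + \tau$ for every $\zeta\in[0,w]$, with $v$ and $v_0$ exactly as defined in the statement.

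Applying the pointwise Markov majorant $\mathds 1_{x>0}\le\max\{0,x+\eta\}/\eta$ (valid for any $\eta>0$) to $x=g_\zeta(\xi)-\tau$ then gives
\begin{equation*}
\Sup{\QQ}\QQ[\ell(\xi)>\tau] \le \Sup{\QQ}\QQ[g_\zeta(\xi)>\tau] \le \frac{1}{\eta}\Sup{\QQ}\EE_\QQ\big[\max\{0,\eta-v_0-2v^\top\xi\}\big].
\end{equation*}
Theorem~\ref{theorem:WC-exp} applied to the piecewise affine loss $\max\{0,\eta-v_0-2v^\top\xi\}$ converts the rightmost quantity into a tractable SDP: the majorization $(y_0,y,Y)\in\mc Y$ decomposes into the two quadratic-form nonnegativity conditions $y_0+2y^\top\xi+\xi^\top Y\xi\ge 0$ and $y_0+2y^\top\xi+\xi^\top Y\xi\ge \eta-v_0-2v^\top\xi$ for all~$\xi\in\R^n$, each equivalent via Schur complements to one of the two additional PSD constraints in the theorem. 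Coupled with the $(\gamma,z,Z)$ constraints and the threshold $y_0+\gamma(\rho^2-\|\msa\|^2-\Tr{\covsa})+z+\Tr{Z}\le\eta\beta$ coming from Theorem~\ref{theorem:WC-exp}, this chain establishes the ``$\le$'' direction.

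The main obstacle is the reverse inequality. I would exploit the decomposition $\mc G_\rho(\msa,\covsa) = \bigcup_{(\mu,\Sigma)\in\mc U_\rho(\msa,\covsa)}\mc C(\mu,\Sigma)$ and adapt the argument of Zymler et al.~\cite{ref:zymler2013wcvar}: on each Chebyshev slice, worst-case VaR equals worst-case CVaR for the concave piecewise linear $\ell$, and the Chebyshev worst-case CVaR is attained at a discrete two-point distribution for which the Markov majorant (with an appropriate $\eta^\star$) and the LP bound $\ell\le g_{\zeta^\star}$ become tight simultaneously. A minimax argument leveraging compactness and convexity of $[0,w]$ and $\mc V_\rho(\msa,\covsa)$ then interchanges $\inf_\zeta$ with $\sup_\QQ$, and SDP strong duality applied inside Theorem~\ref{theorem:WC-exp} closes the chain of equalities. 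The delicate technical point I expect to face is verifying that the slice-wise two-point extremals assemble into a genuinely Gelbrich-feasible supremum-attaining distribution and that the enveloping mean-covariance pair remains in~$\mc U_\rho(\msa,\covsa)$.
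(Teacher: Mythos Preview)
Your upper-bound chain is correct, but the reverse direction as sketched faces an obstacle you have only partly anticipated. The Sion interchange of $\inf_{\zeta\in[0,w]}$ with $\sup_{\QQ}$ does apply to $\frac{1}{\eta}\EE_\QQ[\max\{0,\eta+g_\zeta(\xi)-\tau\}]$ (for each fixed $\eta$ it is convex in $\zeta$ and linear in $\QQ$), but it only delivers $\sup_\QQ \inf_\zeta \EE_\QQ[\max\{0,\eta+g_\zeta-\tau\}]$, which in general still strictly exceeds $\sup_\QQ\EE_\QQ[\max\{0,\eta+\ell-\tau\}]$ because the minimizer in $\ell(\xi)=\min_\zeta g_\zeta(\xi)$ depends on~$\xi$. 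Closing this residual gap requires knowing that a worst-case $\QQ$ is supported where a \emph{single} $\zeta$ is active---precisely the two-point-extremal structure you flag in your last sentence---but establishing this over the Gelbrich set (not merely slice by slice) amounts to re-proving the exact polyhedral $S$-procedure of~\cite[Theorem~2.3]{ref:zymler2013distributionally}, not just invoking the VaR${}={}$CVaR identity.

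The paper sidesteps all of this via a purely dual route that preserves equalities throughout. It applies Theorem~\ref{theorem:WC-exp} to the indicator loss $\mathds 1_{\{\ell\ge\tau\}}$ (packaged as Corollary~\ref{corollary:WC-prob}) to obtain an \emph{exact} SDP for $\sup_\QQ \QQ[\ell(\xi)\ge\tau]$ that carries the residual semi-infinite constraint $y_0+2y^\top\xi+\xi^\top Y\xi\ge 1$ for all $\xi$ with $\ell(\xi)\ge\tau$. That robust constraint over the polyhedron $\{\ell\ge\tau\}$ is then reformulated \emph{exactly} by citing~\cite[Theorem~2.3]{ref:zymler2013distributionally}, which introduces both multipliers $\eta\ge0$ and $\zeta\in[0,w]$ at once and produces a single matrix inequality bilinear in~$\eta$. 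A short feasibility argument shows $\eta>0$ at every feasible point, so dividing all constraints by~$\eta$ and substituting $(\gamma,z,y_0,y,Y,Z)\leftarrow(\gamma,z,y_0,y,Y,Z)/\eta$, $\eta\leftarrow 1/\eta$ linearizes the problem into the stated SDP. No primal tightness construction is ever needed.
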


The constraints of the semidefinite program derived in Theorem~\ref{theorem:poly-var} are linear in~$\w$. Thus, the portfolio optimization problem that minimizes the Gelbrich VaR of our piecewise linear loss function over all $\w\in\Omega$ is equivalent to a tractable convex program whenever $\Omega$ is conic-representable. In analogy to~\cite[\S~6]{ref:zymler2013wcvar}, we can further show that the Gelbrich VaR coincides with the Gelbrich CVaR.

\begin{theorem}[Polyhedral Gelbrich CVaR] \label{theorem:poly-cvar}
    If $\risk_\QQ=\QQ\text{\em -CVaR}_\beta$ but all other assumptions of Theorem~\ref{theorem:poly-var} hold, then we have
    \begin{align*}
        \Sup{\QQ \in \mathcal{G}_{\rho}(\widehat \mu, \covsa)} \QQ\text{-}\CVaR_{\beta}(\ell(\xi)) ~ \, = \Sup{\QQ \in \mathcal{G}_{\rho}(\widehat \mu, \covsa)} \QQ\text{-}\VaR_{\beta}(\ell(\xi)).
    \end{align*}
\end{theorem}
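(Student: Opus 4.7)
The inequality ``$\geq$'' is immediate because $\CVaR_\beta(\ell) \geq \VaR_\beta(\ell)$ pointwise for every $\mathbb Q$, so only the reverse inequality requires work. The key observation is that \emph{any} feasible tuple of the semidefinite program in Theorem~\ref{theorem:poly-var} that certifies the bound $\sup_{\mathbb Q} \VaR_\beta(\ell) \leq \tau$ simultaneously certifies $\sup_{\mathbb Q} \CVaR_\beta(\ell) \leq \tau$; minimizing $\tau$ subject to feasibility then delivers the identity.

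Fix a feasible point $(\tau, v_0, y_0, \gamma, \eta, z, v, y, Y, \zeta, Z)$ of the SDP in Theorem~\ref{theorem:poly-var}, and set $q(\xi) \Let y_0 + 2 y^\top \xi + \xi^\top Y \xi$. The two PSD constraints involving $(Y, y, y_0)$ and $(Y, y+v, y_0+v_0-\eta)$ are equivalent, via the Schur complement, to $q(\xi) \geq 0$ and $q(\xi) + 2 v^\top \xi + v_0 - \eta \geq 0$ for every $\xi \in \R^n$. Substituting the SDP definitions $v = \tfrac{1}{2}((A-B)^\top \zeta + B^\top w)$ and $v_0 = \tau + (a-b)^\top \zeta + b^\top w$ produces the algebraic identity
\[
\eta - v_0 - 2 v^\top \xi \;=\; \eta - \tau - \zeta^\top(A\xi + a) - (w - \zeta)^\top(B\xi + b),
\]
and since $0 \leq \zeta \leq w$ componentwise (the constraints $\zeta \in \R_+^k$ and $\zeta \leq w$ implicitly require $w \geq 0$), the right-hand side dominates $\eta - \tau + \ell(\xi) = \eta - \tau - w^\top \max\{A\xi+a, B\xi+b\}$ pointwise. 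Combining the two pointwise bounds gives $q(\xi) \geq (\ell(\xi) - (\tau - \eta))^+$ for every $\xi$.

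The remaining SDP constraints are precisely the ones produced by Theorem~\ref{theorem:WC-exp} when the quadratic triple $(y_0, y, Y)$ is used as its own upper envelope, so together with the inequality $y_0 + \gamma(\rho^2 - \|\widehat\mu\|^2 - \Tr{\widehat\Sigma}) + z + \Tr Z \leq \eta\beta$ they certify $\sup_{\mathbb Q \in \mc G_\rho(\widehat\mu, \widehat\Sigma)} \EE_{\mathbb Q}[q(\xi)] \leq \eta\beta$, and \emph{a fortiori} $\sup_{\mathbb Q} \EE_{\mathbb Q}[(\ell(\xi) - (\tau - \eta))^+] \leq \eta\beta$. The Rockafellar--Uryasev formula, applied with candidate level $\tau^\star = \tau - \eta$, then gives, for every $\mathbb Q \in \mc G_\rho(\widehat\mu, \widehat\Sigma)$,
\[
\mathbb Q\text{-}\CVaR_\beta(\ell) \;\leq\; (\tau - \eta) + \beta^{-1} \EE_{\mathbb Q}[(\ell - (\tau - \eta))^+] \;\leq\; (\tau - \eta) + \beta^{-1}(\eta\beta) \;=\; \tau.
\]
Taking the supremum over $\mathbb Q$ and then the infimum over all feasible SDP tuples, Theorem~\ref{theorem:poly-var} identifies this minimum with $\sup_{\mathbb Q} \VaR_\beta(\ell)$, which closes the argument.

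The main obstacle is the bookkeeping in the middle paragraph: one must verify that a single feasible tuple simultaneously secures the pointwise quadratic majorization $q \geq (\ell - (\tau - \eta))^+$ and the moment bound $\sup_{\mathbb Q} \EE_{\mathbb Q}[q] \leq \eta\beta$, and that the auxiliary variable $\eta$ built into the VaR SDP already encodes exactly the Rockafellar--Uryasev shift needed to convert a VaR certificate into a CVaR certificate. Once these alignments are confirmed, the free shift $\tau^\star = \tau - \eta$ absorbs the slack $\eta\beta$ and pins CVaR down to $\tau$ without invoking any additional minimax theorem.
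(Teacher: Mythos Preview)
Your argument is correct, but it follows a genuinely different route from the paper's. The paper exploits the two-layer decomposition~\eqref{eq:two-layer-a}: for each fixed $(\mu,\Sigma)\in\mathcal U_\rho(\widehat\mu,\widehat\Sigma)$ it invokes \cite[Theorem~2.2]{ref:zymler2013distributionally}, which states that the worst-case $\CVaR_\beta$ and the worst-case $\VaR_\beta$ over a \emph{Chebyshev} ambiguity set $\mathcal C(\mu,\Sigma)$ coincide for piecewise linear concave losses; taking the outer supremum over $(\mu,\Sigma)$ then gives the claim in one line. Your proof is more self-contained: rather than citing the Chebyshev-level result, you read the SDP of Theorem~\ref{theorem:poly-var} as simultaneously producing a quadratic majorant $q(\xi)\geq(\ell(\xi)-(\tau-\eta))^+$ and a Gelbrich moment bound $\sup_{\mathbb Q}\EE_{\mathbb Q}[q]\leq\eta\beta$, and then feed both into the Rockafellar--Uryasev formula with the shift $\tau-\eta$ to convert the VaR certificate directly into a CVaR certificate. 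This is essentially the mechanism underlying \cite[Theorem~2.2]{ref:zymler2013distributionally}, carried out at the Gelbrich level instead of the Chebyshev level. The paper's route is shorter given the citation; yours avoids the external dependency and makes explicit why the auxiliary variable $\eta$ in the VaR SDP is exactly the slack that the CVaR representation needs.
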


\subsection{Gelbrich VaR of Quadratic Loss Functions}
\label{sect:quadVaR}

Suppose next that
\[
\ell(\xi) = - \theta(\w) - \Delta(\w)^\top \xi - \half \xi^\top \Gamma(\w) \xi
\]
for some $\theta(\w) \in \mathbb R$, $\Delta(\w) \in \mathbb R^n$ and $\Gamma(\w) \in \Sym^n$ that depend linearly on the portfolio vector~$\w\in\mathbb R^n$. We emphasize that $\Gamma(\w)$ may be indefinite, and thus {\em any} quadratic loss function is representable in this way. Quadratic loss functions arise if the portfolio includes arbitrary (exotic) derivatives whose payoffs are approximated by a second-order Taylor expansion (a {\em delta-gamma approximation}) \cite{ref:jaschke2002deltagamma}.


\begin{theorem}[Quadratic Gelbrich VaR] \label{theorem:quad-var}
    Suppose that $\ell(\xi) = - \theta(\w) - \Delta(\w)^\top \xi - \half \xi^\top \Gamma(\w) \xi$ for some $\theta(\w) \in \mathbb R$, $\Delta(\w) \in \mathbb R^n$ and $\Gamma(\w) \in \Sym^n$. If~$\beta\in(0,1)$, $\risk_\QQ=\QQ\text{\em -VaR}_\beta$ and $\mc S = \mc M_2$, then we have
    \begin{align*}
    \Sup{\QQ \in \mathcal{G}_{\rho}(\widehat \mu, \covsa)} \QQ\text{-}\VaR_{\beta}(\ell(\xi))
    =
    \left\{
    \begin{array}{cl}
        \inf & \tau \\[-1ex]
        \st & y_0, \tau \in \mathbb R, \; y \in \mathbb R^n, \; Y \in \PSD^n, \; \; \gamma,z,\eta \in \mathbb R_+, \; Z \in \PSD^n \\
        & y_0 + \gamma \big( \rho^2 - \| \widehat \mu \|^2 - \Tr{\covsa} \big) + z + \Tr{Z} \leq \eta \beta \\
        &\begin{bmatrix} \gamma I - Y & \gamma \covsa^\half \\ \gamma \covsa^\half & Z \end{bmatrix} \succeq 0, \; \begin{bmatrix} \gamma I - Y & \gamma \widehat \mu + y \\ (\gamma \widehat \mu + y)^\top & z \end{bmatrix} \succeq 0 \\[3ex]
        & \begin{bmatrix} Y & y \\ y^\top & y_0 \end{bmatrix} \succeq 0, \;\begin{bmatrix} Y & y \\ y^\top & y_0 \end{bmatrix} + \begin{bmatrix} \Gamma(\w) & \Delta(\w) \\ \Delta(\w)^\top & -\eta + 2(\tau + \theta(\w)) \end{bmatrix} \succeq 0.
    \end{array}
    \right. 
    \end{align*}
\end{theorem}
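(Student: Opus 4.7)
The plan is to convert the worst-case VaR into a worst-case chance constraint, to majorize that chance constraint by a tight Markov-type inequality tailored to quadratic losses, and then to invoke Theorem~\ref{theorem:WC-exp} to arrive at the stated semidefinite program.

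I would first use the standard identity
\[
\Sup{\QQ \in \mc G_{\rho}(\widehat \mu, \covsa)} \QQ\text{-}\VaR_\beta(\ell) \;=\; \inf\big\{ \tau \in \R : \Sup{\QQ \in \mc G_{\rho}(\widehat \mu, \covsa)} \QQ[\ell(\xi) > \tau] \le \beta \big\},
\]
which holds under mild semi-continuity properties of the worst-case chance constraint in $\tau$. Next, from the pointwise inequality $\mathbbm{1}_{s > 0} \le \eta^{-1}\max\{0, 2s + \eta\}$, valid for any $\eta > 0$ and applied to $s = \ell(\xi) - \tau$, I obtain the Markov-type upper bound
\[
\Sup{\QQ \in \mc G_{\rho}(\widehat \mu, \covsa)} \QQ[\ell > \tau] \;\le\; \inf_{\eta > 0} \frac{1}{\eta}\, \Sup{\QQ \in \mc G_{\rho}(\widehat \mu, \covsa)} \EE_\QQ\!\big[\max\{0,\, 2(\ell(\xi) - \tau) + \eta\}\big].
\]
The crucial claim is that this bound is tight when $\ell$ is quadratic and $\mc S = \mc M_2$. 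Using the decomposition~\eqref{eq:gelbrich-union} one can rewrite $\sup_{\QQ \in \mc G_{\rho}} \QQ[\ell > \tau] = \sup_{(\mu,\cov) \in \mc U_{\rho}(\widehat\mu, \covsa)} \sup_{\QQ \in \mc C(\mu,\cov)} \QQ[\ell > \tau]$, and the inner Chebyshev moment problem admits the classical SDP duality of Bertsimas--Popescu and Vandenberghe--Boyd--Comanor, which realizes exactly the Markov bound for every quadratic~$\ell$. Exchanging the outer $\sup_{(\mu,\cov)}$ with $\inf_{\eta > 0}$ via Sion's minimax theorem---justified by the compactness and convexity of $\mc V_{\rho}(\widehat \mu, \covsa)$ from Proposition~\ref{prop:compact:V} together with the concavity of the worst-case Chebyshev expectation in $(\mu, M)$ and its convexity in $\eta$---then yields the exact reformulation
\[
\Sup{\QQ \in \mc G_{\rho}(\widehat\mu, \covsa)} \QQ[\ell > \tau] \le \beta \;\iff\; \exists\, \eta > 0 : \Sup{\QQ \in \mc G_{\rho}(\widehat\mu, \covsa)} \EE_\QQ\!\left[\max\{0,\, 2(\ell(\xi) - \tau) + \eta\}\right] \le \eta\beta.
\]

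I would next apply Theorem~\ref{theorem:WC-exp} to the measurable loss $\tilde \ell(\xi) = \max\{0,\, 2(\ell(\xi) - \tau) + \eta\}$. The set $\mathcal Y$ of quadratic majorants $(y_0, y, Y)$ with $q(\xi) := y_0 + 2y^\top\xi + \xi^\top Y \xi \ge \tilde \ell(\xi)$ for all $\xi \in \R^n$ decomposes into the two pointwise inequalities $q(\xi) \ge 0$ and $q(\xi) \ge 2(\ell(\xi) - \tau) + \eta$. Substituting $\ell(\xi) = -\theta(w) - \Delta(w)^\top\xi - \tfrac12\xi^\top\Gamma(w)\xi$ into the second inequality and invoking the standard PSD representation of globally nonnegative quadratic forms recovers precisely the two final LMIs in the theorem's statement. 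The two Gelbrich LMIs and the linear objective $y_0 + \gamma(\rho^2 - \|\widehat \mu\|^2 - \Tr{\covsa}) + z + \Tr{Z}$ carry over verbatim from Theorem~\ref{theorem:WC-exp}; imposing the outer constraint $y_0 + \gamma(\rho^2 - \|\widehat \mu\|^2 - \Tr{\covsa}) + z + \Tr{Z} \le \eta\beta$ and minimizing $\tau$ over all the resulting variables delivers the announced SDP.

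The main obstacle is the tightness step above, namely establishing that the Markov-type bound is attained after taking the supremum over the full Gelbrich ambiguity set. The tightness for a single Chebyshev slice is a classical LP/SDP moment-problem duality, but the outer supremum over $(\mu,\cov) \in \mc U_{\rho}(\widehat\mu, \covsa)$ requires a minimax exchange with $\inf_\eta$ that is not automatic. Verifying the joint concavity of the worst-case Chebyshev expectation in $(\mu, M)$---which is transparent for linear losses yet more delicate for the piecewise-quadratic loss $[2(\ell - \tau) + \eta]^+$---together with the compactness and convexity of $\mc V_{\rho}(\widehat\mu,\covsa)$, is the key technical subtlety; once it is settled, the remainder of the argument reduces to routine Schur-complement algebra.
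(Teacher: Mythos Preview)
Your route is essentially correct but genuinely different from the paper's. The paper proceeds by first applying Corollary~\ref{corollary:WC-prob} to the indicator of $\{\ell\ge\tau\}$, which produces the robust constraint $y_0+2y^\top\xi+\xi^\top Y\xi\ge 1$ for all $\xi$ with $\ell(\xi)\ge\tau$; the $\mc S$-lemma then converts this semi-infinite constraint into an LMI with a multiplier~$\eta\ge 0$, after which one argues that~$\eta>0$ at every feasible point and divides through by~$\eta$ to obtain the final tractable form. Your approach instead introduces~$\eta$ \emph{before} dualizing, via the Markov/CVaR-type majorant, and then applies Theorem~\ref{theorem:WC-exp} directly to the piecewise-quadratic loss $[2(\ell-\tau)+\eta]^+$; this lands on the final LMIs without the $\eta>0$ argument or the variable rescaling. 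In effect, you are proving the CVaR reformulation first and deducing the VaR result from it, whereas the paper does the reverse (cf.\ Theorem~\ref{theorem:quad-cvar}).

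One correction: your justification of the minimax exchange misidentifies the delicate ingredient. The concavity of the worst-case Chebyshev expectation in $(\mu,M)$ is \emph{not} more delicate for piecewise-quadratic losses than for linear ones---it follows from the general dual representation $\sup_{\QQ\in\mc C(\mu,M-\mu\mu^\top)}\EE_\QQ[\tilde\ell]=\inf_{(y_0,y,Y)\in\mathcal Y_{\tilde\ell}}\,y_0+2\mu^\top y+\Tr{MY}$, an infimum of affine functions, regardless of~$\tilde\ell$. The nontrivial hypothesis for Sion is the one you assert without proof: that $\eta\mapsto \eta^{-1}\sup_{\QQ\in\mc C}\EE_\QQ[(2(\ell-\tau)+\eta)^+]$ is convex. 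In fact it is generally only \emph{quasiconvex} (it can be concave on part of its domain). Quasiconvexity, which suffices for Sion, does hold and is seen most cleanly after the substitution $\lambda=2/\eta$: the rescaled dual constraint $q'(\xi)-1-\lambda(\ell(\xi)-\tau)\ge 0$ for all~$\xi$ is jointly linear in $(y_0',y',Y',\lambda)$, so the partial infimum is convex in~$\lambda$, hence quasiconvex in~$\eta$. With this fix, your minimax argument goes through.
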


As $\theta(\w)$, $\Delta(\w)$ and $\Gamma(\w)$ are linear in~$\w$, the problem that minimizes the Gelbrich VaR of our quadratic loss function over all $\w\in\Omega$ is a tractable convex program if $\Omega$ is conic-representable.

\begin{theorem}[Quadratic Gelbrich CVaR] \label{theorem:quad-cvar}
    If $\risk_\QQ=\QQ\text{\em -CVaR}_\beta$ but all other assumptions of Theorem~\ref{theorem:quad-var} hold, then we have
    \[
    \Sup{\QQ \in \mathcal{G}_{\rho}(\widehat \mu, \covsa)} \QQ\text{-}\CVaR_{\beta}(\ell(\xi)) = \Sup{\QQ \in \mathcal{G}_{\rho}(\widehat \mu, \covsa)} \QQ\text{-}\VaR_{\beta}(\ell(\xi)).
    \]
\end{theorem}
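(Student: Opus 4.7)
The plan is to sandwich the worst-case Gelbrich CVaR between the worst-case Gelbrich VaR on both sides. The direction $\sup_\QQ \QQ\text{-}\CVaR_\beta(\ell) \geq \sup_\QQ \QQ\text{-}\VaR_\beta(\ell)$ is immediate from the pointwise inequality $\QQ\text{-}\CVaR_\beta \geq \QQ\text{-}\VaR_\beta$, which holds for every $\QQ \in \mc M$. All substantive work lies in the reverse inequality, which I would establish by constructing and comparing two semidefinite reformulations.

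First I would start from the Rockafellar--Uryasev representation $\QQ\text{-}\CVaR_\beta(\ell) = \inf_{\tau\in\R}\{\tau + \tfrac{1}{\beta}\EE_\QQ[\max\{\ell(\xi)-\tau,0\}]\}$ and invoke the weak-duality (minimax) inequality to obtain
\[
\sup_{\QQ \in \mc G_\rho(\msa,\covsa)} \QQ\text{-}\CVaR_\beta(\ell) \;\leq\; \inf_{\tau\in\R}\Bigl\{\tau + \tfrac{1}{\beta}\sup_{\QQ\in\mc G_\rho(\msa,\covsa)} \EE_\QQ\bigl[\max\{\ell(\xi)-\tau,0\}\bigr]\Bigr\}.
\]
Only this direction is needed, so no compactness hypothesis is required for the swap. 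Next I would apply Theorem~\ref{theorem:WC-exp} to the auxiliary loss $\tilde\ell_\tau(\xi) = \max\{\ell(\xi)-\tau, 0\}$. Because a quadratic majorant $y_0 + 2y^\top\xi + \xi^\top Y\xi$ dominates a pointwise maximum of two functions if and only if it dominates each of them separately, the membership $(y_0, y, Y)\in\mc Y$ reduces to two quadratic-over-$\R^n$ inequalities, the first ($\ge 0$) encoded by the LMI $\left[\begin{smallmatrix} Y & y \\ y^\top & y_0 \end{smallmatrix}\right]\succeq 0$ and the second ($\ge \ell(\xi)-\tau$) encoded, after substituting the quadratic form of $\ell$ and clearing a factor of two, by
\[
2\begin{bmatrix} Y & y \\ y^\top & y_0 \end{bmatrix} + \begin{bmatrix} \Gamma(\w) & \Delta(\w) \\ \Delta(\w)^\top & 2(\tau + \theta(\w)) \end{bmatrix} \succeq 0,
\]
which is structurally the analog of the final LMI in Theorem~\ref{theorem:quad-var}.

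Merging these two LMIs with the two dispersion LMIs inherited from Theorem~\ref{theorem:WC-exp} turns the inner worst-case expectation into an SDP, so the entire right-hand side becomes a single aggregate semidefinite minimization. The decisive step is a change of variables: I would rescale $(y_0, y, Y, \gamma, z, Z)$ by a factor of two and introduce a non-negative slack $\eta$, set via the epigraph identification $\eta = \tfrac{1}{\beta}(y_0 + \gamma(\rho^2 - \|\msa\|^2 - \Tr{\covsa}) + z + \Tr{Z})$ and a matching redefinition $\tau \mapsto \tau - \eta$ of the epigraph variable. Because every dispersion constraint of Theorem~\ref{theorem:WC-exp} is positively homogeneous in $(\gamma, y_0, y, Y, z, Z)$, such a rescaling preserves feasibility, and the rescaled program matches term by term the Gelbrich VaR SDP of Theorem~\ref{theorem:quad-var}: the epigraph objective becomes $\tau$, the inequality $y_0 + \gamma(\rho^2 - \|\msa\|^2 - \Tr{\covsa}) + z + \Tr{Z}\leq \eta\beta$ appears, and the $-\eta$ term materializes in the bottom-right entry of the final LMI. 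The common optimal value therefore simultaneously upper-bounds $\sup_\QQ \QQ\text{-}\CVaR_\beta(\ell)$ and equals $\sup_\QQ \QQ\text{-}\VaR_\beta(\ell)$, closing the sandwich.

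The main obstacle I anticipate is this bookkeeping step: keeping track of which variables absorb the factor of $1/\beta$ from Rockafellar--Uryasev, verifying that the epigraph slack $\eta$ is free to take the tight value at optimum, and confirming that the homogeneous dispersion LMIs transform correctly under the rescaling. A misplaced factor of $\beta$ or an overlooked sign in the $-\eta$ shift would break the identification between the two SDPs. Once the identification is verified, however, the proof is complete.
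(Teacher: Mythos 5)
Your argument is correct in substance, but it takes a genuinely different route from the paper's. The paper disposes of this theorem in three lines: by the two-layer decomposition~\eqref{eq:two-layer-a}, the supremum over the Gelbrich ambiguity set is an outer supremum over mean--covariance pairs of the worst-case CVaR over the structured Chebyshev set $\mathcal C(\mu,\cov)$, and for each fixed $(\mu,\cov)$ the worst-case CVaR equals the worst-case VaR by \cite[Theorem~2.2]{ref:zymler2013distributionally}; recombining the layers gives the claim (this is exactly the proof of Theorem~\ref{theorem:poly-cvar}, to which the paper refers). You instead re-prove that equivalence from scratch at the level of the full Gelbrich ambiguity set: the pointwise inequality $\QQ\text{-}\CVaR_\beta(\ell)\ge \QQ\text{-}\VaR_\beta(\ell)$ gives one direction, and for the other you combine the Rockafellar--Uryasev representation with the max--min inequality, apply Theorem~\ref{theorem:WC-exp} to $\max\{\ell(\xi)-\tau,0\}$, and identify the resulting SDP with the VaR SDP of Theorem~\ref{theorem:quad-var} through a rescaling and a shift of $\tau$. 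This identification does go through: the dispersion LMIs are positively homogeneous, and mapping a VaR-feasible point $(\tau,y_0,y,Y,\gamma,z,\eta,Z)$ to $\bigl(\tau-\eta/2,\ \tfrac12(y_0,y,Y,\gamma,z,Z)\bigr)$ yields a feasible point of your CVaR upper-bound program with value at most $\tau$, which is the inequality you actually need. Two bookkeeping points: under your doubling convention the shift must be $\eta/2$, not $\eta$ (precisely the slip you anticipated), and the reverse matching additionally requires observing that the budget constraint may be taken tight (lowering $\eta$ only relaxes the last LMI) and that the identified slack is non-negative. What your route buys is self-containedness---it never invokes the external Chebyshev-level result and reuses the paper's own SDP machinery; what the paper's route buys is brevity and the structural point that the VaR/CVaR equivalence is inherited layer-by-layer from the Chebyshev case with no new computation.
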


\subsection{Gelbrich Error} \label{sec:indextracking}

Suppose finally that $\ell(\xi) = | w^\top \xi|^p$ for $p \in \{1, 2\}$. Loss functions of this type arise in index tracking, where one component of $\xi$ represents the return of a market index, and the corresponding component of $\w$ is fixed to $-1$. In this case, $\w^\top\xi$ quantifies the mismatch between the returns of the index and the corresponding tracking portfolio, while $| w^\top \xi |^p$ captures the tracking error. The worst-case expected tracking error with respect to all distributions in a Gelbrich ambiguity set can again be computed efficiently by solving a semidefinite program.

\begin{theorem}[Gelbrich error] \label{theorem:tracking} Suppose that $\ell(\xi) = | w^\top \xi |^p$ for $p \in \{1, 2\}$ and that~$\mc S = \mc M_2$.
\begin{enumerate}[label=(\roman*), leftmargin = 7mm]
        \item If $p=1$, then we have
     \[
    \Sup{\QQ \in \mathcal{G}_{\rho}(\widehat \mu, \covsa)} \EE_{\QQ} \left[ | w^\top \xi | \right] =
            \left\{
            \begin{array}{cl}
                \inf &  y_0 + \gamma \big( \rho^2 - \| \widehat \mu \|^2 - \Tr{\covsa} \big) + z + \Tr{Z} \\
                \st &\gamma \in \mathbb R_+, \; z \in \mathbb R_+, \; Z \in \PSD^n, \;  y_0 \in \R, \; y \in \R^n, \; Y \in \PSD^n\\
                &\begin{bmatrix} \gamma I - Y & \gamma \covsa^\half \\ \gamma \covsa^\half & Z \end{bmatrix} \succeq 0, \; \begin{bmatrix} \gamma I - Y & \gamma \widehat \mu + y \\ (\gamma \widehat \mu + y)^\top & z \end{bmatrix} \succeq 0 \\ [3ex]
                &\begin{bmatrix} 
            Y & y - \frac{1}{2} \w \\ y^\top - \frac{1}{2} \w^\top & y_0
        \end{bmatrix} \succeq 0, \quad
        \begin{bmatrix} 
            Y & y + \frac{1}{2} \w \\ y^\top + \frac{1}{2} \w^\top & y_0
        \end{bmatrix} \succeq 0.
            \end{array}
            \right.
    \]
    \item If $p=2$, then we have
  \[
    \Sup{\QQ \in \mathcal{G}_{\rho}(\widehat \mu, \covsa)} \EE_{\QQ} \left[ | w^\top \xi |^2 \right] =
            \left\{
            \begin{array}{cl}
                \inf &  y_0 + \gamma \big( \rho^2 - \| \widehat \mu \|^2 - \Tr{\covsa} \big) + z + \Tr{Z} \\
                \st &\gamma \in \mathbb R_+, \; z \in \mathbb R_+, \; Z \in \PSD^n, \;  y_0 \in \R, \; y \in \R^n, \; Y \in \PSD^n,\; M \in \PSD^n\\
                &\begin{bmatrix} \gamma I - Y & \gamma \covsa^\half \\ \gamma \covsa^\half & Z \end{bmatrix} \succeq 0, \; \begin{bmatrix} \gamma I - Y & \gamma \widehat \mu + y \\ (\gamma \widehat \mu + y)^\top & z \end{bmatrix} \succeq 0 \\[3ex]
                &
                \begin{bmatrix} 
                    M & y \\ y^\top &y_0 
                \end{bmatrix} \succeq 0, \quad
                \begin{bmatrix} 
                    Y - M & \w \\ \w^\top & 1
                \end{bmatrix} \succeq 0.
          \end{array}
          \right.
    \]
\end{enumerate}
\end{theorem}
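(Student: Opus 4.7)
The plan is to start from Theorem~\ref{theorem:WC-exp}, which expresses $\sup_{\mathbb Q \in \mathcal G_\rho(\widehat\mu,\covsa)} \EE_\QQ[\ell(\xi)]$ as a semidefinite optimization problem parametrized by the semi-infinite set $\mathcal Y$ of quadratic majorants of $\ell$. The theorem then reduces to reformulating the membership constraint $(y_0,y,Y)\in\mathcal Y$ as an explicit LMI for the two specific choices $\ell(\xi)=|w^\top\xi|$ and $\ell(\xi)=(w^\top\xi)^2$. The main tool is the classical S-lemma style result: for $Y\in\Sym^n$, $y\in\R^n$, $y_0\in\R$, the inequality $\xi^\top Y\xi+2y^\top\xi+y_0\geq 0$ holds for every $\xi\in\R^n$ if and only if $\bigl[\begin{smallmatrix} Y & y \\ y^\top & y_0 \end{smallmatrix}\bigr]\succeq 0$. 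Applying this to the two pieces of the pointwise constraint in each case will produce the LMIs stated in the theorem; all other constraints of the SDP come verbatim from Theorem~\ref{theorem:WC-exp} (and so need no rederivation).

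For part (i) with $p=1$, I would write $|w^\top\xi|=\max\{w^\top\xi,-w^\top\xi\}$. Then $(y_0,y,Y)\in\mathcal Y$ is equivalent to the two universal quadratic inequalities
\begin{equation*}
\xi^\top Y\xi+(2y-w)^\top\xi+y_0\geq 0\quad\text{and}\quad \xi^\top Y\xi+(2y+w)^\top\xi+y_0\geq 0\qquad\forall\xi\in\R^n,
\end{equation*}
each of which, by the nonnegativity-LMI, translates directly into the two matrix inequalities
$\bigl[\begin{smallmatrix} Y & y-\tfrac{1}{2}w \\ y^\top-\tfrac{1}{2}w^\top & y_0 \end{smallmatrix}\bigr]\succeq 0$ and $\bigl[\begin{smallmatrix} Y & y+\tfrac{1}{2}w \\ y^\top+\tfrac{1}{2}w^\top & y_0 \end{smallmatrix}\bigr]\succeq 0$. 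Either of these forces $Y\succeq 0$, so restricting $Y\in\PSD^n$ in the final SDP is without loss of generality. Plugging these LMIs into the template of Theorem~\ref{theorem:WC-exp} yields the formula for $p=1$.

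For part (ii) with $p=2$, the semi-infinite constraint becomes $\xi^\top(Y-ww^\top)\xi+2y^\top\xi+y_0\geq 0$ for all $\xi$, which by the same LMI result is equivalent to $\bigl[\begin{smallmatrix} Y-ww^\top & y \\ y^\top & y_0 \end{smallmatrix}\bigr]\succeq 0$. This is linear in $(Y,y,y_0)$ but not linear in $w$. To expose linearity in $w$ (which is what makes the resulting program usable for portfolio optimization), I introduce an auxiliary slack $M\in\PSD^n$ and split the constraint into $\bigl[\begin{smallmatrix} M & y \\ y^\top & y_0 \end{smallmatrix}\bigr]\succeq 0$ together with $M\preceq Y-ww^\top$. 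The second of these is rewritten via a Schur complement (using that the scalar block equals $1>0$) as $\bigl[\begin{smallmatrix} Y-M & w \\ w^\top & 1 \end{smallmatrix}\bigr]\succeq 0$. Equivalence between the two formulations is easy in both directions: forwards one takes $M=Y-ww^\top$; backwards one uses transitivity of $\succeq$ to dominate $\bigl[\begin{smallmatrix} Y-ww^\top & y \\ y^\top & y_0 \end{smallmatrix}\bigr]$ by $\bigl[\begin{smallmatrix} M & y \\ y^\top & y_0 \end{smallmatrix}\bigr]$. Substituting back into Theorem~\ref{theorem:WC-exp} gives the formula for $p=2$.

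The only step requiring any care is the lifting in (ii): one must verify that introducing the slack variable $M$ does not change the optimal value, which boils down to exhibiting the explicit choice $M=Y-ww^\top$ at optimality. The remaining ingredients—the LMI characterization of global quadratic nonnegativity and the Schur complement—are entirely standard, so no further obstacle should arise.
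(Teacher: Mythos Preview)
Your proposal is correct and follows essentially the same route as the paper: the paper's proof defers to Corollary~\ref{corollary:WC-pq}, whose proof is precisely the argument you give—start from Theorem~\ref{theorem:WC-exp} and reformulate the semi-infinite constraint defining~$\mathcal Y$ as an LMI via the standard equivalence between global nonnegativity of a quadratic and positive semidefiniteness of its coefficient matrix. Your treatment of part~(ii) is in fact slightly more complete than the paper's sketch, since Corollary~\ref{corollary:WC-pq} applied directly yields only the single LMI $\bigl[\begin{smallmatrix} Y-ww^\top & y \\ y^\top & y_0 \end{smallmatrix}\bigr]\succeq 0$, and you explicitly carry out (and justify) the $M$-lifting that produces the $w$-linear formulation actually stated in the theorem.
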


Corollary~\ref{corollary:WC-pq} in the appendix uses the $\mc S$-lemma~\cite{ref:polik2007Slemma} to show that the worst-case expectation of any piecewise quadratic loss function admits a tractable semidefinite programming reformulation. Theorem~\ref{theorem:tracking} is a special case of this result, and its proof is thus omitted for brevity.

\section{Numerical Results}\label{sec:numerical}

{
Accurate estimation of the mean vector and the covariance matrix of the asset returns is crucial for constructing an effective Gelbrich ambiguity set. Empirical estimators, such as the sample mean and the sample covariance matrix, are known to be highly sensitive to outliers and are thus unsuitable for high-dimensional portfolio optimization problems. This sensitivity can lead to significant estimation errors, which can severely degrade the out-of-sample performance of optimized portfolios \citep{ref:demiguel2009generalized, ref:gotoh2011role, demiguel2013size}.

A standard approach to improve out-of-sample performance is to replace the empirical estimators with robust estimators that are less sensitive to outliers. We now assess the benefits of complementing robust estimation with robust optimization techniques by constructing Gelbrich ambiguity sets around different robust covariance estimators. Specifically, we consider linear shrinkage estimators \citep{ref:ledoit2003improved, ledoit2004well}, M-estimators \citep{maronna1976robust, huber1981robust}, S-estimators \citep{lopuhaa1989relation, campbell1998calculation}, and robust estimators based on the minimum covariance determinant principle~\citep{rousseeuw1984least, rousseeuw1999fast}. The Gelbrich ambiguity set introduces a second layer of robustness that offers protection against estimation errors. We compare the resulting Gelbrich portfolios against the solutions of distributionally robust portfolio optimization problems with Wasserstein and $\chi^2$-divergence ambiguity sets centered at empirical distributions. All optimization models are implemented in Python with interfaces compatible with the \texttt{scikit-learn} library. The code is available at \url{https://github.com/sorooshafiee/mean-covariance-robust-risk}. 
We use Gurobi 12.0.2 to solve all linear and second-order cone programs, and MOSEK 11.0.21 to solve all semidefinite programs through the JuMP interface \citep{dunning2017jump}.

We first review different robust covariance estimators constructed from a dataset $\xi_1, \dots, \xi_N \in \R^n$. Linear shrinkage estimators improve upon the sample covariance matrix by \emph{shrinking} it towards a more structured and stable target matrix. In particular, the linear shrinkage estimator is defined as a convex combination of the sample covariance matrix and a shrinkage target, where the weight of the target is referred to as the shrinkage intensity. \citet{ref:ledoit2003improved,ledoit2004well} derive a data-dependent, asymptotically optimal formula for the shrinkage intensity, thus eliminating the need for computationally expensive cross-validation. The corresponding \texttt{Ledoit-Wolf} shrinkage estimator provides more accurate covariance estimates, leading to more stable portfolio weights and improved out-of-sample performance \citep{ref:gotoh2011role, ref:demiguel2009generalized}. In our experiments we employ the \texttt{scikit-learn} implementation of the \texttt{Ledoit-Wolf} estimator, which uses the sample mean as the location estimator.

M-estimators, introduced by \citet{huber1964robust}, generalize the classical maximum likelihood framework to improve robustness against deviations from assumed distributional models, particularly in the presence of outliers. \citet{maronna1976robust} was the first to define M-estimators for the joint estimation of the mean vector $\widehat{\mu}_N$ and the covariance matrix $\widehat{\Sigma}_N$. \citet{huber1981robust} streamlined this framework by characterizing the M-estimators~$\widehat{\mu}_N$ and~$\widehat{\Sigma}_N$ as the solutions to the score equations
\begin{align}
    \label{eq:fixed:point}
    \begin{cases}
        \sum_{i=1}^N v_1(d_i) (\xi_i - \widehat \mu_N) = 0 \\
        \sum_{i=1}^N \left( v_2(d_i^2) (\xi_i - \widehat \mu_N) (\xi_i - \widehat \mu_N)^\top - v_3(d_i) \widehat \Sigma_N \right) = 0,
    \end{cases}
\end{align}
where $d_i^2 = (\xi_i - \widehat{\mu}_N)^\top \widehat{\Sigma}^{-1}_N (\xi_i - \widehat{\mu}_N)$ denotes the squared Mahalanobis distance between~$\xi_i$ and~$\widehat{\mu}_N$. The weight functions~$v_1(\cdot)$ and~$v_2(\cdot)$ are designed to down-weight observations~$\xi_i$ with large Mahalanobis distances, thereby mitigating the impact of outliers. Following Huber \citep{huber1981robust}, we set $v_1(d) =  1$ if $d \leq \eps$ and $v_1(d)= \eps / d$ otherwise, $v_2(d^2)=v_1(d)$, and $v_3(d) = 1$,
where~$\eps > 0$ is a tuning parameter. We construct the resulting $\texttt{Huber-M}$ estimators using the iterative scheme described in \citep[\S~6.7.1]{maronna2006robust}.

S-estimators, introduced by \citet{rousseeuw1984robust}, have a high breakdown point and remain reliable under substantial data contamination. Portfolios constructed on the basis of S-estimators are thus less sensitive to extreme market events and display improved stability and performance predictability. \citet{lopuhaa1989relation} characterizes the S-estimators~$\widehat{\mu}_N$ and~$\widehat{\Sigma}_N$ as the solutions to the score equations~\eqref{eq:fixed:point}, where the weight functions satisfy the relations $v_2(d^2) = n v_1(d)$ and $v_3(d) = d^2 v_1(d) - \int_0^d y v_1(y) \diff y + b_0$, and the constant~$b_0$ is set to the closed-form expression proposed in \citep{campbell1998calculation}. Following Lopuha\"{a} \citep{lopuhaa1989relation}, we set $v_1(d)$ to Tukey's biweight function defined via $v_1(d) = ( 1 - (d / \eps)^2)^2 $ if $d \leq \eps$ and $v_1(d)= 0$ otherwise, where~$\eps > 0$ is a tuning parameter. We construct the resulting $\texttt{Tukey-S}$ estimators using the iterative scheme described in \citep{campbell1998calculation}.

Robust M- and S-estimators re-weight observations based on their Mahalanobis distances to~$\widehat\mu_N$. \citet{rousseeuw1984least} advocates an alternative approach to robust estimation, which uses only a subset of~$N'<N$ observations that are considered to be \emph{clean} and computes their mean vector and covariance matrix. The minimum covariance determinant (\texttt{MCD}) estimator is based on the subset of~$N'$ observations whose sample covariance matrix has the smallest determinant. Computing this estimator {\em exactly} is intractable. Therefore, we rely on the \texttt{scikit-learn} implementation of the fast algorithm by \citet{rousseeuw1999fast} to compute the \texttt{MCD} estimator corresponding to $N' = (n + N + 1)/2$ observations {\em approximately}. This estimator is highly resistant to outliers, making it a powerful tool for portfolio construction in financial markets that are prone to extreme events. Our publicly available code also includes backtests involving other robust estimators implemented in the \texttt{RobPy} package \citep{leyder2024robpy}; however, their results are not reported in the following sections.

\subsection{Linear Portfolio Selection}
\label{sec:linear-portfolio-experiment}
Theorem~\ref{theorem:meanstd} shows that the Gelbrich risk of any {\em linear} portfolio loss function is exactly equal to a Markowitz-type mean-standard deviation risk functional with an $\ell_2$-regularization term. The benefits of penalizing or---equivalently---constraining the norm of the portfolio vector in a Markowitz model are already well documented in the literature. Specifically, it has been shown empirically across several datasets that norm-constrained minimum-variance portfolios often display higher Sharpe ratios than standard baseline strategies \cite{ref:demiguel2009generalized,ref:gotoh2011role}. In this section, we assess the out-of-sample performance of our Gelbrich portfolios relative to that of other---more established---distributionally robust optimization benchmarks in the context of a synthetic numerical experiment. This experiment examines how the  sample size~$N$ influences the relative performance of different models, with the CVaR at level~$\beta$ used as the primary performance metric. We compute the standard risk coefficient~$\alpha$ using the formula of Proposition~\ref{proposition:CVaR}~\emph{(i)} and set~$\beta = 0.8$.

The experiment is based on the synthetic market considered in \citep{ref:bertsimas2018robust, ref:esfahani2018data}. We generate asset returns for $n = 20$ assets from a multivariate normal distribution, assuming each return $\xi_j$ decomposes into a systematic risk factor $\psi \sim \mathcal N(0,\sqrt{3\%})$, common to all assets, and an idiosyncratic risk factor $\zeta_j \sim \mathcal{N}(j \times 1\%, \sqrt{j \times 2\%})$, specific to asset~$j$. Thus, $\xi_j = \psi + \zeta_j$ for all $j \in [n]$. By construction, assets with higher indices promise higher mean returns at a higher risk. For each run of the experiment, $10^6$ out-of-sample returns are generated to allow for accurate performance evaluation. We consider three distinct scenarios: a small sample regime with $N = 20$, an intermediate sample regime with $N = 100$, and a large sample regime with $N = 1{,}000$. The entire experiment is repeated over $100$ simulation runs in order to ensure statistical robustness of our findings.

We compare various portfolio strategies on test data. Gelbrich portfolios based on different estimators for the mean vector and covariance matrix of asset returns are obtained by minimizing a regularized mean-standard deviation functional (see Theorem~\ref{theorem:meanstd}). The robustness of these portfolios arises from two sources: (i) the use of robust estimators for the mean and covariance matrix, and (ii) the regularization term weighted by the radius~$\rho$ of the Gelbrich ambiguity set. The Gelbrich approach is benchmarked against two prominent distributionally robust optimization models with a Wasserstein and a $\chi^2$-divergence ambiguity set of radius~$\rho$, respectively. Both ambiguity sets are centered at the empirical distribution of the $N$ training samples. We use \citep[Theorem~6.3]{ref:esfahani2018data} and \citep[Lemma~1]{duchi2021learning} to derive tractable reformulations for the resulting optimization models, respectively. 

Figure~\ref{fig:syntethic} visualizes the CVaR (top row) and the mean-standard deviation return $\mathbb E_{\mathbb P}[w^\top \xi] - \alpha \sqrt{\Var_{\mathbb P}(w^\top \xi)}$ (bottom row) of the different portfolios on $10^6$~test samples as a function of~$\rho$, averaged over 100 simulation runs. Here, $\alpha$ denotes the usual standard risk coefficient. The Gelbrich portfolios benefit from a second layer of robustness, which mitigates the impact of estimation errors in the mean vectors and covariance matrices. This explains why the Gelbrich portfolios perform best for strictly positive values of~$\rho$. We observe that the benefits of robustification depend strongly on the training sample size. In the small-sample regime ($N=20$), estimation errors dominate. Here, the Gelbrich portfolio based on the \texttt{Ledoit-Wolf} estimator achieves the lowest CVaR and the highest mean-standard deviation return, outperforming not only all other Gelbrich portfolios but also the Wasserstein and $\chi^2$-divergence distributionally robust portfolios. This suggests that when training data is scarce, accurately estimating the entire asset return distribution---as implicitly done by the Wasserstein and $\chi^2$-divergence models---is more error-prone. A more conservative strategy that robustly estimates only the first and second moments of the distribution proves more effective. The Gelbrich portfolios perform particularly well when they are based on the \texttt{Ledoit-Wolf} estimator, which applies aggressive shrinkage and provides particularly stable covariance estimates.
Similar patterns emerge in the moderate data regime ($N=100$), although the advantage of the Gelbrich portfolios with \texttt{Ledoit-Wolf} estimator vis-\`a-vis the Wasserstein portfolios erodes. Nonetheless, the Gelbrich approach remains more tractable because computing a Wasserstein portfolio requires solving a constrained nonlinear optimization problem with $\mathcal O(N)$ variables and constraints. In the large data regime ($N=1{,}000$), the dominance of the Gelbrich portfolios is broken. Indeed, the Wasserstein portfolios corresponding to small radii~$\rho$ (which are approximately optimized in view of the mean), begin to outperform the Gelbrich portfolios. Note that CVaR is sensitive to the tails of the asset return distribution and therefore depends on the entire distribution, not only on its first two moments. With a large dataset, the empirical distribution provides a high-fidelity approximation of the unknown true distribution. In this regime, Wasserstein portfolios can exploit richer (reliable) distributional information by optimizing against the worst-case distribution within a small neighborhood of the empirical distribution. As a result, they achieve better out-of-sample performance than Gelbrich portfolios, which rely only on the first and second moments. Our experiments highlight a critical trade-off. When data is scarce, robust moment estimation and an additional layer of distributional robustness are essential; whereas with abundant data, directly optimizing in view of the empirical distribution is a more powerful approach.

\begin{figure}[!tb]
    \centering
    \begin{subfigure}[b]{0.31\textwidth}
        \centering
        \includegraphics[width=\linewidth]{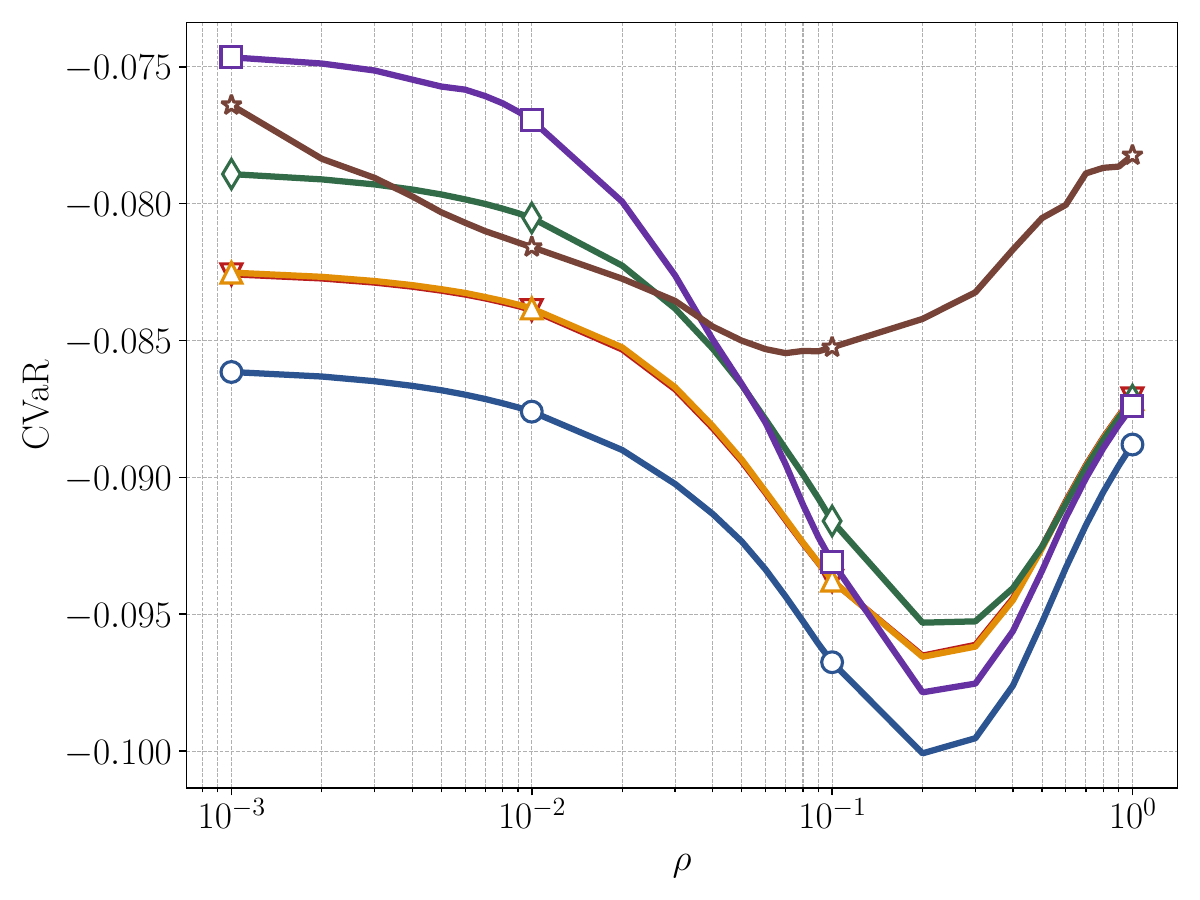}
    \end{subfigure}
    \hfill
    \begin{subfigure}[b]{0.31\textwidth}
        \centering
        \includegraphics[width=\linewidth]{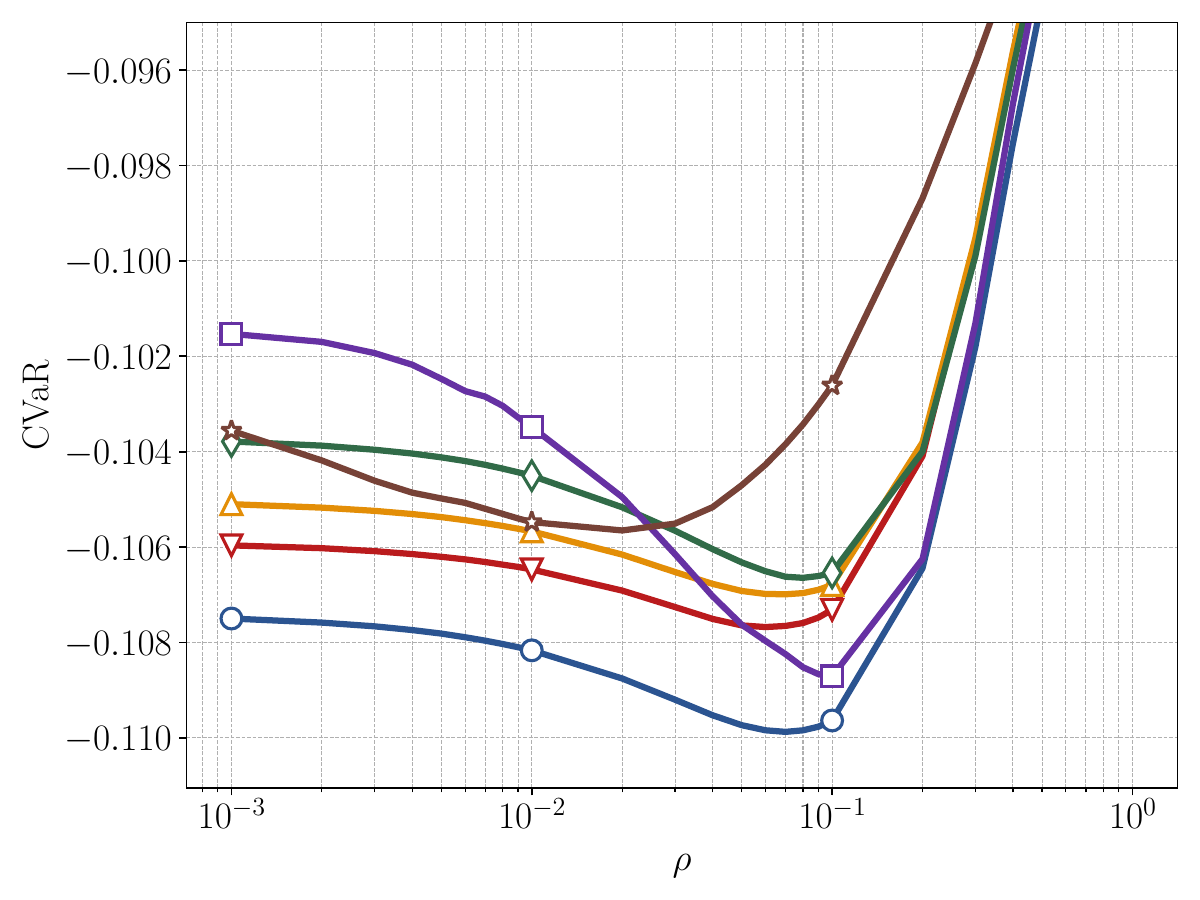}
    \end{subfigure}
    \hfill
    \begin{subfigure}[b]{0.31\textwidth}
        \centering
        \includegraphics[width=\linewidth]{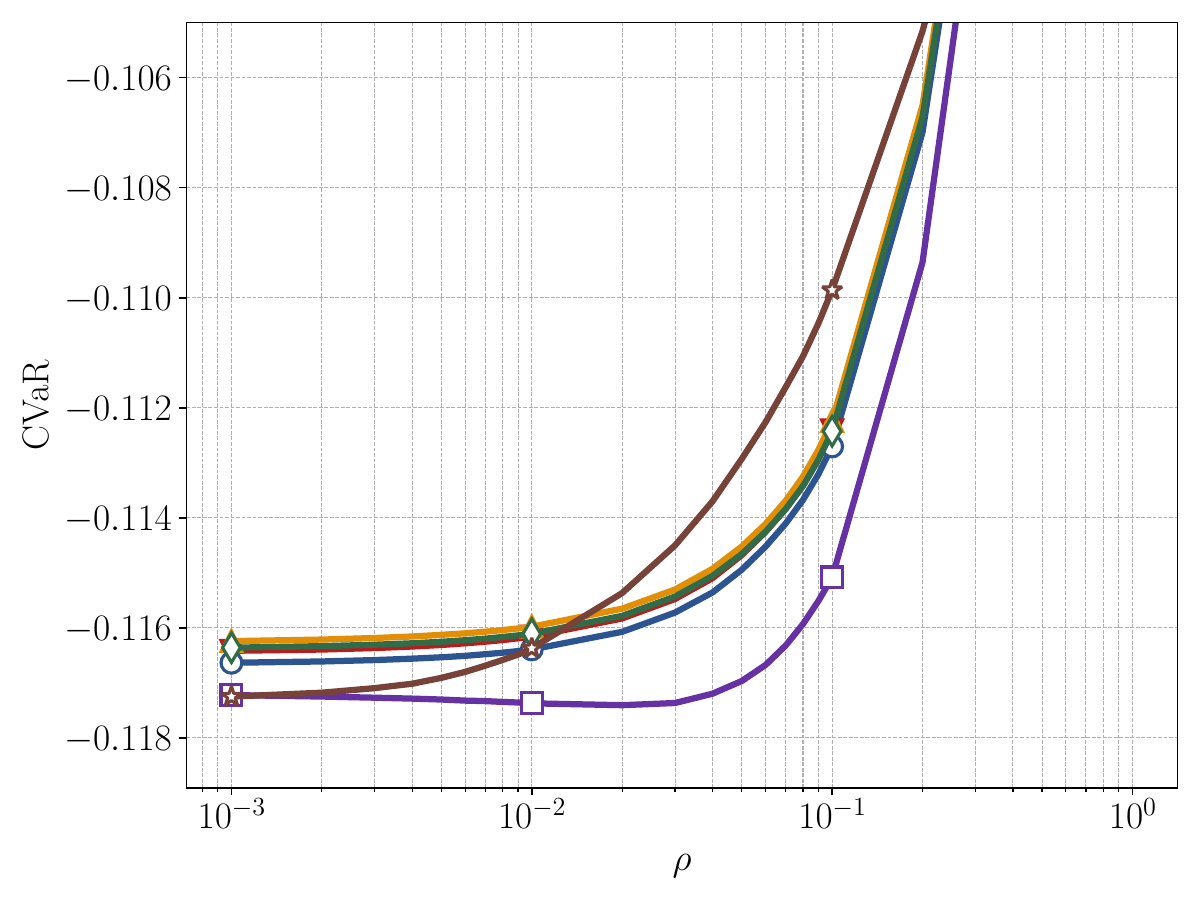}
    \end{subfigure} \\
    \begin{subfigure}[b]{0.31\textwidth}
        \centering
        \includegraphics[width=\linewidth]{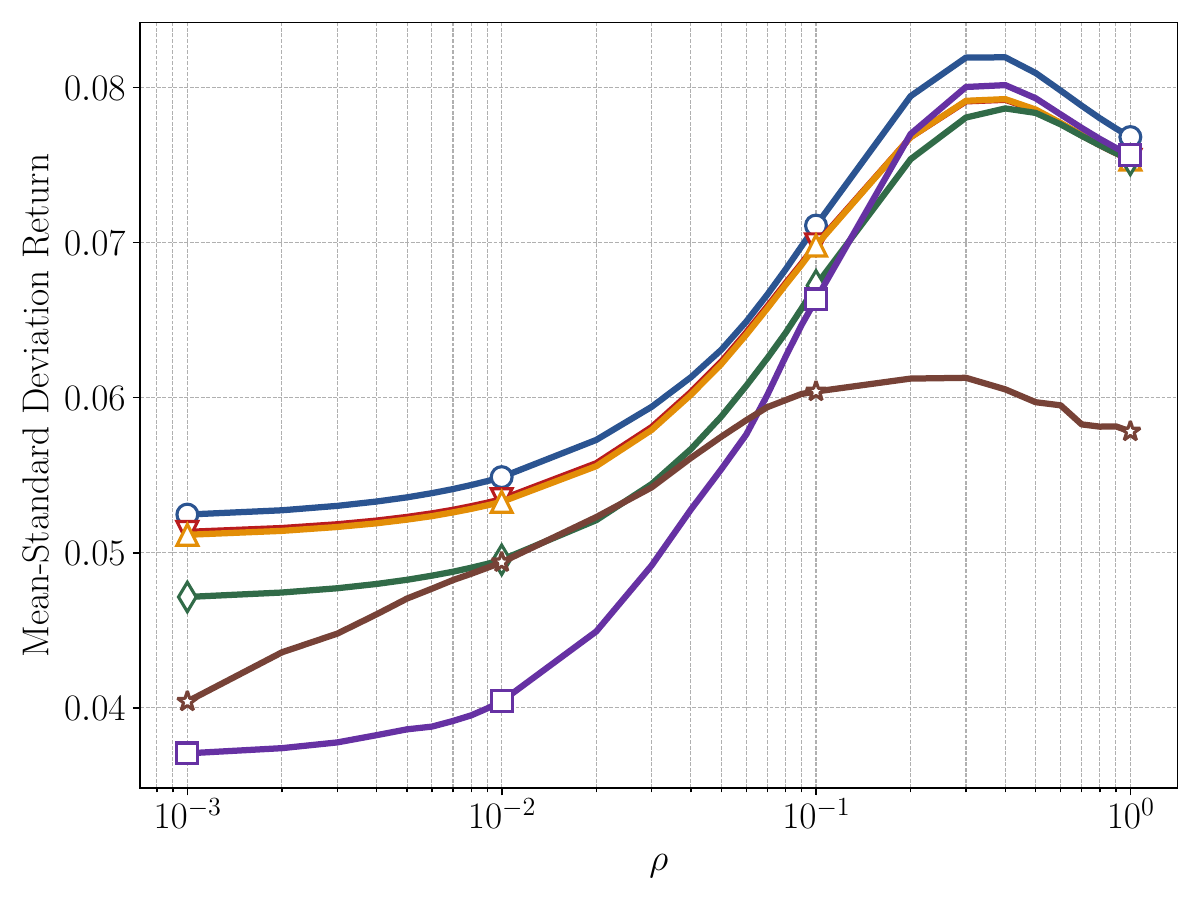}
    \end{subfigure}
    \hfill
    \begin{subfigure}[b]{0.31\textwidth}
        \centering
        \includegraphics[width=\linewidth]{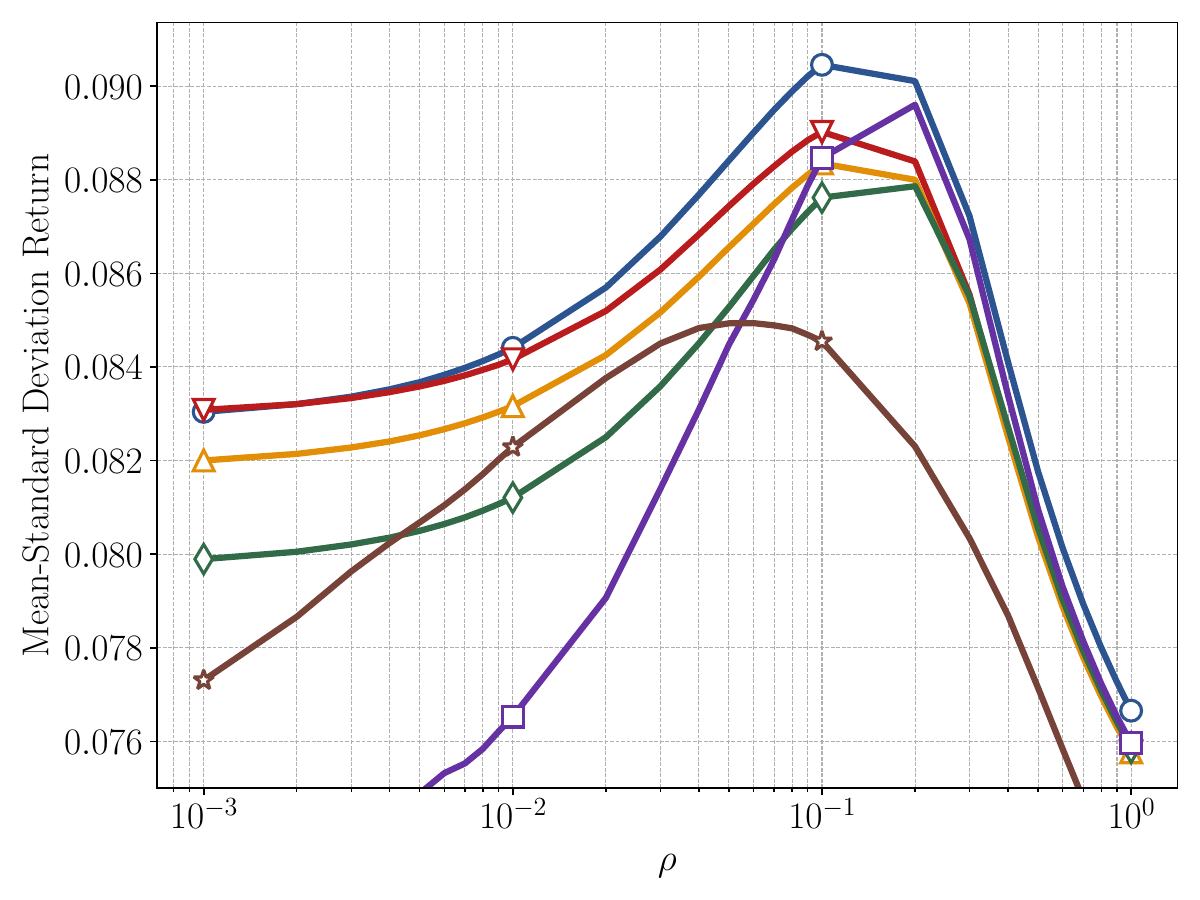}
    \end{subfigure}
    \hfill
    \begin{subfigure}[b]{0.31\textwidth}
        \centering
        \includegraphics[width=\linewidth]{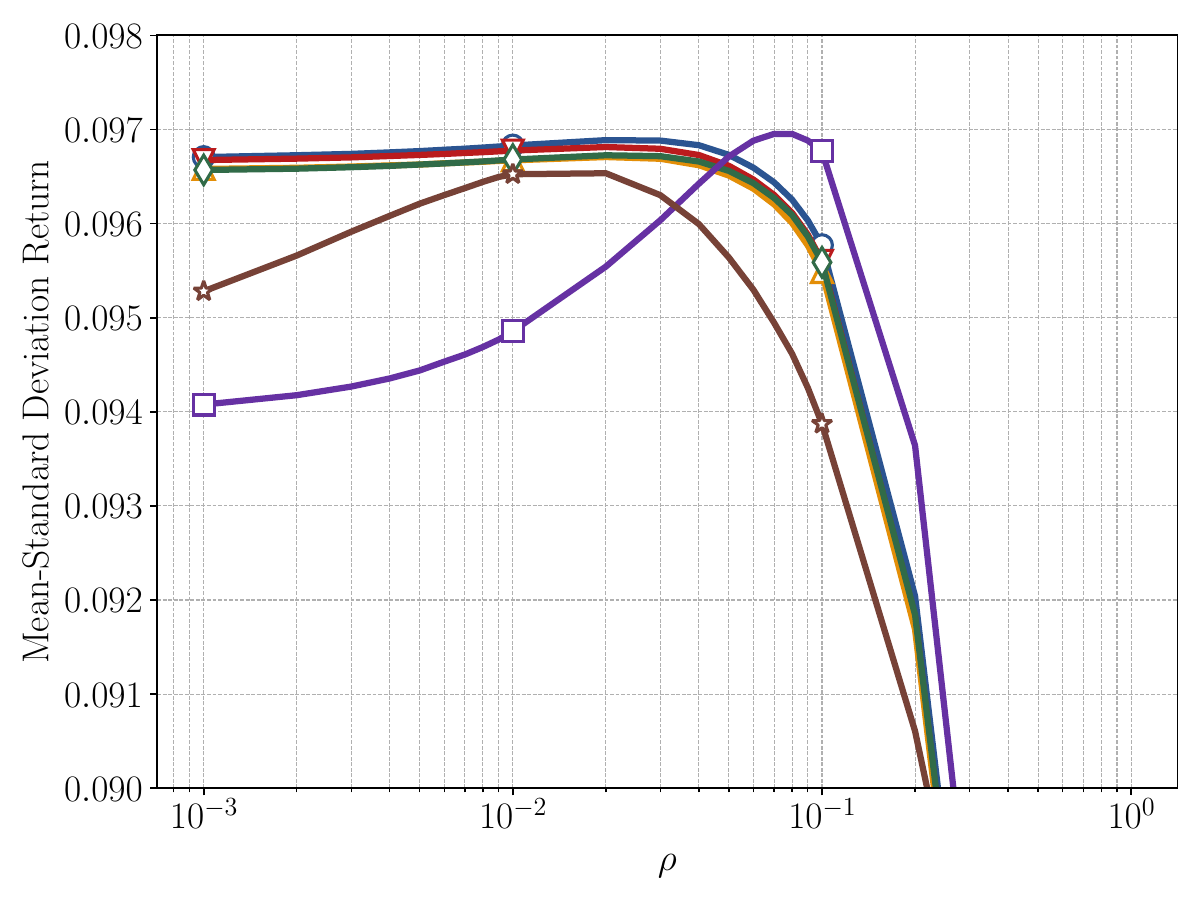}
    \end{subfigure} \\
    \begin{subfigure}[b]{\textwidth}
        \centering
        \includegraphics[width=\linewidth]{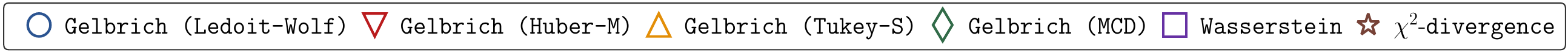}
    \end{subfigure}
    \caption{{Out-of-sample CVaR (top row) and mean-standard deviation return (bottom row) of distributionally robust portfolios as a function of $\rho$ for different training sample sizes: $N = 20$ (left), $N = 100$ (middle), and $N = 1{,}000$ (right).}}
    \label{fig:syntethic}
\end{figure}

\subsection{Index Tracking}

We next showcase the advantages of minimizing the Gelbrich risk in the context of an index-tracking application with real data, which involves a {\em nonlinear} loss function. The goal of index tracking is to construct a portfolio of $n-1$ assets with respective returns $\xi_1,\ldots,\xi_{n-1}$ that displays a similar performance as a market index with return $\xi_n$. We thus seek a portfolio from within the feasible set $\Omega = \{ \w \in \R^{n}: \sum_{i=1}^{n-1} \w_i = 1, \; \w_n = -1, \; \w_i \ge 0 \; \forall i<n \}$ that minimizes the Gelbrich error defined in Section~\ref{sec:indextracking}, that is, the worst-case expectation of $|\w^\top\xi|^p$ across all distributions in a Gelbrich ambiguity set. Our experiments are based on three standard instances of the index-tracking problem described in~\cite{ref:bruni2016dataset}. Table~\ref{ch4:table:dataset} summarizes the underlying datasets, which comprise weekly return time series of the market index (\texttt{DowJones}, \texttt{FF49Industries}, or \texttt{FTSE100}) and the $n-1$ assets used for replication. Each dataset covers a backtesting period partitioned into blocks of 12 weeks. We adopt the same rolling horizon approach as in~\cite{ref:bruni2016dataset}. At the beginning of each block, we use the previous 52 weeks of return data to estimate the sample mean $\msa$ and the sample covariance matrix $\covsa$, and we compute a portfolio vector $\w^\star\in\Omega$ with minimal Gelbrich error. At the beginning of each week in the current block, we rebalance the portfolio to the target allocation~$\w^\star$ and then calculate its out-of-sample error $|(\w^\star)^\top \widehat\xi|^p$ using the return data~$\widehat\xi$ over the next week.

\begin{table}[ht!]
\centering
\caption{{Characteristics of the standard index tracking instances from~\cite{ref:bruni2016dataset}.}}
\label{ch4:table:dataset}
{\begin{tabular}{l|c|c|c|c}
    Index &     \# of Assets  & Backtesting Interval    &      \# of Weeks      & \# of 12-Week Blocks    \\ \hline\hline
    \texttt{DowJones}    &     28    & Feb 1990--Apr 2016 &    1363    &110    \\ \hline
    \texttt{FF49Industries}   &     49    & Jul 1969--Jul 2015 &    2325    & 190    \\ \hline
    \texttt{FTSE100}     &     83    & Jul 2002--Apr 2016 &    717    & 56    \\ 
\end{tabular}}
\end{table}

We begin by studying the Gelbrich tracking problem with $p=1$, considering different estimators for the mean vector and the covariance matrix. In particular, we employ the \texttt{Ledoit-Wolf}, \texttt{Huber-M}, \texttt{Tukey-S}, and \texttt{MCD} estimators to construct the Gelbrich ambiguity sets. Our objective is to evaluate how the out-of-sample tracking error of the resulting Gelbrich portfolios varies with the radius of the ambiguity set. Table~\ref{tab:error_ratios} presents the ratio of the minimum tracking error---corresponding to the optimal radius of the ambiguity set---to the non-robust tracking error, obtained when the radius is set to zero. A ratio below one indicates that incorporating robustness improves the out-of-sample performance of the tracking portfolio. As in Section~\ref{sec:linear-portfolio-experiment}, we find that accounting for distributional uncertainty is advantageous. Specifically, for most estimators and datasets, the minimum tracking error is achieved at a strictly positive radius of the Gelbrich ambiguity set. In particular, working with Gelbrich ambiguity sets centered at \texttt{Huber-M}, \texttt{Tukey-S}, and \texttt{MCD} estimators leads to substantial and consistent improvements across all tested indices, underscoring once again the benefits of combining techniques from robust statistics with distributionally robust optimization models.

\begin{table}[!ht]
    \centering
    \caption{{Ratio of the tracking error attained by the best Gelbrich portfolio (optimal choice of $\rho$) to that attained by the corresponding non-robust portfolio ($\rho=0$).}}
    \label{tab:error_ratios}
    {\begin{tabular}{l|c|c|c|c}
        Index & \texttt{Ledoit-Wolf} & \texttt{Huber-M} & \texttt{Tukey-S} & \texttt{MCD} \\ \hline \hline
        \texttt{DowJones} & $\frac{0.3455}{0.3455} = 1.0000$ & $\frac{0.3428}{0.3776} \approx 	0.9078$ & $\frac{0.4199}{0.4910} \approx 0.8552$ & $\frac{0.3513}{0.4069} \approx 0.8633$ \\ \hline
        \texttt{FF49Industries} & $\frac{0.0068}{0.0085} = 0.8000$ & $\frac{0.0104}{0.0271} \approx 0.3838$ & $\frac{0.0075}{0.0119} \approx 0.6303$ & $\frac{0.0104}{0.0271} \approx 0.3838$ \\ \hline
        \texttt{FTSE100} & $\frac{0.2896}{0.2896} = 1.0000$ & $\frac{0.2544}{0.2709} \approx 0.9391$ & $\frac{0.2585}{0.2755} \approx 0.9383$ & $\frac{0.2544}{0.2709} \approx 0.9391$ 
    \end{tabular}}
\end{table}

In the next experiment, we benchmark the Gelbrich portfolio corresponding to the best-performing mean-covariance estimator for each dataset against the solutions of three standard distributionally robust optimization models. The first two models employ ambiguity sets of radius $\rho$ centered at the empirical asset return distribution, defined with respect to the Wasserstein distance and the $\chi^2$-divergence, respectively. Tractable reformulations of these optimization problems can be derived from \citep[Theorem~6.3]{ref:esfahani2018data} and \citep[Lemma~1]{duchi2021learning}. The third model relies on the celebrated Delage-Ye ambiguity set~\cite{ref:delage2010distributionally}, defined as
\[
    \mathcal A_\rho(\msa,\covsa) = \left\{ \QQ \in \mathcal M_2: \begin{array}{l} 
    (\EE_{\QQ}[\xi] - \msa)^\top \covsa^{-1} (\EE_{\QQ}[\xi]- \msa) \le \kappa \rho^2 \\
    \EE_{\QQ}[(\xi - \EE_{\QQ}[\xi])(\xi - \EE_{\QQ}[\xi])^\top ] \preceq (1 + (1-\kappa)\rho^2) \covsa
    \end{array}\right\}.
\]
Similar to the Gelbrich ambiguity set, the Delage-Ye ambiguity set imposes constraints only on the first and second moments of the asset return distribution. To ensure a fair comparison, in each experiment we use the same estimators for both ambiguity sets. The Delage-Ye ambiguity set involves two tuning parameters, that is, a size parameter $\rho \in \mathbb{R}_+$ and a weight parameter $\kappa \in [0,1]$, which balances the ambiguity in the mean against that in the covariance matrix. Using standard reformulation techniques from~\cite{ref:delage2010distributionally} together with the $\mathcal{S}$-lemma, the corresponding worst-case risk can be reformulated as a tractable semidefinite program. In all experiments, we set $\kappa = 0.5$. Numerical tests (not reported here) indicate that our results are robust to the choice of $\kappa \in \{0.5, 0.9, 0.99\}$.

Figure~\ref{fig:tracking} reports the average tracking error of the different portfolios over the backtesting interval as a function of~$\rho$. When $\rho$ is optimally chosen for each method, the Gelbrich portfolio consistently achieves lower out-of-sample tracking error than the benchmarks across all three datasets. This finding reaffirms that the Gelbrich approach, which relies solely on the first and second moments while accounting for their uncertainty, can outperform distributionally robust methods that exploit richer distributional information.

\begin{figure}[!tb]
    \centering
    \begin{subfigure}[b]{0.32\columnwidth}
        \centering
        \includegraphics[width=\linewidth]{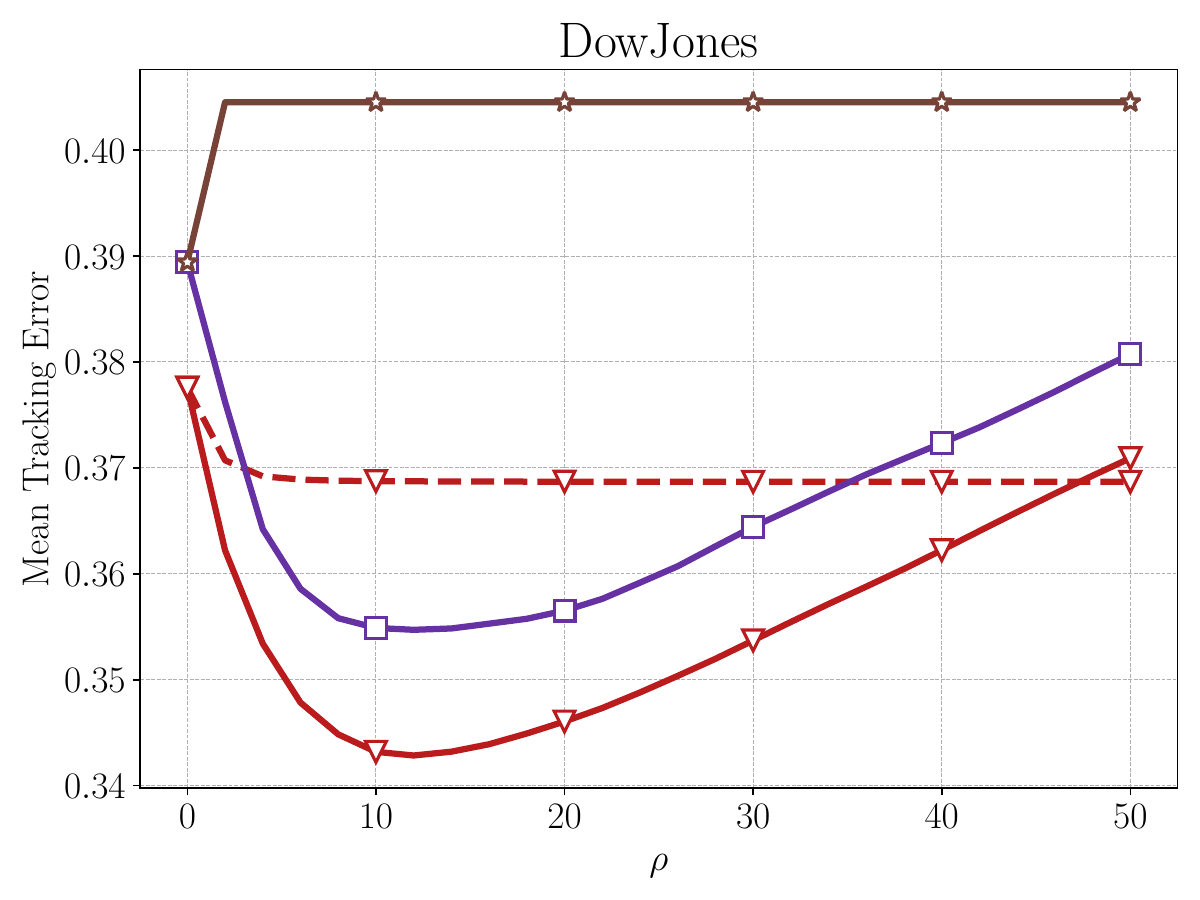}
    \end{subfigure}
    \begin{subfigure}[b]{0.32\columnwidth}
        \centering
        \includegraphics[width=\linewidth]{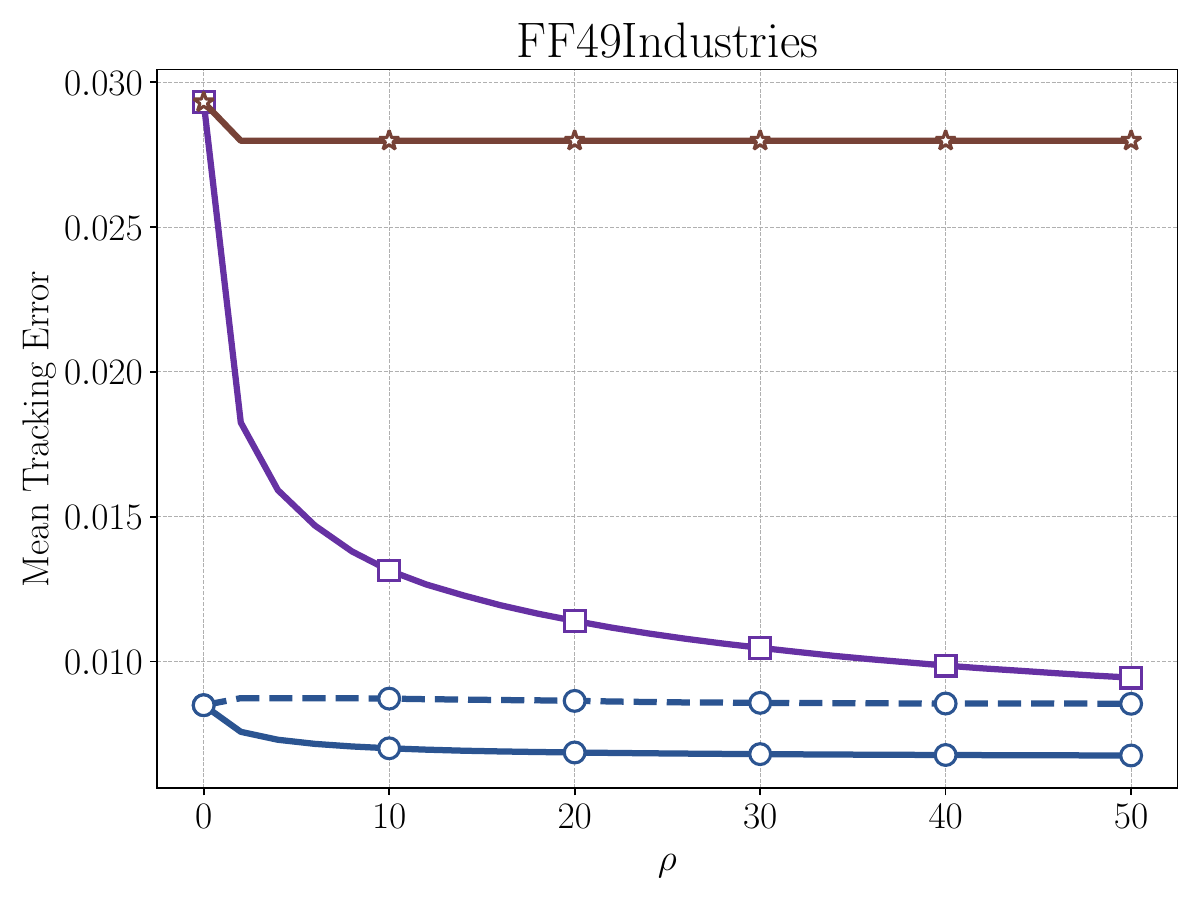}
    \end{subfigure}
    \begin{subfigure}[b]{0.32\columnwidth}
        \centering
        \includegraphics[width=\linewidth]{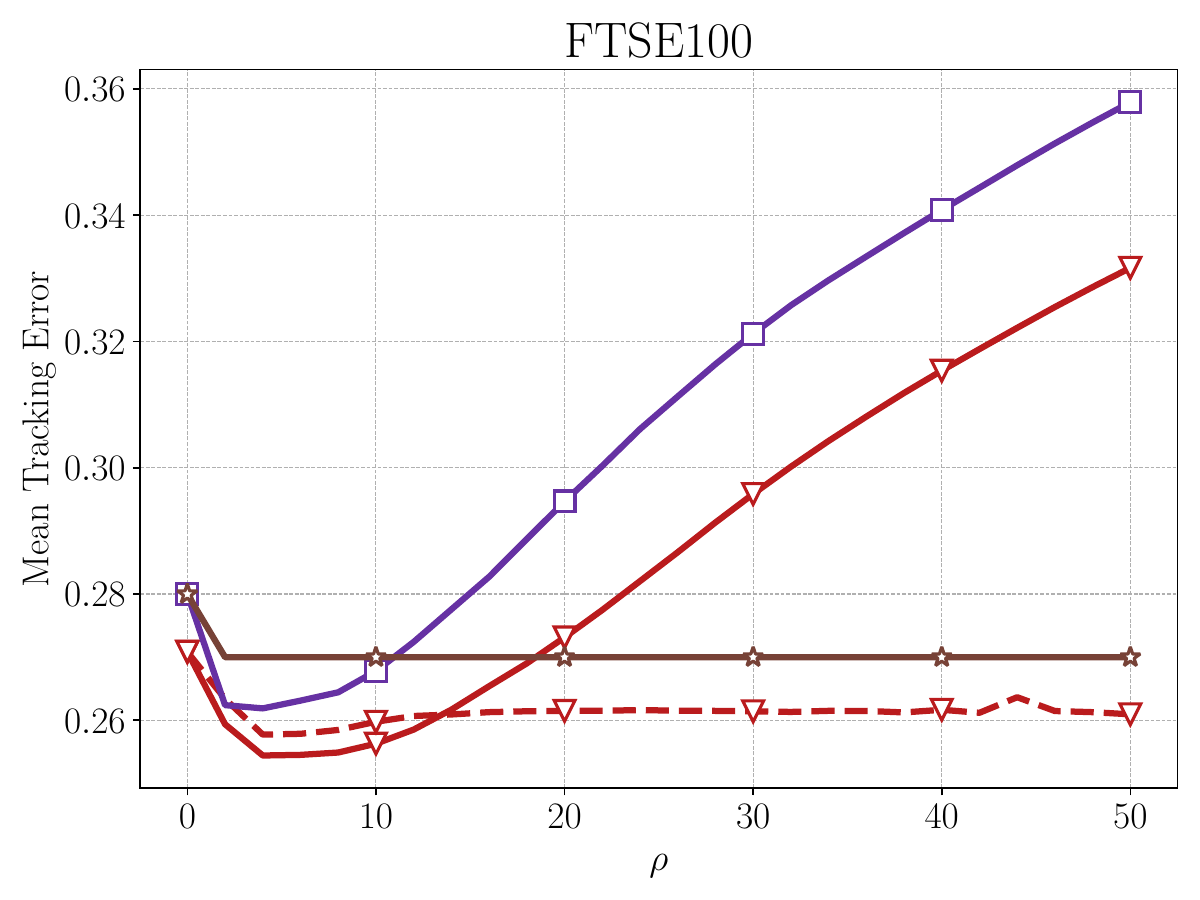}
    \end{subfigure} \\
    \begin{subfigure}[b]{\textwidth}
        \centering
        \includegraphics[width=\linewidth]{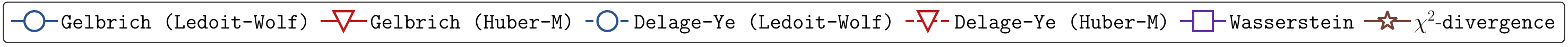}
    \end{subfigure}

    \caption{
        {Out-of-sample tracking error of distributionally robust tracking portfolios averaged over the backtesting interval as a function of~$\rho$ (for better readability, only the best-performing Gelbrich and Delage-Ye portfolios are shown)}.
    }
    \label{fig:tracking}
\end{figure}

In practice, the radius~$\rho$ of the ambiguity set must be estimated from data. To this end, we conduct a fully data-driven experiment in which $\rho$ is chosen for each method via 5-fold time-series cross-validation on the training dataset. Concretely, for each 52-week training window, we use the \texttt{TimeSeriesSplit} module from the \texttt{scikit-learn} library to generate five expanding-window folds. As all portfolio optimization problems are implemented with a \texttt{scikit-learn}-compatible API, this cross-validator can be directly integrated with \texttt{GridSearchCV}. The framework automatically evaluates candidate values $\rho \in \{0, 2, 4, \dots, 20\}$ across the five folds and selects the one that achieves the best average validation performance. The model is then re-fitted on the full 52-week training window using the selected $\rho$ and subsequently evaluated on the following 12-week test period.

By selecting $\rho$ in a data-driven manner at every step, our backtest mimics real investment practice, where model robustness is dynamically adjusted to reflect evolving market conditions in the training data. This approach yields a more reliable estimate of the achievable out-of-sample performance of each strategy. The aggregated mean tracking errors over the full backtesting horizon are reported in Table~\ref{table:backtest}. This experiment provides further evidence that combining robust estimation with robust optimization methods based on the first and second moments can be highly effective.}

\begin{table}[!tb]
    \centering
    \caption{{Out-of-sample tracking error of fully data-driven robust portfolios.}}
    \label{table:backtest}
    {\begin{tabular}{c|c|c|c|c} 
        Estimator & Ambiguity Set & \texttt{DowJones} & \texttt{FF49Industries} & \texttt{FTSE100} \\[1pt] \hline\hline 
        \multirow{2}{*}{\texttt{Ledoit-Wolf}} & \texttt{Gelbrich} & 0.3467 & 0.0069 & 0.2910 \\\cline{2-5} 
        & \texttt{Delage-Ye} & 0.3479 & 0.0085 & 0.2934 \\[1pt] \hline 
        \multirow{2}{*}{\texttt{Huber-M}} & \texttt{Gelbrich} & 0.3459 & 0.0128 & 0.2601 \\\cline{2-5} 
        & \texttt{Delage-Ye} & 0.3732 & 0.0277 & 0.2604 \\[1pt] \hline 
        \multirow{2}{*}{\texttt{Tukey-S}} & \texttt{Gelbrich} & 0.4270 & 0.0075 & 0.2645 \\\cline{2-5} 
        & \texttt{Delage-Ye} & 0.5002 & 0.0162 & 0.2606 \\[1pt] \hline 
        \multirow{2}{*}{\texttt{MCD}} & \texttt{Gelbrich} & 0.3610 & 0.0128 & 0.2589 \\\cline{2-5} 
        & \texttt{Delage-Ye} & 0.4052 & 0.0277 & 0.2595 \\[1pt] \hline 
        \multirow{2}{*}{\begin{tabular}{@{}c@{}} \texttt{Empirical} \\[-1ex] \texttt{Distribution} \end{tabular} } & \texttt{Wasserstein} & 0.3602 & 0.0116 & 0.2662 \\\cline{2-5} 
        & \texttt{$\chi^2$-divergence} & 0.4001 & 0.0282 & 0.2737 \\[1pt]
    \end{tabular}}
\end{table}

\textbf{Acknowledgments.} We thank Erick Delage for valuable comments on an earlier version of this paper. This research was supported by the Swiss National Science Foundation under the NCCR Automation (grant agreement 51NF40\_180545) and under an Early Postdoc.Mobility Fellowship awarded to the second author (grant agreement P2ELP2\_195149).

\bibliographystyle{myabbrvnat}
\bibliography{bibliography}

\newpage

\begin{center}
    \textbf{Appendix: Mean-Covariance Robust Risk Measurement}
\end{center}

\begin{appendix}
\section{Auxiliary Results}
In this section we analyze the uncertainty sets $\U_\rho(\msa, \covsa)$ and $\V_\rho(\msa, \covsa)$. Our results rely on the following semidefinite programming representation of the squared Gelbrich distance.
\begin{lemma}
    \label{lemma:Gelbrich-SDP}
    For any mean-covariance pairs~$(\m_1, \cov_1)$ and $(\m_2, \cov_2)$ in~$\R^n \times \PSD^n$, we have
    \begin{equation} \label{eq:Gelbrich-SDP}
         \Gelbrich^2\big( (\m_1, \cov_1), (\m_2, \cov_2) \big) = 
         \left\{
         \begin{array}{cl}
         \DS \min_{C \in \R^{n \times n}} &\DS \| \mu_1 - \mu_2 \|^2 + \Tr{\cov_1 + \cov_2 - 2C} \\
         \st & \begin{bmatrix} \cov_1 & C \\ C^\top & \cov_2 \end{bmatrix} \succeq 0. 
        \end{array}
        \right.
    \end{equation} 
\end{lemma}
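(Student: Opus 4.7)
The plan is to separate the contribution of the mean from the contribution of the covariances, since the term $\|\mu_1-\mu_2\|^2$ appears identically on both sides and involves no optimization. The heart of the proof is therefore to establish the matrix identity
\[
\max_{C \in \R^{n\times n}}\Big\{\Tr{C} \,:\, \begin{bmatrix} \cov_1 & C \\ C^\top & \cov_2 \end{bmatrix}\succeq 0\Big\} = \Tr{\bigl(\cov_2^{\half}\cov_1\cov_2^{\half}\bigr)^{\half}},
\]
from which the claim follows by plugging this into $\Tr{\cov_1+\cov_2-2C}$ and taking negatives.

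First I would treat the non-degenerate case $\cov_1,\cov_2\succ 0$. Here the standard Schur complement characterization of positive semidefiniteness for $2\times 2$ block matrices shows that the PSD constraint is equivalent to the existence of a contraction $K$ (i.e.\ $\|K\|_{\mathrm{op}}\le 1$) with $C = \cov_1^{\half}K\cov_2^{\half}$. Substituting this parametrization, the objective becomes $\Tr{C} = \Tr{K\cov_2^{\half}\cov_1^{\half}}$. By the trace duality between the operator norm and the nuclear norm (equivalently, von Neumann's trace inequality), $\max_{\|K\|_{\mathrm{op}}\le 1}\Tr{K M}=\|M\|_*$ for any $M\in\R^{n\times n}$, where the nuclear norm equals the trace of $(M^\top M)^{\half}$. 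Applying this to $M=\cov_2^{\half}\cov_1^{\half}$ gives $\|M\|_* = \Tr{(\cov_1^{\half}\cov_2\cov_1^{\half})^{\half}}$, and the trace identity $\Tr{f(AB)}=\Tr{f(BA)}$ for a continuous $f$ with $f(0)=0$ (which follows from the fact that $AB$ and $BA$ share nonzero spectrum) converts this to $\Tr{(\cov_2^{\half}\cov_1\cov_2^{\half})^{\half}}$, matching the Gelbrich formula.

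Second, I would extend to the PSD (possibly singular) case by a continuity argument: replace $\cov_i$ by the positive definite perturbation $\cov_i+\eps I$, apply the identity just proved, and let $\eps\downarrow 0$. The right-hand side $\Tr{(\cov_2^{\half}\cov_1\cov_2^{\half})^{\half}}$ is continuous in $(\cov_1,\cov_2)$ on $\PSD^n\times\PSD^n$ because the matrix square root is continuous on $\PSD^n$. On the left, the feasible set of $C$ is a compact convex set that depends continuously (in the Hausdorff sense) on $(\cov_1,\cov_2)$, so the optimal value of this linear SDP is also continuous; alternatively, one can invoke Berge's maximum theorem on a suitable bounded version of the feasible set (boundedness of $C$ follows from the diagonal blocks via $\|C\|_F^2\le \Tr{\cov_1}\Tr{\cov_2}$).

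The main obstacle I anticipate is the singular case: phrasing the Schur complement characterization cleanly when $\cov_2$ is rank-deficient requires either working with the Moore–Penrose pseudoinverse together with an explicit range-space compatibility condition $C=\cov_2 \cov_2^{\dagger} C$, or deploying the perturbation argument sketched above. The latter is lighter and fits the self-contained spirit of the paper, so that is the route I would follow. Everything else reduces to standard linear algebra and the trace–nuclear-norm duality.
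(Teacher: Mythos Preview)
Your argument is correct. The paper's own proof of this lemma is a one-line citation to \cite{ref:malago2018wasserstein} (together with a list of earlier references), so you have supplied a genuinely self-contained argument where the paper simply imports the result. Your route---parametrizing feasible~$C$ as $C=\cov_1^{\half}K\cov_2^{\half}$ with a contraction~$K$, then invoking the trace duality $\max_{\|K\|_{\mathrm{op}}\le 1}\Tr{KM}=\|M\|_*$---is one of the standard clean derivations of this formula; it buys transparency and makes the optimizer (and hence the optimal transport map used later in Theorem~\ref{theorem:Wasserstein=Gelbrich}) immediately visible. Your perturbation argument for the singular case is sound; in fact the factorization $C=\cov_1^{\half}K\cov_2^{\half}$ with $\|K\|_{\mathrm{op}}\le 1$ characterizes the positive semidefinite block constraint even when $\cov_1,\cov_2$ are only positive semidefinite (this is a classical fact, sometimes attributed to Parrott; see, e.g., Bhatia's \emph{Matrix Analysis}), so you could bypass the continuity step altogether if you prefer a slightly shorter write-up.
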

\begin{proof}[Proof of Lemma~\ref{lemma:Gelbrich-SDP}]
    The claim follows from \cite[Proposition~2]{ref:malago2018wasserstein}. We emphasize that the same result has also been reported in~\cite{ref:cuesta1996lower, ref:dowson1982frechet, ref:givens1984class, ref:knott1984optimal, ref:olkin1982distance}. 
\end{proof}

\subsection{Support Functions of \texorpdfstring{$\U_\rho(\msa, \covsa)$}{U} and \texorpdfstring{$\V_\rho(\msa, \covsa)$}{V}}
We now show that the support functions of the convex sets $\U_\rho(\msa, \covsa)$ and $\V_\rho(\msa, \covsa)$ can be computed efficiently. This will allow us to examine the Gelbrich risk of nonlinear portfolio loss functions in Section~\ref{sec:nonlinearportfolio} and the Gelbrich mean-variance risk in Appendix~\ref{sect:mean:var}. The support functions of~$\mathcal U_{\rho}(\msa, \covsa)$ and~$\V_\rho(\msa, \covsa)$ are also relevant for robust optimization. For example, a robust constraint that requires a concave function $h(\mu,\cov)$ to be non\-positive for all $(\mu,\cov)\in \mathcal U_{\rho}(\msa, \covsa)$ can be reformulated as a convex constraint that involves the convex conjugate of~$-h(\mu, \cov)$ as well as the support function of $\mathcal U_{\rho}(\msa, \covsa)$ \cite[Theorem~2]{ref:ben2015deriving}. Formally, we have
\begin{align*}
h(\mu,\cov) \le 0\quad &\forall (\mu,\cov)\in \mathcal U_{\rho}(\msa, \covsa)\quad \\ 
&\iff \quad \exists \; q \in \R^n, Q \in \Sym^n \text{ such that } (- h)^* (q, Q) + \delta^*_{\mathcal U_{\rho}(\msa, \covsa)}(q, Q) \le 0.
\end{align*}
This constraint is computationally tractable for many commonly used constraint functions because the support function of $\mathcal U_{\rho}(\msa, \covsa)$ can be represented as the optimal value of a tractable conic minimization problem. 

\begin{proposition}
    \label{proposition:support:U}
    For any $\rho \ge 0$, $q\in\R^n$ and $Q\in\Sym^n$, the support function of $\mathcal U_{\rho}(\msa, \covsa)$ satisfies
    \begin{align*}
    \delta^*_{\mathcal U_{\rho}(\msa, \covsa)}(q, Q) =
    \left\{ \begin{array}{cl}
    \inf & q^\top \msa + \tau + \gamma \big(\rho^2 - \Tr{\covsa}\big) + \Tr{Z} \\[1ex]
    \st & \gamma \in \R_+, \; \tau \in\R_+,\; Z \in \mathbb{S}^n_+ \\[1ex]
    & \begin{bmatrix} \gamma I - Q & \gamma \covsa^{\frac{1}{2}} \\ \gamma \covsa^{\frac{1}{2}} & Z \end{bmatrix} \succeq 0, \quad  \Bigg\| \begin{bmatrix} q \\ \tau - \gamma \end{bmatrix} \Bigg\| \leq \tau + \gamma.
    \end{array} \right.
    \end{align*}
\end{proposition}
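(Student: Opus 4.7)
The plan is to reformulate the support function as a conic maximization problem and to derive the claimed representation by Lagrangian duality. By definition,
\[
\delta^*_{\mathcal U_\rho(\msa, \covsa)}(q, Q) = \sup_{(\mu, \cov)\in\mathcal U_\rho(\msa, \covsa)} q^\top \mu + \Tr{Q\cov}.
\]
Invoking Lemma~\ref{lemma:Gelbrich-SDP}, a pair $(\mu,\cov)$ lies in $\mathcal U_\rho(\msa,\covsa)$ if and only if there exists $C \in \R^{n\times n}$ with $\|\mu-\msa\|^2 + \Tr{\cov + \covsa - 2C} \le \rho^2$ and $\begin{bmatrix} \cov & C \\ C^\top & \covsa \end{bmatrix}\succeq 0$ (note that the diagonal-block requirement $\cov\succeq 0$ is then automatic). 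Substituting this characterization lifts the support function to a joint conic program in $(\mu, \cov, C)$.

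Next I would introduce a multiplier $\gamma \in \R_+$ for the radius inequality and form the Lagrangian, which splits additively into a constant $\gamma\rho^2 - \gamma\Tr{\covsa} + q^\top \msa$, a sup over $\mu$ of the concave quadratic $q^\top(\mu-\msa) - \gamma\|\mu-\msa\|^2$, and a sup over $(\cov, C)$ of $\Tr{(Q-\gamma I)\cov} + 2\gamma\Tr{C}$ subject to the Schur-complement constraint. The $\mu$-sup evaluates to $\|q\|^2/(4\gamma)$ when $\gamma>0$ (and $+\infty$ otherwise, unless $q=0$), which I would epigraphically rewrite with a slack $\tau$ via the rotated second-order cone $\|q\|^2 \le 4\gamma\tau$, equivalently $\bigl\|(q;\,\tau-\gamma)\bigr\| \le \tau+\gamma$; this formulation handles the corner case $\gamma=0$ uniformly.

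The remaining SDP is $\sup_{\cov, C}\{\Tr{(Q-\gamma I)\cov} + 2\gamma \Tr{C}\}$ over the $2\times 2$ block PSD cone with fixed $(2,2)$-block $\covsa$. Conic duality yields a dual problem $\min_{D\in\Sym^n}\Tr{D\covsa}$ with constraint $\begin{bmatrix} -(Q-\gamma I) & -\gamma I \\ -\gamma I & D \end{bmatrix}\succeq 0$; conjugating by $\mathrm{diag}(I,-I)$ flips the sign of the off-diagonal blocks, and, under the substitution $Z = \covsa^{1/2} D \covsa^{1/2}$, congruence by $\mathrm{diag}(I,\covsa^{1/2})$ converts the constraint to the displayed form $\begin{bmatrix} \gamma I - Q & \gamma \covsa^{1/2} \\ \gamma \covsa^{1/2} & Z \end{bmatrix}\succeq 0$, while the objective becomes $\Tr{Z}$ and the implicit diagonal-block condition gives $Z\in\PSD^n$. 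Collecting terms then yields exactly the infimum in the statement.

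The main obstacle is justifying strong duality at two points: in Step~2, for the Lagrangian dualization of the Gelbrich ball (a Slater condition is available since $(\msa,\covsa)$ is strictly feasible whenever $\rho>0$; the case $\rho=0$ is handled separately and is immediate), and in Step~3, for the SDP subproblem, where one must avoid a duality gap when $\covsa$ is only positive semidefinite. I would first establish the identity under the auxiliary assumption $\covsa\succ 0$, where the substitution $Z = \covsa^{1/2}D\covsa^{1/2}$ is bijective and Slater holds (choose $D = c I$ with $c$ large), and then extend to singular $\covsa$ via a continuity/perturbation argument by replacing $\covsa$ with $\covsa + \epsilon I$ and letting $\epsilon \downarrow 0$, using continuity of both sides of the claimed identity in the parameters.
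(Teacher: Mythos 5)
Your proposal is correct in substance and reaches the stated formula, but it organizes the duality differently from the paper. The paper linearizes the quadratic term by lifting $\mu\mu^\top$ to an auxiliary variable $M\succeq \mu\mu^\top$, writes the support function as a single semidefinite program in $(\mu,\cov,C,M)$, dualizes it in one shot (strong duality certified by an explicit Slater point for the \emph{dual}, which works for every $\rho\ge 0$), and then eliminates the dual blocks $A_{11}$ and $B$ through Schur-complement arguments with explicit treatment of the corner cases $\gamma=0$ and $\beta=0$; the second-order-cone constraint and the displayed LMI emerge from the two decoupled partial minimizations of the dual. You instead keep the quadratic in $\mu$, dualize only the scalar radius constraint with the multiplier $\gamma$, solve the $\mu$-subproblem in closed form (recovering the same rotated-cone reformulation $\|(q;\tau-\gamma)\|\le\tau+\gamma$), and dualize the inner fixed-$(2,2)$-block SDP separately, which avoids the $M$-lifting and the elimination of $A_{11},B$ but costs you two strong-duality justifications plus a perturbation argument for singular $\covsa$. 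Two spots deserve tightening. First, the Slater point $D=cI$ for the inner dual only exists when $\gamma I-Q\succ 0$; for $\gamma$ with $\gamma I-Q$ singular it fails, so you should instead invoke \emph{primal} strict feasibility of the inner SDP, namely $\begin{bmatrix} I & 0\\ 0 & \covsa\end{bmatrix}\succ 0$, which is available exactly when $\covsa\succ 0$ and closes the gap for all $\gamma\ge 0$ (when $\gamma<\lambda_{\max}(Q)$ both sides are $+\infty$, so nothing is lost). Second, the limiting step $\covsa+\epsilon I\downarrow\covsa$ needs continuity of the left-hand side $\delta^*_{\mathcal U_\rho(\msa,\cdot)}(q,Q)$ in $\covsa$, which is not automatic but can be obtained from Berge's maximum theorem together with the H\"older continuity of the matrix square root already used in the proof of Proposition~\ref{prop:compact:U}; interestingly, the paper's own proof quietly invokes $\covsa\succ 0$ in its $\gamma=0$ corner case, so your perturbation step is a legitimate (arguably more careful) way to cover general $\covsa\in\PSD^n$.
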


\begin{proof}[Proof of Proposition~\ref{proposition:support:U}]
    Evaluating the support function of $\mathcal U_{\rho}(\msa, \covsa)$ at a given point~$(q, Q) \in \R^n \times \Sym^n$ amounts to solving the finite convex program
    \begin{equation*}
    \delta^*_{\mathcal U_{\rho}(\msa, \covsa)}(q, Q) = 
    \left\{ \begin{array}{cl}
    \Sup{\m, \cov \succeq 0} & q^\top \m + \Tr{Q \cov} \\
    \st & \| \m - \msa \|^2 + \Tr{\cov + \covsa - 2 \big( \covsa^\half \cov \covsa^\half \big)^\half} \leq \rho^2.
    \end{array} \right.
    \end{equation*}
    Using the semidefinite programming reformulation for the squared Gelbrich distance on the right hand side of~\eqref{eq:Gelbrich-SDP} and introducing an auxiliary variable~$M$ that equals~$\mu\mu^\top$ at optimality, we then obtain
    \begin{align*}
    \delta^*_{\mathcal U_{\rho}(\msa, \covsa)}(q, Q) 
    =& \left\{
    \begin{array}{cl}
    \Sup{\m, C, \cov \succeq 0} & q^\top \m + \Tr{Q \cov} \\
    \st & \Tr{\mu \mu^\top - 2 \msa \mu^\top + \msa \msa^\top} + \Tr{\cov + \covsa - 2 C} \leq \rho^2 \\[1ex]
    & \begin{bmatrix} \cov & C \\ C^\top & \covsa \end{bmatrix} \succeq 0
    \end{array} 
    \right. \\
    =& \left\{
    \begin{array}{cl}
    \sup & q^\top \m + \Tr{Q \cov} \\[1ex]
    \st & \mu \in \R^n, \, C \in \R^{n \times n}, \, \cov \in \PSD^n, \, M \in \PSD^n \\ [1ex]
    & \Tr{M - 2 \msa \mu^\top} + \| \msa \|^2 + \Tr{\cov + \covsa - 2 C} \leq \rho^2 \\[1ex]
    & \begin{bmatrix} \cov & C \\ C^\top & \covsa \end{bmatrix} \succeq 0, ~ \begin{bmatrix} M & \mu \\ \mu^\top & 1 \end{bmatrix} \succeq 0.
    \end{array} 
    \right.
    \end{align*}
    By strong conic programming duality \cite[Theorem~1.4.2]{ref:ben2001lectures}, the resulting semidefinite program is equivalent to
    \begin{align}
    \label{eq:dual:of:sdp}
    \delta^*_{\mathcal U_{\rho}(\msa, \covsa)}(q, Q) 
    =& \left\{\begin{array}{cl}
    \inf & \gamma \left( \rho^2 - \| \msa \|^2 - \Tr{\covsa} \right) + \Tr{A_{22} \covsa} + \beta \\[1ex]
    \st & \gamma \in \R_+, ~ \begin{bmatrix} A_{11} & \gamma I \\ \gamma I & A_{22} \end{bmatrix} \in \PSD^{2n}, ~ \begin{bmatrix} B & \gamma \msa + \frac{q}{2} \\ \left( \gamma \msa + \frac{q}{2} \right)^\top & \beta \end{bmatrix} \in \PSD^{n+1}\\
    & A_{11} \preceq \gamma I - Q  , ~ B \preceq \gamma I.
    \end{array} \right.
    \end{align}
    Strong duality holds because $\gamma = \max \{ \lambda_{\max}(Q), 0 \} + 2$, $A_{11} = I$, $A_{22} = 2 \gamma^2 I$, $B = I$ and $\beta = \| \gamma \msa + \frac{q}{2} \|^2+1$ represents a Slater point for the dual problem. 
    Next, we simplify the semidefinite program~\eqref{eq:dual:of:sdp} by eliminating the decision variables $A_{11}$ and $B$, each of which appears in two opposing matrix inequalities. If $\gamma > 0$, then~$A_{22}$ has full rank thanks to \cite[Corollary~8.2.2]{ref:bernstein2009matrix}. In this case, we obtain the following equivalences by Schur complementing the two matrix inequalities involving~$A_{11}$.
    \begin{align}
        \label{eq:remove:A}
        \begin{array}{ll}
            \begin{bmatrix} A_{11} & \gamma I \\ \gamma I & A_{22} \end{bmatrix} \succeq 0, ~ A_{11} \preceq \gamma I - Q 
            &\iff
            \gamma^2 A_{22}^{-1} \preceq A_{11} \preceq \gamma I - Q \\
            &\iff
            \begin{bmatrix} \gamma I - Q & \gamma I \\[1ex] \gamma I & A_{22} \end{bmatrix} \succeq 0 
        \end{array}
    \end{align}
    If $\gamma = 0$, on the other hand, we trivially have
    \begin{align*}
        \begin{bmatrix} A_{11} & \gamma I \\ \gamma I & A_{22} \end{bmatrix} \succeq 0, ~ A_{11} \preceq \gamma I - Q 
        \iff
        0 \preceq A_{11} \preceq - Q, ~ A_{22} \succeq 0
        \iff
        \begin{bmatrix} - Q & 0 \\[1ex] 0 & A_{22} \end{bmatrix} \succeq 0.
    \end{align*}
    Therefore, all equivalences in~\eqref{eq:remove:A} hold for any $\gamma \geq 0$. 
    If $\beta > 0$, then Schur complementing the two matrix inequalities involving~$B$ yields
    \begin{align}
        \begin{bmatrix} B & \gamma \msa + \frac{q}{2} \\ \left( \gamma \msa + \frac{q}{2} \right)^\top & \beta \end{bmatrix} \succeq 0  , ~ B \preceq \gamma I
        &\iff 
        \frac{1}{\beta} \left( \gamma \msa + \frac{q}{2} \right) \left( \gamma \msa + \frac{q}{2} \right)^\top \preceq B \preceq \gamma I \notag\\
        &\iff
        \begin{bmatrix} \gamma I & \gamma \msa + \frac{q}{2} \\[1ex] (\gamma \msa + \frac{q}{2})^\top & \beta \end{bmatrix} \succeq 0. \label{eq:remove:B}
    \end{align}
    If $\beta = 0$, on the other hand, we have 
    \begin{align*}
        \begin{bmatrix} B & \gamma \msa + \frac{q}{2} \\ \left( \gamma \msa + \frac{q}{2} \right)^\top & \beta \end{bmatrix} \succeq 0  , ~ B \preceq \gamma I
        &\iff 
        0 \preceq B \preceq \gamma I
        \iff
        \begin{bmatrix} \gamma I & 0 \\[1ex] 0 & 0 \end{bmatrix} \succeq 0,
    \end{align*}
    where the first equivalence holds because $\beta=0$ implies via the first matrix inequality that~$\gamma \msa + \frac{q}{2} = 0$; see~\cite[Corollary~8.2.2]{ref:bernstein2009matrix}. Therefore, all equivalences in~\eqref{eq:remove:B} hold for any $\beta \geq 0$. 
    In summary, we have thus shown that
    \begin{align}
    \label{eq:support:U:schur}
    \delta^*_{\mathcal U_{\rho}(\msa, \covsa)}(q, Q) =
    \left\{\begin{array}{cl}
    \inf\limits_{\gamma, \beta, A_{22}} & \gamma \left( \rho^2 - \| \msa \|^2 - \Tr{\covsa} \right) + \Tr{A_{22} \covsa} + \beta \\[2ex]
    \st & \begin{bmatrix} \gamma I - Q & \gamma I \\[1ex] \gamma I & A_{22} \end{bmatrix} \succeq 0, ~
    \begin{bmatrix} \gamma I & \gamma \msa + \frac{q}{2} \\[1ex] (\gamma \msa + \frac{q}{2})^\top & \beta \end{bmatrix} \succeq 0.
    \end{array} \right.
    \end{align}
    For any fixed $\gamma \ge 0$, problem~\eqref{eq:support:U:schur} decomposes into two separate minimization problems over~$\beta$ and~$A_{22}$, respectively. If~$\gamma>0$, the partial minimization problem over~$\beta$ reduces~to
    \begin{align*}
        &\phantom{=} \inf \left\{ \beta: \begin{bmatrix} \gamma I & \gamma \msa + \frac{q}{2} \\[1ex] (\gamma \msa + \frac{q}{2})^\top & \beta \end{bmatrix} \succeq 0 \right\} \\
        &= \inf \left\{ \beta: \beta - \gamma \| \msa \|^2 - q^\top \msa - \frac{1}{4\gamma} \| q \|^2 \geq 0 \right\} \\
        &= \inf \left\{ \eta + \gamma \| \msa \|^2: \eta \ge q^\top \msa + \tau, \; \Bigg\| \begin{bmatrix} q \\ \tau - \gamma \end{bmatrix} \Bigg\| \leq \tau + \gamma, \; \tau \geq 0 \right\},
    \end{align*}
    where the first equality exploits a standard Schur complement argument, and the second equality follows from the substitution~$\eta\leftarrow \beta -\gamma\|\msa\|^2$ and from introducing an auxiliary variable~$\tau\ge 0$ subject to the hyperbolic constraint~$\tau\geq \|q\|^2/(4\gamma)$. Note also that the first and the last minimization problems in the above expression remain equivalent when $\gamma = 0$. Similarly, if $\gamma > 0$, then the partial minimization problem over~$A_{22}$ reduces to
    \begin{align*}
        &\phantom{=} \inf \left\{ \Tr{A_{22} \covsa}:  \begin{bmatrix} \gamma I - Q & \gamma I \\[1ex] \gamma I & A_{22} \end{bmatrix} \succeq 0 \right\} \\
        &= \inf \left\{ \Tr{Z}:  Z \succeq \covsa^{\half} A_{22} \covsa^{\half}, A_{22} \succeq \gamma^2 (\gamma I - Q)^{-1} \right\} \notag \\
        &= \inf \left\{ \Tr{Z}:  Z \succeq \gamma \covsa^{\half} (\gamma I - Q)^{-1} \gamma \covsa^{\half}  \right\} \notag \\
        &= \inf \left\{ \Tr{Z}:  \begin{bmatrix} \gamma I - Q & \gamma \covsa^{\half} \\[1ex] \gamma \covsa^{\half} & Z \end{bmatrix} \succeq 0 \right\},
    \end{align*}
    where the first and the third equalities follow from Schur complement arguments. As~$\covsa\succ 0$ by assumption, the first and the last minimization problems in the above expression remain equivalent when~$\gamma = 0$. The claim then follows by substituting the reformulated partial minimization problems into~\eqref{eq:support:U:schur} and eliminating~$\eta$.
\end{proof}

The next lemma shows that the unique maximizer of the optimization problem defining the support function of $\U_\rho(\msa, \covsa)$ can be computed in quasi-closed form.

\begin{lemma}
    \label{lemma:extremal:support:U}
    Suppose that $\covsa \succ 0$, $\rho>0$ and either $q \neq 0$ or $\lambda_{\max}(Q) > 0$.
    Then the optimization problem
    \begin{subequations}
    \begin{equation}
        \label{eq:support:U}
        \delta^*_{\mathcal U_{\rho}(\msa, \covsa)}(q, Q) = \Sup{(\m, \cov)\in \mathcal U_{\rho}(\msa, \covsa)}~ q^\top \m + \Tr{Q \cov} 
    \end{equation}
    is uniquely solved by
    \begin{align} \label{eq:extremal:support:U:value}
    \m\opt = \msa  + \frac{q}{2\gamma\opt} \quad \text{and}\quad \cov\opt = \left(I - \frac{Q}{\gamma\opt}\right)^{-1} \covsa \left(I - \frac{Q}{\gamma\opt}\right)^{-1},
    \end{align}
    where $\gamma\opt > \max \{ \lambda_{\max}(Q), 0 \}$ is the unique solution of the nonlinear algebraic equation
    \begin{align}
        \label{eq:extremal:support:U:FOC}
        \frac{\|q\|^2}{4\gamma^2} + \Tr{\covsa \left( I - \gamma (\gamma I - Q)^{-1} \right)^2} = \rho^2.
    \end{align}
    \end{subequations}
    In addition, if $Q \succeq 0$ then we have $\cov\opt \succeq \lambda_{\min}(\covsa) I$. 
\end{lemma}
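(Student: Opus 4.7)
My plan is to derive the claimed maximizer from the KKT conditions of the convex program~\eqref{eq:support:U} and then argue uniqueness by a separate eigenvalue analysis. First, existence of a maximizer follows because $\mathcal U_\rho(\msa,\covsa)$ is compact by Proposition~\ref{prop:compact:U} and the objective is continuous. Since $\rho>0$ and $\covsa\succ 0$, the pair $(\msa,\covsa)$ lies in the relative interior of the Gelbrich ball, so Slater's condition holds for the convex constraint $\|\mu-\msa\|^2+\Tr{\Sigma+\covsa-2(\covsa^{\half}\Sigma\covsa^{\half})^{\half}}\le \rho^2$. Strong duality then yields necessary and sufficient KKT conditions with a single multiplier $\gamma\ge 0$.

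Next, I would form the Lagrangian
\[
L(\mu,\Sigma,\gamma) = q^\top\mu + \Tr{Q\Sigma} - \gamma\!\left(\|\mu-\msa\|^2 + \Tr{\Sigma+\covsa-2(\covsa^{\half}\Sigma\covsa^{\half})^{\half}} - \rho^2\right),
\]
and observe that for $\sup L$ to be finite the multiplier must satisfy $\gamma>\max\{\lambda_{\max}(Q),0\}$: indeed, the gradient in $\mu$ gives $\mu^\star=\msa+q/(2\gamma)$ (well-posed only for $\gamma>0$), while in $\Sigma$ the quadratic-growth behaviour $\Tr{(Q-\gamma I)\Sigma}$ forces $\gamma I-Q\succ 0$. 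The stationarity condition in $\Sigma$, $Q = \gamma\bigl(I-\covsa^{\half}(\covsa^{\half}\Sigma\covsa^{\half})^{-\half}\covsa^{\half}\bigr)$, can be rearranged to $\covsa^{\half}(\covsa^{\half}\Sigma^\star\covsa^{\half})^{-\half}\covsa^{\half} = I-Q/\gamma^\star$, so $(\covsa^{\half}\Sigma^\star\covsa^{\half})^{\half}=\covsa^{\half}(I-Q/\gamma^\star)^{-1}\covsa^{\half}$, and squaring yields $\Sigma^\star = (I-Q/\gamma^\star)^{-1}\covsa(I-Q/\gamma^\star)^{-1}$, matching~\eqref{eq:extremal:support:U:value}.

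Complementary slackness then forces the Gelbrich constraint to be active. Substituting $\mu^\star$ and $\Sigma^\star$ and using the identities $\Tr{\covsa(I-Q/\gamma)^{-2}+\covsa-2\covsa(I-Q/\gamma)^{-1}}=\Tr{\covsa\bigl((I-Q/\gamma)^{-1}-I\bigr)^2}=\Tr{\covsa(I-\gamma(\gamma I-Q)^{-1})^2}$ (which follows from the commutativity of $Q$ and $(I-Q/\gamma)^{-1}$) reproduces exactly~\eqref{eq:extremal:support:U:FOC}. To argue existence and uniqueness of $\gamma^\star$ on $(\max\{\lambda_{\max}(Q),0\},\infty)$, I would diagonalize $Q=V\Lambda V^\top$ with eigenvalues $\lambda_i$ and rewrite the left-hand side of~\eqref{eq:extremal:support:U:FOC} as
\[
f(\gamma) = \frac{\|q\|^2}{4\gamma^2} + \sum_{i=1}^{n}\frac{\lambda_i^2}{(\gamma-\lambda_i)^2}\bigl(V^\top\covsa V\bigr)_{ii}.
\]
Each weight $(V^\top\covsa V)_{ii}>0$ because $\covsa\succ 0$, so $f$ is continuous and strictly decreasing on the stated interval with $f(\gamma)\to 0$ as $\gamma\to\infty$ and $f(\gamma)\to\infty$ at the left endpoint (the blow-up is supplied by the $\|q\|^2$ term when $\lambda_{\max}(Q)\le 0$ and by the term with $\lambda_i=\lambda_{\max}(Q)$ otherwise). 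Hence $\gamma^\star$ exists and is unique by the intermediate value theorem, which in turn pins down $(\mu^\star,\Sigma^\star)$ uniquely via~\eqref{eq:extremal:support:U:value}. Uniqueness of the primal maximizer also follows directly because $(\mu,\Sigma)\mapsto \Tr{(\covsa^{\half}\Sigma\covsa^{\half})^{\half}}$ is strictly concave on $\PSD^n$ (hence the Gelbrich ball is strictly convex), and a nonzero linear functional attains its maximum over a strictly convex compact set at a unique boundary point. Finally, when $Q\succeq 0$, we have $0\prec I-Q/\gamma^\star\preceq I$, so $W^{-1}\succeq I$ for $W=I-Q/\gamma^\star$, and conjugating $\covsa\succeq \lambda_{\min}(\covsa)I$ by the invertible symmetric matrix $W^{-1}$ gives $\Sigma^\star = W^{-1}\covsa W^{-1}\succeq \lambda_{\min}(\covsa)W^{-2}\succeq \lambda_{\min}(\covsa)I$.

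The main obstacle I anticipate is the algebraic simplification of the active-constraint identity to the clean form~\eqref{eq:extremal:support:U:FOC}; everything else is a fairly standard convex-duality and eigenvalue argument. The crucial observation that makes this step go through is that $Q$ commutes with $(\gamma I-Q)^{-1}$, which allows collapsing the three-term trace expression into a single squared-matrix trace.
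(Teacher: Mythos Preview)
Your proposal is correct and follows essentially the same Lagrangian duality argument as the paper. The paper detours through the semidefinite reformulation of Proposition~\ref{proposition:support:U} and then eliminates the auxiliary variables via Schur complements to reach the univariate problem in~$\gamma$, whereas you dualize the Gelbrich constraint directly and obtain the same univariate condition from complementary slackness; in both cases equation~\eqref{eq:extremal:support:U:FOC} is precisely the first-order condition of the resulting problem in~$\gamma$. Your explicit diagonalization to show that the left-hand side of~\eqref{eq:extremal:support:U:FOC} is strictly decreasing is a clean alternative to the paper's argument, which instead establishes that the dual objective is strictly convex with the correct boundary behaviour and then identifies~\eqref{eq:extremal:support:U:FOC} as its stationarity condition. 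One small point worth making explicit in your write-up: the stationarity step in~$\Sigma$ uses differentiability of $\Sigma\mapsto\Tr{(\covsa^{1/2}\Sigma\covsa^{1/2})^{1/2}}$, which holds only for $\Sigma\succ 0$; since the candidate $\Sigma^\star$ you derive is indeed positive definite, it is cleanest to verify feasibility and zero duality gap of $(\mu^\star,\Sigma^\star,\gamma^\star)$ a posteriori (as the paper does) rather than rely on interior stationarity a priori.
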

\begin{proof}[Proof of Lemma~\ref{lemma:extremal:support:U}]
    From the proof of Proposition~\ref{proposition:support:U} we know that problem~\eqref{eq:support:U} is equivalent to the semidefinite program~\eqref{eq:support:U:schur}. We first prove that~$\gamma > 0$ for any 
    $(\gamma, \beta, A_{22})$ feasible in~\eqref{eq:support:U:schur}. To see this, assume for the sake of argument that~$\gamma=0$, in which case the two matrix inequalities in~\eqref{eq:support:U:schur} reduce to
    \begin{align*}
        \begin{bmatrix} - Q & 0 \\[1ex] 0 & A_{22} \end{bmatrix} \succeq 0\quad\text{and}\quad \begin{bmatrix} 0 & {q}/{2} \\[1ex] \left({q}/{2}\right)^\top & \beta \end{bmatrix} \succeq 0.
    \end{align*}
    However, these constraints are not satisfiable by any~$\beta$ and~$A_{22}$ under our assumption that either $q \neq 0$ or $\lambda_{\max}(Q) > 0$.
    Similarly, one can show that~$\gamma > \lambda_{\max}(Q)$ for any~$(\gamma, \beta, A_{22})$ feasible in~\eqref{eq:support:U:schur}. Indeed, we have
    \begin{align*}
        \begin{bmatrix} \gamma I - Q & \gamma I \\ \gamma I & A_{22} \end{bmatrix} \succeq 0, ~ \gamma > 0 \quad\implies\quad \gamma I - Q \succ 0 \quad \implies \quad \gamma > \lambda_{\max}(Q),
    \end{align*}
    where the first implication follows from~\cite[Corollary~8.2.2]{ref:bernstein2009matrix}, which requires $\gamma I - Q$ to have full rank whenever the off-diagonal block $\gamma I$ has full rank. Schur complementing the two matrix inequalities in~\eqref{eq:support:U:schur} yields
    \[
        A_{22}\succeq \gamma^2(\gamma I-Q)^{-1}\quad \text{and}\quad \textstyle \beta\geq \|\gamma \msa+\frac{q}{2}\|^2/\gamma,
    \]
    which is possible because $\gamma > \max \{ \lambda_{\max}(Q), 0 \}$. Using these relations to eliminate~$\beta$ and~$A_{22}$ from~\eqref{eq:support:U:schur} yields
    \begin{align}
        \label{eq:extremal:support:U}
        \begin{array}{ll}
        &\phantom{=} \DS \delta^*_{\mathcal U_{\rho}(\msa, \covsa)}(q, Q) \\
        &= \DS \inf\limits_{\substack{ \gamma > 0 \gamma > \lambda_{\max}(Q) }} q^\top \msa + \frac{\|q\|^2}{4\gamma} + \gamma \big(\rho^2 - \Tr{\covsa} \big) + \gamma^2 \Tr{(\gamma I - Q)^{-1} \covsa}.
        \end{array}
    \end{align}
    In the following we use~$f(\gamma)$ to denote the objective function of problem~\eqref{eq:extremal:support:U}. As $\covsa \succ 0$, $f(\gamma)$ satisfies 
    \[
        f(\gamma) \ge q^\top \msa + \frac{\|q\|^2}{4\gamma} + \gamma \big(\rho^2 - \Tr{\covsa} \big) + \lambda_{\min}(\covsa) \gamma^2 \Tr{(\gamma I - Q)^{-1}}
    \]
    for all $\gamma\ge \max\{\lambda_{\max}(Q), 0\}$.
    Thus, $f(\gamma)$ diverges as $\gamma$ decreases to $\max\{\lambda_{\max}(Q), 0\}$. Similarly, since~$\rho>0$, one readily verifies that~$f(\gamma)$ grows indefinitely as~$\gamma$ increases to infinity. 
    Noting that~$f(\gamma)$ is smooth and strictly convex on its domain, these insights reveal that problem~\eqref{eq:extremal:support:U} has a unique minimizer~$\gamma\opt \in ( \max\{0, \lambda_{\max}(Q)\},\infty)$. This minimizer can be found by solving the problem's first-order optimality condition~$f'(\gamma\opt)=0$, where
    \begin{align*}
        f'(\gamma) &= - \frac{\|q\|^2}{4\gamma^2} + \rho^2 - \Tr{\covsa} + 2 \gamma \Tr{(\gamma I - Q)^{-1} \covsa} - \gamma^2 \Tr{(\gamma I - Q)^{-2} \covsa } \\
        &= \rho^2 - \frac{\|q\|^2}{4\gamma^2} - \Tr{\covsa \big( I - \gamma (\gamma I - Q)^{-1} \big)^2}
    \end{align*}
    denotes the derivative of~$f(\gamma)$. Thus, the algebraic equation~\eqref{eq:extremal:support:U:FOC} represents the (necessary and sufficient) first-order optimality condition of problem~\eqref{eq:extremal:support:U}.
    In the remainder of the proof, we demonstrate that $(\m\opt, \cov\opt)$ as defined in~\eqref{eq:extremal:support:U:value} constitutes a global maximizer of problem~\eqref{eq:support:U}. To see this, note first that~$\cov\opt\succeq 0$ and
    \begin{align*}
    \Gelbrich^2\big((\m\opt, \cov\opt), (\msa, \covsa)\big)^2 = \frac{\|q\|^2}{4(\gamma\opt)^2} + \Tr{\covsa (I - \gamma\opt(\gamma\opt I - Q)^{-1})^2} = \rho^2,
    \end{align*}
    where the first equality follows from~\eqref{eq:extremal:support:U:value} and the definition of the Gelbrich distance, whereas the second equality holds because $\gamma\opt$ solves~\eqref{eq:extremal:support:U:FOC}. Thus, $(\m\opt, \cov\opt)$ is feasible in problem~\eqref{eq:support:U}. Furthermore, the objective function value of $(\m\opt, \cov\opt)$ in~\eqref{eq:support:U} evaluates to
    \begin{align*}
    &\phantom{=} q^\top \m\opt + \Tr{Q \cov\opt} \\
    &= q^\top \msa + \frac{\|q\|^2}{2\gamma\opt} +  (\gamma\opt)^2  \Tr{Q  (\gamma\opt I - Q)^{-1} \covsa (\gamma\opt I - Q)^{-1}} \\
    &= q^\top \msa + \frac{\|q\|^2}{2\gamma\opt} +  (\gamma\opt)^2 \Tr{(Q - \gamma\opt I + \gamma\opt I)  (\gamma\opt I - Q)^{-1} \covsa (\gamma\opt I - Q)^{-1}} \\
    &= q^\top \msa + \frac{\|q\|^2}{2\gamma\opt} - (\gamma\opt)^2 \Tr{\covsa (\gamma\opt I - Q)^{-1}} + (\gamma\opt)^3 \Tr{\covsa (\gamma I - Q)^{-2}} \\
    &= q^\top \msa + \frac{\|q\|^2}{2\gamma\opt} + \gamma\opt\left(\rho^2 - \frac{\|q\|^2}{4(\gamma\opt)^2} - \Tr{\covsa}\right) + (\gamma\opt)^2 \Tr{(\gamma\opt I - Q)^{-1} \covsa} \\
    &= q^\top \msa + \frac{\|q\|^2}{4\gamma\opt} + \gamma\opt\big(\rho^2  - \Tr{\covsa}\big) + (\gamma\opt)^2 \Tr{(\gamma\opt I - Q)^{-1} \covsa} = \delta^*_{\mathcal U_{\rho}(\msa, \covsa)}(q, Q),
    \end{align*}
    where the fourth equality holds because $\gamma\opt > 0$ solves~\eqref{eq:extremal:support:U:FOC}, 
    and the last equality follows from the optimality of $\gamma\opt$ in~\eqref{eq:extremal:support:U}.  Thus, $(\m\opt, \cov\opt)$ is optimal in~\eqref{eq:support:U}. As problem~\eqref{eq:support:U} has a linear objective function and a strictly convex feasible set, the maximizwer $(\m\opt, \cov\opt)$ is unique.
    To complete the proof, note that if~$Q \succeq 0$, then $(I - Q/\gamma\opt)^{-1} \succeq I$ because $\gamma\opt I \succ Q$, and thus it is easy to verify that $\cov\opt \succeq \lambda_{\min}(\covsa) I$.
\end{proof}

Next, we derive a tractable reformulation for the support function of $\V_\rho(\msa, \covsa)$.

\begin{proposition}
    \label{proposition:support:V}
    For any $\rho \in \RR_+$, $q\in\RR^n$ and $Q\in\Sym^n$, the support function of $\mathcal V_{\rho}(\msa, \covsa)$ satisfies
    \begin{align*}
    \delta^*_{\mathcal V_{\rho}(\msa, \covsa)}(q, Q) =
    \left\{ \begin{array}{cl}
    \inf & \gamma \big( \rho^2 - \| \msa \|^2 - \Tr{\covsa} \big) + \Tr{Z} + z \\[2ex]
    \st & \gamma \in \RR_+, \; \begin{bmatrix} \gamma I - Q & \gamma \covsa^\half \\[1ex] \gamma \covsa^\half & Z \end{bmatrix} \succeq 0, \; \begin{bmatrix} \gamma I - Q & \gamma \msa + \frac{q}{2} \\[1ex] (\gamma \msa + \frac{q}{2})^\top & z \end{bmatrix} \succeq 0.
    \end{array} \right.
    \end{align*}
\end{proposition}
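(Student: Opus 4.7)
The plan is to adapt the strategy of Proposition~\ref{proposition:support:U}: recast the support function as an SDP in the primal variables, apply Lagrangian duality, and simplify via Schur complements. Writing
\[
\delta^*_{\mc V_\rho(\msa, \covsa)}(q, Q) = \Sup{(\m, \Sigma) \in \mc U_\rho(\msa, \covsa)} q^\top \m + \Tr{Q(\Sigma + \m\m^\top)}
\]
and setting $M = \Sigma + \m\m^\top$, I would use Lemma~\ref{lemma:Gelbrich-SDP} to encode the squared Gelbrich distance. Crucially, the $\|\m\|^2$ contribution from $\|\m - \msa\|^2$ cancels with $-\Tr{\m\m^\top}$ arising from $\Tr{M-\m\m^\top}$, so that the support function becomes
\[
\left\{\begin{array}{cl}
\sup &  q^\top \m + \Tr{Q M} \\[1ex]
\st &  \m \in \R^n,\ M \in \Sym^n,\ C \in \R^{n\times n}, \\
& \Tr{M} - 2\msa^\top\m + \|\msa\|^2 + \Tr{\covsa} - 2\Tr{C} \le \rho^2,\\
& \begin{bmatrix} M - \m\m^\top & C \\ C^\top & \covsa\end{bmatrix} \succeq 0.
\end{array}\right.
\]
The scalar inequality is linear in $(\m, M, C)$; only the PSD block carries the quadratic $\m\m^\top$ in the primal variable $\m$.

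Next I would form the Lagrangian with a multiplier $\gamma \in \R_+$ for the scalar inequality and an SDP multiplier $\tilde\Lambda = \begin{bmatrix}\tilde\Lambda_{11} & \tilde\Lambda_{12} \\ \tilde\Lambda_{12}^\top & \tilde\Lambda_{22}\end{bmatrix} \succeq 0$ for the block PSD constraint. Expanding $\Tr{\tilde\Lambda_{11}(M-\m\m^\top)} = \Tr{\tilde\Lambda_{11} M} - \m^\top \tilde\Lambda_{11}\m$, the Lagrangian takes the form
\begin{align*}
L = {} & \gamma(\rho^2 - \|\msa\|^2 - \Tr{\covsa}) + \Tr{\tilde\Lambda_{22}\covsa} + (q + 2\gamma\msa)^\top \m - \m^\top \tilde\Lambda_{11}\m\\
& {} + \Tr{(Q - \gamma I + \tilde\Lambda_{11})M} + 2\langle \gamma I + \tilde\Lambda_{12}, C\rangle.
\end{align*}
Finiteness of the inner suprema over $M \in \Sym^n$ and $C \in \R^{n\times n}$ forces $\tilde\Lambda_{11} = \gamma I - Q$ and $\tilde\Lambda_{12} = -\gamma I$. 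The remaining concave quadratic maximization over $\m$ is finite because $\tilde\Lambda \succeq 0$ implies $\gamma I - Q \succeq 0$, and evaluates in closed form to $(\gamma\msa + q/2)^\top(\gamma I - Q)^{-1}(\gamma\msa + q/2)$ whenever $\gamma I - Q \succ 0$. To justify strong duality, I would exhibit a Slater point such as $\m = \msa$, $\Sigma = \covsa + tI$ with $t\in(0, \rho^2/n)$ and $C = \covsa$ when $\rho > 0$ and $\covsa\succ 0$, whose block matrix has Schur complement $tI \succ 0$; the boundary cases $\rho = 0$ or singular $\covsa$ follow by a continuity argument.

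In the final step, the residual dual amounts to minimizing
\[
\gamma(\rho^2 - \|\msa\|^2 - \Tr{\covsa}) + (\gamma\msa + q/2)^\top(\gamma I - Q)^{-1}(\gamma\msa + q/2) + \Tr{\tilde\Lambda_{22}\covsa}
\]
over $\gamma \in \R_+$ and $\tilde\Lambda_{22}$ subject to $\begin{bmatrix}\gamma I - Q & -\gamma I \\ -\gamma I & \tilde\Lambda_{22}\end{bmatrix} \succeq 0$. Introducing epigraph variables $z$ satisfying $z \ge (\gamma\msa + q/2)^\top(\gamma I - Q)^{-1}(\gamma\msa + q/2)$ and $Z$ satisfying $Z \succeq \covsa^{1/2}\tilde\Lambda_{22}\covsa^{1/2}$, and re-expressing each epigraph inequality via Schur complement, replaces the quadratic and trace terms in the objective by $z$ and $\Tr{Z}$, respectively, and produces exactly the two matrix inequalities appearing in the claim. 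I expect the main obstacle to be the book-keeping in the Lagrangian expansion: the $\m\m^\top$ inside the PSD constraint produces the quadratic $-\m^\top\tilde\Lambda_{11}\m$ in the Lagrangian, which is ultimately responsible for the appearance of $\gamma I - Q$ (rather than $\gamma I$, as in Proposition~\ref{proposition:support:U}) in \emph{both} matrix inequalities of the statement; the remaining steps are routine conic-duality manipulations.
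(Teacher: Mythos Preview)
Your approach is correct and arrives at the same reformulation, but it differs from the paper's in one structural choice. The paper first \emph{linearizes} the primal by introducing an auxiliary matrix $U\in\PSD^n$ with $U\succeq\mu\mu^\top$ (equivalently, $\begin{bmatrix}U & \mu\\ \mu^\top & 1\end{bmatrix}\succeq 0$), which absorbs the nonlinear $\mu\mu^\top$ term and turns the primal into a genuine semidefinite program; it then applies standard SDP duality and afterwards shows that the resulting dual variables satisfy $\gamma I-Q\succeq A_{11}\succeq B$ with equality at optimality, allowing $A_{11}$ and $B$ to be eliminated. You instead keep the nonlinear PSD constraint and dualize the convex program directly: the Lagrange multiplier block $\tilde\Lambda_{11}$ hits $-\mu\mu^\top$ to produce the concave quadratic $-\mu^\top(\gamma I-Q)\mu$, which you maximize out analytically. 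This is more direct and gives a one-line explanation for why $\gamma I-Q$ (rather than $\gamma I$, as in Proposition~\ref{proposition:support:U}) appears in \emph{both} LMIs, at the cost of invoking conic Lagrangian duality for a non-affine constraint map rather than plain SDP duality. One practical difference worth noting: the paper certifies strong duality via a Slater point for the \emph{dual} SDP, which works uniformly for all $\covsa\succeq 0$ and $\rho\ge 0$; your primal Slater point requires $\covsa\succ 0$ and $\rho>0$, so the continuity argument you flag for the boundary cases is genuinely needed in your route.
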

\begin{proof}[Proof of Proposition~\ref{proposition:support:V}]
    Evaluating the support function of $\mathcal V_\rho(\msa, \covsa)$ at $(q, Q) \in \RR^n \times \Sym^n$ amounts to solving a maximization problem
    \begin{align*}
    \delta^*_{\mathcal V_{\rho}(\msa, \covsa)}(q, Q) =
    \left\{ \begin{array}{cl}
    \sup & \mu^\top q + \Tr{M Q} \\[1ex]
    \st & \mu \in \RR^n,\; M \in \PSD^n,\; \Gelbrich^2 \big( (\mu, M - \mu \mu^\top), (\msa, \covsa) \big) \leq \rho^2.
    \end{array} \right.
    \end{align*}
    The semidefinite programming representation of the squared Gelbrich distance in~\eqref{eq:Gelbrich-SDP} then yields
    \begin{equation*}
    \delta^*_{\mathcal V_{\rho}(\msa, \covsa)}(q, Q) =
    \left\{ \begin{array}{cl}
    \sup & \mu^\top q + \Tr{M Q} \\[1ex]
    \st & \mu \in \RR^n,\; M \in \PSD^n,\; C \in \RR^{n \times n} \\[1ex]
    & \| \mu - \msa \|^2 + \Tr{(M - \mu \mu^\top) + \covsa - 2 C} \leq \rho^2 \\[1ex]
    & \begin{bmatrix} M - \mu \mu^\top  & C \\ C^\top & \covsa \end{bmatrix} \succeq 0.
    \end{array} \right.
    \end{equation*}
    Note that the first constraint is equivalent to $\Tr{M - 2 \mu \msa^\top - 2 C} \leq \rho^2 - \| \msa \|^2 - \Tr{\covsa}$. Introducing a new decision variable $U \in \PSD^n$ with $U \succeq \mu \mu^\top$ and then applying a Schur complement argument, the above optimization problem can be recast as the following semidefinite program.
    \begin{equation*}
    \begin{array}{cl}
    \sup & \mu^\top q + \Tr{MQ} \\[1ex]
    \st & \mu \in \RR^n,\; M \in \PSD^n,\; U \in \PSD^n,\; C \in \RR^{n \times n} \\[1ex]
    & \Tr{M - 2 \mu \msa^\top - 2 C} \leq \rho^2 - \| \msa \|^2 - \Tr{\covsa} \\[1ex]
    & \begin{bmatrix} M - U  & C \\ C^\top & \covsa \end{bmatrix} \succeq 0 ,\; \begin{bmatrix} U & \mu \\ \mu^\top & 1 \end{bmatrix} \succeq 0
    \end{array}
    \end{equation*}
    By strong conic duality \cite[Theorem~1.4.2]{ref:ben2001lectures}, the above semidefinite program admits the dual
    \begin{align*}
    \begin{array}{cl}
    \DS \inf & \gamma \big( \rho^2 - \| \msa \|^2 - \Tr{\covsa} \big) + \Tr{\covsa A_{22}} + \beta \\ [2ex]
    \st & \gamma \in \RR_+ ,\; \begin{bmatrix} A_{11} & \gamma I \\ \gamma I & A_{22} \end{bmatrix} \succeq 0 ,\; \begin{bmatrix} B & \gamma \msa + \frac{q}{2} \\ (\gamma \msa + \frac{q}{2})^\top & \beta \end{bmatrix} \succeq 0 ,\; \gamma I - Q \succeq A_{11} \succeq B.
    \end{array}
    \end{align*}
    Strong duality holds because $\gamma = \max \{ \lambda_{\max}(Q), 0 \} + 4, A_{11} = 2I, A_{22} = \gamma^2 I, B = I, \beta = \| 2 \gamma \msa + q \|^2$ represents a Slater point for the dual problem. 
    As in the proof of Proposition~\ref{proposition:support:U}, one can eliminate the decision variables $A_{11}$ and $B$ by showing that the matrix inequalities $\gamma I - Q \succeq A_{11} \succeq B$ are binding at optimality. Thus, we can further simplify the dual problem to
    \begin{align*}
    \begin{array}{cl}
    \DS \inf & \gamma \big( \rho^2 - \| \msa \|^2 - \Tr{\covsa} \big) + \Tr{\covsa A_{22}} + \beta \\ [2ex]
    \st & \gamma \in \RR_+, \; \begin{bmatrix} \gamma I - Q & \gamma I \\ \gamma I & A_{22} \end{bmatrix} \in \PSD^{2n} ,\; \begin{bmatrix} \gamma I - Q & \gamma \msa + \frac{q}{2} \\ (\gamma \msa + \frac{q}{2})^\top & \beta \end{bmatrix} \in \PSD^{n+1}.
    \end{array}
    \end{align*}
    The claim then follows by renaming $\covsa^{\frac{1}{2}} A_{22} \covsa^{\frac{1}{2}}$ and $\beta$ as $Z$ and $z$, respectively.
\end{proof}


In analogy to Lemma~\ref{lemma:extremal:support:U}, the next lemma shows that the unique maximizer of the optimization problem defining the support function of $\V_\rho(\msa, \covsa)$ can be computed in quasi-closed form.
\begin{lemma}
    \label{lemma:extremal:support:V}
    Suppose that $\covsa \succ 0$, $\rho>0$ and $\lambda_{\max}(Q) > 0$.
    Then the optimization problem
    \begin{subequations}
    \begin{equation}
        \label{eq:support:V}
        \delta^*_{\mathcal V_{\rho}(\msa, \covsa)}(q, Q) = \Sup{(\m, M)\in \mathcal V_{\rho}(\msa, \covsa)}~ q^\top \m + \Tr{Q M} 
    \end{equation}
    is uniquely solved by
    \begin{align} \label{eq:extremal:support:V:value}
    \m\opt = ( \gamma\opt I - Q)^{-1} \left(\gamma\opt \widehat \mu + \frac{q}{2} \right) \quad \text{and} \quad M \opt = \left( I - \frac{Q}{\gamma\opt}\right)^{-1} \covsa \left( I - \frac{Q}{\gamma\opt}\right)^{-1} + \m\opt {\m\opt}^\top, 
    \end{align}
    where $\gamma\opt > \lambda_{\max}(Q)$ is the unique solution of the nonlinear algebraic equation
    \begin{align}
        \label{eq:extremal:support:V:FOC}
        \| \widehat \mu - (\gamma I -Q)^{-1} (q/2 + \gamma \widehat \mu) \|^2 + \Tr{\covsa \left( I - \gamma (\gamma I - Q)^{-1} \right)^2} =\rho^2.
    \end{align}
    \end{subequations}
\end{lemma}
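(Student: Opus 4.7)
The plan is to adapt the argument of Lemma~\ref{lemma:extremal:support:U} to the dual representation of $\delta^*_{\mathcal V_\rho(\msa, \covsa)}(q,Q)$ derived in Proposition~\ref{proposition:support:V}. First I would show that every feasible $\gamma$ in that dual must satisfy $\gamma > \lambda_{\max}(Q)$. Indeed, $\gamma = 0$ would reduce the first matrix inequality to $\mathrm{diag}(-Q, Z) \succeq 0$ and thus force $Q \preceq 0$, contradicting $\lambda_{\max}(Q) > 0$. Given $\gamma > 0$ and $\covsa \succ 0$, the off-diagonal block $\gamma \covsa^{\half}$ of that same inequality has full rank, so \cite[Corollary~8.2.2]{ref:bernstein2009matrix} forces $\gamma I - Q \succ 0$.

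Next, I would eliminate $Z$ and $z$ by Schur-complementing both matrix inequalities, using $\gamma I - Q \succ 0$. At optimality this yields $Z = \gamma^2 \covsa^{\half}(\gamma I - Q)^{-1}\covsa^{\half}$ and $z = (\gamma \msa + q/2)^\top (\gamma I - Q)^{-1}(\gamma \msa + q/2)$. Substituting back reduces the dual to a smooth scalar minimization of some strictly convex function $g(\gamma)$ over $\gamma \in (\lambda_{\max}(Q), \infty)$. The objective $g$ blows up as $\gamma \downarrow \lambda_{\max}(Q)$ thanks to the contribution of $\gamma^2 \Tr{(\gamma I - Q)^{-1}\covsa}$ together with $\covsa \succ 0$, and diverges as $\gamma \to \infty$ because its leading-order term is $\gamma \rho^2 > 0$; consequently $g$ admits a unique minimizer $\gamma\opt$. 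Writing $K = (\gamma\opt I - Q)^{-1}$, the first-order condition $g'(\gamma\opt) = 0$ can be reorganized, after collecting like terms and using the identity $I - \gamma\opt K = -QK$, into exactly~\eqref{eq:extremal:support:V:FOC}.

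It remains to verify that $(\m\opt, M\opt)$ in~\eqref{eq:extremal:support:V:value} is feasible and optimal for~\eqref{eq:support:V}. Setting $\cov\opt = M\opt - \m\opt {\m\opt}^\top = (\gamma\opt)^2 K \covsa K$, the key identity is $\big(\covsa^{\half}\cov\opt\covsa^{\half}\big)^{\half} = \covsa^{\half}(I - Q/\gamma\opt)^{-1}\covsa^{\half}$, which holds because the right-hand side is the unique positive semidefinite square root of $\big(\covsa^{\half}(I - Q/\gamma\opt)^{-1}\covsa^{\half}\big)^2 = \covsa^{\half}\cov\opt\covsa^{\half}$. Using $I - \gamma\opt K = -QK$ and $\m\opt - \msa = K(Q\msa + q/2)$, the squared Gelbrich distance between $(\m\opt, \cov\opt)$ and $(\msa, \covsa)$ collapses to the left-hand side of~\eqref{eq:extremal:support:V:FOC}, which equals $\rho^2$; hence $(\m\opt, M\opt) \in \mathcal V_\rho(\msa, \covsa)$.

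Finally, a computation parallel to that in Lemma~\ref{lemma:extremal:support:U} shows that $q^\top \m\opt + \Tr{Q M\opt}$ matches the dual minimum $g(\gamma\opt)$. One expands $q^\top \m\opt = q^\top K(\gamma\opt \msa + q/2)$ and $\Tr{Q M\opt} = (\gamma\opt)^2 \Tr{Q K \covsa K} + {\m\opt}^\top Q \m\opt$ via the decomposition $Q = \gamma\opt I - (\gamma\opt I - Q)$, and then uses the FOC to absorb the term $\gamma\opt \rho^2$. Uniqueness of the maximizer follows from the strict convexity of $\mathcal V_\rho(\msa, \covsa)$, which is inherited from the strict convexity of the squared Gelbrich distance exploited in the proof of Proposition~\ref{prop:compact:V}. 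I expect the main obstacle to be this last algebraic identity: while conceptually routine, it requires careful bookkeeping of the cross-terms in $(\gamma\opt \msa + q/2)^\top K Q K (\gamma\opt \msa + q/2)$ and ${\m\opt}^\top Q \m\opt$ before the FOC can be invoked to finish the cancellation.
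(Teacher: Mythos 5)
Your proposal follows essentially the same route as the paper's proof: it reduces the dual representation from Proposition~\ref{proposition:support:V} via Schur complements to a univariate convex problem in $\gamma$, identifies its first-order condition with~\eqref{eq:extremal:support:V:FOC}, obtains existence of the root from the boundary behaviour of the objective, and then verifies feasibility and attainment of the explicit candidate $(\m\opt, M\opt)$ by the same algebraic computation carried out in the paper. The only caveat is your justification of uniqueness of the maximizer: the proof of Proposition~\ref{prop:compact:V} establishes convexity, not strict convexity, of $\mathcal V_{\rho}(\msa,\covsa)$ (it represents the squared Gelbrich distance as a supremum of convex functions, which cannot yield strictness), so that step would need a separate argument---although the paper's own proof likewise leaves the uniqueness of the primal maximizer implicit.
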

\begin{proof}[Proof of Lemma~\ref{lemma:extremal:support:V}]
    Using Schur complement argument, one can further simplifies the result of Proposition~\ref{proposition:support:V} and conclude that the support function of $\V_{\rho}(\msa, \covsa)$ is equivalent to the optimal value of the following univariate optimization problem
    \begin{equation} \label{eq:dual:single}
    \begin{array}{cl}
    \inf & \gamma \big(\rho^2 - \|\msa\|^2 - \Tr{\covsa}\big) + \gamma^2 \Tr{(\gamma I - Q)^{-1}\covsa} + (\gamma\msa + \frac{q}{2})^\top [\gamma I - Q]^{-1} (\gamma \msa + \frac{q}{2})\\
    \st & \gamma \in \RR_+, \;  \gamma I \succ Q.
    \end{array}
    \end{equation}
    Denote momentarily the objective function of~\eqref{eq:dual:single} as $f(\gamma)$. The gradient of $f$ for any feasible solution $\gamma$ satisfies
    \begin{align*}
    \nabla_\gamma f &=\rho^2 - \Tr{\covsa \left( I - \gamma (\gamma I - Q)^{-1} \right)^2} - \Big \| \msa -  (\gamma I -Q)^{-1} \left(\gamma\msa + \frac{q}{2} \right) \Big\|^2.
    \end{align*}
    Thus, if $\gamma\opt$ solves the nonlinear algebraic equation~\eqref{eq:extremal:support:V:FOC}, then $\gamma\opt$ also solves the first-order optimality condition of problem~\eqref{eq:dual:single}. This in turn implies that $\gamma\opt$ is the minimizer of problem~\eqref{eq:dual:single}. 
    
    As we have assumed that $\lambda_{\max}(Q) > 0$, it is easy to verify that as $\gamma$ approaches $\lambda_{\max}(Q)$, $\nabla_\gamma f$ tends to $-\infty$, and as $\gamma$ tends to infinity, $\nabla_\gamma f$ tends to $\rho^2 > 0$. This asserts that there exists a finite value $\gamma\opt$ that solves~~\eqref{eq:extremal:support:V:FOC}. Let $\cov\opt = M\opt - \mu\opt{\mu\opt}^\top$.
    Notice that
    \[
    \Gelbrich\big( (\mu\opt, \cov\opt), (\msa, \covsa) \big)^2 =  \Big \| (\gamma\opt I -Q)^{-1} \left(\gamma\opt \msa + \frac{q}{2} \right) - \msa  \Big \|^2 + \Tr{\covsa \left( I - \gamma\opt (\gamma\opt I - Q)^{-1} \right)^2} = \rho^2,
    \]
    where the first equality follows from substituting the value of $\mu\opt$ and $\cov\opt$ from~\eqref{eq:extremal:support:V:value} into the definition of the Gelbrich distance $\Gelbrich$, and the second equality is because $\gamma\opt$ solves~\eqref{eq:extremal:support:V:FOC}. This implies that $(\mu\opt, \cov\opt) \in \U_{\rho}(\msa, \covsa)$. Next, we show that $(\mu\opt, M\opt)$ will attain the value $\delta^*_{\mathcal V_{\rho}(\msa, \covsa)}(q, Q)$. Observe that
    \begin{subequations}
        \begin{align}
        \Tr{Q \cov\opt}
        =& \Tr{Q (\gamma\opt)^2 (\gamma\opt I - Q)^{-1} \covsa (\gamma\opt I - Q)^{-1}} \label{eq:indef:1} \\
        =&  \Tr{(Q - \gamma\opt I + \gamma\opt I) (\gamma\opt)^2 (\gamma\opt I - Q)^{-1} \covsa (\gamma\opt I - Q)^{-1}} \notag \\
        =& - (\gamma\opt)^2 \Tr{(\gamma\opt I - Q)^{-1} \covsa} + (\gamma\opt)^3 \Tr{(\gamma\opt I - Q)^{-1} \covsa (\gamma\opt I - Q)^{-1}} \notag \\
        =& \gamma\opt \big( \rho^2 - \| \msa - (\gamma\opt I - Q)^{-1}(\gamma\opt\msa + q/2) \|^2 - \Tr{\covsa} + \gamma\opt \Tr{(\gamma\opt I - Q)^{-1} \covsa} \big) \label{eq:indef:2} \\
        =& \gamma\opt \big( \rho^2 - \| \msa - \mu\opt \|^2 - \Tr{\covsa} + \gamma\opt \Tr{(\gamma\opt I - Q)^{-1} \covsa} \big) \label{eq:indef:3} \\
        =& \gamma\opt \big( \rho^2 - \| \msa\|^2 - \Tr{\covsa} \big) + (\gamma\opt)^2 \Tr{(\gamma\opt I - Q)^{-1} \covsa} + 2\gamma\opt \msa^\top \mu\opt - \gamma\opt \| \mu\opt \|^2, \notag
        \end{align}
    \end{subequations}
    where equality~\eqref{eq:indef:1} is from the definition of $\cov\opt$, equality~\eqref{eq:indef:2} follows from the fact that $\gamma\opt$ solves~\eqref{eq:extremal:support:V:FOC} and thus we can write
    \begin{align*}
    &- \gamma\opt \Tr{(\gamma\opt I - Q)^{-1} \covsa} + (\gamma\opt)^2 \Tr{(\gamma\opt I - Q)^{-1} \covsa (\gamma\opt I - Q)^{-1}} \\
    &\hspace{2cm} =
    \rho^2 - \| \msa - (\gamma\opt I - Q)^{-1}(q/2 + \gamma\opt\msa) \|^2 - \Tr{\covsa} + \gamma\opt \Tr{(\gamma\opt I - Q)^{-1} \covsa}.
    \end{align*}
    Finally, equality~\eqref{eq:indef:3} is from the definition of $\mu\opt$. We thus find
    \begin{align*}
    & q^\top \mu\opt +\Tr{Q (\cov\opt + \mu\opt(\mu\opt)^\top)}  \\
    &=\gamma\opt \big( \rho^2 - \| \msa\|^2 - \Tr{\covsa} \big) + (\gamma\opt)^2 \Tr{(\gamma\opt I - Q)^{-1} \covsa} + 2(\gamma\opt \msa + q/2)^\top \mu\opt + (\mu\opt)^\top (Q - \gamma\opt I) \mu\opt \\
    &= \gamma\opt \big( \rho^2 - \| \msa\|^2 - \Tr{\covsa} \big) + (\gamma\opt)^2 \Tr{(\gamma\opt I - Q)^{-1} \covsa} + (\gamma\opt \msa + q/2)^\top (\gamma\opt I - Q)^{-1} (\gamma\opt\msa + q/2)
    \end{align*}
    which equals the optimal value of the dual program~\eqref{eq:dual:single} because $\gamma\opt$ is also the minimizer of problem~\eqref{eq:dual:single}. This concludes the proof.
\end{proof}


\section{Proofs of Section~\ref{sec:properties}}

We give a short proof of Theorem~\ref{theorem:gelbrich} in order to keep this paper self-contained.

\begin{proof}[Proof of Theorem~\ref{theorem:gelbrich}]
    By the definition of the 2-Wasserstein distance, we have
    \begin{align*}
     \Wass^2(\QQ_1, \QQ_2) = &\DS \min_{\pi \in \Pi(\QQ_1, \QQ_2)}~\DS \int_{\R^n \times \R^n} \| \xi_1 - \xi_2 \|^2 \pi(\diff \xi_1, \diff \xi_2) \\[1em]
     =& \left\{
     \begin{array}{cl}
     \DS \min &\DS \| \mu_1 - \mu_2 \|^2 + \Tr{\cov_1 + \cov_2 - 2C} \\[1ex]
     \st &  \pi \in \Pi(\QQ_1, \QQ_2),\; C \in \R^{n \times n} \\[1ex]
     &\DS \int_{\R^n \times \R^n} \begin{bmatrix} \xi_1 \\ \xi_2 \end{bmatrix} \begin{bmatrix} \xi_1 \\ \xi_2 \end{bmatrix}^\top \pi(\diff \xi_1, \diff \xi_2) = \begin{bmatrix} \cov_1 & C \\ C^\top & \cov_2 \end{bmatrix},\quad  \begin{bmatrix} \cov_1 & C \\ C^\top & \cov_2 \end{bmatrix}\succeq 0,
    \end{array} \right. 
    \end{align*}
    where the second equality follows from the observations that $C$ is uniquely determined by $\pi$ (thanks to the equality constraint in the last line) and that the conic constraint is redundant (because the second-order momement matrix of $\pi$ is always positive semidefinite). 
    Relaxing the last optimization problem by removing all constraints that involve the original decision variable $\pi$, which in turn allows us to remove $\pi$ itself, we find
    \begin{equation*}
         \Wass^2(\QQ_1, \QQ_2) \geq 
         \left\{
         \begin{array}{cl}
         \DS \min_{C \in \R^{n \times n}} &\DS \| \mu_1 - \mu_2 \|^2 + \Tr{\cov_1 + \cov_2 - 2C} \\
         \st & \begin{bmatrix} \cov_1 & C \\ C^\top & \cov_2 \end{bmatrix} \succeq 0. 
        \end{array}
        \right.
    \end{equation*}
    By \cite[Proposition~2]{ref:malago2018wasserstein}, the above semidefinite program can be solved analytically, and its optimal value is given by $\Gelbrich^2( (\m_1, \cov_1), (\m_2, \cov_2))$. We emphasize that the same analytical formula has also been reported in~\cite{ref:cuesta1996lower, ref:dowson1982frechet, ref:givens1984class, ref:knott1984optimal, ref:olkin1982distance}. The claim then follows by taking square roots on both sides of~\eqref{eq:Gelbrich-SDP}. 
\end{proof}

Our proof of Theorem~\ref{theorem:Wasserstein=Gelbrich} relies on the following preparatory lemma.

\begin{lemma}[{2-Wasserstein distances of perfectly correlated distributions \cite[Theorem~2.13]{ref:cuestaalbertos1993optimal}}]
    \label{lemma:affine-coupling}
    If $\QQ_1\in\mc M_2$ has mean~0 and covariance matrix~$\cov_1 \in \PSD^n$, and if $\QQ_2=\QQ_1\circ g^{-1}$ is the pushforward distribution of $\QQ_1$ under the positive semidefinite linear transformation $g(\xi)=A\xi$ with $A \in \PSD^n$, then
    \begin{equation}
    \label{eq:affine-wasserstein}
    \Wass(\QQ_1, \QQ_2) = \sqrt{\EE_{\QQ_1} \left[ \| \xi - A \xi \|^2 \right] } = \sqrt{\Tr{\cov_1 + A \cov_1 A - 2 \cov_1^{\half} A \cov_1^{\half} }} .
    \end{equation}
\end{lemma}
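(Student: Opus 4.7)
The plan is to sandwich $\Wass(\QQ_1,\QQ_2)^2$ between matching upper and lower bounds. For the upper bound, I would exhibit an explicit admissible coupling. Specifically, let $\xi\sim\QQ_1$ and consider the joint law $\pi$ of the pair $(\xi, A\xi)$ on $\R^n\times\R^n$. The first marginal of $\pi$ is $\QQ_1$ by construction, and the second marginal is the pushforward $\QQ_1\circ g^{-1}=\QQ_2$, so $\pi\in\Pi(\QQ_1,\QQ_2)$. Using this coupling in Definition~\ref{definition:wasserstein} yields $\Wass(\QQ_1,\QQ_2)^2\le \EE_{\QQ_1}[\|\xi-A\xi\|^2]$.

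Next, I would rewrite this expectation in closed form. Since $\EE_{\QQ_1}[\xi]=0$ and $A$ is symmetric, expanding the square gives
\[
\EE_{\QQ_1}[\|\xi-A\xi\|^2]=\EE_{\QQ_1}[\xi^\top(I-A)^2\xi]=\Tr{(I-A)^2\cov_1}=\Tr{\cov_1+A\cov_1 A-2A\cov_1}.
\]
Applying the cyclic property of the trace to the cross term yields $\Tr{A\cov_1}=\Tr{\cov_1^{\half}A\cov_1^{\half}}$, so the upper bound matches the quantity claimed on the right-hand side of~\eqref{eq:affine-wasserstein}.

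For the lower bound, I would invoke the Gelbrich bound of Theorem~\ref{theorem:gelbrich}. The mean of $\QQ_2$ equals $A\cdot 0=0$ and its covariance is $A\cov_1A^\top=A\cov_1 A$ (using $A^\top=A$), hence
\[
\Wass(\QQ_1,\QQ_2)^2\ge \Gelbrich^2\big((0,\cov_1),(0,A\cov_1 A)\big)=\Tr{\cov_1+A\cov_1 A-2(\cov_1^{\half}A\cov_1 A\cov_1^{\half})^{\half}}.
\]
The small computational hurdle (and the only nontrivial step) is to simplify the matrix square root. Observe that $\cov_1^{\half}A\cov_1^{\half}$ is symmetric and, for any $x\in\R^n$, satisfies $x^\top\cov_1^{\half}A\cov_1^{\half}x=(\cov_1^{\half}x)^\top A(\cov_1^{\half}x)\ge 0$ because $A\in\PSD^n$; hence it is positive semidefinite. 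Squaring gives $(\cov_1^{\half}A\cov_1^{\half})^2=\cov_1^{\half}A\cov_1 A\cov_1^{\half}$, and by uniqueness of the PSD square root on $\PSD^n$ we conclude $(\cov_1^{\half}A\cov_1 A\cov_1^{\half})^{\half}=\cov_1^{\half}A\cov_1^{\half}$.

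Substituting this identity into the Gelbrich lower bound reproduces exactly the right-hand side of~\eqref{eq:affine-wasserstein}, matching the upper bound and establishing both equalities simultaneously. The main obstacle I anticipate is purely notational/algebraic, namely justifying the square-root identity above; everything else is a direct computation with traces combined with the already-available Theorem~\ref{theorem:gelbrich}.
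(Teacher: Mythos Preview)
Your proof is correct. The paper does not actually prove this lemma; it simply cites it as Theorem~2.13 of Cuesta-Albertos et al.\ and uses it as a black box in the proof of Theorem~\ref{theorem:Wasserstein=Gelbrich}. Your argument therefore differs from the paper's treatment in that you give a self-contained derivation.

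Your sandwich strategy---upper bound via the explicit coupling $(\xi,A\xi)$, lower bound via the Gelbrich bound of Theorem~\ref{theorem:gelbrich}, plus the square-root identity $(\cov_1^{\half}A\cov_1 A\cov_1^{\half})^{\half}=\cov_1^{\half}A\cov_1^{\half}$---is elementary and fits naturally within the paper's framework, since Theorem~\ref{theorem:gelbrich} is proved independently via an SDP relaxation and does not rely on the lemma. In effect, your proof removes the external dependence on Cuesta-Albertos et al.\ and makes the chain Theorem~\ref{theorem:gelbrich} $\Rightarrow$ Lemma~\ref{lemma:affine-coupling} $\Rightarrow$ Theorem~\ref{theorem:Wasserstein=Gelbrich} fully self-contained. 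The cited result in Cuesta-Albertos et al.\ is more general (it characterizes optimal couplings for monotone maps under broader hypotheses), but for the purposes of this paper your direct argument suffices and is arguably cleaner. One minor remark: when invoking the Gelbrich formula you implicitly use the symmetry $\Tr{(\cov_2^{\half}\cov_1\cov_2^{\half})^{\half}}=\Tr{(\cov_1^{\half}\cov_2\cov_1^{\half})^{\half}}$, which holds because $M^\top M$ and $MM^\top$ share the same spectrum; this is standard but worth a one-line mention.
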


Lemma~\ref{lemma:affine-coupling} implies that if $\xi$ follows a distribution $\QQ_1$ with vanishing mean and if $\QQ_2$ is defined as the distribution of $A\xi$ for some $A\in\PSD^n$, then the 2-Wasserstein distance between the perfectly correlated distributions $\QQ_1$ and $\QQ_2$ is given by~\eqref{eq:affine-wasserstein}. We can now use this (nontrivial) result to prove Theorem~\ref{theorem:Wasserstein=Gelbrich}.

\begin{proof}[Proof of Theorem~\ref{theorem:Wasserstein=Gelbrich}]
    By assumption, we have $\QQ_2=\QQ_1\circ f^{-1}$, where $f(\xi)=A\xi+b$ is an affine transformation with parameters $A\in\PSD^n$ and $b\in\R^n$. In the following, it will be more convenient to re-express this affine transformation as $f(\xi) = A (\xi-\m_1)+b'$, which involves the auxiliary parameter $b'=A\m_1+b$. Thus, we have 
    \[\mu_2 = \EE_{\QQ_2}[\xi] = \EE_{\QQ_1}[f(\xi)] = b',\]
    where the three equalities follow from the definition of $\m_2$, the integration formula for pushforward distributions and the definitions of $\m_1$ and $f$, respectively. Similarly, we find
    \[
    \cov_2 = \EE_{\QQ_2}[(\xi - \mu_2) (\xi - \mu_2)^\top] = \EE_{\QQ_1}[(f(\xi) - b') (f(\xi)- b')^\top ] = A \cov_1 A,
    \]
    where the second equality exploits our earlier insight that $\mu_2=b'$. Multiplying the above expression from both sides with $\cov_1^\frac{1}{2}$ yields the quadratic equation $ \cov_1^{\half} \cov_2 \cov_1^{\half} = (\cov_1^{\half} A \cov_1^{\half})^2$. As $\cov_1\succ 0$ by assumption, this equation is uniquely solved by $A = \cov_1^{-\half}( \cov_1^{\half} \cov_2 \cov_1^{\half} )^{\half} \cov_1^{-\half}$. This confirms that the affine function $f$ is uniquely determined by the first- and second-order moments of $\QQ_1$ and $\QQ_2$ and that its parameters are given by~\eqref{eq:optimal-affine}.
    
    Next, we define two distributions $\overline\QQ_1,\overline \QQ_2\in\mc M_2$ through the relations $\overline\QQ_1[\xi\in B]=\QQ_1[(\xi+\m_1)\in B]$ and $\overline\QQ_2[\xi\in B]=\QQ_2[(\xi+\m_2)\in B]$ for all Borel sets $B\in \mc B(\R^n)$. Thus, $\overline \QQ_1$ and $\overline \QQ_2$ are obtained by shifting $\QQ_1$ and $\QQ_2$ so that their mean vectors vanish. By construction, we then have $\overline\QQ_2=\overline \QQ_1\circ g^{-1}$ where $g(\xi)=A\xi$. By the definition of the 2-Wasserstein distance and the shifted distributions $\overline \QQ_1$ and $\overline \QQ_2$, we further have
    \begin{equation} \label{eq:wass-mu}
        \Wass^2(\QQ_1, \QQ_2) = \| \mu_1 - \mu_2 \|^2 + \Wass^2(\overline \QQ_1, \overline \QQ_2).
    \end{equation}
    Finally, as $\overline\QQ_2=\overline \QQ_1\circ g^{-1}$, we may use Lemma~\ref{lemma:affine-coupling} to conclude that
    \begin{align*}
        \Wass^2 (\overline \QQ_1, \overline \QQ_2) &= {\Tr{\cov_1 + A \cov_1 A - 2 \cov_1^{\half} A \cov_1^{\half} }} \\
        &= \Tr{\cov_1 + \Big( \cov_1^{-\half} \big( \cov_1^{\half} \cov_2 \cov_1^{\half} \big)^{\half} \cov_1^{-\half} \Big) \cov_1 \Big( \cov_1^{-\half} \big( \cov_1^{\half} \cov_2 \cov_1^{\half} \big)^{\half} \cov_1^{-\half} \Big)} \\
        & \quad - 2 \Tr{ \cov_1^{\half} \Big( \cov_1^{-\half} \big( \cov_1^{\half} \cov_2 \cov_1^{\half} \big)^{\half} \cov_1^{-\half} \Big) \cov_1^{\half} } \\
        &= \Tr{\cov_1 + \cov_2 - 2 \big( \cov_1^{\half} \cov_2 \cov_1^{\half} \big)^{\half} },
    \end{align*}
    where the second equality uses the expression for~$A$ in~\eqref{eq:optimal-affine}. The claim then follows by substituting the above expression into~\eqref{eq:wass-mu} and taking square roots on both sides.
\end{proof}

\begin{proof}[Proof of Theorem~\ref{theorem:finite-sample}]
    By~\cite[Theorem~1]{ref:bhatia2018bures}, the Gelbrich distance satisfies
    \begin{align*}
    \Gelbrich \big( (\msa_N, \covsa_N), (\mu, \cov) \big) &\leq \sqrt{\| \msa_N - \mu \|^2 + \| \covsa_N^{\frac{1}{2}} - \cov^{\frac{1}{2}} \|_F^2} \\
    &\leq \| \msa_N - \mu \| + \| \covsa_N^{\frac{1}{2}} - \cov^{\frac{1}{2}} \|_F \\
    &\leq \| \msa_N - \mu \| + \frac{1}{ \lambda_{\min}(\covsa_N) + \lambda_{\min}(\cov) } \, \| \covsa_N - \cov \|_F \\
    &\leq \| \msa_N - \mu \| + \frac{1}{\lambda_{\min}(\cov)} \, \| \widehat M_N - M + \mu \mu^\top - \msa_N \msa_N^\top \|_F \\
    &\leq \| \msa_N - \mu \| + \frac{1}{\lambda_{\min}(\cov)} \| \msa_N \msa_N^\top - \mu \mu^\top \|_F + \frac{\sqrt{n}}{\lambda_{\min}(\cov)} \, \| \widehat M_N - M \| ,
    \end{align*}
    where the second inequality holds because $\sqrt{a^2+b^2} \leq a + b$ for all $a,b\ge 0$, and the third inequality follows from~\cite[Equation~(1.2)]{ref:schmitt1992perturbation}. The last inequality exploits the triangle inequality and the observation that $ \| A \|_F \leq\sqrt{ n} \| A \| $ for all $A \in \mathbb S^n$. Note that all divisions by~$\lambda_{\min}(\cov)$ are well-defined because $\cov\succ 0$ by assumption. An elementary calculation further shows that
    \begin{align*}
    \| \msa_N \msa_N^\top - \mu \mu^\top \|_F 
    &= \| \msa_N (\msa_N - \mu)^\top + (\msa_N - \mu) \mu^\top \|_F \\
    &\leq \| \msa_N (\msa_N - \mu)^\top \|_F + \| (\msa_N - \mu) \mu^\top \|_F \\
    &\leq  \| \msa_N \| \cdot \| \msa_N - \mu \| + \| \msa_N - \mu \| \cdot \| \mu \| \\
    &= \| \msa_N - \mu \| \, (\| \msa_N \| + \| \mu \|) \\
    &\leq 2 \| \mu \| \cdot \| \msa_N - \mu \| + \| \msa_N - \mu \|^2.
    \end{align*}
    Next, set $K=1 + 2 \| \mu \| / \lambda_{\min}(\cov)$, and introduce an auxiliary function $f_\mu:\R_+\rightarrow\R_+$ defined through $f_\mu(x)= Kx+x^2/ \lambda_{\min}(\cov)$ for all $x\in\R_+$.
    Similarly, introduce an auxiliary function $f_M:\R_+\rightarrow\R_+$ defined through $f_M(x)= \sqrt{n}x/\lambda_{\min}(\cov)$ for all $x\in\R_+$. The above estimates imply that
    \begin{equation*} 
    \Gelbrich \big( (\msa_N, \covsa_N), (\mu, \cov) \big) 
    \leq  f_\mu\big( \| \msa_N - \mu \| \big) + f_M\big(\| \widehat M_N - M \|\big). 
    \end{equation*}
    For any~$\eta_\mu,\eta_M\in(0,1]$ and for any $\rho\geq f_\mu(\rho_\mu(\eta_\mu))+f_M(\rho_M(\eta_M))
    $, we thus have
    \begin{align}
    & \phantom{\geq} \PP^N \left[ \Gelbrich \big( (\msa, \covsa), (\mu, \cov) \big) \leq \rho \right] \nonumber \\
    & \geq \PP^N \left[ f_\mu\big( \| \msa_N - \mu \| \big) + f_M\big( \| \widehat M_N - M \|\big) \leq \rho \right] \nonumber \\
    & \geq \PP^N \left[ f_\mu\big( \| \msa_N - \mu \| \big) \leq f_\mu(\rho_\mu(\eta_\mu)) ~\wedge ~f_M\big(\| \widehat M_N - M \|\big) \leq f_M(\rho_M(\eta_M)) \right] \label{eq:Gelbrich-guarantee} \\
    & = \PP^N \left[ \| \msa_N - \mu \| \leq \rho_\mu(\eta_\mu) ~\wedge ~ \| \widehat M_N - M \| \leq \rho_M(\eta_M) \right] \geq 1 - \eta_\mu - \eta_M, \nonumber
    \end{align}
    where the equality holds because~$f_\mu$ and~$f_M$ are strictly increasing on their domains, and the last inequality follows from the reverse union bound. The claim thus follows if we define $\rho(\eta_\mu,\eta_M)= f_\mu(\rho_\mu(\eta_\mu))+f_M(\rho_M(\eta_M))$, which equals $c_1 \rho_{\mu}(\eta_\mu) + c_2 \rho_{\mu}(\eta_\mu)^2 + c_3 \rho_M(\eta_M)$ for $c_1=1+2\|\mu\|/\lambda_{\min}(\cov)$, $c_2=1/\lambda_{\min}(\cov)$ and $c_3=\sqrt{n}/\lambda_{\min}(\cov)$. Note that these constants are indeed strictly positive.
\end{proof}    

The proof of Corollary~\ref{corollary:finite-sample:SAA} requires two preparatory lemmas.
    
\begin{lemma}[Concentration inequality for the sample mean]
    \label{lemma:sample-mean}
    Suppose that $\PP$ is sub-Gaussian with variance proxy $\sigma^2$, mean~$\mu$ and covariance matrix~$\cov\succ 0$, and denote by $\msa_N$ the sample mean corresponding to~$N$ independent points sampled from~$\PP$. Then, there exists a positive constant~$C \leq \sigma / \sqrt{\| \cov \|}$ such that $\PP^N [ \| \msa_N - \mu \| \leq \rho_\mu(\eta)  ] \geq 1 - \eta$ for any significance level~$\eta \in (0,1]$, where
    \begin{equation*}
    \rho_\mu(\eta) = C \left( \sqrt{\frac{\Tr{\cov}}{N}} + \sqrt{ \frac{2 \, \| \cov \| \log \left( 1 / \eta \right)}{N} } \right).
    \end{equation*}
\end{lemma}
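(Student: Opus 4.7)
The plan is to combine the elementary moment identity $\EE_{\PP^N}[\|\msa_N - \mu\|^2] = \Tr{\cov}/N$ with a sub-Gaussian concentration inequality for the Euclidean norm of the centered sample mean. The argument has three steps: establish the sub-Gaussianity of the averaged centered vector, bound its expected norm via the moment identity, and convert a Lipschitz-type concentration bound into the Borell-TIS-shaped estimate claimed in the lemma.

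First I would verify that $\msa_N - \mu$ is sub-Gaussian with variance proxy $\sigma^2/N$: by independence of the samples and the MGF condition imposed by Definition~\ref{definition:sub-Gaussian} on each $\xi_i - \mu$,
\[
\EE_{\PP^N}\!\left[ \exp\!\left( z^\top (\msa_N - \mu) \right) \right] = \prod_{i=1}^N \EE_{\PP}\!\left[ \exp\!\left( \tfrac{1}{N} z^\top (\xi_i - \mu) \right) \right] \leq \exp\!\left( \tfrac{\sigma^2}{2N} \| z \|^2 \right)
\]
for every $z \in \R^n$. A direct computation using independence yields $\EE_{\PP^N}[\|\msa_N - \mu\|^2] = \Tr{\cov}/N$, and Jensen's inequality then gives $\EE_{\PP^N}[\|\msa_N - \mu\|] \leq \sqrt{\Tr{\cov}/N}$. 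Because the Euclidean norm is $1$-Lipschitz, I would next invoke a sub-Gaussian concentration inequality for Lipschitz functions---for instance, the Hsu-Kakade-Zhang tail bound applied to $(\msa_N - \mu)^\top I (\msa_N - \mu)$, or a Talagrand-type Lipschitz concentration for sub-Gaussian vectors---to obtain
\[
\PP^N\!\left[ \|\msa_N - \mu\| > \sqrt{\Tr{\cov}/N} + t \right] \leq \exp\!\left( -\frac{Nt^2}{2\sigma^2} \right)
\]
for every $t > 0$. Setting $t = \sigma\sqrt{2\log(1/\eta)/N}$ produces the high-probability estimate $\|\msa_N - \mu\| \leq \sqrt{\Tr{\cov}/N} + \sigma\sqrt{2\log(1/\eta)/N}$ with probability at least $1-\eta$.

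To finish, I would recast this bound in the form claimed by the lemma. The sub-Gaussian hypothesis forces $\cov \preceq \sigma^2 I$ and hence $\sigma \geq \sqrt{\|\cov\|}$, so $C := \sigma/\sqrt{\|\cov\|} \geq 1$ meets the constraint $C \leq \sigma/\sqrt{\|\cov\|}$ with equality. With this choice, $C \sqrt{\Tr{\cov}/N} \geq \sqrt{\Tr{\cov}/N}$ and $C\sqrt{2\|\cov\|\log(1/\eta)/N} = \sigma\sqrt{2\log(1/\eta)/N}$, so $\rho_\mu(\eta)$ dominates the estimate of the previous paragraph and the desired tail bound follows. The main obstacle is the concentration step: applying Hsu-Kakade-Zhang directly to $\|\msa_N - \mu\|^2$ with $A = I$ introduces a dimension factor $\sqrt{n}$ in place of the moment-based factor $\sqrt{\Tr{\cov}/N}$, so it is essential to decouple the expectation (which is controlled by the moment identity $\EE_{\PP^N}[\|\msa_N - \mu\|^2] = \Tr{\cov}/N$) from the deviation (which is controlled by the sub-Gaussian proxy $\sigma$), rather than bundling both into a single quadratic-form tail bound.
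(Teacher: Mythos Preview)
Your concentration step is the gap. The inequality you assert,
\[
\PP^N\!\left[ \|\msa_N - \mu\| > \sqrt{\Tr{\cov}/N} + t \right] \leq \exp\!\left( -\frac{Nt^2}{2\sigma^2} \right),
\]
does not follow from the sub-Gaussian hypothesis of Definition~\ref{definition:sub-Gaussian}. That definition controls only the moment generating function of \emph{linear} marginals $z^\top(\xi-\mu)$; it does not imply a dimension-free Lipschitz concentration inequality for the norm. Such Lipschitz concentration holds for Gaussian vectors (Borell--TIS) and for product measures with bounded coordinates (Talagrand), but it is not available for general sub-Gaussian vectors in the MGF sense. Neither of the two tools you name delivers it: you already observe that Hsu--Kakade--Zhang with $A=I$ produces a factor $\sigma\sqrt{n/N}$ rather than $\sqrt{\Tr{\cov}/N}$, and ``Talagrand-type Lipschitz concentration for sub-Gaussian vectors'' in this generality is simply not a theorem. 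So the decoupling you propose---bounding the expectation by the moment identity and the deviation by a Lipschitz tail---cannot be completed with the stated hypotheses.

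The paper resolves the difficulty not by decoupling but by \emph{whitening before} applying Hsu--Kakade--Zhang. Set $\tilde\xi_i=\cov^{-1/2}(\xi_i-\mu)$, so that each $\tilde\xi_i$ is isotropic and sub-Gaussian with some proxy~$C^2$; then $\msa_N-\mu=\cov^{1/2}\bigl(\tfrac{1}{N}\sum_i\tilde\xi_i\bigr)$, and one applies the Hsu--Kakade--Zhang quadratic-form bound to $\|\cov^{1/2}X\|^2$ with $X=\tfrac{1}{N}\sum_i\tilde\xi_i$. The trace and operator-norm parameters in that bound become $\Tr{\cov}$ and $\|\cov\|$, which is exactly how the covariance enters $\rho_\mu(\eta)$; the sub-Gaussian proxy~$C$ of the whitened vector supplies the prefactor. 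In other words, the right move is not to abandon Hsu--Kakade--Zhang in favour of a Lipschitz inequality, but to feed it the matrix $A=\cov^{1/2}$ acting on an isotropic vector rather than $A=I$ acting on the raw sample mean.
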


Lemma~\ref{lemma:sample-mean} establishes a variant of the Hanson-Wright inequality \cite{ref:hanson1971bound} for sub-Gaussian distributions.

\begin{proof}[Proof of Lemma~\ref{lemma:sample-mean}]
    Set~$\tilde\xi_i=\cov^{-\frac{1}{2}}(\xi_i-\mu)$, which is well-defined because~$\cov\succ 0$, and note that~$\tilde\xi_i$ is isotropic in the sense that~$\EE_\PP[\tilde \xi_i]=0$ and~$\EE_\PP[\tilde \xi_i\tilde \xi_i^\top]=I$ for all~$i=1,\ldots,N$. 
    Then, the probability distribution of~$\tilde\xi_i$ is also sub-Gaussian with the variance proxy $C^2$ satisfying $C^2 \leq \sigma^2 / \| \cov \|$ because
    \begin{align*}
        \EE_{\PP} \Big[ \exp \big( z^\top  ( \tilde \xi_i - \EE_{\PP} [\tilde \xi_i] ) \big) \Big] 
        &= \EE_{\PP} \Big[ \exp \big( z^\top \cov^{-\frac{1}{2}}  ( \xi_i - \EE_{\PP} [\xi_i] ) \big) \Big]\\
        &\leq \exp \left({\textstyle\frac{1}{2}}\| \cov^{-\frac{1}{2}} z \|^2 \sigma^2 \right)
        \leq \exp \left({\textstyle\frac{1}{2}}\| z \|^2 \sigma^2/\| \cov \| \right)
    \end{align*}
    for all~$z \in \R^n$. By the definition of the sample mean, we then find
    \begin{equation*}
    \msa_N - \mu = \frac{1}{N} \sum_{i=1}^N \xi_i - \mu = \cov^{\frac{1}{2}} \left( \frac{1}{N} \sum_{i=1}^{N} \tilde \xi_i \right).
    \end{equation*}
    Note that the random vector $N^{-1} \sum_{i=1}^{N} \tilde \xi_i$ has zero mean and covariance matrix~$N^{-1} I$, and thus it concentrates around~$0$ for large~$N$. 
    In addition, as $\tilde \xi_1, \dots, \tilde\xi_N$ are mutually independent, one easily verifies that $N^{-1} \sum_{i=1}^{N} \tilde \xi_i$ is also sub-Gaussian with variance proxy $C^2 / N$. Therefore, \cite[Theorem~2.1]{ref:hsu2012tail} guarantees that
    \begin{align*}
    &\PP^N \left[ \left\| \cov^{\frac{1}{2}} \left( \frac{1}{N} \sum_{i=1}^{N} \tilde \xi_i \right) \right\|^2 \leq \frac{C^2}{N} \left( \Tr{\cov} + 2 \sqrt{\Tr{\cov^2} \log \left( 1 / \eta \right) } + 2 \| \cov \| \log \left( 1 / \eta \right) \right)  \right] \geq 1 - \eta.
    \end{align*}
    The elementary inequality $\Tr{\cov^2} \leq \| \cov \|   \Tr{\cov}$ for any $\cov\succeq 0$ further implies that
    $$ \Tr{\cov} + 2 \sqrt{\Tr{\cov^2} \log \left( 1 / \eta \right) } + 2 \| \cov \| \log \left( 1 / \eta \right) \leq \left( \sqrt{\Tr{\cov}} + \sqrt{2 \| \cov \| \log \left( 1 / \eta \right)} \right)^2. $$
    Combining this inequality with the above concentration bound yields
    \begin{align*}
    &\PP^N \left[ \| \msa_N - \mu \| \leq \rho_\mu(\eta) \right] \\
    =\; &\PP^N \left[ \left\| \cov^{\frac{1}{2}} \left( \frac{1}{N} \sum_{i=1}^{N} \tilde \xi_i \right) \right\| \leq C \left( \sqrt{\frac{\Tr{\cov}}{N}} + \sqrt{ \frac{2 \| \cov \| \log \left( 1 / \eta \right)}{N} } \right) \right] \\
    \geq\; & \PP^N \left[ \left\| \cov^{\frac{1}{2}} \left( \frac{1}{N} \sum_{i=1}^{N} \tilde \xi_i \right) \right\|^2 \leq \frac{C^2}{N} \left( \Tr{\cov} + 2 \sqrt{\Tr{\cov^2} \log \left( 1 / \eta \right) } + 2 \| \cov \| \log \left( 1 / \eta \right) \right)  \right] \\
    \geq \;& 1-\eta,
    \end{align*}
    and thus the claim follows.
\end{proof}
    
\begin{lemma}[Concentration inequality for the sample second moment matrix]
    \label{lemma:sample-cov}
    Suppose that~$\PP$ is sub-Gaussian with variance proxy~$\sigma^2$ and second moment matrix $M$, and denote by $\wh M_N$ the sample second moment matrix corresponding to~$N$ independent points sampled from~$\PP$. Then, there are universal  constants $C_1>0$, $C_2\geq 1$, 
    $C_3>0$ such that $\PP^{N} [ \| \widehat M_N - M \| \leq \rho_M(\eta)] \geq 1 - \eta$ for any significance level~$\eta \in (0,1]$, where
    \begin{equation*}
    \rho_M(\eta) = \sigma^2 C_1 \left( \sqrt{\frac{n}{N}} + \frac{n}{N} \right) + \sigma^2\left( \sqrt{\frac{\log(C_2/\eta)}{C_3 N}} + \frac{\log(C_2/\eta)}{C_3 N} \right).
    \end{equation*}
\end{lemma}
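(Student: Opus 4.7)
} The plan is to reduce the statement to a standard sub-Gaussian matrix concentration inequality. The first step is to translate the variance-proxy condition of Definition~\ref{definition:sub-Gaussian} into a bound on the Orlicz $\psi_2$-norm of $\xi$ in order to fit the usual hypotheses of the matrix concentration literature. Specifically, I would show that there exists a universal constant $c>0$ such that $\|\langle z,\xi\rangle\|_{\psi_2}\le c\sigma\|z\|$ for every $z\in\R^n$ (and hence $\|\xi\|_{\psi_2}\le c\sigma$ in the sense of random vectors), by invoking the standard equivalence between the various characterizations of sub-Gaussianity; see, e.g., Vershynin's \emph{High-Dimensional Probability}, Proposition~2.5.2. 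Note that since $\mu$ is an arbitrary (possibly nonzero) vector, this part only controls the centered part of $\xi$, and the constant absorbs the shift through a triangle-inequality step at the end; alternatively, one may directly work with $\xi$ itself since the vector-valued sub-Gaussian norm accommodates nonzero means.

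The main step is then to invoke the sub-Gaussian covariance concentration theorem (Vershynin, Theorem~4.7.1 or 5.6.1). Applied to the i.i.d.\ sample $\xi_1,\dots,\xi_N$ with sub-Gaussian norm $K=c\sigma$ and second-moment matrix $M=\EE_{\PP}[\xi\xi^\top]$, this result gives for any $t\ge 0$ the inequality
\begin{equation*}
\PP^{N}\!\left[\,\bigl\|\widehat M_N-M\bigr\|\;\le\;C K^{2}\,\max\!\left\{\sqrt{\tfrac{n+t}{N}},\;\tfrac{n+t}{N}\right\}\right]\;\ge\;1-2\exp(-t),
\end{equation*}
where $C>0$ is an absolute constant. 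Setting $t=\log(2/\eta)$ makes the failure probability at most $\eta$, which identifies $C_2=2$ and produces a bound of the form $CK^{2}\max\{\sqrt{(n+\log(2/\eta))/N},\,(n+\log(2/\eta))/N\}$.

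The final step is a purely algebraic rearrangement: bound the $\max$ by the sum of its two arguments and use $\sqrt{a+b}\le\sqrt{a}+\sqrt{b}$ for $a,b\ge 0$ to separate the $n$-dependent and $\eta$-dependent contributions, yielding
\begin{equation*}
\bigl\|\widehat M_N-M\bigr\|\;\le\;CK^{2}\!\left(\sqrt{\tfrac{n}{N}}+\tfrac{n}{N}\right)+CK^{2}\!\left(\sqrt{\tfrac{\log(2/\eta)}{N}}+\tfrac{\log(2/\eta)}{N}\right).
\end{equation*}
Substituting $K^{2}\le c^{2}\sigma^{2}$ and defining $C_{1}:=Cc^{2}$, $C_{2}:=2$, and a suitable $C_{3}>0$ so that the prefactors outside $\sigma^{2}$ match the form in the lemma yields the stated inequality.

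The main obstacle is purely a matter of bookkeeping: locating and quoting the right off-the-shelf matrix concentration result (since Vershynin's theorem is usually written for centered isotropic vectors and for $\|\frac{1}{N}\sum X_iX_i^{\top}-\Sigma\|$, not directly for the uncentered second-moment matrix) and verifying that the variance-proxy-based definition used in the paper implies the Orlicz-norm hypothesis required by that theorem, with all absolute constants tracked carefully so that $C_{1},C_{2},C_{3}$ can be exhibited as universal. Once those two bridge lemmas are in place, the inequality itself follows from a one-line choice of $t$ and the square-root subadditivity step.
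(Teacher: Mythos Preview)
Your proposal is correct and matches the paper's proof almost exactly: the paper simply cites the sub-Gaussian covariance concentration bound \cite[Theorem~6.5]{ref:wainwright2019high} (in place of your Vershynin reference), chooses the deviation parameter so that the failure probability equals~$\eta$, and then replaces the resulting $\max$ by a sum. The bookkeeping you flag (variance-proxy versus Orlicz norm, uncentered second-moment matrix) is absorbed into the citation in the paper and is not spelled out there, so your write-up is, if anything, slightly more careful.
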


\begin{proof}[Proof of Lemma~\ref{lemma:sample-cov}]
    By~\cite[Theorem~6.5]{ref:wainwright2019high}, for any $\delta > 0$ there exist universal constants $C_1>0$, $C_2\ge 1$, $C_3 > 0$ with
    \begin{equation*}
    \PP^{N} \left[ \frac{\| \widehat M_N - M \|}{\sigma^2} > C_1 \left( \sqrt{\frac{n}{N}} + \frac{n}{N} \right) + \delta \right] \leq C_2 \exp \left( -C_3 N \min\{ \delta, \delta^2 \} \right),
    \end{equation*}
    which implies that
    \begin{align*}
        \PP^{N} \!\! \left[ \| \widehat M_N - M \| \leq \sigma^2C_1 \left( \sqrt{\frac{n}{N}} + \frac{n}{N} \right) + \sigma^2\max \left\{ \sqrt{\frac{\log(C_2/\eta)}{C_3 N}} , \frac{\log(C_2/\eta)}{C_3 N} \right\} \right] \! \geq \! 1 - \eta.
    \end{align*}
    The claim then follows from the inequality $\max\{a, b\} \leq a + b $ for all $a,b \ge~0$.
\end{proof}

We are now armed to prove Corollary~\ref{corollary:finite-sample:SAA}.
\begin{proof}[Proof of Corollary~\ref{corollary:finite-sample:SAA}]
    Define $f_\mu$ and $f_M$ as in the proof of Theorem~\ref{theorem:finite-sample}. Lemma~\ref{lemma:sample-mean} then implies that
    \begin{align*}
    &f_\mu(\rho_\mu(\eta_\mu)) 
    = C K \left( \sqrt{\frac{\Tr{\cov}}{N}} + \sqrt{ \frac{2 \| \cov \| \log \left( 1 / \eta_\mu \right)}{N} } \right) + C^2 \left( \sqrt{\frac{\Tr{\cov}}{\lambda_{\min}(\cov) N}} + \sqrt{ \frac{2 \| \cov \| \log \left( 1 / \eta_\mu \right)}{\lambda_{\min}(\cov) N} } \right)^2 \\
    &\leq \frac{CK \sqrt{\Tr{\cov}} \lambda_{\min}(\cov) + CK \lambda_{\min}(\cov) + 2 C^2 \Tr{\cov} }{\lambda_{\min}(\cov) \sqrt{N}} + \frac{\left( 2 C K \| \cov \| \lambda_{\min}(\cov) + 4 C^2 \| \cov \| \right) \log \left( 1 / \eta_\mu \right)}{\lambda_{\min}(\cov) \sqrt{N}},
    \end{align*}
    where the inequality holds because $(a + b)^2 \leq 2 a^2 + 2 b^2$ for all $a,b\geq 0$, while $1/N\leq 1/\sqrt{N}$ for all $N \in \mathbb N$ and $\sqrt{x} \leq 1 + x$ for all $x \geq 0$.
    Similarly, Lemma~\ref{lemma:sample-cov} implies that
    \begin{align*}
    f_M(\rho_M(\eta_M)) 
    &= \frac{C_1 \sigma^2 \sqrt{n}}{\lambda_{\min}(\cov)} \left( \sqrt{\frac{n}{N}} + \frac{n}{N} \right) + \frac{\sigma^2 \sqrt{n}}{\lambda_{\min}} \left( \sqrt{\frac{\log(C_2/\eta_M)}{C_3 N}} + \frac{\log(C_2/\eta_M)}{C_3 N} \right) \\
    &\leq \frac{n \sigma^2 C_1 C_3 (1 + \sqrt{n}) + \sigma^2 \sqrt{C_3 n}}{\lambda_{\min}(\cov)C_3 \sqrt{N}}  
    + \frac{ \sigma^2 \sqrt{n} ( \sqrt{C_3} + 1) \log(C_2/\eta_M)}{\lambda_{\min}(\cov) C_3 \sqrt{N}},
    \end{align*}
    where the inequality holds again because $1/N\leq 1/\sqrt{N}$ for all $N \in \mathbb N$ and $\sqrt{x} \leq 1 + x$ for all $x \geq 0$. 
    Setting $\eta_\mu = \eta_M = \eta / 2$, the sum of the above upper bounds on $f_\mu(\rho_\mu(\eta_\mu))$ and $f_M(\rho_M(\eta_M))$ equals $\rho(\eta) = (c_1 + c_2 \log(1/\eta)) / \sqrt{N}$ for some positive constants~$c_1$ and $c_2$ that depend only on $\mu$, $\cov$, $\sigma^2$ and $n$. The claim then follows from Theorem~\ref{theorem:finite-sample}. 
\end{proof}    

\begin{proof}[Proof of Proposition~\ref{prop:gelbrich-projection}]
The inclusion~\eqref{eq:Gelbrich-inclusion} follows immediately from the Gelbrich bound of Theorem~\ref{theorem:gelbrich}. 
It thus suffices to prove the reverse inclusion for~$\covsa \succ 0$. To this end, select any $(\m, \cov) \in \U_\rho(\msa, \covsa)$, and construct the pushforward distribution $\QQ=\Pnom \circ f^{-1}$ using the affine function $f(\xi) = \covsa^{-\half} (\covsa^\half \cov \covsa^\half )^\half \covsa^{-\half} (\xi -\msa) + \m$. As the structural ambiguity set~$\mc S$ is closed under positive semidefinite affine pushforwards, we have $\QQ \in \mc S$.
In addition, Theorem~\ref{theorem:Wasserstein=Gelbrich} implies that $\Wass(\QQ, \Pnom) = \Gelbrich\big((\m, \cov), (\msa, \covsa) \big) \leq \rho$, where the inequality holds because $(\m, \cov) \in \U_\rho(\msa, \covsa)$. We may thus conclude that $\QQ \in \mc W_{\rho}(\Pnom)$. Finally, an elementary calculation reveals that $\QQ$ has mean $\m$ and covariance matrix $\cov$.
\end{proof}

\begin{proof}[Proof of Theorem~\ref{theorem:gelbrich:amb}]
    The claim trivially holds if $\mc W_\rho(\Pnom)$ is empty. From now on we thus assume that~$\mc W_\rho(\Pnom)$ is non-empty. For any distribution~$\QQ \in \mc W_{\rho}(\Pnom)$ with mean~$\m \in \R^n$ and covariance matrix~$\cov \in \PSD^n$, Theorem~\ref{theorem:gelbrich} then implies that
    \[
        \Gelbrich\big( (\m, \cov), (\msa, \covsa) \big) \le \Wass(\QQ, \Pnom) \le \rho.
    \]
    As $\QQ\in \mc S$, we thus have $\QQ \in \mc G_{\rho}(\msa, \covsa)$. Hence, we may conclude that $\mc W_{\rho}(\Pnom) \subseteq \mc G_{\rho}(\msa, \covsa)$.

    Assume now that the structural ambiguity set~$\mc S$ is generated by~$\Pnom$ and that~$\covsa\succ 0$. Next, select any distribution~$\QQ \in \mc G_{\rho}(\msa, \covsa)$ with mean~$\m \in \R^n$ and covariance matrix~$\cov \in \PSD^n$. As~$\QQ \in \mc S$ constitutes a positive semidefinite affine pushforward of~$\Pnom$, we thus have
    \[
         \Wass(\QQ, \Pnom) = \Gelbrich\big( (\m, \cov), (\msa, \covsa) \big) \le \rho,
    \]
    where the equality follows from Theorem~\ref{theorem:Wasserstein=Gelbrich}. This implies that $\QQ \in \mc W_{\rho}(\Pnom)$ and, as~$\QQ \in \mc G_{\rho}(\msa, \covsa)$ was chosen arbitrarily, that $\mc{G}_{\rho}(\msa, \covsa)\subseteq \mc W_{\rho}(\Pnom)$. Recalling the first part of the proof, we then obtain~$\mc W_{\rho}(\Pnom) = \mc G_{\rho}(\msa, \covsa)$.
\end{proof}

\begin{proof}[Proof of Proposition~\ref{prop:compact:U}]
    The non-negativity of the Gelbrich distance implies that
    \[
    \U_{\rho}(\msa, \covsa) = 
    \left\{ (\m,\cov) \in \R^n\times \PSD^n: \Gelbrich^2 \big((\m, \cov), (\msa, \covsa) \big) \leq \rho^2 \right\}.
    \] 
    Recall that the squared Gelbrich distance is convex in $(\m, \cov) \in \R^n \times \PSD^n$. The H\"{o}lder continuity of the matrix square root established in~\cite[Lemma~A.2]{ref:nguyen2019bridging} ensures that the squared Gelbrich distance is also continuous. Therefore, $\U_{\rho}(\msa, \covsa)$ is convex and closed. Finally, one can show that for any $(\m, \cov) \in \U_{\rho}(\msa, \covsa)$ we have $\| \m - \msa \| \leq \rho$ and $0 \preceq \cov \preceq \big( \rho + \Trace \big[\covsa \big]^{\half} \big)^2 I$, see~\cite[Lemma~A.6]{ref:shafieezadeh2018wasserstein}. This implies that $\U_{\rho}(\msa, \covsa)$ is also compact.
\end{proof}

\begin{proof}[Proof of Proposition~\ref{prop:compact:V}]
    Recall from Proposition~\ref{prop:compact:U} that~$\mathcal U_{\rho}(\msa, \covsa)$ is compact, and note that~$\mathcal V_{\rho}(\msa, \covsa)$ can be viewed as the image of~$\mathcal U_{\rho}(\msa, \covsa)$ under the continuous transformation~$f(\mu,\cov)=(\mu, \cov+\mu\mu^\top)$ defined on~$\R^n\times\PSD^n$. Thus, the transformed set~$\mathcal V_{\rho}(\msa, \covsa)$ inherits compactness from~$\mathcal U_{\rho}(\msa, \covsa)$. To show that~$\mathcal V_{\rho}(\msa, \covsa)$ is convex, recall from the proof of Theorem~\ref{theorem:gelbrich} that the squared Gelbrich distance satisfies 
    \begin{align*}
    \Gelbrich^2\big( (\mu, \Sigma), (\msa, \covsa) \big) &= \left\{ 
    \begin{array}{cl}
    \DS \inf & \DS \| \mu - \msa \|^2 + \Tr{\Sigma + \widehat \Sigma - 2 C} \\ [1ex]
    \st & \DS C \in \R^{n \times n}, \;  \begin{bmatrix} \Sigma & C \\ C^\top & \widehat \Sigma \end{bmatrix} \succeq 0
    \end{array} \right. \\
     &= \left\{ \begin{array}{cl}
    \DS \sup & \| \mu - \msa \|^2 + \DS \Tr{\cov(I - A_{11}) + \covsa(I - A_{22})} \\ [1ex]
    \st & \DS A_{11} \in \PSD^{n}, \; A_{22} \in \PSD^{n}, \; \begin{bmatrix} A_{11} & -I \\ -I & A_{22} \end{bmatrix} \succeq 0.
    \end{array}\right.
    \end{align*}
    Here, the second equality follows from strong semidefinite duality 
    \cite[Theorem~1.4.2]{ref:ben2001lectures}, which holds because $A_{11} = A_{22} = 2 I$ constitutes a Slater point for the dual problem. 
    Thus, we have
    \begin{align*}
    &\phantom{=} \Gelbrich^2\big( (\mu, M - \mu \mu^\top), (\msa, \covsa) \big) \\
    &= \left\{ 
    \begin{array}{cl}
    \DS \sup & \DS \Tr{M(I - A_{22})} + \mu^\top A_{22} \mu - 2 \mu^\top \msa + \Tr{\covsa(I - A_{22})} + \| \msa \|^2 \\ [1ex]
    \st & \DS A_{11} \in \PSD^{n}, \; A_{22} \in \PSD^{n}, \; \begin{bmatrix} A_{11} & -I \\ -I & A_{22} \end{bmatrix} \succeq 0
    \end{array} \right.
    \end{align*}
    for any~$\mu\in\R^n$ and~$M\in\PSD^n$ with $M\succeq \mu\mu^\top$. Note that the objective function of the above maximization problem is
    jointly convex in $\m$ and $M$ for any feasible~$A_{11}$ and~$A_{22}$. As convexity is preserved under maximization, we may thus conclude that~$\Gelbrich^2( (\mu, M - \mu \mu^\top), (\msa, \covsa))$ is also jointly convex in~$\m$ and~$M$. Hence, the set
    \[
        \mathcal V_{\rho}(\msa, \covsa) = 
        \left\{ (\m,M) \in \R^n\times \PSD^n:  M\succeq\m\m^\top,~ \Gelbrich^2\big( (\mu, M - \mu \mu^\top), (\msa, \covsa)\big)\leq \rho^2 \right\}
    \]
    is convex because it is representable as the feasible set of convex constraints.
\end{proof}

\section{Proofs of Section~\ref{sec:linearportfolio}}
    
\begin{proof}[Proof of Proposition~\ref{proposition:alpha}]
Define $\mc F$ as the set of all cumulative distribution functions on~$\R$. Thus, $\mc F$ contains all non-decreasing and right-continuous functions $F : \R \rightarrow [0, 1]$ with $\lim_{t \downarrow \infty} F (t ) = 0$ and $\lim_{t \uparrow \infty} F (t ) =  1$. For any loss function~$\ell\in\Lc_0$ and probability distribution~$\QQ\in\mc M$, we use~$F_{\ell(\xi)}^\QQ\in\mc F$ to denote the distribution function of the random variable~$\ell(\xi)$ under~$\QQ$. As the family of risk measures $\{\risk_\QQ\}_{\QQ \in \mc M}$ is law-invariant, there exists a distribution functional $\varrho: \mc F \to \R$ with~$\risk_{\QQ}(\ell) = \varrho(F_{\ell(\xi)}^{\QQ})$ for all~$\ell\in\Lc_0$ and~$\QQ\in\mc M$.

Define now $\mc F(0,1)\subseteq\mc F$ as the family of all cumulative distribution functions of the random variables of the form~$v^\top \xi$ for some~$v\in\R^n$ under any probability distribution~$\mathbb D\in\mc S$ under which~$v^\top\xi$ has zero mean and unit variance, that is, we set
\[
    \mc F(0,1) = \left\{ F^\mathbb D_{v^\top\xi} : v\in\R^n,~ \mathbb D\in\mc S \text{ such that }\EE_\mathbb D[v^\top\xi] =0 \text{ and } \EE_\mathbb D[(v^\top\xi)^2] =1 \right\}.
\]
Next, choose any $\m \in \R^n$, $\cov \in \PSD^n$ and $\w \in \R^n$ with~$\w^\top \cov \w > 0$, and set~$m = \w^\top \m$ and~$s = \sqrt{\w^\top \cov \w}$. Recalling that the structured Chebyshev ambiguity set~$\mc C(\mu,\cov)$ contains all distributions~$\QQ\in\mc S$ with mean~$\mu$ and covariance matrix~$\cov$, we will first show that
    \begin{equation}
        \label{eq:cdf-families}
        \mc F(0, 1) = \left\{ F_{-(w^\top \xi-m)/s}^\QQ : \QQ \in \mathcal C(\m, \cov) \right\}.
    \end{equation}
The proof proceeds in two steps. In the first step, we prove that the left hand side of~\eqref{eq:cdf-families} is a subset of the right hand side. To this end, select any~$F \in \mathcal F(0,1)$. Thus, there exists a random vector~$\xi_1\in\R^n$ with probability distribution~$\mathbb D_1\in\mc S$ and a deterministic vector~$v\in\R^n$ such that~$v^\top\xi_1$ has zero mean and unit variance under~$\mathbb D_1$ and such that~$F=F^{\mathbb D_1}_{v^\top \xi_1}$. Using similar ideas as in~\cite[Theorems~1 and~2]{ref:yu2009projection}, we construct another random vector $\xi_2\in\R^n$ with distribution~$\mathbb D_2\in\mathcal C(0, I)$ that is independent of~$\xi_1$. We can then construct a new random vector~$\xi\in\R^n$ as the following linear combination of~$\xi_1$ and~$\xi_2$.
    \begin{align*}
        \textstyle \xi = \mu- \frac{1}{s} \cov wv^\top\xi_1+ \left(I - \frac{1}{s^2}\cov \w \w^\top \right) \cov^{\frac{1}{2}} \xi_2
    \end{align*} 
    Next, we define~$\QQ$ as the probability distribution of~$\xi$. As~$\mathbb D_1$ and~$\mathbb D_2$ belong to the structural ambiguity set~$\mathcal S$ and as~$\mathcal S$ is stable and therefore closed under (not necessarily positive semidefinite) affine pushforwards and under convolutions, we may conclude that~$\QQ\in\mathcal S$. In addition, an elementary calculation exploiting our knowledge that~$v^\top\xi_1$ and~$\xi_2$ are standardized under the distributions~$\mathbb D_1$ and~$\mathbb D_2$, respectively, reveals that~$\xi$ has mean~$\mu$ and covariance matrix~$\cov$ under~$\QQ$. Hence, we have shown that~$\QQ\in\mathcal C(\mu,\cov)$. By the construction of~$\xi$ and the definitions of~$m$ and~$s$, we finally have~$-(w^\top\xi-m)/s=v^\top\xi_1$ and thus~$F_{-(w^\top \xi-m)/s}^\QQ=F$. This implies that~$F$ belongs to the set on the right hand side of~\eqref{eq:cdf-families}. 
    
    In the second step, we prove that the right hand side of~\eqref{eq:cdf-families} is a subset of the left hand side. To this end, select any $\QQ \in \mathcal C(\m, \cov)$, and use~$F$ as a shorthand for $F^\QQ_{- (w^\top\xi-m)/s}$. To show that~$F\in\mc F(0,1)$, set~$v=-w/s$, and define the pushforward distribution~$\mathbb D=\QQ\circ f^{-1}$ with respect to the transformation~\mbox{$f(\xi)=\xi+mv/(s\|v\|^2)$}. Note first that~$\mathbb D\in\mc S$ because the structural ambiguity set~$\mc S$ is closed under affine pushforwards. In addition, one readily verifies that~$- (w^\top\xi-m)/s=v^\top f(\xi)$ has zero mean and unit variance under~$\QQ$, which means that~$v^\top\xi$ has zero mean and unit variance under~$\mathbb D$. We thus have~$F\in\mc F(0,1)$. In combination, the first and the second step of the proof establish equation~\eqref{eq:cdf-families}.

    We may now conclude that the standard risk coefficient satisfies
    \begin{align*}
        \alpha = \Sup{\QQ \in \mc C(\m,\cov)}  \; \risk_\QQ \left(- \frac{1}{s}(w^\top\xi-m) \right) = \Sup{\QQ \in \mc C(\m,\cov)} \; \varrho \left( F^\QQ_{- (w^\top\xi-m)/s} \right) =  \Sup{F \in \mc F (0, 1)} \; \varrho ( F),
    \end{align*}
    where the second equality re-expresses the risk measures~$\risk_\QQ$ for $\QQ\in\mc M$ in terms of the distribution functional~$\varrho$, and the second equality exploits~\eqref{eq:cdf-families}. The last expression is manifestly independent of $\m$, $\cov$ and $\w$. This observation completes the proof.
\end{proof}

\begin{proof}[Proof of Theorem~\ref{theorem:meanstd}]
    Decomposing the Gelbrich ambiguity set into disjoint Chebyshev ambiguity sets allows us to rewrite the Gelbrich risk as in~\eqref{eq:two-layer-a}. As the loss function~$\ell(\xi)=-w^\top\xi$ is linear, the inner maximization problem in~\eqref{eq:two-layer-a} further simplifies to
    \begin{align*}
        &\Sup{\QQ \in \mathcal C(\m,\cov)}  \; \risk_\QQ \left( - \w^\top \xi \right) =  - \w^\top \m + \Sup{\QQ \in \mathcal C(\m,\cov)}  \; \risk_\QQ \left( - \w^\top (\xi-\m) \right) \\
        &\qquad= - \w^\top \m + \sqrt{\w^\top \cov \w} ~\Sup{\QQ \in \mathcal C(\m,\cov)}  \; \risk_\QQ \left( - \frac{\w^\top (\xi-\m)}{\sqrt{\w^\top \cov \w}} \right) = - \w^\top \m +\alpha \sqrt{\w^\top \cov \w}
    \end{align*}
    where the first two equalities exploit the translation invariance and the positive homogeneity of the risk measures~$\risk_\QQ$, $\QQ \in \mc M$, respectively, whereas the third equality follows from the definition of the standard risk coefficient~$\alpha$.
    Note that~$\alpha$ is independent of~$\mu$, $\cov$ and~$w$ thanks to Proposition~\ref{proposition:alpha}, which applies because the structural ambiguity set is stable and the family of risk measures is law-invariant. Using the definition of the set~$\U_{\rho}(\msa, \covsa)$, the outer maximization problem in~\eqref{eq:two-layer-a} can then be reformulated as
    \begin{align} 
    \Sup{\QQ \in \mc{G}_{\rho}(\msa, \covsa)} \; \risk_\QQ ( -\w^\top \xi ) 
    &= \Sup{(\m, \cov) \in \U_{\rho}(\msa, \covsa)} \; -\m^\top \w + \alpha \sqrt{\w^\top \cov \w} \notag \\
    &= 
    \left\{
    \begin{array}{cl}
    \Sup{\m, \cov \succeq 0} & -\m^{\top} \w + \alpha \sqrt{\w^{\top} \cov \w} \\
    \st & \| \m - \msa \|^2 + \Tr{\cov + \covsa - 2 \big( \covsa^\half \cov \covsa^\half \big)^\half} \leq \rho^2.
    \end{array}
    \right. \label{eq:primal}
    \end{align}
    If~$\rho=0$, then~$(\m,\cov)=(\msa, \covsa)$ is the only feasible solution of problem~\eqref{eq:primal}, in which case~\eqref{eq:MS:1} trivially holds. Similarly, if~$\alpha=0$, then~$\cov=\covsa$ and~$\m=\msa-\rho w/\|w\|$ are optimal in~\eqref{eq:primal}, and~\eqref{eq:MS:1} again trivially holds. From now on, we may thus assume without loss of generality that~$\rho > 0$ and~$\alpha>0$. In this case problem~\eqref{eq:primal} is equivalent to
    \begin{align*}
    &\Sup{\m, \cov \succeq 0} \Inf{\gamma \geq 0} \quad -\m^{\top} \w + \alpha \sqrt{\w^\top \cov \w} + \gamma \left[ \rho^2 - \|\m - \msa\|^2 - \Tr{\cov + \covsa - 2 \big( \covsa^\half \cov \covsa^\half \big)^\half} \right] \\
    =& \Inf{\gamma \geq 0} \Bigg\{ \gamma \big( \rho^2 - \Tr{\covsa} \big) + \Sup{\m} \left\{ -\m^{\top} \w - \gamma \|\m - \msa \|^2  \right\}+ \Sup{\cov \succeq 0} \left\{ \alpha \sqrt{\w^\top \cov \w} + \gamma \Tr{-\cov + 2 \big( \covsa^\half \cov \covsa^\half \big)^\half }  \right\} \Bigg\},
    \end{align*}
    where the first equality follows from strong duality, which holds because $(\msa, \covsa)$ constitutes a Slater point for the primal convex program~\eqref{eq:primal}, and from a simple rearrangement. As~$\gamma \geq 0$, the embedded quadratic maximization problem over~$\m$ is convex and can be solved analytically. Indeed, for~$\gamma>0$ we have $\sup_{\m} \left\{ -\m^\top \w - \gamma \| \m - \msa \|^2 \right\} = - \msa^{\top}\w + \frac{\|\w\|^2}{4\gamma}$. For $\gamma = 0$, on the other hand, the supremum over~$\mu$ evaluates to~$0$ if~$w=0$ and to~$+\infty$ otherwise. Thus, the formula on the right-hand side of the above expression remains valid for~$\gamma=0$ if we interpret it as the limit when $\gamma$ tends to~0 from above. Similarly, as~$\gamma\ge 0$ and~$\alpha> 0$, one may introduce an auxiliarly epigraphical variable~$t$ to reformulate the embedded maximization problem over $\cov$ as the convex program
    \[
        \begin{array}{cl}
            \Sup{t, \cov} & ~\alpha t + \gamma  \Tr{-\cov + 2 \big( \covsa^\half \cov \covsa^\half \big)^\half}  \\
            \st &~ t \geq 0, \; \cov \succeq 0, \; t^2 - \w^\top \cov \w \leq 0,
        \end{array}
    \]
    which manifestly satisfies Slater's condition.    Suppose now temporarily that~$\covsa \succ 0$. By invoking strong duality and using the variable transformation~$B \gets (\covsa^{\frac{1}{2}} \cov \covsa^{\frac{1}{2}})^{\frac{1}{2}}$, the above optimization problem can be recast as
    \begin{align}
        & \Inf{\lambda \ge 0} \Sup{t \ge 0, \cov \succeq 0}~ \alpha t - \lambda t^2 + \Tr{\cov (\lambda \w\w^\top - \gamma I)} + 2\gamma \Tr{\big( \covsa^\half \cov \covsa^\half \big)^\half} \notag \\
        =& \Inf{\lambda \ge 0} \Sup{t \geq 0, B \succeq 0}~
        \alpha t - \lambda t^2  + \Tr{B^2 \Delta_\lambda} + 2 \gamma \Tr{B} , \label{eq:mstd:2}
    \end{align}
    where~$\Delta_\lambda = \covsa^{-\half}(\lambda \w\w^{\top} - \gamma I)\covsa^{-\half}$ for any~$\lambda\ge 0$. Note that~$\Delta_\lambda$ is well-defined because~$\covsa$ is invertible. Note also that the inner maximization problem in~\eqref{eq:mstd:2} is separable with respect to~$t$ and~$B$. Consider first the maximization problem over~$t$. As~$\alpha>0$, its supremum evaluates to~$\alpha^2/(4\lambda)$ and is attained at~$t^\star=\alpha/(2\lambda)$ whenever~$\lambda>0$. Otherwise, if~$\lambda=0$, then its supremum evaluates to~$+\infty$ for. From now on we may thus assume without loss of generality that the outer minimization over~$\lambda$ in~\eqref{eq:mstd:2} is subject to the strict constraint~$\lambda>0$. Consider now the maximization problem over~$B$. The proof of~\cite[Proposition~2.8]{ref:nguyen2018distributionally} implies that if~$\Delta_\lambda\not\prec 0$, then the supremum over~$B$ evaluates to~$+\infty$. From now on we may thus assume without loss of generality that the outer minimization over~$\lambda$ in~\eqref{eq:mstd:2} is subject to the strict constraint $\gamma I - \lambda \w\w^{\top} \succ 0$, which is equivalent to~$\lambda<\gamma\|w\|^{-2}$ and guarantees that~$\Delta_\lambda\prec 0$. As~$\lambda>0$, this in turn implies that~$B^\star=-\gamma\Delta_\lambda^{-1}$ is strictly positive definite and satisfies the first-order optimality condition $B \Delta_\lambda + \Delta_\lambda B + 2\gamma I =0$.     
This condition can be interpreted as a continuous Lyapunov equation, and therefore its solution~$B^\star$ is in fact unique; see, {\em e.g.},~\cite[Theorem~12.5]{ref:hespanha2009linear}. By making the implicit constraints on~$\lambda$ explicit and by eliminating the supremum operator by evaluating the objective function at~$t^\star$ and~$B^\star$, problem~\eqref{eq:mstd:2} can now be reformulated~as
    \begin{align*}
    & \Inf{0 < \lambda < \gamma \|\w\|^{-2}} \;
    \frac{\alpha^2}{4\lambda} + \gamma^2 \Tr{\covsa^\half (\gamma I - \lambda \w\w^\top)^{-1} \covsa^{\half} }
    \\ &= \Inf{0 < \lambda < \gamma \|\w\|^{-2}} \;  \frac{\alpha^2}{4\lambda} + \gamma \Tr{\covsa} + \frac{\w^\top \covsa \w}{\lambda^{-1} -\|w\|^2/\gamma}=
    \gamma \Tr{\covsa} + \frac{\alpha^2}{4} \frac{\|w\|^2}{\gamma} + \alpha \sqrt{\w^\top \covsa \w}.        
    \end{align*}
    Here, the first equality exploits the Sherman-Morrison formula \cite[Corollary~2.8.8]{ref:bernstein2009matrix} to rewrite the inverse matrix, and the second equality solves the resulting minimization problem analytically. Indeed, the infimum is attained in the interior of the feasible set at the unique solution~$\lambda\opt$ of the first-order condition $\frac{1}{\lambda}  = \frac{\|w\|^2}{\gamma} + \frac{2}{\alpha} \sqrt{\w^\top \covsa \w}$. In summary, we have derived closed-form solutions for both subproblems over~$\mu$ and~$\cov$ in the objective function of the problem dual to~\eqref{eq:primal}. Hence, the Gelbrich risk satisfies
    \begin{align*}
        \Sup{\QQ \in \mc{G}_{\rho}(\msa, \covsa)} \; \risk_\QQ ( -\w^\top \xi )  &=
        \Inf{\gamma \ge 0} ~ -\msa^\top \w + \alpha \sqrt{\w^\top \covsa \w} +  \gamma \rho^2 + \frac{1 + \alpha^2}{4} \frac{\|w\|^2}{\gamma} \\
        &=-\msa^\top \w + \alpha \sqrt{\w^\top \covsa \w } + \rho \sqrt{1+ \alpha^2}\, \| \w \|,
    \end{align*}
    where the second equality holds because the unique solution of the minimization problem in the first line is given by~$\gamma\opt = (2\rho)^{-1} \sqrt{1 + \alpha^2}\, \|w\|$. We have thus established~\eqref{eq:MS:1} for~$\covsa \succ 0$.
    
    To demonstrate that~\eqref{eq:MS:1} remains valid for all $\covsa \succeq 0$, we denote by~$J(\covsa)$ the optimal value of problem~\eqref{eq:primal} as a function of~$\covsa$. Thus, $J(\covsa)$ coincides with the Gelbrich risk.
    Applying Berge's maximum theorem~\cite[pp.~115--116]{ref:berge1963topological} to problem~\eqref{eq:primal}, it is easy to show that~$J(\covsa)$ is continuous on~$\PSD^n$. Next, define $\bar{J}(\covsa) = -\msa^\top \w + \alpha \sqrt{\w^\top \covsa \w } + \rho \sqrt{1+ \alpha^2}\, \| \w \|$ as the right hand side of~\eqref{eq:MS:1}, which is manifestly continuous on~$\PSD^n$. From the first part of the proof we know that~$J(\covsa) = \bar{J}(\covsa)$ for all $\covsa \succ 0$. As both~$J(\covsa)$ and~$\bar{J}(\covsa)$ are continuous on~$\PSD^n$ and as any positive semidefinite matrix can be expressed as a limit of positive definite matrices, we thus have~$J(\covsa) = \bar{J}(\covsa)$ for all~$\covsa \in \PSD^n$. This proves~\eqref{eq:MS:1} for all~$\covsa \succeq 0$.
    
    Finally, if~$\mc S$ is the structural ambiguity set generated by a Gaussian nominal distribution~$\Pnom$ with~$\covsa \succ 0$, then the Wasserstein risk equals the Gelbrich risk by Corollary~\ref{cor:gelbrich-risk}.
\end{proof}

\begin{proof}[Proof of Proposition~\ref{prop:alpha:positive}]
    Fix any~$\mu\in\R^n$, $\cov\in\PSD^n$ and~$w\in\R^n$, and select any symmetric probability distribution~$\QQ\in\mc S$, which exists by assumption. As~$\mc S$ is closed under positive semidefinite affine pushforwards, we may assume without loss of generality that~$\xi$ has mean~$\mu$ and covariance matrix~$\cov$ under~$\QQ$. We then find
    \begin{align*}
        0 
        & = \textstyle \risk_\QQ(0) 
        = \risk_\QQ\left( \frac{1}{2} \frac{\w^\top (\xi- \m)}{\sqrt{\w^\top \cov \w}} - \frac{1}{2} \frac{\w^\top (\xi- \m)}{\sqrt{\w^\top \cov \w}}\right ) \\
        & \leq \frac{1}{2} \risk_\QQ\left(\frac{\w^\top (\xi- \m)}{\sqrt{\w^\top \cov \w}} \right) + \frac{1}{2} \risk_\QQ\left( - \frac{\w^\top (\xi- \m)}{\sqrt{\w^\top \cov \w}} \right) = \risk_\QQ \left( - \frac{\w^\top (\xi- \m)}{\sqrt{\w^\top \cov \w}} \right) \leq\alpha,
    \end{align*}
    where the first equality and the first inequality follow from the positive homogeneity and convexity of the coherent risk measure~$\risk_\QQ$, respectively. The last equality exploits the law-invariance of~$\risk_\QQ$ and the symmetry of~$\QQ$, which implies that $\w^\top (\xi- \m)/\sqrt{\w^\top \cov \w}$ and $-\w^\top (\xi- \m)/\sqrt{\w^\top \cov \w}$ have the same distribution under~$\QQ$. Finally, the last inequality follows from the definition of~$\alpha$ and the observation that~$\QQ\in \mc C(\mu,\cov)$.
\end{proof}

\begin{proof}[Proof of Corollary~\ref{corol:1/N}]
    Note that problem~\eqref{eq:optimize-w-gelbrich} has a unique minimizer for every~$\rho>0$ because its feasible set~$\Omega$ is closed and its objective function is strictly convex and coercive on~$\Omega$. This minimizer coincides with the unique optimal solution~$\w\opt(\lambda)$ of
    \[
        \Min{\w \in \Omega}~ - \lambda \msa^\top \w + \lambda \alpha \sqrt{\w^\top \covsa \w} + \sqrt{1 + \alpha^2} \| \w\|,
    \]
    where~$\lambda=1/\rho$. By Berge's maximum theorem~\cite[pp.~115--116]{ref:berge1963topological}, the function~$\w\opt(\lambda)$ is continuous on~$[0,1]$, and therefore~$w^\star(\lambda)$ converges to~$w^\star(0)$ as~$\lambda$ tends to~$0$. As~$\frac{1}{n} e\in\Omega$, however, one readily verifies that~$w^\star(0)=\frac{1}{n} e$. As~$\lambda=1/\rho$, this reasoning shows that the unique minimizer of~\eqref{eq:optimize-w-gelbrich} converges to~$\frac{1}{n}e$ as~$\rho$ tends to~$\infty$.
\end{proof}

\begin{proof}[Proof of Proposition~\ref{prop:WC}]
    If $\rho = 0$, then all distributions in the Gelbrich ambiguity set, and in particular all maximizers that attain the Gelbrich risk,  have mean~$\msa$ and covariance matrix~$\covsa$. The claim then follows because, for~$\rho=0$, the formulas for~$\m\opt$ and~$\cov\opt$ reduce to~$\msa$ and $\covsa$, respectively. Assume from now on that~$\rho > 0$. As all assumptions of Theorem~\ref{theorem:meanstd} are satisfied, we may proceed as in the proof of Theorem~\ref{theorem:meanstd} to show that
    \begin{align*}
    \Sup{\QQ \in \mc{G}_{\rho}(\msa, \covsa)} \; \risk_\QQ \left( -\w^\top \xi \right) 
    & = \left\{\begin{array}{cl} 
    \max & -\m^\top \w + \alpha t \\ \text{s.t.} & (\m, \cov) \in \U_{\rho}(\msa, \covsa), ~t\ge 0,~ t^2\leq \w^\top \cov \w
    \end{array}\right. \\
    & = \Min{\gamma \geq 0, \lambda \geq 0} \Sup{\m, \cov \succeq 0, t \geq 0} L(\m, \cov, t, \gamma, \lambda),
    \end{align*}
    where the last equality follows from strong duality, which applies because the primal problem has a compact feasible set. The Lagrangian in the last expression is defined through
    \begin{align*}
        L(\m, \cov, t, \gamma, \lambda) =\;& \gamma \big( \rho^2 - \|\m - \msa \|^2 - \Tr{\covsa} \big) -\m^{\top} \w + \alpha t - \lambda t^2 \\
        & + \Tr{\cov (\lambda \w\w^\top - \gamma I)} + 2\gamma \Tr{\big( \covsa^\half \cov \covsa^\half \big)^\half}.
    \end{align*}
    From the proof of Theorem~\ref{theorem:meanstd} we know that the dual problem is uniquely solved by
    \[
        \gamma\opt = (2\rho)^{-1} \sqrt{1 + \alpha^2}\, \|w\| \quad \text{and} \quad   \frac{1}{\lambda\opt}  = \frac{\|w\|^2}{\gamma \opt} + \frac{2}{\alpha} \sqrt{\w^\top \covsa \w}.
    \]
    In addition, \cite[Theorem~D.4.1]{ref:ben2001lectures} implies that any primal maximizer $(\mu^\star, \cov^\star, t^\star)$ must also be a maximizer of
    \begin{align}
        \label{eq:lagrangian}
        \max_{\m, \cov \succeq 0, t \geq 0}~L(\m, \cov, t, \gamma\opt, \lambda\opt).
    \end{align}
    As the Lagrangian is additively separable in~$\mu$, $\cov$ and~$t$, the maximizers $\mu\opt$, $\cov\opt$ and $t\opt$ can be determined separately. Indeed, one readily verifies that~$\mu\opt$ must be a maximizer of the problem $\max_{\m} \{ -\m^\top \w - \gamma\opt \| \m - \msa \|^2\}$, which is uniquely solved by $\mu^\star = \msa - \w / (2 \gamma^\star)$. Similarly, $\cov\opt$ must be a maximizer of the problem
    \begin{align}
        \label{eq:aux:cov}
        \Max{\cov \succeq 0} \Tr{\cov (\lambda \w\w^\top - \gamma I)} + 2\gamma \Tr{\big( \covsa^\half \cov \covsa^\half \big)^\half} = \Max{B \succeq 0}
        \Tr{B^2 \Delta} + 2 \gamma \Tr{B},
    \end{align}
    where the equality exploits the substitution $B \gets ( \covsa^{\frac{1}{2}} \cov \covsa^{\frac{1}{2}})^{\frac{1}{2}}$ and the definition $\Delta = \covsa^{-\half}(\lambda\opt \w\w^{\top} - \gamma\opt I)\covsa^{-\half}$. As in the proof of Theorem~\ref{theorem:meanstd}, one can show that the second maximization problem in~\eqref{eq:aux:cov} is uniquely solved by $B^\star = \covsa^{\frac{1}{2}} ( I - \frac{\lambda\opt}{\gamma\opt} ww^\top )^{-1} \covsa^{\frac{1}{2}}$, which implies that the first maximization problem in~\eqref{eq:aux:cov} is uniquely solved~by 
    \begin{align*}
        \cov^\star 
        &= \left( I - \frac{\lambda\opt}{\gamma\opt} ww^\top \right)^{-1} \covsa \left( I - \frac{\lambda\opt}{\gamma\opt} ww^\top \right)^{-1} \\
        &= \left(I + \frac{\lambda\opt}{\gamma\opt - \lambda\opt \| \w \|^2} \w \w^\top \right) \covsa \left(I + \frac{\lambda\opt}{\gamma\opt - \lambda\opt \| \w \|^2} \w \w^\top \right).
    \end{align*}
    Here, the last equality exploits the Sherman-Morrison-Woodbury identity~\cite[Corollary~2.8.8]{ref:bernstein2009matrix}.
    Finally, $t\opt$ must be a maximizer of $\max_{t\geq 0} \alpha t-\lambda\opt t^2$, which is uniquely solved by $t\opt = \alpha / (2 \lambda\opt)$. The claim now follows from the formulas for the dual variables~$\gamma\opt$ and $\lambda\opt$ substituted into the formulas for~$\mu\opt$, $\cov\opt$ and~$t\opt$.
\end{proof}

\begin{proof}[Proof of Proposition~\ref{proposition:VaR}]
    Proposition~1 in~\cite{ref:yu2009projection} provides an analytical formula for the Chebyshev risk of a portfolio loss function with respect to VaR. As the Chebyshev risk coincides with the Gelbrich risk for~$\rho=0$, the standard risk coefficient is readily found by comparison with~\eqref{eq:MS:1}. 
\end{proof}

\begin{proof}[Proof of Proposition~\ref{proposition:CVaR}]
    This follows from~\cite[Proposition~2]{ref:yu2009projection}, similar to Proposition~\ref{proposition:VaR}.
\end{proof}

\begin{proof}[Proof of Proposition~\ref{proposition:meanstd}]
    Equation~\eqref{eq:MS:1} for~$\rho=0$ reveals that~$\alpha=\beta$.
\end{proof}

\begin{proof}[Proof of Proposition~\ref{proposition:spectral}]
    This follows from~\cite[Theorem~2]{ref:li2018risk}, similar to Proposition~\ref{proposition:VaR}.
\end{proof}

\begin{proof}[Proof of Proposition~\ref{proposition:coherent}]
    This follows from~\cite[Theorem~3]{ref:li2018risk}, similar to Proposition~\ref{proposition:VaR}.
\end{proof}

\begin{proof}[Proof of Proposition~\ref{proposition:distortion}]
    This follows from~\cite[Theorem~3.10]{ref:cai2020distributionally}, similar to Proposition~\ref{proposition:VaR}. The same theorem reveals that the Chebyshev risk does not change if~$h$ is replaced with its right-continuous modification. In light of~\eqref{eq:two-layer}, this invariance remains valid if uncertainty is modeled by a Gelbrich ambiguity set. Thus, we may always assume without loss of generality that~$h$ is right-continuous.
\end{proof}

\section{Proofs of Section~\ref{sec:nonlinearportfolio}}
\begin{proof}[Proof of Theorem~\ref{theorem:WC-exp}]
    By the two-layer decomposition~\eqref{eq:two-layer-a} of the Gelbrich ambiguity set, we find
    \begin{subequations}
        \begin{align}
            \Sup{\QQ \in \mathbb{G}_{\rho}(\widehat \mu, \covsa)} \EE_\QQ \left[ \ell(\xi) \right] 
            &= \Sup{(\mu, \cov) \in \U_{\rho}(\widehat \mu, \covsa)}\Sup{\QQ \in \Ambi(\mu, \cov)} \EE_\QQ[ \ell(\xi)] \label{eq:exp:1} \\
            &= \Sup{(\mu, \cov) \in \U_{\rho}(\widehat \mu, \covsa)} \Inf{(y_0, y, Y) \in \mathcal Y} \; y_0 + 2 \mu^\top y + \Tr{(\cov + \mu \mu^\top)Y} \label{eq:exp:2} \\
            &= \Sup{(\mu, M) \in \V_{\rho}(\widehat \mu, \covsa)} \Inf{(y_0, y, Y) \in \mathcal Y} \; y_0 + 2 \mu^\top y + \Tr{MY} \label{eq:exp:3} \\
            &= \Inf{(y_0, y, Y) \in \mathcal Y} \Sup{(\mu, M) \in \V_{\rho}(\widehat \mu, \covsa)} \; y_0 + 2 \mu^\top y + \Tr{MY} \label{eq:exp:4} \\
            &= \Inf{(y_0, y, Y) \in \mathcal Y} \; y_0 + \delta^*_{\V_\rho(\widehat \mu, \covsa)} (2y, Y), \label{eq:exp:5}
        \end{align}
    \end{subequations}
    where~\eqref{eq:exp:2} exploits duality. Note that the dual problem in~\eqref{eq:exp:2} can be constructed as in the proof of \cite[Lemma~1]{ref:calafiore2009parameter}. Strong duality holds because $\rho>0$, which implies that the uncertainty set $\U_{\rho}(\widehat \mu, \covsa)$ contains a point $(\mu,\cov)$ with $\cov\succ 0$. Similar arguments were used in~\cite[\S~4.2]{kuhn2025distributionally}, and thus details are omitted. In addition, \eqref{eq:exp:3} holds because~$(\m,\cov)\in \mathcal U_\rho(\msa, \covsa)$ if and only if~$(\m,\cov +\m\m^\top)\in \mathcal V_\rho(\msa, \covsa)$, \eqref{eq:exp:4} follows from Sion's minimax theorem~\cite{ref:sion1958minimax}, which applies because $\V_{\rho}(\widehat \mu, \covsa)$ is convex and compact by virtue of Proposition~\ref{prop:compact:V}, and~\eqref{eq:exp:5} exploits the definition of the support function. By Proposition~\ref{proposition:support:V}, we thus obtain the optimization problem given in the theorem statement. 
\end{proof}

The proof of Theorem~\ref{theorem:poly-var} relies on the following corollary, which generalizes \cite[Lemma~A.2]{ref:zymler2013distributionally}.

\begin{corollary}[Worst-case probabilities over Gelbrich ambiguity sets] \label{corollary:WC-prob}
    If $\Xi \subseteq \mathbb R^n$ as a (not necessarily convex) Borel set, $\mc S = \mc M_2$ and~$\rho>0$, then we have
    \begin{align*}
        \Sup{\QQ \in \mathcal{G}_{\rho}(\widehat \mu, \covsa)} \QQ( \xi \in \Xi) = 
        \left\{
        \begin{array}{cl}
        \inf &  y_0 + \gamma \big( \rho^2 - \| \widehat \mu \|^2 - \Tr{\covsa} \big) + z + \Tr{Z} \\
        \st &\gamma \in \mathbb R_+, \; y_0 \in \mathbb R, \; y \in \mathbb R^n, \; Y \in \mathbb{S}^n, \; z \in \mathbb R_+, \; Z \in \PSD^n \\
        &\begin{bmatrix} \gamma I - Y & \gamma \covsa^\half \\ \gamma \covsa^\half & Z \end{bmatrix} \succeq 0, \; \begin{bmatrix} \gamma I - Y & \gamma \widehat \mu + y \\ (\gamma \widehat \mu + y)^\top & z \end{bmatrix} \succeq 0 , \; \begin{bmatrix} Y & y \\[0.5ex] y^\top  & y_0 \end{bmatrix} \succeq 0 \\[2.5ex]
        & y_0 + 2y^\top \xi + \xi^\top Y \xi \geq 1 \quad \forall \xi \in \Xi.
        \end{array}
        \right.
    \end{align*}
\end{corollary}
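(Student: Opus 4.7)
The plan is to obtain the corollary as an essentially immediate specialization of Theorem~\ref{theorem:WC-exp}. The key observation is that the worst-case probability is a worst-case expectation of an indicator, namely
\[
\sup_{\QQ \in \mathcal{G}_\rho(\widehat \mu, \covsa)} \QQ(\xi \in \Xi) \;=\; \sup_{\QQ \in \mathcal{G}_\rho(\widehat \mu, \covsa)} \EE_\QQ\!\left[\mathbf{1}_\Xi(\xi)\right],
\]
where the indicator $\mathbf{1}_\Xi$ belongs to $\mc L_0$ because $\Xi$ is Borel. Applying Theorem~\ref{theorem:WC-exp} with $\ell = \mathbf{1}_\Xi$ yields the target semidefinite program modulo a reformulation of the set $\mathcal Y$ for this particular loss.

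For $\ell = \mathbf{1}_\Xi$, the defining condition of $\mathcal Y$ splits into the two pointwise inequalities $y_0 + 2y^\top \xi + \xi^\top Y \xi \geq 1$ on $\Xi$ and $y_0 + 2y^\top \xi + \xi^\top Y \xi \geq 0$ on $\mathbb R^n \setminus \Xi$. The first of these already appears verbatim as the last constraint in the claimed reformulation. My next step would be to observe that, without loss of generality, the second can be strengthened to global nonnegativity of the quadratic on all of $\mathbb R^n$, since the stricter $\geq 1$ constraint on $\Xi$ already implies the $\geq 0$ bound there. Global nonnegativity of $\xi^\top Y\xi + 2y^\top\xi + y_0$ is in turn equivalent to the linear matrix inequality
\[
\begin{bmatrix} Y & y \\ y^\top & y_0 \end{bmatrix} \succeq 0,
\]
by the standard homogenization argument: positive semidefiniteness yields nonnegativity upon testing against $[\xi;1]$, and conversely nonnegativity for all $\xi$ forces $Y \succeq 0$ (by rescaling $\xi \to s\xi$ and letting $s \to \infty$) which together with the $[\xi;1]$ bound gives PSD-ness of the full block.

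With $\mathcal Y$ replaced by these two equivalent conditions, the remainder of the program, comprising the variables $\gamma, z, Z$, the linear objective $y_0 + \gamma(\rho^2 - \|\widehat \mu\|^2 - \Tr{\covsa}) + z + \Tr{Z}$ and the two $(n{+}1) \times (n{+}1)$ block LMIs that encode the support function of $\mathcal V_\rho(\widehat \mu, \covsa)$, carries over unchanged from Theorem~\ref{theorem:WC-exp}. No additional dualization or minimax argument is required, so I do not anticipate any substantive obstacle; the only mild care point is verifying the equivalence between imposing nonnegativity only on $\mathbb R^n \setminus \Xi$ versus on all of $\mathbb R^n$, which is transparent once the $\geq 1$ constraint on $\Xi$ is already in force.
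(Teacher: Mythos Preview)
Your proposal is correct and matches the paper's proof essentially step for step: apply Theorem~\ref{theorem:WC-exp} to the indicator $\ell=\mathds 1_\Xi$, split the defining constraint of~$\mathcal Y$ into the global nonnegativity condition and the $\geq 1$ condition on~$\Xi$, and recast the former as the linear matrix inequality $\begin{bmatrix} Y & y \\ y^\top & y_0 \end{bmatrix}\succeq 0$. The only cosmetic difference is that the paper states the global $\geq 0$ constraint directly rather than first writing it on $\mathbb R^n\setminus\Xi$ and then strengthening, but your justification for this step is exactly the right one.
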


\begin{proof}[Proof of Corollary~\ref{corollary:WC-prob}]
    The probability $\QQ( \xi \in \Xi)$ can be viewed as the expected value of the indicator loss function~$\ell =\mathds{1}_{\Xi}$ corresponding to the set~$\Xi$, which is defined through $\mathds{1}_{\Xi}(\xi) =1$ if~$\xi \in \Xi$ and $\mathds{1}_{\Xi}(\xi) =0$ if~$\xi \not\in \Xi$. Thus, we can address worst-case probability problems with the techniques of Theorem~\ref{theorem:WC-exp}. Specifically, the convex set $\mathcal Y$ associated with the indicator loss function can be represented as
    \begin{align*}
        \mathcal Y &= \left\{ y_0 \in \mathbb R, \; y \in \mathbb R^n, \; Y \in \Sym^n: 
        y_0 + 2y^\top \xi + \xi^\top Y \xi \geq \mathds{1}_{\Xi}(\xi) ~ \forall \xi \in \mathbb R^n
        \right\} \\
        &= \left\{ y_0 \in \mathbb R, \; y \in \mathbb R^n, \; Y \in \Sym^n: y_0 + 2y^\top \xi + \xi^\top Y \xi \geq 0 ~ \forall \xi \in \mathbb R^n,~
        y_0 + 2y^\top \xi + \xi^\top Y \xi \geq 1 ~ \forall \xi \in \Xi
        \right\} \\
        &= \left\{ y_0 \in \mathbb R, \; y \in \mathbb R^n, \; Y \in \Sym^n: \begin{bmatrix} Y & y \\[0.5ex] y^\top  & y_0 \end{bmatrix} \succeq 0, \; y_0 + 2y^\top \xi +  \xi^\top Y \xi \geq 1 ~ \forall \xi \in \Xi \right\}.
    \end{align*}
    The claim follows by substituting this representation of~$\mathcal Y$ into the problem of Theorem~\ref{theorem:WC-exp}.
\end{proof}

\begin{proof}[Proof of Theorem~\ref{theorem:poly-var}]
    By introducing an auxiliary variable $\tau$, the Gelbrich VaR can be recast as 
    \begin{align*}
        \Sup{\QQ \in \mathbb{G}_{\rho}(\widehat \mu, \covsa)} \QQ\text{-}\VaR_{\beta}( \ell(\xi)) 
        &= \inf_{\tau \in \R} \bigg\{ \tau : \Sup{\QQ \in \mathbb{G}_{\rho}(\widehat \mu, \covsa)} \QQ\text{-}\VaR_{\beta}( \ell(\xi)) \leq \tau \bigg\}\\
        &= \inf_{\tau \in \R} \bigg\{ \tau : \QQ\text{-}\VaR_{\beta}( \ell(\xi)) \leq \tau ~ \forall \QQ \in \mathbb{G}_{\rho}(\widehat \mu, \covsa) \bigg\} \\
        &= \inf_{\tau \in \R} \bigg\{ \tau : \QQ \left( \ell(\xi) > \tau \right) \leq \beta ~ \forall \QQ \in \mathbb{G}_{\rho}(\widehat \mu, \covsa) \bigg\} \\
        &= \inf_{\tau \in \R} \bigg\{ \tau : \sup_{\QQ \in \mathbb{G}_{\rho}(\widehat \mu, \covsa)} \QQ \left( \ell(\xi) > \tau \right) \leq \beta \bigg\}\\
        &= \inf_{\tau \in \R} \bigg\{ \tau : \sup_{\QQ \in \mathbb{G}_{\rho}(\widehat \mu, \covsa)} \QQ \left( \ell(\xi) \geq \tau \right) \leq \beta \bigg\},
    \end{align*} 
    where the last equality follows from a straightforward adaptation of the arguments in \cite[p.~175]{ref:zymler2013distributionally}, which imply that the mapping $\tau\mapsto \sup_{\QQ \in \mathbb{G}_{\rho}(\widehat \mu, \covsa)} \QQ( \ell(\xi) > \tau )$ is lower semi-continuous and non-increasing. Corollary~\ref{corollary:WC-prob} then allows us to reformulate the worst-case probability on the last line~as
    \begin{align*}
        \Sup{\QQ \in \mathbb G_{\rho}(\widehat \mu, \covsa)} \QQ \left( \ell(\xi) \geq \tau \right) 
        = \left\{
        \begin{array}{cl}
            \inf &  y_0 + \gamma \big( \rho^2 - \| \widehat \mu \|^2 - \Tr{\covsa} \big) + z + \Tr{Z} \\[1ex]
            \st &\gamma,z \in \mathbb R_+, \; y_0 \in \mathbb R, \; y \in \mathbb R^n, \; Y \in \mathbb{S}^n, \; Z \in \PSD^n \\[1ex]
            &\begin{bmatrix} \gamma I - Y & \gamma \covsa^\half \\ \gamma \covsa^\half & Z \end{bmatrix} \succeq 0, \; \begin{bmatrix} \gamma I - Y & \gamma \widehat \mu + y \\ (\gamma \widehat \mu + y)^\top & z \end{bmatrix} \succeq 0, \; \begin{bmatrix} Y & y \\[0.5ex] y^\top  & y_0 \end{bmatrix} \succeq 0 \\[3ex]
            &  y_0 + 2y^\top \xi + \xi^\top Y \xi \geq 1 \qquad \forall \xi \in \Xi_\tau,
        \end{array}
        \right. 
    \end{align*}
    where the uncertainty set $\Xi_\tau=\{ \xi \in \mathbb R^n: \ell(\xi) \geq \tau \}$ of the emerging robust constraint depends on the decision variable~$\tau$. Recalling the polyhedrality of $\ell(\xi)$, this constraint is satisfied if and only if 
    \begin{align*}
        \left.\begin{array}{cl}
            \DS \inf_{\xi \in \mathbb R^n} & \DS y_0 + 2y^\top \xi + \xi^\top Y \xi \\[1ex]
            \st & \DS \tau + \w^\top \xi + \w^\top\max\{ A\xi + a, B\xi + b \} \leq 0
        \end{array}\right\}\geq 1.
    \end{align*}
    By standard duality arguments akin to those used in \cite[Theorem~2.3]{ref:zymler2013distributionally}, this inequality holds if and only if there exists a scalar $\eta\geq 0$ and a vector $\zeta\in\R^n$, $\zeta\leq \w$, such that
    \begin{align*}
       \begin{bmatrix} Y & y \\ y^\top & y_0 \end{bmatrix} + \begin{bmatrix} 0 & \frac{\eta}{2} ( \w + (A - B)^\top \zeta + B^\top \w ) \\ \frac{\eta}{2} ( \w + (A - B)^\top \zeta + B^\top \w )^\top & \eta(\tau + (a-b)^\top \zeta + b^\top \w)-1 \end{bmatrix} \succeq 0.
    \end{align*}
    Substituting the emerging matrix inequality into our reformulation of the worst-case probability $\sup_{\QQ \in \mathbb G_{\rho}(\widehat \mu, \covsa)} \QQ ( \ell(\xi) \geq \tau)$ then yields the following reformulation of the Gelbrich VaR problem.
    \begin{align}
        \label{eq:WC-prob:2}
        \begin{array}{cl}
            \inf &  \tau \\[-1ex]
            \st & \gamma, \eta, z \in \mathbb R_+, ~ \tau, y_0, v_0 \in \mathbb R, ~ v, y \in \mathbb R^n, ~ \zeta \in \mathbb R_+^k, ~ Y \in \mathbb{S}^n, ~ Z \in \PSD^n \hspace{-3em} \\
            & \zeta \leq \w, ~ y_0 + \gamma \big( \rho^2 - \| \widehat \mu \|^2 - \Tr{\covsa} \big) + z + \Tr{Z} \leq \beta \\
            & v =  \frac{1}{2} \left( \w + (A - B)^\top \zeta + B^\top \w \right), ~ v_0 = \tau + (a-b)^\top \zeta + b^\top \w \\
            &\begin{bmatrix} \gamma I - Y & \gamma \covsa^\half \\ \gamma \covsa^\half & Z \end{bmatrix} \succeq 0, \; \begin{bmatrix} \gamma I - Y & \gamma \widehat \mu + y \\ (\gamma \widehat \mu + y)^\top & z \end{bmatrix} \succeq 0 \\[3ex]
            & \begin{bmatrix} Y & y \\[0.5ex] y^\top  & y_0 \end{bmatrix} \succeq 0, \; \begin{bmatrix} Y & y + \eta v \\ y^\top + \eta v^\top & y_0 + \eta v_0 - 1 \end{bmatrix} \succeq 0.
        \end{array} 
    \end{align}
    Note that problem~\eqref{eq:WC-prob:2} is still non-convex because of the bilinear terms in the last constraint. To eliminate these bilinear terms, we first show that~$\eta>0$ at any feasible solution of~\eqref{eq:WC-prob:2}. Assume to the contrary that there exists a feasible solution with $\eta=0$. The last semidefinite constraint then implies that $y_0 \ge 1$. In addition, as $Y\succeq 0$ and thus $\gamma I - Y \preceq \gamma I$, the first semidefinite constraint implies via a Schur complement argument that $\Tr{Z} \ge \gamma \Tr{\covsa}$. Similarly, the second and the fourth semidefinite constraints 
    imply that $z \ge \gamma \|\widehat \mu\|^2 + 1 - y_0$. These insights collectively show that 
    \[
        y_0 + \gamma \big( \rho^2 - \| \widehat \mu \|^2 - \Tr{\covsa} \big) + z + \Tr{Z} \geq  1+\gamma\rho^2\geq 1,
    \]
    which contradicts the second constraint in~\eqref{eq:WC-prob:2} for any $\beta \in (0, 1)$. Hence, we may conclude that our assumption was false and that~$\eta>0$. 
    Consequently, we can divide all constraints of problem~\eqref{eq:WC-prob:2} by~$\eta$, and we can apply the substitution $(\gamma, \eta, z, y_0, y, Y, Z) \leftarrow (\gamma/\eta, 1/\eta, z/\eta, y_0/\eta, y/\eta, Y/\eta, Z/\eta)$. In doing so, problem~\eqref{eq:WC-prob:2} reduces to the desired (tractable) semidefinite program.
\end{proof}

\begin{proof}[Proof of Theorem~\ref{theorem:poly-cvar}]
We have
    \begin{align*}
        \Sup{\QQ \in \mathbb{G}_{\rho}(\widehat \mu, \covsa)} \QQ\text{-}\CVaR_{\beta}( \ell(\xi)) 
        &= \Sup{(\mu, \cov) \in \U_{\rho}(\widehat \mu, \covsa)}\Sup{\QQ \in \Ambi(\mu, \cov)} \QQ\text{-}\CVaR_{\beta}( \ell(\xi)) \\
        & = \Sup{(\mu, \cov) \in \U_{\rho}(\widehat \mu, \covsa)}\Sup{\QQ \in \Ambi(\mu, \cov)} \QQ\text{-}\VaR_{\beta}( \ell(\xi))\\
        &=\Sup{\QQ \in \mathbb{G}_{\rho}(\widehat \mu, \covsa)} \QQ\text{-}\VaR_{\beta}( \ell(\xi)),
    \end{align*}
where the first and the third equalities follow from the two-layer decomposition~\eqref{eq:two-layer-a} of the Gelbrich ambiguity set, while the second equality holds because of \cite[Theorem~2.2]{ref:zymler2013distributionally}.
\end{proof}

\begin{proof}[Proof of Theorem~\ref{theorem:quad-var}]
    The proof widely parallels that of Theorem~\ref{theorem:poly-var} and can thus be kept short. First, the Gelbrich VaR can be expressed as the optimal value of a minimization problem over $\tau\in\R$ that involves the worst-case probability $\sup_{\QQ \in \mathbb{G}_{\rho}(\widehat \mu, \covsa)} \QQ( \ell(\xi) \geq \tau)$.
    By Corollary~\ref{corollary:WC-prob}, this worst-case probability can again be expressed as the optimal value of a minimization problem with a robust constraint. Using strong convex duality, this robust constraint is equivalent to a matrix inequality. In summary, 
    the Gelbrich VaR of a quadratic loss function thus equals
    \begin{align}
        \label{eq:WC-quadprob:3}
        \begin{array}{cl}
            \inf & \tau  \\[-1ex]
            \st &\gamma, z, \eta \in \mathbb R_+,  \; y_0 \in \mathbb R, \; y \in \mathbb R^n, \; Y \in \mathbb{S}^n, \; Z \in \PSD^n, \; \tau \in \mathbb R \\
            & y_0 + \gamma \big( \rho^2 - \| \widehat \mu \|^2 - \Tr{\covsa} \big) + z + \Tr{Z} \leq \beta \\
            &\begin{bmatrix} Y & y \\ y^\top & y_0 \end{bmatrix} + \begin{bmatrix} \eta \Gamma(\w) & \eta \Delta(\w) \\ \eta \Delta(\w)^\top & -1 + 2\eta(\tau + \theta(\w)) \end{bmatrix}  \succeq 0 \\[3ex]
            &\begin{bmatrix} \gamma I - Y & \gamma \covsa^\half \\ \gamma \covsa^\half & Z \end{bmatrix} \succeq 0, \; \begin{bmatrix} \gamma I - Y & \gamma \widehat \mu + y \\ (\gamma \widehat \mu + y)^\top & z \end{bmatrix} \succeq 0, \; \begin{bmatrix} Y & y \\[0.5ex] y^\top  & y_0 \end{bmatrix} \succeq 0.
        \end{array}
    \end{align}
    An analogous reasoning as in the proof of Theorem~\ref{theorem:poly-var} reveals that every feasible solution of problem~\eqref{eq:WC-quadprob:3} satisfies $\eta > 0$. The claim then follows by dividing the constraints of problem~\eqref{eq:WC-quadprob:3} by $\eta$ and by applying a similar variable substitution as in the proof of Theorem~\ref{theorem:poly-var}.
\end{proof}

\begin{proof}[Proof of Theorem~\ref{theorem:quad-cvar}]
    The proof is analogous to that of Theorem~\ref{theorem:poly-cvar} and thus omitted.
\end{proof}

The tractability of the generic worst-case expectation problem addressed in Theorem~\ref{theorem:WC-exp} depends on the properties of the underlying loss function~$\ell$. The following corollary shows that this worst-case expectation problem is equivalent to a tractable semidefinite programming whenever~$\ell$ is defined as the pointwise maximum of finitely many (possibly indefinite) quadratic functions.

\begin{corollary}[Piecewise quadratic loss] \label{corollary:WC-pq}
    Suppose that $\ell(\xi) = \max_{j=1,\ldots,J} \{\xi^\top Q_j \xi + 2 q_j^\top \xi + q^0_{j} \}$ for some $Q_j \in \mathbb S^n$, $q_j \in \mathbb R^n$ and $q_{j}^0 \in \mathbb R$, $j =1,\ldots,J$.
    If $\mc S = \mc M_2$ and $\rho>0$, then we have
    \begin{equation*}
        \Sup{\QQ \in \mc G_{\rho}(\widehat \mu, \covsa)} \EE_{\QQ}[\ell(\xi)] = \left\{
        \begin{array}{cl}
            \inf & y_0 + \gamma \big( \rho^2 - \| \widehat \mu \|^2 - \Tr{\covsa} \big) + z + \Tr{Z} \\
            \st &\gamma,z \in \mathbb R_+,  \; y_0 \in \mathbb R, \; y \in \mathbb R^n, \; Y \in \mathbb{S}^n, \; Z \in \PSD^n \\
            &\begin{bmatrix} \gamma I - Y & \gamma \covsa^\half \\ \gamma \covsa^\half & Z \end{bmatrix} \succeq 0, \; \begin{bmatrix} \gamma I - Y & \gamma \widehat \mu + y \\ (\gamma \widehat \mu + y)^\top & z \end{bmatrix} \succeq 0\\[3ex]
            &\begin{bmatrix} Y - Q_j & y - q_j \\[0.5ex] y^\top - q_j^\top & y_0 - q^0_{j} \end{bmatrix} \succeq 0 \quad \forall j =1,\ldots,J.
        \end{array}
        \right.
    \end{equation*}
\end{corollary}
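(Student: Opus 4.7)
The plan is to specialize Theorem~\ref{theorem:WC-exp} to the piecewise quadratic loss function $\ell(\xi) = \max_{j=1,\ldots,J}\{\xi^\top Q_j \xi + 2 q_j^\top \xi + q^0_{j}\}$. The only work is to reformulate the infinite constraint set
\[
\mathcal Y = \Big\{ (y_0, y, Y) \in \mathbb R \times \mathbb R^n \times \Sym^n : y_0 + 2y^\top \xi + \xi^\top Y \xi \geq \ell(\xi) \; \forall \xi \in \mathbb R^n \Big\}
\]
as a finite system of linear matrix inequalities, after which the stated SDP follows by direct substitution.

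First I would observe that, since the pointwise maximum defining $\ell$ is attained at some index $j$ for every $\xi$, the defining inequality of $\mathcal Y$ holds for all $\xi \in \mathbb R^n$ if and only if, for each fixed $j \in \{1,\ldots,J\}$, the quadratic form
\[
(y_0 - q^0_{j}) + 2(y - q_j)^\top \xi + \xi^\top (Y - Q_j)\xi \geq 0 \quad \forall \xi \in \mathbb R^n.
\]
Nonnegativity of an arbitrary real quadratic form on $\mathbb R^n$ is equivalent to positive semidefiniteness of its homogenized symmetric matrix, so this family of constraints is equivalent~to
\[
\begin{bmatrix} Y - Q_j & y - q_j \\[0.5ex] y^\top - q_j^\top & y_0 - q^0_{j} \end{bmatrix} \succeq 0 \quad \forall j=1,\ldots,J.
\]
Thus $\mathcal Y$ admits the explicit semidefinite representation appearing in the corollary.

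Substituting this representation of $\mathcal Y$ into the convex program from Theorem~\ref{theorem:WC-exp} (which applies because $\mc S = \mc M_2$ and $\rho > 0$) immediately yields the announced SDP reformulation, concluding the proof. The only conceptual step is the homogenization trick for quadratic forms; everything else is bookkeeping, and no further duality or compactness argument is needed beyond those already invoked in the proof of Theorem~\ref{theorem:WC-exp}.
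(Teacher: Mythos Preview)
Your proposal is correct and follows essentially the same approach as the paper's proof: specialize Theorem~\ref{theorem:WC-exp} by rewriting the semi-infinite constraint defining~$\mathcal Y$ as the $J$ quadratic nonnegativity conditions, then express each via the standard homogenized linear matrix inequality and substitute. The paper's argument is identical in structure and content.
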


\begin{proof}[Proof of Corollary~\ref{corollary:WC-pq}]
    By Theorem~\ref{theorem:WC-exp}, the worst-case expectation problem at hand can be reformulated as an equivalent minimization problem. The convex set $\mathcal Y$ appearing in this reformulation satisfies
    \begin{align*}
        \mathcal Y 
        &= \left\{ y_0 \in \mathbb R, \; y \in \mathbb R^n, \; Y \in \Sym^n: y_0 + 2y^\top \xi + \xi^\top Y \xi \geq \xi^\top Q_j \xi + 2 q_j^\top \xi + q^0_{j} ~ \forall \xi \in \mathbb R^n,\, \forall j =1,\ldots,J
        \right\} \\
        &= \left\{ y_0 \in \mathbb R, \; y \in \mathbb R^n, \; Y \in \Sym^n: \begin{bmatrix} Y - Q_j & y - q_j \\[0.5ex] y^\top - q_j^\top  & y_0 - q_j^{0} \end{bmatrix} \succeq 0 ~ \forall j=1,\ldots,J \right\}.
    \end{align*}
    Thus, $\mathcal Y$ is semidefinite representable. The claim follows by substituting this representation of~$\mathcal Y$ into the optimization problem derived in Theorem~\ref{theorem:WC-exp}.
\end{proof}

\section{Mean-Variance Risk Measures}
\label{sect:mean:var}
Mean-variance risk measures appear in static portfolio selection~\cite{ref:markowitz1952portfolio, ref:follmer2008convex} and in myopic reformulations of dynamic portfolio optimization~\cite{ref:merton1969lifetime}, due to their analytic tractability. For any probability distribution~$\QQ \in \mc M$, the mean-variance risk measure with risk-aversion coefficient $\beta \ge 0$ of any loss function~$\ell \in \Lc_0$ is defined as
\[
    \risk_\QQ(\ell(\xi)) = \EE_\QQ[ \ell(\xi)] + \beta \Var_\QQ(\ell(\xi)),
\]
where $\Var_\QQ(\ell(\xi))$ denotes the variance of the loss $\ell(\xi)$ under the probability distribution~$\QQ$. 
Even though the mean-variance risk measure gives rise to a law-invariant family of translation invariant risk measures, it fails to be positive homogeneous, and therefore Theorem~\ref{theorem:meanstd} is not applicable. Nevertheless, the Gelbrich risk evaluation problem can still be reformulated as a tractable second-order cone program.

\begin{theorem}[Gelbrich mean-variance risk of linear loss functions]
    \label{theorem:meanvar}
    Suppose that $\{ \risk_\QQ\}_{\QQ \in \mc M}$ is a family of mean-variance risk measures with coefficient $\beta > 0$. If $\covsa \succ 0$, then the Gelbrich risk and the Wasserstein risk of any portfolio loss function~$\ell(\xi) = -\w^\top \xi$ coincide and are equal to the optimal value of the second-order cone program 
    \be \label{eq:MV:refor}
    \begin{array}{cl}
        \inf &~ \gamma \rho^2 - \msa^\top \w + \frac{1}{4} y + \beta z \\
        \st &~ \gamma \in \R_{+}, \; y \in \R_{+}, \; z \in \R_+ \\[2ex]
        &~ \left\Vert \begin{pmatrix} 2 \covsa^{\frac{1}{2}} \w \\ z + \beta y - 1 \end{pmatrix} \right\Vert \leq z - \beta y + 1, \;
        \left\Vert \begin{pmatrix} 2  \w \\ y - \gamma \end{pmatrix} \right\Vert \leq y + \gamma, \; \beta y \leq 1.
    \end{array}
    \ee
\end{theorem}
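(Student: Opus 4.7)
The starting point of the proof is the observation that for a \emph{linear} loss $\ell(\xi)=-w^\top\xi$, the mean-variance risk depends only on the first two moments of $\QQ$:
\[
\risk_\QQ(-w^\top\xi)=-w^\top\EE_\QQ[\xi]+\beta\, w^\top\Cov_\QQ(\xi)\,w.
\]
In particular, this value is constant on each structured Chebyshev slice $\mc C(\mu,\cov)$. Hence the two-layer decomposition~\eqref{eq:two-layer-a} collapses the inner maximization, and the Gelbrich risk reduces to
\[
\Sup{\QQ\in\mc G_\rho(\msa,\covsa)}\risk_\QQ(-w^\top\xi)=\Sup{(\mu,\cov)\in\U_\rho(\msa,\covsa)}\bigl\{-w^\top\mu+\beta\,w^\top\cov w\bigr\}=\delta^*_{\U_\rho(\msa,\covsa)}(-w,\beta ww^\top).
\]
I will therefore invoke Proposition~\ref{proposition:support:U} with $q=-w$ and the rank-one matrix $Q=\beta ww^\top$ to obtain an SDP with variables $\gamma,\tau\in\R_+$ and $Z\in\PSD^n$, objective $-w^\top\msa+\tau+\gamma(\rho^2-\Tr{\covsa})+\Tr{Z}$, and the matrix inequality $\big[\begin{smallmatrix}\gamma I-\beta ww^\top & \gamma\covsa^{1/2}\\ \gamma\covsa^{1/2} & Z\end{smallmatrix}\big]\succeq 0$ together with the rotated SOC $\|(-w;\tau-\gamma)\|\le\tau+\gamma$.

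The core work is then to exploit the rank-one structure of $\beta ww^\top$ to collapse the semidefinite constraint to a second-order cone constraint. Schur complementation (on the region $\gamma>\beta\|w\|^2$, which is enforced by the matrix inequality itself) gives $Z\succeq\gamma^2\covsa^{1/2}(\gamma I-\beta ww^\top)^{-1}\covsa^{1/2}$, so minimizing $\Tr{Z}$ with the Sherman--Morrison identity yields
\[
\Tr{Z}\;=\;\gamma\,\Tr{\covsa}+\frac{\gamma\beta\,w^\top\covsa w}{\gamma-\beta\|w\|^2}.
\]
I will introduce the substitutions $y=4\tau$ (turning $\tau\gamma\ge \|w\|^2/4$ into the rotated SOC $\|(2w;y-\gamma)\|\le y+\gamma$) and $z\ge w^\top\covsa w/(1-\beta y)$ (which, after squaring, is exactly $\|(2\covsa^{1/2}w;z+\beta y-1)\|\le z-\beta y+1$). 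Because the objective is monotone in both $y$ and $z$, the optimal point makes the first SOC tight ($y=\|w\|^2/\gamma$), whence $\beta y=\beta\|w\|^2/\gamma<1$ recovers the explicit constraint $\beta y\le 1$ and $z(1-\beta y)\ge w^\top\covsa w$ reproduces the closed-form expression for $\Tr{Z}$ above. The resulting objective $\gamma\rho^2-\msa^\top w+\tfrac14 y+\beta z$ matches~\eqref{eq:MV:refor} after noting that the $\gamma\,\Tr{\covsa}$ term cancels.

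Finally, the coincidence with the Wasserstein risk follows from the fact that the mean-variance risk is a function of the mean-covariance pair alone: the worst case over $\mc W_\rho(\Pnom)$ therefore depends only on the projection of $\mc W_\rho(\Pnom)$ onto $\R^n\times\PSD^n$, which by Proposition~\ref{prop:gelbrich-projection} equals $\U_\rho(\msa,\covsa)$ whenever $\covsa\succ 0$. The main obstacle is the step from SDP to SOCP: one must verify that the relaxation obtained by introducing the scalars $y$ and $z$ is in fact tight in \emph{both} directions, i.e.\ that every feasible SOCP triple $(\gamma,y,z)$ can be lifted to a feasible SDP triple $(\gamma,\tau,Z)$ with the same objective value (taking $\tau=y/4$ and $Z=\gamma^2\covsa^{1/2}(\gamma I-\beta ww^\top)^{-1}\covsa^{1/2}$, and checking that the implicit condition $\gamma>\beta\|w\|^2$ is guaranteed by $\beta y\le 1$ together with $y\gamma\ge\|w\|^2$). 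Handling the boundary cases ($w=0$, or $\gamma=\beta\|w\|^2$) by a limiting argument completes the reduction.
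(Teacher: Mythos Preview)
Your proposal is correct and follows essentially the same route as the paper: reduce the Gelbrich risk to the support function $\delta^*_{\U_\rho(\msa,\covsa)}(-w,\beta ww^\top)$, invoke Proposition~\ref{proposition:support:U}, eliminate $Z$ via Schur complement and Sherman--Morrison, and then introduce the scalars $y$ and $z$ to obtain the SOCP; the coincidence with the Wasserstein risk is likewise obtained from Proposition~\ref{prop:gelbrich-projection}. The only discrepancy is cosmetic: the paper first eliminates $\tau$ (setting $\tau=\|w\|^2/(4\gamma)$) and then introduces $y$ as a relaxation of $\|w\|^2/\gamma$, whereas you substitute $y=4\tau$ directly---these yield the same auxiliary variable, but your phrasing in the final paragraph that the lift has ``the same objective value'' is slightly imprecise (the lifted SDP objective is $\le$ the SOCP objective in general, with equality only when both cone constraints are tight, which suffices for the equality of infima).
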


\begin{proof}[Proof of Theorem~\ref{theorem:meanvar}]
    The equivalence of the Gelbrich risk and the Wasserstein risk for portfolio loss functions follows from Proposition~\ref{prop:gelbrich-projection} and the observation that the mean-variance risk measure depends on the distribution of the asset returns only through its first and second moments. It remains to be shown that the Gelbrich risk coincides with~\eqref{eq:MV:refor}.
    
    If~$w=0$, the Gelbrich risk evaluates to~$0$, and problem~\eqref{eq:MV:refor} is solved by~$\gamma=y=z=0$. Thus, the claim is trivially satisfied. From now on, we may assume without loss of generality that~$w\neq 0$. Using the decomposition~\eqref{eq:two-layer-a}, the Gelbrich risk can be recast as
    \begin{align*}
    \risk_{\mc{G}_{\rho}(\msa, \covsa)}(\ell) = \Sup{(\m, \cov) \in \U_\rho(\msa, \covsa)} \Sup{\QQ\in \mc C(\m, \cov)} \risk_{\QQ}(-\w^\top \xi) =
    \Sup{(\m, \cov) \in \U_\rho(\msa, \covsa)} \left\{ -\w^\top \m + \beta \w^\top \cov \w \right\}, 
    \end{align*}
    where the second equality uses the definition of the mean-variance risk measure. Note that the last optimization problem in the above expression evaluates the support function of $\U_\rho(\msa, \covsa)$ at the point~$(-\w, \beta \w\w^\top)$, and by Proposition~\ref{proposition:support:U} we thus have
    \begin{align*}
    \Sup{\QQ \in \mc G_{\rho}(\msa, \covsa)}  \; \risk_\QQ ( -\w^\top \xi ) 
    = \left\{
        \begin{array}{cl}
        \inf & -\msa^\top \w + \tau + \gamma \big(\rho^2 -  \Tr{\covsa} \big) + \Tr{Z} \\[1ex]
        \st & \gamma \in \R_+, \; \tau \in\R_+,\; Z \in \mathbb{S}^n_+ \\[1ex]
         & \begin{bmatrix} \gamma I - \beta \w\w^\top & \gamma \covsa^{\frac{1}{2}} \\ \gamma \covsa^{\frac{1}{2}} & Z \end{bmatrix} \succeq 0, \;  \left\| \begin{pmatrix} -\w \\ \tau - \gamma \end{pmatrix} \right\| \leq \tau + \gamma.
        \end{array}
    \right.
    \end{align*}
    Fix any feasible solution~$(\gamma, \tau, Z)$ of the resulting semidefinite program. The first matrix inequality implies that~$\gamma I\succeq \beta ww^\top$. As~$\beta>0$ and~$w\neq 0$, this is only possible if~$\gamma>0$, which in turn implies that~$\gamma\covsa^{\frac{1}{2}}\succ 0$. By \cite[Corollary~8.2.2]{ref:bernstein2009matrix}, the first matrix inequality in the above semidefinite program, therefore, also implies that~$\gamma I - \beta \w\w^\top \succ 0$, which is equivalent to~$\gamma > \beta \| \w \|^2$.
    It is easy to verify that, at optimality, $\tau$ coincides with~$\|w\|^2/(4\gamma)$ and~$Z$ coincides with its Schur complement~$\gamma^2(\gamma I-\beta ww^\top)^{-1}\covsa$, which is well-defined because~$\gamma I - \beta \w\w^\top \succ 0$. Thus, $\tau$ and~$Z$ can be eliminated together with their constraints to obtain
    \begin{align*}
        &\phantom{=} \Sup{\QQ \in \mc G_{\rho}(\msa, \covsa)}  \; \risk_\QQ ( -\w^\top \xi ) \\
        &= \inf_{\gamma > \beta \| \w \|^2} - \msa^\top \w + \frac{\| \w \|^2}{4\gamma} + \gamma \big(\rho^2 - \Tr{\covsa} \big) + \gamma^2 \Tr{(\gamma I - \beta \w\w^\top)^{-1} \covsa} \\
        &= \Inf{\gamma > \beta \|\w\|^2} \gamma \rho^2 - \msa^\top \w + \frac{\|w\|^2}{4\gamma}  + \beta (1 - \beta \gamma^{-1} \| w\|^2)^{-1} \w^\top \covsa \w,
    \end{align*}
    where the second equality follows from the Sherman-Morrison formula \cite[Corollary~2.8.8]{ref:bernstein2009matrix}. Introducing an auxiliary variable $y \geq 0$ and rewriting the constraint in the last expression as~$ \| \w\|^2/ \gamma \le y \le \beta^{-1}$ yields
    \be \notag
        \Sup{\QQ \in \mc G_{\rho}(\msa, \covsa)} \risk_\QQ \left( -\w^\top \xi \right) = 
        \left\{
        \begin{array}{cl}
        \inf & \gamma \rho^2 - \msa^\top \w + \frac{y}{4}  + \beta (1 - \beta y)^{-1} \w^\top \covsa \w  \\
        \st &\| \w\|^2/ \gamma \le y \le \beta^{-1}, \; \gamma > 0.
    \end{array}
    \right.
    \ee
    As~$\gamma\ge 0$, $y\ge 0$ and $\beta y\leq 1$, we may use \cite[Equation~(8)]{ref:lobo1998socp} to reformulate the hyperbolic constraint and the quadratic-over-linear term in the objective function in terms of second-order cone constraints, that is,
    \begin{align*}
        \| \w\|^2 \le \gamma y~
        &\iff~ \left\Vert \begin{pmatrix} 2  \w \\ y - \gamma \end{pmatrix} \right\Vert \leq y + \gamma
        \quad\text{and}\quad \frac{\w^\top \covsa \w}{1 - \beta y} \leq z 
        ~ \\
        &\iff
        ~
        \left\Vert \begin{pmatrix} 2 \covsa^{\frac{1}{2}} \w \\ z + \beta y - 1 \end{pmatrix} \right\Vert \leq z - \beta y + 1,
    \end{align*}
    where~$z\ge 0$ is an auxiliary epigraphical variable. Thus, the claim follows.
\end{proof}

In analogy to Proposition~\ref{prop:WC}, we can determine the first and second moments of the worst-case probability distributions that maximize the Gelbrich mean-variance risk of a fixed linear loss function. 

\begin{proposition}[Worst-case moments]
    \label{prop:WC:meanvar}
    Suppose that $\{ \risk_\QQ\}_{\QQ \in \mc M}$ is a family of mean-variance risk measures with $\beta > 0$. If $\covsa \succ 0$, $w\neq 0$ and~$\rho>0$, then any extremal distribution $\QQ\opt$ that attains the Gelbrich risk of the linear loss function~$\ell(\xi) = -\w^\top \xi$ has the same mean $\m\opt \in \R^n$ and covariance matrix $\cov\opt \in \PSD^n$, where
    \begin{align*}
    \m\opt = \msa  - \frac{\w}{2\gamma\opt}, \quad \cov\opt = \left(I - \frac{\beta \w \w^\top}{\gamma\opt}\right)^{-1} \covsa \left(I - \frac{\beta \w \w^\top}{\gamma\opt}\right)^{-1}
    \end{align*}
    and $\gamma\opt > \beta \| \w \|^2$ is the unique solution of the nonlinear algebraic equation
    \begin{align*}
        \frac{\| w \|^2}{4\gamma^2} + \Tr{\covsa \left( I - \gamma (\gamma I - \beta \w \w^\top)^{-1} \right)^2} = \rho^2.
    \end{align*}
\end{proposition}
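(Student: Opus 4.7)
The plan is to reduce the worst-case moment problem for the mean-variance risk to the support function problem analyzed in Lemma~\ref{lemma:extremal:support:U}, and then to read off the claimed formulas. The key observation is that the mean-variance risk $\risk_\QQ(-w^\top \xi) = -w^\top \EE_\QQ[\xi] + \beta\, w^\top \Cov_\QQ[\xi]\, w$ depends on $\QQ$ only through its first and second moments, so inside each structured Chebyshev layer the risk is constant in $\QQ$.

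First, I will apply the two-layer decomposition~\eqref{eq:two-layer-a} of the Gelbrich ambiguity set together with the moment-only dependence of $\risk_\QQ(-w^\top\xi)$ to obtain
\begin{equation*}
    \risk_{\mc G_\rho(\msa,\covsa)}(-w^\top \xi)
    = \Sup{(\m,\cov)\in\U_\rho(\msa,\covsa)} \bigl\{ -w^\top \m + \beta\, w^\top \cov\, w \bigr\}
    = \delta^*_{\U_\rho(\msa,\covsa)}(-w, \beta ww^\top),
\end{equation*}
which is precisely the support function of $\U_\rho(\msa,\covsa)$ evaluated at $(q,Q) = (-w,\beta ww^\top)$.

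Next, I will invoke Lemma~\ref{lemma:extremal:support:U} with this choice of $(q,Q)$. The standing assumptions $\covsa\succ 0$, $\rho>0$, and $w\neq 0$ (together with $\beta>0$) ensure $q\neq 0$, so the lemma applies and guarantees a unique maximizer $(\m^\star,\cov^\star)$ of the support function problem, with $\gamma^\star > \max\{\lambda_{\max}(\beta ww^\top),0\} = \beta\|w\|^2$. Substituting $q = -w$ and $Q = \beta ww^\top$ into the closed-form expressions~\eqref{eq:extremal:support:U:value} immediately yields
\begin{equation*}
    \m^\star = \msa - \frac{w}{2\gamma^\star},
    \qquad
    \cov^\star = \Bigl(I - \tfrac{\beta ww^\top}{\gamma^\star}\Bigr)^{-1} \covsa \Bigl(I - \tfrac{\beta ww^\top}{\gamma^\star}\Bigr)^{-1},
\end{equation*}
while the nonlinear optimality condition~\eqref{eq:extremal:support:U:FOC} specializes exactly to the algebraic equation stated in the proposition.

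Finally, I will argue uniqueness of the moments of any extremal $\QQ^\star$. Since $\risk_\QQ(-w^\top\xi)$ is a function of $(\EE_\QQ[\xi],\Cov_\QQ[\xi])$ alone, an extremal $\QQ^\star \in \mc G_\rho(\msa,\covsa)$ must satisfy $(\EE_{\QQ^\star}[\xi],\Cov_{\QQ^\star}[\xi]) \in \U_\rho(\msa,\covsa)$ and attain the support function at $(-w,\beta ww^\top)$. The uniqueness part of Lemma~\ref{lemma:extremal:support:U} then forces $(\EE_{\QQ^\star}[\xi],\Cov_{\QQ^\star}[\xi]) = (\m^\star,\cov^\star)$, completing the proof. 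The only nontrivial obstacle is verifying that the hypotheses of Lemma~\ref{lemma:extremal:support:U} are met, which follows from $w\neq 0$ (otherwise the Gelbrich mean-variance risk degenerates, as already observed in the proof of Theorem~\ref{theorem:meanvar}); beyond that, the result is a direct specialization of the lemma.
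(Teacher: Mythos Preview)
Your proposal is correct and follows essentially the same approach as the paper: reduce the Gelbrich mean-variance risk to the support function $\delta^*_{\U_\rho(\msa,\covsa)}(-w,\beta ww^\top)$ via the two-layer decomposition, then apply Lemma~\ref{lemma:extremal:support:U} with $(q,Q)=(-w,\beta ww^\top)$ and read off the unique optimizer. Your version is more explicit in verifying the lemma's hypotheses and spelling out the substitution, but the logical structure is identical to the paper's two-sentence proof.
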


We emphasize again that there may be multiple extremal distributions in the Gelbrich ambiguity set $\mc G_{\rho}(\msa, \covsa)$ that share the unique extremal mean~$\m\opt$ and covariance matrix $\cov\opt$ identified in Proposition~\ref{prop:WC:meanvar}. 

\begin{proof}[Proof of Proposition~\ref{prop:WC:meanvar}]
    From the proof of Theorem~\ref{theorem:meanvar} we know that the first and second moments of any extremal distribution~$\QQ\opt$ are maximizers of the support function evaluation problem~$\sup \{ q^\top \m + \Tr{Q\cov}: (\m, \cov) \in \U_\rho(\msa, \covsa)\}$ with~$q = - w$ and~$Q = \beta \w \w^\top$. The claim thus follows from Lemma~\ref{lemma:extremal:support:U}, which applies because~$w\neq 0$. 
\end{proof}

Distributionally robust mean-variance portfolio optimization problems with 2-Wasser\-stein ball ambiguity sets were also studied by \cite{ref:blanchet2020distributionally}. Instead of minimizing a worst-case mean-variance risk measure, they minimize the worst-case variance of the portfolio return subject to a lower bound on the worst-case mean, resulting in a more conservative model because the extremal distributions in the objective function and the constraints may differ. The additional conservatism enhances analytical tractability.

\end{appendix}

\end{document}